\documentclass[11pt]{article}
\usepackage{fullpage}

\usepackage{comment,amsfonts,amsmath,amsthm,graphicx,
            algorithmic,mathtools,tablefootnote}



\usepackage[colorlinks,linkcolor=blue,filecolor=blue,citecolor=blue,urlcolor
=blue,pdfstartview=FitH,linktocpage=true]{hyperref}

\newtheorem*{theorem*}{Theorem}
\newtheorem{theorem}{Theorem}
\newtheorem*{definition*}{Definition}
\newtheorem{definition}{Definition}
\newtheorem*{lemma*}{Lemma}
\newtheorem{lemma}{Lemma}
\newtheorem*{claim*}{Claim}
\newtheorem{claim}{Claim}
\newtheorem*{corollary*}{Corollary}
\newtheorem{corollary}{Corollary}
\newtheorem{remark}{Remark}
\newtheorem{fact}{Fact}
\newtheorem*{fact*}{Fact}
\newtheorem{invariant}{Invariant}

\bibliographystyle{plainurl}

\interfootnotelinepenalty=10000

\begin{document}
\title{Path-Reporting Distance Oracles with Logarithmic Stretch and Size $O(n\log\log n)$}
\author{Michael Elkin\footnote{Ben-Gurion University of the Negev, Beer-Sheva, Israel. elkinm@bgu.ac.il}, Idan Shabat\footnote{Ben-Gurion University of the Negev, Beer-Sheva, Israel. shabati@post.bgu.ac.il}{\let\thefootnote\relax\footnote{{Supported by Lynn and William Frankel Center for Computer Sciences and ISF grant 2344/19.}}}
}
\date{}
\maketitle

\pagenumbering{gobble}








\begin{abstract}
Given an $n$-vertex undirected graph $G=(V,E,w)$, and a parameter $k\geq1$, a \textbf{path-reporting distance oracle} (or \textbf{PRDO}) is a data structure of size $S(n,k)$, that given a \textit{query} $(u,v)\in V^2$, returns an $f(k)$-approximate shortest $u-v$ path $P$ in $G$ within time $q(k)+O(|P|)$. Here $S(n,k)$,  $f(k)$ and $q(k)$ are arbitrary (hopefully slowly-growing) functions. A distance oracle that only returns an approximate estimate $\hat{d}(u,v)$ of the distance $d_G(u,v)$ between the queried vertices is called a \textbf{non-path-reporting distance oracle}.

A landmark PRDO due to Thorup and Zwick \cite{TZ01} has $S(n,k)=O(k\cdot n^{1+\frac{1}{k}})$, $f(k)=2k-1$ and $q(k)=O(k)$. Wulff-Nilsen \cite{WN13} devised an improved query algorithm for this oracle with $q(k)=O(\log k)$. The size of this oracle is $\Omega(n\log n)$ for all $k$. Elkin and Pettie \cite{EP15} devised a PRDO with $S(n,k)=O(\log k\cdot n^{1+\frac{1}{k}})$, $f(k)=O(k^{\log_{4/3}7})$ and $q(k)=O(\log k)$. Neiman and Shabat \cite{NS23} recently devised an improved PRDO with $S(n,k)=O(n^{1+\frac{1}{k}})$, $f(k)=O(k^{\log_{4/3}4})$ and $q(k)=O(\log k)$. These oracles (of \cite{EP15,NS23}) can be much sparser than $O(n\log n)$ (the oracle of \cite{NS23} can have \textit{linear} size), but their stretch is \textit{polynomially larger} than the optimal bound of $2k-1$. On the other hand, a long line of \textit{non}-path-reporting distance oracles culminated in a celebrated result by Chechik \cite{C15}, in which $S(n,k)=O(n^{1+\frac{1}{k}})$, $f(k)=2k-1$ and $q(k)=O(1)$.

In this paper we make a dramatic progress in bridging the gap between path-reporting and non-path-reporting distance oracles. In particular, we devise a PRDO with size $S(n,k)=O\left(\left\lceil\frac{k\cdot\log\log n}{\log n}\right\rceil\cdot n^{1+\frac{1}{k}}\right)$, stretch $f(k)=O(k)$ and query time $q(k)=O\left(\log\left\lceil\frac{k\cdot\log\log n}{\log n}\right\rceil\right)$. As $\left\lceil\frac{k\cdot\log\log n}{\log n}\right\rceil=O(\log k)$ for $k\leq\log n$, its size is always at most $O(\log k\cdot n^{1+\frac{1}{k}})$, and its query time is $O(\log\log k)$. Moreover, for $k=O\left(\frac{\log n}{\log\log n}\right)$, we have $\left\lceil\frac{k\cdot\log\log n}{\log n}\right\rceil=O(1)$, i.e., $S(n,k)=O(n^{1+\frac{1}{k}})$, $f(k)=O(k)$, and $q(k)=O(1)$. For $k=\Theta(\log n)$, our oracle has size $O(n\log\log n)$, stretch $O(\log n)$ and query time $O(\log^{(3)}n)$. We can also have linear size $O(n)$, stretch $O(\log n\cdot\log\log n)$ and query time $O(\log^{(3)}n)$.

These trade-offs exhibit polynomial improvement in stretch over the PRDOs of \cite{EP15,NS23}. For $k=\Omega\left(\frac{\log n}{\log\log n}\right)$, our trade-offs also strictly improve the long-standing bounds of \cite{TZ01,WN13}.

Our results on PRDOs are based on novel constructions of \textit{approximate distance preservers}, that we devise in this paper. Specifically, we show that for any $\epsilon>0$, any $k=1,2,...$, and any graph $G=(V,E,w)$ and a collection $\mathcal{P}$ of $p$ vertex pairs, there exists a $(1+\epsilon)$-approximate preserver for $G,\mathcal{P}$ with $O(\gamma(\epsilon,k)\cdot p+n\log k+n^{1+\frac{1}{k}})$ edges, where $\gamma(\epsilon,k)=\left(\frac{\log k}{\epsilon}\right)^{O(\log k)}$. These new preservers are significantly sparser than the previous state-of-the-art approximate preservers due to Kogan and Parter \cite{KP22}.
\end{abstract}

\newpage
\tableofcontents
\newpage

\pagenumbering{arabic}
\setcounter{page}{1}
\section{Introduction}

\subsection{Distance Oracles}

\subsubsection{Background}

Computing shortest paths and distances, exact and approximate ones, is a fundamental and extremely well-studied algorithmic problem \cite{Sei92,AGM97,ACIM99,Zwi99,C00,DHZ00,E01,Zwi01,CZ01,Zwi02,Zwi06,BK10,RS11,EN16}. A central topic in the literature on computing shortest paths and distances is the study of \textit{distance oracles} \cite{TZ01,SVY09,WN13,PR14,C14,ENWN16,C15,EP15,RT23}.

Given an undirected weighted graph $G=(V,E)$ with weights $w(e)\geq0$ on the edges, and a pair of vertices $(u,v)\in V^2$, let $d_G(u,v)$ denote the \textbf{distance} between them in $G$, i.e., the weight $w(P_{u,v})$ of the shortest $u-v$ path $P_{u,v}$. When the graph $G$ is clear from the context, we write $d(u,v)=d_G(u,v)$.

A \textbf{path-reporting distance oracle} (shortly, \textbf{PRDO}) is a data structure that given a \textit{query} vertex pair $(u,v)\in V^2$ returns an \textit{approximately} shortest $u-v$ path $P'_{u,v}$ and its length $\hat{d}(u,v)=w(P'_{u,v})$. The latter value is also called the \textbf{distance estimate}. An oracle that returns only distance estimates (as opposed to paths and estimates) is called a \textbf{non-path-reporting} distance oracle.

Three most important parameters of distance oracles are their size, stretch and query time\footnote{Although many works also study the \textit{construction time} of distance oracles, in this work we focus on the query time-stretch-size trade-offs of our distance oracles, and do not try to optimize their construction time. However, all our distance oracles can be constructed in $\tilde{O}(mn)$ time.}. The \textbf{size} of a distance oracle is the number of computer words required for storing the data structure. The \textbf{stretch} of the distance oracle is the minimum value $\alpha$ such that for any query $(u,v)\in V^2$ to the oracle, it is guaranteed that the path $P'_{u,v}$ returned by it satisfies $d(u,v)\leq w(P'_{u,v})\leq\alpha\cdot d(u,v)$. The time required for the PRDO to handle a query $(u,v)$ can typically be expressed as $O(q+|P'_{u,v}|)$. The (worst-case) overhead $q$ is referred to as the \textbf{query time} of the PRDO.

First (implicit) constructions of distance oracles were given in \cite{AP90a,C93}. The authors devised hierarchies of neighborhood covers, which can be viewed as PRDOs with stretch $(4+\epsilon)k$ (for any parameter $k=1,2,...$), size $O_{\epsilon}(k\cdot n^{1+\frac{1}{k}}\cdot\log\Lambda)$ (where $\Lambda=\frac{\max_{u,v}d_G(u,v)}{\min_{u\neq v}d_G(u,v)}$ is the \textit{aspect ratio} of the input graph), and query time $O(\log\log\Lambda)$.\footnote{The query time is not explicated in \cite{AP90a,C93}. In \nameref{sec:AppendixF} we argue that their construction of neighborhood covers gives rise to PRDOs with these parameters.} Another variant of Cohen's construction \cite{C93} provides a PRDO with stretch $(2+\epsilon)k$. However, the query time of this distance oracle is $\tilde{O}_\epsilon(k\cdot n^{\frac{1}{k}}\cdot\log\Lambda)$. The aforementioned distance oracles (of \cite{AP90a,C93}) are path-reporting.

Matousek \cite{Mat96} came up with an $\ell_\infty$-embedding, that can be viewed as a non-path-reporting distance oracle with stretch $2k-1$, size $\tilde{O}(k\cdot n^{1+\frac{1}{k}})$ and query time $\tilde{O}(k\cdot n^{\frac{1}{k}})$.

A landmark path-reporting distance oracle was devised by Thorup and Zwick \cite{TZ01}. It provides stretch $2k-1$, size $O(k\cdot n^{1+\frac{1}{k}})$, and query time $O(k)$. Wulff-Nilsen \cite{WN13} devised an improved query algorithm for this PRDO with query time $O(\log k)$.

Assuming Erd\H{o}s girth conjecture \cite{E64} holds, it is easy to see that any distance oracle with stretch $2k-1$ requires $\Omega(n^{1+\frac{1}{k}})$ bits (see \cite{TZ01}). Moreover, Chechik \cite{C15} showed that for path-reporting distance oracles, a stronger lower bound of $\Omega(n^{1+\frac{1}{k}})$ \textit{words} applies.

Mendel and Naor \cite{MN06} devised a non-path-reporting distance oracle with stretch $128k$, size $O(n^{1+\frac{1}{k}})$ and query time $O(1)$. Naor and Tao \cite{NT12} improved the stretch of this oracle to $33k$, and argued that the approach of \cite{MN06,NT12} can hardly lead to a stretch smaller than $16k$. A path-reporting variant of Mendel-Naor's oracle was devised by Abraham et al. \cite{ACEFN20}. Their PRDO has size $O(kn^{1+\frac{1}{k}})$, stretch $O(k\cdot\log\log n)$, and query time $O(1)$. This PRDO has, however, size $\Omega(n\log n)$.

Wulff-Nilsen \cite{WN13} came up with a non-path-reporting distance oracle with stretch $(2+\epsilon)k$ (for any parameters $\epsilon>0$ and $k=1,2,...$), size $O(k\cdot n^{1+\frac{1}{k}})$, and query time $O_\epsilon(1)$. This result was improved by Chechik \cite{C14,C15}, who devised a non-path-reporting distance oracle with stretch $2k-1$, size $O(n^{1+\frac{1}{k}})$ and query time $O(1)$.

To summarize, for non-path-reporting distance oracles, Chechik's construction \cite{C15} provides near-tight bounds. For path-reporting distance oracles, the construction of \cite{TZ01,WN13} is tight up to the factor $k$ in the size, and the factor $\log k$ in the query time. In particular, the PRDO of \cite{TZ01,WN13} always has size $\Omega(n\log n)$, regardless of the choice of the parameter $k$.

The first PRDO with size $o(n\log n)$ was devised by Elkin et al. \cite{ENWN16}. For a parameter $t\geq1$, the oracle of \cite{ENWN16} provides stretch $O(t\cdot n^{\frac{1}{t}})$, size $O(n\cdot t\cdot\log_n\Lambda)$ and query time $O(\log(t\cdot\log_n\Lambda))$. Note, however, that the stretch of this oracle is prohibitively large.

Elkin and Pettie \cite{EP15} devised two constructions of PRDOs with size $o(n\log n)$. Their first construction provides, for parameters $k=1,2,...$ and $\epsilon>0$, a PRDO with stretch $O\left(\left(\frac{1}{\epsilon}\right)^c\cdot k\right)$, size $O(n^{1+\frac{1}{k}})$ and query time $O(n^\epsilon)$. Here $c=\log_{4/3}7$ is a universal constant. Their second construction provides stretch $O(\log^cn)$, size $O(n\log\log n)$ and query time $O(\log\log n)$. Based on ideas from \cite{EN16b}, one can improve the constant $c$ in these results to $\log_{4/3}5$. Recently, Neiman and Shabat \cite{NS23} further improved this constant to $c'=\log_{4/3}4$. Specifically, their PRDO provides, for a parameter $k=1,2,...$, stretch $O(k^{c'})$, size $O(n^{1+\frac{1}{k}})$ and query time $O(\log k)$. Moreover, for unweighted graphs, the stretch of their PRDO is $O(k^2)$.

\subsubsection{Our Results}

Note that the state-of-the-art PRDOs that can have size $o(n\log n)$ \cite{EP15,NS23} suffer from either prohibitively high query time (the query time of the first PRDO by \cite{EP15} is polynomial in $n$), or have stretch $k^{c'}$ (for a constant $c'=\log_{4/3}4>4.81$, or $c'=2$ in the case of unweighted graphs). This is in a stark contrast to the state-of-the-art non-path-reporting distance oracle of Chechik \cite{C15}, that provides stretch $2k-1$, size $O(n^{1+\frac{1}{k}})$ and query time $O(1)$.

In this paper we devise PRDOs that come much closer to the tight bounds for non-path-reporting distance oracles than it was previously known. Specifically, one of our PRDOs (see Theorem \ref{thm:InteractiveSpanner1}), given parameters $k=1,2,...$ and an arbitrarily small constant $\epsilon>0$, provides 
stretch $(4+\epsilon)k$, size 
\[O\left(n^{1+\frac{1}{k}}\cdot\left\lceil\frac{k\cdot\log\log n\cdot\log^{(3)}n}{\log n}\right\rceil\right)~,\]
and query time\footnote{Note that the total query time is $q(k)+O(|P'|)$, where $P'$ is the returned path. Thus decreasing the query time overhead at the expense of increased stretch may result in increasing the total query time. From this viewpoint our two strongest bounds are stretch $(4+\epsilon)k$, size $O\left(\left\lceil\frac{k\cdot\log\log n\cdot\log^{(3)}n}{\log n}\right\rceil\cdot n^{1+\frac{1}{k}}\right)$, and stretch $(12+\epsilon)k$, size $O\left(\left\lceil\frac{k\cdot\log\log n}{\log n}\right\rceil\cdot n^{1+\frac{1}{k}}\right)$ (for the latter, see Table \ref{table:InteractiveSpannersVariety}). In both of these bounds, the query time is $O(\log k)$, like in \cite{TZ01,WN13}. Nevertheless, we believe that it is of interest to study how small can be the query time overhead, even at the expense of increased stretch, for two reasons. First, it can serve as a toolkit for future constructions with yet smaller (hopefully, optimal) stretch and query time. Second, there might be queries $u,v$ for which the number of edges in the returned path $P'_{u,v}=P'$ is smaller than the query time overhead.} $O(\log k)$. For $k\leq\frac{\log n}{\log\log n\cdot\log^{(3)}n}$, the size bound is $O(n^{1+\frac{1}{k}})$. Note that the stretch here is linear in $k$, and in fact, it exceeds the desired bounds of $2k-1$ by a factor of just $2+\epsilon$. At the same time, the size of this PRDO in its sparsest regime is $O(n\log\log n\cdot\log^{(3)}n)$, i.e., far below $n\log n$. Recall that the only previously-existing PRDOs with comparable stretch-size trade-offs \cite{TZ01,WN13} have size $\Omega(n\log n)$. Also, our query time $O(\log k)$ is the same as that of \cite{WN13,EP15,NS23}.

We can also have (see Theorem \ref{thm:InteractiveSpanner2}) a slightly larger stretch $O(k)$ (i.e., still \textit{linear} in $k$), size \[O\left(n^{1+\frac{1}{k}}\cdot\left\lceil\frac{k\cdot\log\log n}{\log n}\right\rceil\right)~,\]
and query time $O\left(\log\left\lceil\frac{k\cdot\log\log n}{\log n}\right\rceil\right)=O(\log\log k)$. The query time of this oracle is \textit{exponentially} smaller than that of the state-of-the-art PRDOs \cite{TZ01,WN13,EP15,NS23}. Moreover, for $k=O\left(\frac{\log n}{\log\log n}\right)$, the query time is constant. At the same time, its stretch is \textit{polynomially better} than in the oracles of \cite{EP15,NS23}. Its size is slightly worse that that of \cite{NS23}. However, we can trade size for stretch and get stretch $O\left(k\cdot\left\lceil\frac{k\cdot\log\log n}{\log n}\right\rceil\right)=O(k\log k)$, size $O(n^{1+\frac{1}{k}})$ and query time $O(\log\log k)$, consequently outperforming the oracle of \cite{NS23} in all parameters. In particular, for $k=\log n$, we get a PRDO with stretch $O(\log n\cdot\log\log n)$, size $O(n)$ and query time $O(\log^{(3)}n)$.

It is instructive to compare our PRDO to that of Thorup and Zwick \cite{TZ01,WN13}. For stretch $\alpha=t\cdot\frac{\log n}{\log\log n}$, where $1\leq t\leq\log\log n$, the PRDO of \cite{TZ01,WN13} has size $O\left(nt\cdot\frac{\log^{1+\frac{2}{t}}n}{\log\log n}\right)$ and query time $O(\log\log n)$, while our PRDO  (with stretch $C\cdot k$, size $O\left(\left\lceil\frac{k\cdot\log\log n}{\log n}\right\rceil\cdot n^{1+\frac{1}{k}}\right)$ and query time $O\left(\log\left\lceil\frac{k\cdot\log\log n}{\log n}\right\rceil\right)$, for a universal constant $C$) has size $O\left(nt\cdot\log^{\frac{C}{t}}n\right)$. For $t$ greater than a sufficiently large constant, the size of our PRDO is therefore strictly smaller than that of \cite{TZ01,WN13} (for \textbf{the same stretch}), while the query time $O\left(\log\left\lceil\frac{k\cdot\log\log n}{\log n}\right\rceil\right)=O(\log t)=O(\log^{(3)}n)$ is at least \textit{exponentially} smaller then the query time of \cite{TZ01,WN13} (which is $O(\log\log n)$).

In addition to the two new PRDOs that were described above (the one with stretch $(4+\epsilon)k$ and query time $O(\log k)$, and the one with stretch $C\cdot k$, for a much larger constant $C$, and query time $O\left(\log\left\lceil\frac{k\cdot\log\log n}{\log n}\right\rceil\right)$), we also present a variety of additional PRDOs that trade between these two extremes. Specifically, we can have stretch $C'\cdot k$, for several different constant values $C'$, $4+\epsilon<C'<C$, and query time $o(\log k)$ (though larger than $O\left(\log\left\lceil\frac{k\cdot\log\log n}{\log n}\right\rceil\right)$). See Section \ref{sec:SpannersVariety} and Table \ref{table:InteractiveSpannersVariety} for full details.

Tables \ref{table:NonPathRep} and \ref{table:PathRep} provide a concise summary of main previous and new results on distance oracles.

\begin{table}[ht]
\begin{center}
\begin{tabular}{|c|c|c|c|}
\hline
Stretch  & Size  & Query Time  & Paper \\ \hline
$2k-1$  & $k\cdot n^{1+\frac{1}{k}}$  & $k\cdot n^{\frac{1}{k}}$ & \cite{Mat96}\\ \hline
$128k$  & $n^{1+\frac{1}{k}}$ & $1$ & \cite{MN06}\\ \hline
$33k$   & $n^{1+\frac{1}{k}}$ & $1$ & \cite{MN06,NT12}\\ \hline
$(2+\epsilon)k$ & $k\cdot n^{1+\frac{1}{k}}$ & $O_\epsilon(1)$ & \cite{WN13}\\ \hline
$2k-1$  & $k\cdot n^{1+\frac{1}{k}}$ & $1$ & \cite{C14}\\ \hline
$2k-1$  & $n^{1+\frac{1}{k}}$  & $1$ & \cite{C15} \\ \hline
\end{tabular}
\end{center}
\caption{A summary of results on \textbf{non-path-reporting} distance oracles. We mostly omit $O$-notations.} \label{table:NonPathRep}
\end{table}

\begin{table*}[ht]
\begin{center}
\begin{tabular}{|c|c|c|c|}
\hline
Stretch  & Size  & Query Time  & Paper \\ \hline
$(4+\epsilon)k$ & $k\cdot n^{1+\frac{1}{k}}\cdot\log_{1+\epsilon}\Lambda$ & $\log\log\Lambda$ & \cite{AP90a,C93}\\ \hline 
$(2+\epsilon)k$ & $k\cdot n^{1+\frac{1}{k}}\cdot\log_{1+\epsilon}\Lambda$ & $k\cdot n^{\frac{1}{k}}\log\Lambda$ & \cite{C93} (2)\\ \hline 
$2k-1$  & $k\cdot n^{1+\frac{1}{k}}$ & $k$ & \cite{TZ01}\\ \hline 
$2k-1$  & $k\cdot n^{1+\frac{1}{k}}$ & $\log k$ & \cite{TZ01,WN13}\\ \hline 
$k\cdot\log\log n$  & $k\cdot n^{1+\frac{1}{k}}$ & $1$ & \cite{ACEFN20}\\ \hline 
$t\cdot n^{\frac{1}{t}}$ & $t\cdot n\cdot\log_n\Lambda$ & $\log(t\cdot\log_n\Lambda)$ & \cite{ENWN16}\\ \hline 
$k^{\log_{4/3}7}$ & $\log k\cdot n^{1+\frac{1}{k}}$& $\log k$ & \cite{EP15} (1)\\ \hline 
$k\cdot(\frac{1}{\epsilon})^{\log_{4/3}7}$ & $n^{1+\frac{1}{k}}$ & $n^\epsilon$ & \cite{EP15} (2)\\ \hline 
$k^{\log_{4/3}5}$ & $\log k\cdot n^{1+\frac{1}{k}}$& $\log k$ & \cite{EP15,EN19}\\ \hline 
$k^{\log_{4/3}4}$ & $n^{1+\frac{1}{k}}$  & $\log k$  & \cite{NS23}\\ \hline 
$k^2$   & $n^{1+\frac{1}{k}}$  & $\log k$ & \cite{NS23}, unweighted\\ \hline
\boldmath$(4+\epsilon)k$   & \boldmath$n^{1+\frac{1}{k}}\cdot\left\lceil\frac{k\cdot\log\log n\cdot\log^{(3)}n}{\log n}\right\rceil$  & \boldmath$\log k$ & \textbf{This paper}, $\epsilon=\Omega(1)$\\ \hline
\boldmath$(8+\epsilon)k$  & \boldmath$n^{1+\frac{1}{k}}\cdot\left\lceil\frac{k\cdot\log\log n\cdot\log^{(3)}n}{\log n}\right\rceil$ & \boldmath$\log\log k$ & \textbf{This paper}, $\Lambda=poly(n)$, $\epsilon=\Omega(1)$\\ \hline
\boldmath$O(k)$   & \boldmath$n^{1+\frac{1}{k}}\cdot\left\lceil\frac{k\cdot\log\log n}{\log n}\right\rceil$  & \boldmath$\log\log k$ & \textbf{This paper}\\ \hline
\boldmath$k\log k$   & \boldmath$n^{1+\frac{1}{k}}$  & \boldmath$\log\log k$ & \textbf{This paper}\\ \hline
\boldmath$\log n\cdot\log\log n$   & \boldmath$n$  & \boldmath$\log^{(3)}n$ & \textbf{This paper}, $k=\log n$\\ \hline
\end{tabular}
\end{center}
\caption{A summary of results on \textbf{path-reporting} distance oracles. Our result with stretch $(8+\epsilon)k$ relies on having a polynomial $\Lambda$ in $n$, where $\Lambda=\frac{\max_{u,v}d_G(u,v)}{\min_{u\neq v}d_G(u,v)}$ is the \textit{aspect ratio} of the input graph. We mostly omit $O$-notations, except for the stretch parameter in one of our results. In fact, our query time in this result is $O\left(\log\left\lceil\frac{k\cdot\log\log n}{\log n}\right\rceil\right)$, which is always $O(\log\log k)$, but for $k=O\left(\frac{\log n}{\log\log n}\right)$ it is $O(1)$.} \label{table:PathRep}
\end{table*}

\subsubsection{Ultra-Sparse PRDOs}

Neiman and Shabat \cite{NS23} also devised an \textbf{ultra-sparse} (or, an \textbf{ultra-compact}) PRDO: for a parameter $t\geq1$, their PRDO has size $n+O\left(\frac{n}{t}\right)$, stretch $O(t\cdot\log^{c'}n)$, and query time $O(t\cdot\log^{c'}n)$ (recall that $c'=\log_{4/3}4$). Our PRDO can also be made ultra-sparse (see Corollary \ref{cor:UltraSparsePRDO1}). Specifically, for a parameter $t\geq1$, the size of our PRDO is $n+O\left(\frac{n}{t}\right)$ (the same as in \cite{NS23}), the stretch is $O(t\cdot\log n\cdot\log\log n)$, and the query time is $O(t\cdot\log\log n)$.

\subsubsection{Related Work}

The study of very sparse distance oracles is a common thread not only in the context of general graphs (as was discussed above), but also in the context of planar graphs. In particular, improving upon oracles of Thorup \cite{Tho04} and Klein \cite{Kle02}, Kawarabayashi et al. \cite{KKS11,KST13} came up with $(1+\epsilon)$-approximate distance oracles with size $O_{\epsilon}(n\log\log n)$.

Very sparse spanners and emulators were extensively studied in \cite{Pet09,P10,EN16b,EM20}. Ultra-sparse sparsifiers, or shortly, ultra-sparsifiers, are also a subject of intensive research \cite{CKM14,KLOS14,KMST10,She13}.

\subsection{Subsequent Work}

Very recently, after a preliminary version of our paper was published in FOCS'23 \cite{ES23}, Chechik and Zhang \cite{CZ24} came up with a construction of PRDOs with stretch $12k$, size $O(n^{1+\frac{1}{k}})$ and query time $O(\log\log k)$. For $k=\log n$, this implies stretch $O(\log n)$, size $O(n)$ and query time $O(\log^{(3)}n)$. Their construction is based on Chechik's optimal non-path-reporting distance oracle \cite{C15}. Whether there exist PRDOs with stretch $2k-1$, size $O(n^{1+\frac{1}{k}})$ and small query time remains, however, an outstanding open problem. We hope that the techniques that we introduce and develop in this paper will be instrumental for its eventual resolution.

\pagebreak
\subsection{Distance Preservers}

All our PRDOs heavily exploit a novel construction of distance preservers that we devise in this paper. Pairwise distance preservers were introduced by Coppersmith and Elkin in \cite{CE05}. Given an $n$-vertex graph and a collection $\mathcal{P}\subseteq V^2$ of vertex pairs, a sub-graph $G'=(V,H)$, $H\subseteq E$, is called a \textbf{pairwise preserver} with respect to $G,\mathcal{P}$ if for every $(u,v)\in\mathcal{P}$, it holds that
\begin{equation} \label{eq:DistancePreservers}
    d_{G'}(u,v)=d_G(u,v)~.
\end{equation}

The pairs in $\mathcal{P}$ are called the \textbf{demand pairs}, and the set $\mathcal{P}$ itself is called the \textbf{demand set}.

We often relax the requirement in Equation (\ref{eq:DistancePreservers}) to only hold \textit{approximately}. Namely, we say that the sub-graph $G'$ is an \textbf{approximate distance preserver} with stretch $\alpha$, or shortly, an \textbf{$\alpha$-preserver}, if for every $(u,v)\in\mathcal{P}$, it holds that
\[d_{G'}(u,v)\leq\alpha\cdot d_G(u,v)~.\]

It was shown in \cite{CE05} that for every $n$-vertex undirected weighted graph and a set $\mathcal{P}$ of $p$ vertex pairs, there exists a ($1$-)preserver with $O(n+\sqrt{n}\cdot p)$ edges. They also showed an upper bound of $O(\sqrt{p}\cdot n)$, that applies even for weighted directed graphs.

Improved exact preservers (i.e., $1$-preservers) for \textit{unwieghted} undirected graphs were devised by Bodwin and Vassilevska-Williams \cite{BVW21}. Specifically, the upper bound in \cite{BVW21} is $O((n\cdot p)^{2/3}+n\cdot p^{1/3})$.

Lower bounds for exact preservers were proven in \cite{CE05,BVW21,B21}. Preservers for unweighted undirected graphs, that allow a small additive error were studied in \cite{Pet09,Par14,CGK13,K17}. In particular, Kavitha \cite{K17} showed that for any unweighted undirected graph and a set of $p$ pairs, there exists a preserver with $\tilde{O}(n\cdot p^{2/7})$ (respectively, $O(n\cdot p^{1/4})$) edges and with purely\footnote{By ``purely additive stretch" we mean that the multiplicative stretch is $1$.} additive stretch of $4$ (respectively, $6$).

Note, however, that the sizes of all these preservers are super-linear in $n$ when $p$ is large. One of the preservers of \cite{CE05} has linear size for $p=O(\sqrt{n})$, and this was exactly the preserver that \cite{EP15} utilized for their PRDO. To obtain much better PRDOs, one needs preservers with linear size, or at least near-linear size, for much larger values of $p$.

For unweighted graphs, a very strong upper bound on $(1+\epsilon)$-preservers follows directly from constructions of \textit{near-additive spanners}. Given a graph $G=(V,E)$, an \textbf{$(\alpha,\beta)$-spanner} is a sub-graph $G'=(V,E')$, $E'\subseteq E$, such that for every pair of vertices $u,v\in V$,
\[d_{G'}(u,v)\leq\alpha\cdot d_G(u,v)+\beta~.\]
When $\alpha=1+\epsilon$, for some small positive parameter $\epsilon>0$, an $(\alpha,\beta)$-spanner is called a \textit{near-additive} spanner.

Constructions of near-additive spanners \cite{EP04,E01,TZ06,Pet09,EN16b} give rise to constructions of $(1+\epsilon)$-preservers (see, e.g., \cite{ABSHKS21}). Specifically, one builds a $(1+\epsilon,\beta_H)$-spanner $H$ \cite{EP04,Pet09,EN16}, and adds it to the (initially empty) preserver. Then for every pair $(u,v)\in\mathcal{P}$ with $d_G(u,v)\leq\frac{\beta_H}{\epsilon}$, one inserts a shortest $u-v$ path $P_{u,v}$ into the preserver. The preserver employs $O(|H|+\frac{\beta_H}{\epsilon}\cdot p)$ edges. For the stretch bound, observe that for any $(u,v)\in \mathcal{P}$ with $d_G(u,v)>\frac{\beta_H}{\epsilon}$, the spanner $H$ provides it with distance
\[d_H(u,v)\leq(1+\epsilon)d_G(u,v)+\beta_H<(1+2\epsilon)d_G(u,v)~.\]
Plugging here the original construction of $(1+\epsilon,\gamma_2)$-spanners with $O(\gamma_2\cdot n^{1+\frac{1}{k}})$ edges, where $\gamma_2=\gamma_2(k)=O\left(\frac{\log k}{\epsilon}\right)^{\log k}$, from \cite{EP04}, one obtains a $(1+\epsilon)$-preserver with $O(\gamma_2\cdot n^{1+\frac{1}{k}}+\frac{\gamma_2}{\epsilon}\cdot p)$ edges. Plugging instead the construction from \cite{Pet09,EN16b} of $(1+\epsilon,\gamma_{4/3})$-spanners with $O(n^{1+\frac{1}{k}}+n\cdot\log k)$ edges, where $\gamma_{4/3}=\gamma_{4/3}(k)=O\left(\frac{\log k}{\epsilon}\right)^{\log_{4/3}k}$, one obtains a $(1+\epsilon)$-preserver with $O(n^{1+\frac{1}{k}}+n\cdot\log k+\frac{\gamma_{4/3}}{\epsilon}\cdot p)$ edges.

For many years it was open if these results can be extended to \textit{weighted} graphs. The first\footnote{See, however, the footnote on path-reporting hopsets in Section \ref{sec:SupportSize}. In a hindsight, a similar result was implicit in \cite{EN16}.} major progress towards resolving this question was recently achieved by Kogan and Parter \cite{KP22}. Specifically, using hierarchies of hopsets they constructed $(1+\epsilon)$-preservers for undirected weighted $n$-vertex graphs with $\left(O(n\log n)^{1+\frac{1}{k}}\cdot\log n+p\right)\cdot\left(\frac{\log n\cdot\log k}{\epsilon}\right)^{\log k}$ edges, for any parameters $k=1,2,...$ and $\epsilon>0$. Note, however, that this preserver always has size $\Omega(n\cdot(\log n)^{\log\log n})$, and this makes it unsuitable for using it for PRDOs of size $o(n\log n)$ (which is the focus of our paper).

In the current paper we answer the aforementioned question in the affirmative, and devise constructions of $(1+\epsilon)$-preservers for weighted graphs, that are on par with the bounds of \cite{EP04,Pet09,EN16b,ABSHKS21} for unweighted ones. Specifically, we devise two constructions of $(1+\epsilon)$-preservers for weighted graphs. The first one (see Theorem \ref{thm:DistancePreserver1}) has size $O(n^{1+\frac{1}{k}}+n\cdot\log k+\gamma_{4/3}\cdot p)$, and the second one (see Theorem \ref{thm:DistancePreserver2}) has size $O(n^{1+\frac{1}{k}}+n(\log k+\log\log\frac{1}{\epsilon})+\tilde{\gamma}_2\cdot p)$, where
\begin{equation} \label{eq:Gamma2Tilde}
    \tilde{\gamma}_2=\left(\frac{\log k+\log\log\frac{1}{\epsilon}}{\epsilon}\right)^{\log k+O(\log^{(3)}k)+\log\log\frac{1}{\epsilon}}~.
\end{equation}

In fact, our bound on $\tilde{\gamma}_2$ is even stronger than that, and is given in Theorem \ref{thm:DistancePreserver2}. The term $\log\log\frac{1}{\epsilon}$ in the size can also be replaced by $\log\left(1+\frac{\log\frac{1}{\epsilon}}{\log^{(3)}n}\right)$. When $\epsilon\geq(\log\log n)^{-k^{c'}}$, for a specific constant $c'>0$, the second construction is better than the first one.

Note that for $p\geq n^{1+\frac{1}{k}}$, for some constant $k$, our preserver uses $O_{\epsilon,k}(p)$ edges  (i.e., only a constant factor more than a trivial lower bound\footnote{For $\epsilon<2$, there is a lower bound of $|\mathcal{P}|$ on the size of $(1+\epsilon)$-preservers, which is demonstrated on an unweighted complete bipartite graph, with an arbitrary set of pairs $\mathcal{P}$. The same lower bound can be also proved for preservers with larger stretch, using dense graphs with high girth.}), while the overhead in the construction of \cite{KP22} depends polylogarithmically on $n$, with a degree that grows with $k$ (specifically, it is $O\left(\frac{\log n}{\epsilon}\right)^{\log k}$). Also, for $p\leq\frac{n}{\gamma_{4/3}(\log n)}=\frac{n}{\left(\frac{\log\log n}{\epsilon}\right)^{O(\log\log n)}}$, our construction provides a preserver of \textit{near-linear} size (specifically, $O(n\log\log n)$), while the size of the preserver of \cite{KP22} is always $\Omega(n\cdot(\log n)^{\log\log n})$. This property is particularly useful for constructing our PRDOs.


We also present a construction of $(3+\epsilon)$-preservers (for weighted graphs) with $O(n^{1+\frac{1}{k}}+n\cdot\log k+k^{O(\log\frac{1}{\epsilon})}\cdot p)$ edges. By substituting here $k=\log n$, we obtain a $(3+\epsilon)$-preserver, for an arbitrarily small constant $\epsilon>0$, with size $O(n\log\log n+p\cdot\log^{O(1)}n)$. In addition, all our constructions of approximate distance preservers directly give rise to \textit{pairwise PRDOs} with the same stretch and size.

A \textbf{pairwise PRDO} is a scheme that given a graph $G$ and a set $\mathcal{P}$ of $p$ vertex pairs, produces a data structure, which given a query $(u,v)\in\mathcal{P}$ returns an approximate shortest $u-v$ path. The size and the stretch of pairwise PRDOs are defined in the same way as for ordinary distance oracles. Pairwise PRDOs with stretch $1$ were studied in \cite{BHT22}, where they were called \textit{Shortest Path Oracles} (however, in \cite{BHT22} they were studied in the context of \textit{directed} graphs). Neiman and Shabat \cite{NS23} have recently showed that the $(1+\epsilon)$-preservers of \cite{KP22} with $\left(O(n\log n)^{1+\frac{1}{k}}\cdot\log n+p\right)\cdot\left(\frac{\log n\cdot\log k}{\epsilon}\right)^{\log k}$ edges can be converted into pairwise PRDOs with the same stretch and size.

We improve upon this by presenting $(1+\epsilon)$-stretch pairwise PRDOs of size \\ $O(n^{1+\frac{1}{k}}+n\cdot\log k+\gamma_{4/3}\cdot p)$ and of size  $O(n^{1+\frac{1}{k}}+n\cdot\log k+\tilde{\gamma}_2\cdot p)$, and $(3+\epsilon)$-stretch pairwise PRDOs of size $O(n^{1+\frac{1}{k}}+n\cdot\log k+k^{O(\log\frac{1}{\epsilon})}\cdot p)$. See Theorem \ref{thm:DistancePreserver1}, Theorem \ref{thm:DistancePreserver2} and Remark \ref{remark:PreserverWithWorseStretch}.

Interestingly, to the best of our knowledge, the unweighted $(1+\epsilon)$-preservers of \cite{EP04,Pet09,EN16,ABSHKS21} do not give rise to partial PRDOs with similar properties.

See Table \ref{table:Preservers} for a concise summary of existing and new constructions of $(1+\epsilon)$-preservers and pairwise PRDOs.

\begin{table*}[ht]
\begin{center}
\begin{tabular}{|c|c|c|c|c|}
\hline
Weighted/  & Stretch  & Size  & Paper  & Pairwise  \\ 
Unweighted &          &       &        & PRDO? \\ \hline
Weighted   & $1$          & $n+\sqrt{n}\cdot p$ & \cite{CE05} & \cite{EP15}*\\ \hline 
Weighted   & $1$          & $n\cdot\sqrt{p}$ & \cite{CE05} & NO\\ \hline 
Unweighted & $1$          & $(n\cdot p)^{2/3}+n\cdot p^{1/3}$ & \cite{BVW21} & NO\\ \hline 
Unweighted & $1+\epsilon$ & $\gamma_2\cdot n^{1+\frac{1}{k}}+\frac{\gamma_2}{\epsilon}\cdot p$ & \cite{EP04,ABSHKS21} & NO\\ \hline 
Unweighted & $1+\epsilon$ & $n^{1+\frac{1}{k}}+n\cdot\log k+\frac{\gamma_{4/3}}{\epsilon}\cdot p$ & \cite{Pet09,EN16,ABSHKS21} & NO\\ \hline 
Weighted & $1+\epsilon$ & $\left((n\log n)^{1+\frac{1}{k}}\cdot\log n+p\right)\cdot\left(\frac{\log n\cdot\log k}{\epsilon}\right)^{\log k}$ & \cite{KP22} & \cite{NS23}\\ \hline 
\textbf{Weighted} & \boldmath$1+\epsilon$ & \boldmath$n^{1+\frac{1}{k}}+n\cdot\log k+\gamma_{4/3}\cdot p$ & \textbf{This paper} & \textbf{YES}\\ \hline 
\textbf{Weighted} & \boldmath$1+\epsilon$ & \boldmath$n^{1+\frac{1}{k}}+n\cdot\log k+\tilde{\gamma}_2\cdot p$ & \textbf{This paper} & \textbf{YES}\\ \hline 
\textbf{Weighted} & \boldmath$3+\epsilon$ & \boldmath$n^{1+\frac{1}{k}}+n\cdot\log k+k^{O(\log\frac{1}{\epsilon})}\cdot p$ & \textbf{This paper} & \textbf{YES}\\ \hline 
\end{tabular}
\end{center}
\caption{A summary of most relevant existing and new results about pairwise preservers and PRDOs. The first column indicates if a result applies to unweighted or weighted graphs. The \textit{Size} omits $O$-notations (except for an exponent in the last row). The last column indicates whether there is a known PRDO-analogue of this particular preserver. In case there is, a reference is provided. The PRDO-analogue of the exact preserver by \cite{CE05}, devised in \cite{EP15} (marked in the table by $*$), provides somewhat weaker bounds on the size than those of the original preserver. See \cite{EP15} for details. Recall that $\gamma_2=O\left(\frac{\log k}{\epsilon}\right)^{\log k}$, $\gamma_{4/3}=O\left(\frac{\log k}{\epsilon}\right)^{\log_{4/3}k}$, and $\tilde{\gamma}_2$ is given by (\ref{eq:Gamma2Tilde}). In the last six rows, one can select $k$ that optimizes the size bound.} \label{table:Preservers}
\end{table*}

\pagebreak
\section{A Technical Overview} \label{sec:TechnicalOverview}

\subsection{Preservers and Hopsets with Small Support Size} \label{sec:SupportSize}

The construction of $(1+\epsilon)$-preservers in \cite{KP22} is obtained via an ingenious black-box reduction from hierarchies of hopsets. Our constructions of significantly sparser preservers is obtained via a direct hopset-based approach, which we next outline.

Given a graph $G=(V,E)$, and a pair of positive parameters $\alpha,\beta$, a set $H\subseteq\binom{V}{2}$ is called an $(\alpha,\beta)$-\textbf{hopset} of $G$ if for every pair of vertices $u,v\in V$, we have
\[d^{(\beta)}_{G\cup H}(u,v)\leq\alpha\cdot d_G(u,v)~.\]
Here $G\cup H$ denotes the weighted graph that is obtained by adding the edges $H$ to $G$, while assigning every edge $(x,y)\in H$ the weight $d_G(x,y)$. Also, $d^{(\beta)}_{G\cup H}(u,v)$ stands for $\beta$\textit{-bounded} $u-v$ distance in $G\cup H$, i.e., the weight of the shortest $u-v$ path in $G\cup H$ with at most $\beta$ edges.

Hopsets were introduced in a seminal work by Cohen \cite{C00}. Elkin and Neiman \cite{EN16} showed that for any $\epsilon>0$ and $k=1,2,...$, and any $n$-vertex undirected weighted graph $G=(V,E)$, there exists a $(1+\epsilon,\gamma_2)$-hopset, where $\gamma_2(k)=O\left(\frac{\log k}{\epsilon}\right)^{\log k}$, with $\tilde{O}(n^{1+\frac{1}{k}})$ edges. Elkin and Neiman \cite{EN19} and Huang and Pettie \cite{HP17} showed that the Thorup-Zwick emulators \cite{TZ06} give rise to yet sparser $(1+\epsilon,\gamma_2)$-hopsets. Specifically, these hopsets have size $O(n^{1+\frac{1}{k}})$.

In this paper we identify an additional (to the stretch $\alpha$, the hopbound $\beta$, and the size $|H|$) parameter of hopsets, that turns out to be crucially important in the context of PRDOs. Given a hopset $H$ for a graph $G=(V,E)$, we say that a subset $E_H$ of $E$ is a \textbf{supporting edge-set} of the hopset $H$, if for every edge $(u,v)\in H$, the subset $E_H$ contains a shortest $u-v$ path in $G$. If these paths are rather $t$-approximate shortest paths, instead of exact shortest paths, for a parameter $t>1$, then we say that $E_H$ is a \textbf{$t$-approximate supporting edge-set} of the hopset $H$. The minimum size $|E_H|$ of a ($t$-approximate) supporting edge-set for the hopset $H$ is called the ($t$-approximate) \textbf{support size} of the hopset.

We demonstrate that for any $\epsilon>0,k=1,2,...$, there exists a $\gamma_{4/3}=\gamma_{4/3}(\epsilon,k)$ (defined above) such that any $n$-vertex graph admits a $(1+\epsilon,\gamma_{4/3})$-hopset with size $O(n^{1+\frac{1}{k}})$ and support size $O(n^{1+\frac{1}{k}}+n\cdot\log k)$. Moreover, these hopsets are \textit{universal}, i.e., the same construction provides, in fact, a $(1+\epsilon,\gamma_{4/3}(\epsilon,k))$-hopsets for all $\epsilon>0$ \textit{simultaneously} (this is a property of Thorup-Zwick emulators \cite{TZ06}, and of hopsets obtained via their construction of emulators \cite{EN19,HP17}). Furthermore, the same hopset serves also as a $(3+\epsilon,k^{O(\log\frac{1}{\epsilon})})$-hopset (with the same size and support size, and also for all $\epsilon>0$ simultaneously).

Next we explain how such hopsets directly give rise to very sparse approximate distance preservers. Below we will also sketch how hopsets with small support size are constructed. Given a hopset $H$ as above, and a set $\mathcal{P}\subseteq V^2$ of $p$ vertex pairs, we construct the preserver $E_{\mathcal{P}}$ in two steps. First, the supporting edge-set $E_H$ of the hopset $H$, with size $O(n^{1+\frac{1}{k}}+n\cdot\log k)$, is added to the (initially empty) set $E_{\mathcal{P}}$. Second, for each pair $(u,v)\in\mathcal{P}$, consider the $u-v$ path $P_{u,v}$ in $G\cup H$, with weight $d^{(\beta)}_{G\cup H}(u,v)\leq(1+\epsilon)d_G(u,v)$, and at most $\beta=\gamma_{4/3}$ edges. For every original edge $e=(x,y)\in P_{u,v}$ (i.e., $e\in E$, as opposed to edges of the hopset $H$), we add $e$ to $E_{\mathcal{P}}$. This completes the construction.

The size bound $|E_{\mathcal{P}}|\leq|E_H|+\gamma_{4/3}\cdot p=O(n^{1+\frac{1}{k}}+n\cdot\log k+\gamma_{4/3}\cdot p)$ is immediate. For the stretch bound, consider a vertex pair $(u,v)\in\mathcal{P}$. For every edge $e\in P_{u,v}\cap E$, the edge belongs to $E_{\mathcal{P}}$. For every hopset edge $e'=(x',y')\in P_{u,v}\cap H$, the shortest $x'-y'$ path in $G$ belongs to the supporting edge-set $E_H$ of the hopset, and thus to $E_{\mathcal{P}}$ as well. Hence $d_{E_{\mathcal{P}}}(x',y')=d_G(x',y')$. Therefore, $d_{E_{\mathcal{P}}}(u,v)\leq\tilde{w}(P_{u,v})$, where $\tilde{w}$ is the weight function in $G\cup H$. It follows that
\[d_{E_{\mathcal{P}}}(u,v)\leq\tilde{w}(P_{u,v})=d^{(\beta)}_{G\cup H}(u,v)\leq(1+\epsilon)d_G(u,v)~.\]

More generally, the above construction shows how an $(\alpha,\beta)$-hopset with supporting edge-set $E_H$ gives rise to an $\alpha$-preserver with size $|E_H|+\beta\cdot p$. The insertion of the paths $P_{u,v}\subseteq G\cup H$, for every pair $(u,v)\in\mathcal{P}$, is in line with the reduction of \cite{KP22} from missing spanners to near-exact preservers. For an undirected weighted graph $G=(V,E)$, and a pair of positive parameters $r$ (referred to as the \textit{missing bound}) and $t$, an \textbf{$r$-missing $t$-spanner}, introduced in \cite{KP22}, is a sub-graph $G'=(V,E')$ that contains all but at most $r$ edges of some $t$-approximate shortest $u-v$ path, for every pair of vertices $u,v\in V$. Given this definition, note that the supporting edge-set $E_H$ of the hopset $H$ serves as a $\beta$-missing $\alpha$-spanner. This immediately improves the $\hat{\beta}$-missing $(1+\epsilon)$-spanners of \cite{KP22}, that have $\hat{\beta}=\hat{\beta}(k,\epsilon,n)=O\left(\frac{k\log k\cdot(\log\log n-\log k)}{\epsilon}\right)^{\log k}$ and size $\tilde{O}\left(n^{1+\frac{1}{k}}\cdot(\hat{\beta}(k,\epsilon,n))^2\right)$. In particular, we get a $\gamma_{4/3}$-missing $(1+\epsilon)$-spanner with size $O(n^{1+\frac{1}{k}}+n\log k)$. Furthermore, recall that the hopset $H$ also serves as a $(3+\epsilon,k^{O(\log\frac{1}{\epsilon})})$-hopset, for all $\epsilon>0$ simultaneously (see \cite{EGN22}). Since its size and support size are the same as above, we also obtain a $k^{O(\log\frac{1}{\epsilon})}$-missing $(3+\epsilon)$-spanner with size $O(n^{1+\frac{1}{k}}+n\log k)$. Table \ref{table:MissingSpanners} summarizes this discussion.

\begin{table}[ht]
\begin{center}
\begin{tabular}{|c|c|c|c|}
\hline
Stretch ($t$)  & Missing Bound ($r$)  & Size  & Paper \\ \hline
$1+\epsilon\cdot k(\log\log n-\log k)$  & $r=O\left(\frac{\log k}{\epsilon}\right)^{\log_{2}k}$  & $\tilde{O}(r^2n^{1+\frac{1}{k}})$ & \cite{KP22}\\ \hline
\boldmath$1+\epsilon$  & \boldmath$O\left(\frac{\log k}{\epsilon}\right)^{\log_{4/3}k}$  & \boldmath$O(n^{1+\frac{1}{k}}+n\log k)$ & \textbf{This paper}\\ \hline
\boldmath$3+\epsilon$  & \boldmath$k^{O\left(\frac{1}{\epsilon}\right)}$  & \boldmath$O(n^{1+\frac{1}{k}}+n\log k)$ & \textbf{This paper}\\ \hline
\end{tabular}
\end{center}
\caption{A summary of results on missing spanners.} \label{table:MissingSpanners}
\end{table}

To convert the preserver $E_{\mathcal{P}}$ into a pairwise PRDO, we store a $\beta=\gamma_{4/3}$-length path $P_{u,v}$ in $G\cup H$ for every pair $(u,v)\in\mathcal{P}$, with weight $d^{(\beta)}_{G\cup H}(u,v)\leq(1+\epsilon)d_G(u,v)$. In addition, we build a dedicated pairwise PRDO $D_H$, that given a hopset-edge $(x,y)\in H$, produces a (possibly approximate) shortest path $P_{x,y}$ in $G$. The oracle $D_H$ implicitly stores the union of these paths $P_{x,y}$. Its size is at most the support size of the hopset, i.e., it is small. We note that the naive construction of (unweighted) $(1+\epsilon)$-preservers, which is based on near-additive spanners \cite{EP04,Pet09,EN16b,ABSHKS21}, does not give rise to a pairwise PRDO. The problem is that the paths in the underlying spanner may be arbitrarily long.

Another benefit of our hopsets $H$ with small support size, is that they give rise to \textit{near-additive spanners}. Namely, these hopsets serve also as \textit{near-additive emulators}. That is, given two vertices $u,v$ in the graph $G=(V,E)$, not only that there is a (low-hop) $u-v$ path in $G\cup H$ with low stretch, but there is also a low-stretch $u-v$ path that is entirely contained in $H$ (rather than in $G\cup H$) - without the guarantee on the number of hops in this path. Since every edge $(x,y)\in H$ can be replaced by a path $P_{x,y}\subseteq E_H\subseteq E$ (where $E_H$ is the supporting edge-set of $H$), with the same weight, we conclude that $E_H$ serves as a near-additive spanner for $G$. Specifically, we provide a $(1+\epsilon,\Tilde{\gamma}_2)$-hopset $H$ of size $O(n^{1+\frac{1}{k}})$ and support size $O(n(\log k+\log\log\frac{1}{\epsilon})+n^{1+\frac{1}{k}})$, for parameters $\epsilon,k$ and $\Tilde{\gamma}_2\approx O\left(\frac{\log k}{\epsilon}\right)^{\log_2k}$. This hopset is also a $(1+\epsilon,\Tilde{\gamma}_2\cdot W)$-emulator (see Definition \ref{def:MixedStretchEmulator} below). Since the support size bound still holds for $H$, we conclude that $E_H$ is a $(1+\epsilon,\Tilde{\gamma}_2\cdot W)$-spanner of $G$, with size $O(n(\log k+\log\log\frac{1}{\epsilon})+n^{1+\frac{1}{k}})$. This strictly improves the additive stretch of the near-additive spanner by \cite{EGN22} (which was $\gamma_{4/3}=O\left(\frac{\log k}{\epsilon}\right)^{\log_{4/3}k}$).\footnote{In fact, the size of our spanner (and the supporting edge set of the hopset) is even better than that. It is $O\left(n\left(\log k+\log\left(1+\frac{\log\frac{1}{\epsilon}}{\log^{(3)}n}\right)\right)+n^{1+\frac{1}{k}}\right)$. In particular, for $\epsilon\geq\frac{1}{poly(\log\log n)}$, its size is $O(n\log k+n^{1+\frac{1}{k}})$, i.e., independent of $\epsilon$. The additive error is given by
\[\Tilde{\gamma}_2(\epsilon,k,n)=O\left(\frac{\log k}{\epsilon}\right)^{\log_2k+O\left(\log\log k+\log\left(1+\frac{\log\frac{1}{\epsilon}}{\log^{(3)}n}\right)\right)}=O\left(\frac{\log k}{\epsilon}\right)^{\log_2k+O(\log\log k+\log\log\frac{1}{\epsilon})}~.\]
It is instructive to compare this result with the $(1+\epsilon,\gamma_2)$-spanners of \cite{ABP17} with size $O\left(\frac{(\log k)^{7/4}}{\epsilon^{3/4}}\cdot n^{1+\frac{1}{k}}\right)$. The size of our spanners has a much better dependence on $\epsilon$ and $k$, while the additive error $\gamma_2=O\left(\frac{\log k}{\epsilon}\right)^{\log_2k}$ is better in \cite{ABP17}.}

Next, we shortly outline the construction of hopsets with small support size\footnote{It is easy to see that the path-reporting hopsets from \cite{EN16,EM20} also have relatively small support size. Specifically, their support size is $O\left(\left(\frac{k\cdot\log n}{\epsilon}\right)^{\log k+1}\cdot n^{1+\frac{1}{k}}\right)$. As a result, they give rise to $(1+\epsilon)$-preservers of roughly the same size as those of \cite{KP22}. In this paper we devise hopsets with support size $O(n^{1+\frac{1}{k}}+n\log k)$, and use them to provide dramatically sparser preservers.}. Since they are closely related to the construction of Thorup-Zwick universal emulators \cite{TZ06} and universal hopsets \cite{EN19,HP17}, we start by overviewing the latter constructions.

\begin{definition} \label{def:MixedStretchEmulator}
An $(\alpha,\beta\cdot W)$-\textbf{emulator} for a (possibly weighted) graph $G=(V,E)$ is a weighted graph $G'=(V,E',w')$, where $E'$ is \textit{not necessarily a subset of $E$} (that is, $G'$ may consist of entirely different edge-set than that of $G$), such that for every two vertices $u,v\in V$,
\[d_G(u,v)\leq d_{G'}(u,v)\leq\alpha\cdot d_G(u,v)+\beta\cdot W(u,v)~,\]
where $W(u,v)$ is the weight of the heaviest edge on any $u-v$ shortest path in $G$.

If $E'\subseteq E$, we say that $G'$ is an $(\alpha,\beta\cdot W)$-spanner.
\end{definition}

Consider a hierarchy of subsets $V=A_0\supseteq A_1\supseteq A_2\supseteq\cdots\supseteq A_{l-1}\supseteq A_l=\emptyset$, where $l$ is some integer parameter. 
For every index $i\in[0,l-1]$ and every vertex $v\in A_i$, we define $B_i(v)$ - the $i$-th \textbf{bunch} of $v$ - to be the set of $A_i$-vertices which are closer to $v$ than the closest vertex $u\in A_{i+1}$ to $v$. The latter vertex $u$ is called the $(i+1)$-th \textbf{pivot} of $v$, and is denoted by $p_{i+1}(v)$. The emulator $H$ is now defined by adding, for every $i\in[0,l-1]$, an edge from every $v\in A_i$ to all of its pivots $p_j(v)$, for $j>i$, and all of its bunch members $u\in B_i(v)$. The weight of an emulator edge $(u,v)$ is $d_G(u,v)$.

Thorup and Zwick \cite{TZ06} showed that $H$ is (in any \textit{unweighted} graph) a $(1+\epsilon,\gamma_2=\gamma_2(\epsilon,k))$-emulator, for all $\epsilon>0$ simultaneously, when choosing $l\approx\log_2k$. Elkin and Neiman \cite{EN19} and Huang and Pettie \cite{HP17} showed that the very same construction provides $(1+\epsilon,\gamma_2)$-hopsets in any \textit{weighted} graph (again, for all $\epsilon>0$ simultaneously). In addition, \cite{TZ06,EN19,HP17} showed that if each subset $A_{i+1}$ is sampled independently at random from $A_i$ with an appropriate sampling probability, then the size of $H$ is $O(n^{1+\frac{1}{k}})$.

Pettie \cite{Pet09} made a crucial observation that one can replace the bunches in this construction by ``one-third-bunches", and obtain sparse near-additive \textit{spanners} (as opposed to emulators). For a parameter $0<\rho\leq1$, a \textbf{\boldmath$\rho$-bunch} $B^{\rho}_i(v)$ of a vertex $v\in A_i$ (for some $i\in[0,l-1]$) contains all vertices $u\in A_i$ that satisfy
\[d_G(v,u)<\rho\cdot d_G(v,p_{i+1}(v))~.\]
Note that under this definition, a bunch is a $1$-bunch. \textit{One-third-bunch} (respectively, \textit{half-bunch}) is $\rho$-bunch with $\rho=\frac{1}{3}$ (respectively, $\rho=\frac{1}{2}$).

On one hand, the introduction of factor $\frac{1}{3}$ makes the additive term $\beta$ somewhat worse. On the other hand, one can now show that the set of all $v-u$ shortest paths, for $v\in V$, $i\in[0,l-1]$, $u\in B^{1/3}_i(v)$, is sparse. As a result, one obtains a spanner rather than an emulator. To establish this, Pettie \cite{Pet09} analysed the number of branching events\footnote{A \textit{branching event} \cite{CE05} between two paths $P_1,P_2$ is a triple $(P_1,P_2,x)$ such that $x$ is a vertex on $P_1,P_2$, and the adjacent edges to $x$ in $P_1$ are not exactly the same as in $P_2$. See Section \ref{sec:GeneralPreliminaries}.} induced by these paths. This approach is based on the analysis of exact preservers from \cite{CE05}.

The same approach (of using one-third-bunches as opposed to bunches) was then used in the construction of PRDOs by \cite{EP15}. Intuitively, using one-third-bunches rather than bunches is necessary for the path-reporting property of these oracles. This comes, however, at the price of increasing the stretch (recall that the stretch of the PRDO of \cite{EP15} is $O(\log^{\log_{4/3}7}n)$, instead of the desired bound of $O(\log n)$).

Elkin and Neiman \cite{EN19} argued that half-bunches can be used instead of one-third-bunches in the constructions of \cite{Pet09,EP15}. This led to improved constructions of near-additive spanners and PRDOs (in the latter context, the stretch becomes $O(\log^{\log_{4/3}5}n)$).

In this paper we argue that for any weighted graph, the set $H^{1/2}$ (achieved using the same construction as $H$, but with half-bunches instead of bunches) is a $(1+\epsilon,\gamma_{4/3})$-hopset with size $O(n^{1+\frac{1}{k}})$ and \textit{support size} $O(n^{1+\frac{1}{k}}+n\log k)$. The analysis of the support size is based on the analysis of near-additive spanners (for unweighted graphs) of \cite{Pet09,EN16b}. The proof that $H^{1/2}$ is a sparse hopset can be viewed as a generalization of related arguments from \cite{EN19,HP17}. There it was argued that $H=H^1$ is a sparse hopset, while here we show that this is the case for $H^\rho$, for any constant\footnote{In fact, we only argue this for $\rho\geq\frac{1}{2}$, but the same argument extends to any constant $\rho>0$.} $0<\rho<1$. Naturally, its properties deteriorate as $\rho$ decreases, but for any constant $\rho>0$, the set $H^\rho$ is quite a good hopset, while for $\rho\leq\frac{1}{2}$, we show that its support size is small.

\subsection{PRDOs} \label{sec:TechnicalPRDOs}

Our constructions of PRDOs consist of three steps. In the first step, we construct $h=O(\log k)$ layers of the Thorup-Zwick PRDO \cite{TZ01}, while using the query algorithm of \cite{WN13}. The result is an oracle with stretch $2h-1=O(\log k)$, size\footnote{The actual choice of the parameter $h$ is more delicate than $h=O(\log k)$, and as a result, our oracle ultimate size is typically much smaller than $O(\log k\cdot n^{1+\frac{1}{k}})$. However, for the sake of clarity, in this sketch we chose to set parameters in this slightly sub-optimal way.} $O(h\cdot n^{1+\frac{1}{k}})=O(\log k\cdot n^{1+\frac{1}{k}})$ and query time $O(\log h)=O(\log\log k)$. However, this oracle is a \textit{partial} PRDO, rather than an actual PRDO. It provides approximate shortest paths only for queries of a certain specific form (see below). For the rest of the queries $(u,v)\in V^2$, it instead provides two paths $P_{u,u'},P_{v,v'}$, from $u$ to some vertex $u'$ and from $v$ to some vertex $v'$, such that $u',v'$ are members in a relatively small set $S$ of size $O(n^{1-\frac{h}{k}})$. The set $S$ is actually exactly the set $A_h$, from the hierarchy of sets that defines Thorup-Zwick PRDO (the vertices $u',v'$ are the corresponding \textit{pivots} of $u,v$, respectively). The two paths $P_{u,u'},P_{v,v'}$ have the property that
\[w(P_{u,u'}),w(P_{v,v'})\leq h\cdot d_G(u,v)~.\]

Intuitively, given a query $(u,v)$, the partial PRDO provides a $u-v$ path $P_{u,v}$ (rather than paths $P_{u,u'},P_{v,v'}$) if and only if the Thorup-Zwick oracle does so while employing only the first $h$ levels of the oracle. We note that using the first $h$ levels of the TZ oracle indeed guarantees almost all the aforementioned properties. The only problem is that the query time of such partial oracle is $O(h)$, rather than $O(\log h)$. To ensure that this oracle has query time $O(\log h)$, we generalize the argument of Wulff-Nilsen \cite{WN13} from full oracles to partial ones.

In the second step, to find a low-stretch path between $u'$ and $v'$, we construct an \textit{interactive emulator} for the set $S$. An \textbf{interactive emulator} is an oracle that similarly to PRDOs, provides approximate shortest paths between two queried vertices. However, as opposed to a PRDO, the provided path does not use actual edges in the original graph, but rather employs \textit{virtual} edges that belong to some sparse low-stretch \textit{emulator} of the graph.

Our construction of interactive emulator is based on the distance oracle of Mendel and Naor \cite{MN06}. Recall that the Mendel-Naor oracle is not path-reporting. Built for $N$-vertex graph and a parameter $k_1$, this oracle provides stretch $O(k_1)$, has size $O(N^{1+\frac{1}{k_1}})$, and has query time $O(1)$. We utilize the following useful property of the Mendel-Naor oracle: there is an $O(k_1)$-emulator $H'$ of the original graph, of size $O(N^{1+\frac{1}{k_1}})$, such that upon a query $(u',v')$, the oracle can return not only a distance estimate $\hat{d}(u',v')$, but also a $u'-v'$ path - \textit{in $H'$} - of length $\hat{d}(u',v')$. We explicate this useful property in this paper. In fact, any other construction of interactive emulator could be plugged instead of the Mendel-Naor oracle in our construction. Indeed, one of the versions of our PRDO uses the Thorup-Zwick PRDO \cite{TZ01} for this step as well, instead of the Mendel-Naor oracle. This version has the advantage of achieving a better stretch, at the expense of a larger query time.

Yet another version of our PRDO employs an emulator based upon a hierarchy of neighborhood covers due to Awerbuch and Peleg \cite{AP90a} and to Cohen \cite{C93}. As a result, this PRDO has intermediate stretch and query time: its stretch is larger than that of the PRDO that employs interactive emulators based on the construction of Thorup and Zwick \cite{TZ01} (henceforth, \textit{TZ-PRDO}), but smaller than that of the PRDO that employs Mendel-Naor's interactive emulator (henceforth \textit{MN-PRDO}). On the other hand, its query time is smaller than that of the TZ-PRDO, but larger than that of the MN-PRDO.

Our sparse interactive emulator $H'$ is built for the complete graph $K=(S,\binom{S}{2})$, where the weight of an edge $(x,y)$ in $K$ is defined as $d_G(x,y)$. Note that since we build this interactive emulator on $K$ rather than on $G$, it is now \textit{sparser} - it has size $O(|S|^{1+\frac{1}{k_1}})$ as opposed to $O(n^{1+\frac{1}{k_1}})$ - while the edges it uses are still virtual (not necessarily belong to $G$).

After the first two steps, we have a path between the original queried vertices $u,v$, that passes through the vertices $u',v'\in S$. Note, however, that this path is not a valid path in the original graph, as its edges between $u'$ and $v'$ are virtual edges in the emulator $H'$ (of size $O(|S|^{1+\frac{1}{k_1}})$). To solve this problem, in the third step, we employ our new sparse \textit{pairwise PRDO} on the graph $G$ and the set of pairs $\mathcal{P}=H'\subseteq V^2$.

Notice that since $|S|=O(n^{1-\frac{h}{k}})$, we can choose a slightly smaller $k_1$ for the Mendel-Naor oracle, such that $|H'|=O(|S|^{1+\frac{1}{k_1}})$ is still relatively small. We choose $k_1$ slightly larger than $\frac{k}{h}$, so that
\[|H'|=O((n^{1-\frac{h}{k}})^{1+\frac{1}{k_1}})=O(n^{1+\frac{1}{k_1}-\frac{h}{k}})\]
will be smaller than $\frac{n^{1+\frac{1}{k}}}{\gamma_{4/3}}$. Then, we use our novel $(1+\epsilon)$-pairwise PRDO (which is also a $(1+\epsilon)$-preserver) that was described above, that has size $O(\gamma_{4/3}\cdot|\mathcal{P}|+n\log k+n^{1+\frac{1}{k}})$ and query time $O(1)$, on $\mathcal{P}=H'$. The size of this pairwise PRDO is therefore $O(n\log k+n^{1+\frac{1}{k}})$. Recall that the size of the partial TZ oracle is $O(\log k\cdot n^{1+\frac{1}{k}})$. Thus the total size of our construction is $O(\log k\cdot n^{1+\frac{1}{k}})$.

The stretch of this new PRDO is the product between the stretches of the partial Thorup-Zwick oracle, the interactive emulator, and the pairwise preserver, which is $O(h\cdot k_1)=O(k)$. The query time also consists of the query times of these three oracles, which is $O(\log h)=O(\log\log k)$. In case we use the Thorup-Zwick PRDO instead of Mendel-Naor oracle, we get a small constant coefficient on the stretch, while the query time increases to $O(\log k)$ (which is the running time of the query algorithm by \cite{WN13}, used in the Thorup-Zwick PRDO).

One final ingredient of our PRDO is the technique that enables one to trade overheads in the size by overheads in the stretch. Specifically, rather than obtaining stretch $O(k)$ and size $O(\log k\cdot n^{1+\frac{1}{k}})$, we can have stretch $O(k\log k)$ and size $O(n^{1+\frac{1}{k}})$ (the query time stays basically the same). 

To accomplish this, we build upon a recent result of Bezdrighin et al. \cite{BEGGHIV22} about \textit{stretch-friendly partitions}. Given an $n$-vertex weighted graph $G=(V,E)$, a \textbf{stretch-friendly partition} $\mathcal{C}$ of $G$ is a partition of $V$ into \textit{clusters}, where each cluster $C\in\mathcal{C}$ is equipped with a rooted spanning tree $T_C$ for $C$. Moreover, these clusters and their spanning trees must satisfy the following properties:
\begin{enumerate}
    \item For every edge $e=(u,v)\in E$ with both endpoints belonging to the same cluster $C\in\mathcal{C}$, and every edge $e'$ in the unique $u-v$ path in $T_C$, $w(e')\leq w(e)$.
    \item For every edge $e=(u,v)\in E$ such that $u\in C$ and $v\notin C$ for some cluster $C\in\mathcal{C}$, and every edge $e'$ in the unique path from $u$ to the root of $T_C$, $w(e')\leq w(e)$.
\end{enumerate}

In \cite{BEGGHIV22}, the authors prove that for every $n$-vertex graph, and every positive parameter $t$, there is a stretch-friendly partition with at most $\frac{n}{t}$ clusters, where the spanning tree of each cluster is of depth $O(t)$. We use stretch-friendly partitions with parameter $t=\log k$, and then apply our PRDO on the resulting cluster-graph. Naturally, the stretch grows by a factor of $t$, but the number of vertices in the cluster-graph is at most $\frac{n}{t}$. Thus, the size of this PRDO becomes $O(\log k\cdot(\frac{n}{\log k})^{1+\frac{1}{k}})=O(n^{1+\frac{1}{k}})$. There are some details in adapting stretch-friendly partitions to the setting of PRDOs that we suppress here. See Section \ref{sec:LinearInteractiveSpanner} for the full details. We note that originally, stretch-friendly partitions were devised in \cite{BEGGHIV22} in the context of ultra-sparse spanners. A similar approach in the context of PRDOs for unweighted graphs was recently applied by Neiman and Shabat \cite{NS23}.

For weighted graphs, the construction of \cite{NS23} is based on a clustering with weaker properties than that of stretch-friendly partitions. Once the clustering is constructed, the construction invokes a black-box PRDO (of Thorup and Zwick \cite{TZ01,WN13}) on the graph induced by this clustering. As a result, the stretch in \cite{NS23} is at least the product of the maximum radius of the clustering by the stretch of Thorup-Zwick's oracle. The product of these quantities is at least $\Omega(\log^2n)$. Indeed, in its sparsest regime, the oracle of \cite{TZ01} has size $\Omega(n\log n)$ and stretch $\Theta(\log n)$. To have overall size of $O(n)$, one needs to invoke the oracle on a cluster-graph with $N=O\left(\frac{n}{\log n}\right)$ nodes, implying that the maximum radius of the clustering is $\Omega(\log n)$. As a result, the approach of \cite{NS23} is doomed to produce PRDOs with stretch $\Omega(\log^2n)$.

Neiman and Shabat \cite{NS23} show how this composition of clustering with a black-box PRDO can produce ultra-sparse PRDOs. To produce our ultra-sparse PRDOs we follow their approach. However, we use a stretch-friendly partition instead of their clustering, and we use our novel PRDO of size $O(n\log\log n)$ instead of the black-box PRDO of \cite{TZ01,WN13}. As a result, we obtain ultra-sparse PRDOs with stretch $\tilde{O}(\log n)$ (as opposed to $O(\log^{c'}n)$ in \cite{NS23}), and query time $O(\log\log n)$ (as opposed to at least $O(\log n)$ in \cite{NS23}).

\subsection{New Combinatorial and Algorithmic Notions}

In this paper we introduce and initiate the study of a number of new combinatorial and algorithmic notions. We have already mentioned them in previous sections, but we believe that they are important enough for highlighting them again here.

The first among these notions is the \textit{support size} of hopsets (see Section \ref{sec:SupportSize} for definition). We present a number of constructions of sparse hopsets with small hopbound and small support size. We have also showed that they directly lead to significantly improved preservers. We believe that these new hopsets could also be useful in many other applications of hopsets, in which the ultimate objective is to compute short paths, rather than just distance estimates.

Another class of new notions that we introduce are \textit{interactive} structures. These include interactive \textit{distance preservers}, interactive \textit{spanners}, and interactive \textit{emulators}\footnote{A related notion of \textit{path-reporting low-hop spanners} for metric spaces was recently introduced and studied in \cite{KLMS22,Fil22}. In the context of graphs, these objects are stronger than interactive emulators (because of the low-hop guarantee), but weaker than interactive spanners (as they do not provide paths in the input graph). Also, the constructions of \cite{KLMS22,Fil22} for general metrics have size $\Omega(n\log n)$ for all choices of parameters.}. Intuitively, interactive distance preserver is a pairwise PRDO with the additional property that all of its output paths belong to a sparse approximate distance preserver. We show that our new sparse preservers give rise directly to interactive distance preservers with asymptotically the same size.

Interactive spanner is a stronger notion than that of PRDO, because it adds an additional requirement that the union of all output paths needs to form a sparse spanner. Nevertheless, all known PRDOs for general graphs (including those that we develop in this paper) are, in fact, interactive spanners with asymptotically the same size.

Finally, interactive emulator is a stronger notion than that of non-path-reporting distance oracle. It generalizes the distance oracle of Mendel-Naor \cite{MN06}. Note that even though interactive emulators are not PRDOs by themselves, we show that in conjunction with interactive distance preservers, they are extremely helpful in building PRDOs.

\subsection{Organization}

After some preliminaries in Section \ref{sec:Preliminaries}, in Section \ref{sec:NewDistancePreserver} we devise our two new constructions of distance preservers. One of them is fully described in Section \ref{sec:NewDistancePreserver}, while for the other we provide a short sketch (full details of it appear in \nameref{sec:AppendixC}). Section \ref{sec:NewEmulator} is dedicated to the construction of our interactive emulator that is based on the Mendel-Naor distance oracle \cite{MN06}. Finally, in Section \ref{sec:NewPRDO} we present our new constructions of PRDOs.

In \nameref{sec:AppendixA} we show that  $H^{1/2}$ is a $(1+\epsilon,\left(\frac{\log k}{\epsilon}\right)^{O(\log k)})$-hopset, and also a $(3+\epsilon,k^{O(\log\frac{1}{\epsilon})})$-hopset. \nameref{sec:AppendixB} is devoted to a few technical lemmas. In \nameref{sec:AppendixD} we argue that the first $h$ levels (for a parameter $h\leq k$) of the TZ oracle can be used as a partial PRDO (in the sense described in Section \ref{sec:TechnicalPRDOs}), and extend Wulff-Nilsen's query algorithm \cite{WN13} to apply to this PRDO (with query time $O(\log h)$). \nameref{sec:AppendixE} is devoted to trading stretch for size in a general way, via stretch-friendly partitions (based on \cite{BEGGHIV22}). In \nameref{sec:AppendixF} we devise one more construction of an interactive emulator, in addition to those based on the TZ PRDO \cite{TZ01,WN13} and the Mendel-Naor oracle \cite{MN06}. The new construction is based upon neighborhood covers due to Awerbuch and Peleg \cite{AP90a} and Cohen \cite{C93}. \nameref{sec:AppendixG} provides a more detailed table that summarizes our new results.

\section{Preliminaries} \label{sec:Preliminaries}

\subsection{Ultrametrics, HSTs and Branching Events} \label{sec:GeneralPreliminaries}

Given an undirected weighted graph $G=(V,E)$, we denote by $d_G(u,v)$ the distance between some two vertices $u,v\in V$. When the graph $G$ is clear from the context, we sometimes omit the subscript $G$ and write $d(u,v)$.

Given some positive parameter $\beta$, we denote by $d^{(\beta)}_G(u,v)$ the weight of the shortest path between $u$ and $v$, among the paths that have at most $\beta$ edges.

When the graph $G$ is clear from the context, $n$ denotes the number of vertices in $G$ and $m$ denotes the number of edges.

\textbf{A Consistent Choice of Shortest Paths.} Given two vertices $u,v$ of a graph $G$, there can be more than one shortest path in $G$ between $u,v$. We often want to choose shortest paths \textbf{in a consistent manner}, that is, we assume that we have some fixed consistent rule of how to choose a shortest path between two vertices. For example, when viewing a path as a sequence of edges, choosing shortest paths in a consistent manner can be done by choosing for every $u,v\in V$ the shortest path between $u,v$ that is the smallest \textit{lexicographically} (assuming a total order over the edge set $E$).

For Section \ref{sec:NewEmulator}, we will need the following definitions, regarding metric spaces.
\begin{definition} \label{def:MetricSpace}
A \textbf{metric space} is a pair $(X,d)$, where $X$ is a set and $d:X\rightarrow\mathbb{R}_{\geq0}$ satisfies
\begin{enumerate}
    \item For every $x,y\in X$, $d(x,y)=0$ if and only if $x=y$.
    \item For every $x,y\in X$, $d(x,y)=d(y,x)$.
    \item (Triangle Inequality) For every $x,y,z\in X$, $d(x,z)\leq d(x,y)+d(y,z)$.
\end{enumerate}

If $d$ also satisfies $d(x,z)\leq\max\{d(x,y),d(y,z)\}$ for every $x,y,z\in X$, then $d$ is called an \textbf{ultrametric} on $X$, and $(X,d)$ is called an \textbf{ultrametric space}.
\end{definition}

\begin{definition} \label{def:HST}
A \textbf{hierarchically (well) separated tree} or \textbf{HST}\footnote{We actually give here the definition of a $1$-HST. The definition of a $k$-HST, for a general $k$, is given in \cite{BLMN03}.} is a rooted tree $T=(V,E)$ with labels\footnote{The original definition by Bartal in \cite{B96} uses weights on the \textit{edges} instead of labels on the vertices. We use a different equivalent notion, that was given by Bartal et al. in another paper \cite{BLMN03}.} $\ell:V\rightarrow\mathbb{R}_{\geq0}$, such that if $v\in V$ is a child of $u\in V$, then $\ell(v)\leq\ell(u)$, and for every leaf $v\in V$, $\ell(v)=0$. 
\end{definition}

Let $x,y$ be two vertices in a rooted tree (e.g., in an HST). Let $lca(x,y)$ denote the lowest common ancestor of $x,y$, which is a vertex $z$ such that $x,y$ are in its sub-tree, but $z$ does not have a child with the same property. Given an HST, let $L$ be its set of leaves, and define the function $d(x,y)=\ell(lca(x,y))$. It is not hard to see that $(L,d)$ is an ultrametric space\footnote{Strictly speaking, to satisfy property 1 of Definition \ref{def:MetricSpace}, one needs to require that all internal labels are strictly positive.}. In Section \ref{sec:NewEmulator}, we will use the fact, that was proved by Bartal et al. in \cite{BLMN03}, that \textit{every} finite ultrametric can be represented by an HST with this distance function.

Another useful notion is that of \textit{Branching Events} \cite{CE05}. Given a graph $G=(V,E)$ and a set of pairs $\mathcal{P}\subseteq V^2$, for every pair $a=(x,y)\in\mathcal{P}$, let $P_a$ be a shortest path between $x,y$ in the graph $G$ (for every $a\in\mathcal{P}$ we choose a single $P_a$ in some consistent manner). For $a,b\in\mathcal{P},x\in V$, we say that $(a,b,x)$ is a \textbf{Branching Event} if $x\in V(P_a)\cap V(P_b)$, and the adjacent edges to $x$ in $P_a$ are not the same as the adjacent edges to $x$ in $P_b$. Define $Branch(\mathcal{P})$ as the set of all branching events of $\mathcal{P}$.

This notion of \textit{Branching Events} was introduced by Coppersmith and Elkin in \cite{CE05} in the context of distance preservers, and was used in many subsequent works on this subject \cite{Pet09,B21,BVW21,AB24}. It also plays a key role in the construction by Elkin and Pettie in \cite{EP15}. In this paper, the authors built 
a pairwise PRDO with stretch $1$. The following theorem is an immediate corollary from Theorem 3.2 in \cite{EP15}.
\begin{theorem}[\cite{EP15}] \label{thm:DPPRO}
Given an undirected weighted graph $G$ and a set $\mathcal{P}$ of pairs of vertices, there is an interactive $1$-distance preserver with query time $O(1)$ and size 
\[O(n+|\mathcal{P}|+|Branch(\mathcal{P})|)~.\]
\end{theorem}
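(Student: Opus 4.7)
My plan is to take the preserver to be the union of a consistently-chosen family of shortest paths, and to equip it with constant-access navigation pointers so that each queried path can be emitted edge-by-edge in $O(1)$ amortized time per step. This follows the strategy underlying Theorem~3.2 of \cite{EP15}.

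First, I would fix a consistent rule for choosing shortest paths, and for each demand pair $a=(x_a,y_a)\in\mathcal{P}$ let $P_a$ be the selected shortest $x_a$--$y_a$ path. Setting $H=\bigcup_{a\in\mathcal{P}} E(P_a)$ gives a $1$-preserver by definition. To bound $|E(H)|$ I would invoke the classical branching-event analysis of Coppersmith--Elkin \cite{CE05}: directing each $P_a$ from $x_a$ to $y_a$, every directed edge of $H$ can be charged either to a pair endpoint (its head is $y_a$ on some $P_a$), to its head vertex (when every path through the edge continues along the same outgoing edge---the \emph{canonical} continuation), or to a branching event witnessing disagreement on the continuation. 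These three kinds of charges contribute $|\mathcal{P}|$, $O(n)$, and $O(|Branch(\mathcal{P})|)$ edges respectively.

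Next, to convert $H$ into an interactive $1$-distance preserver with $O(1)$ query-time overhead, I would store at each vertex $v$ a small table mapping each canonical incoming directed edge to its canonical outgoing edge; this uses $O(|E(H)|)$ space in total. For each pair $a$ I would additionally maintain a linked list $L_a$ of \emph{branching overrides}---entries $(v,e)$ signifying that at the non-canonical vertex $v$ on $P_a$ the path continues along edge $e$, stored in the order in which $v$ appears along $P_a$---plus a pointer to the first edge of $P_a$ at $x_a$. A query $(x_a,y_a)$ then walks $P_a$ one edge at a time: at each intermediate vertex, the algorithm consults the head of $L_a$; if it refers to the current vertex it uses the override and pops it, and otherwise it looks up the canonical table. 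Each hop costs $O(1)$ and the initialization costs $O(1)$, yielding the claimed query time $O(1)+O(|P_a|)$.

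The key technical point I would verify carefully is that $\sum_{a\in\mathcal{P}} |L_a|=O(|Branch(\mathcal{P})|)$. Every entry $(v,e)\in L_a$ records a vertex at which $P_a$'s continuation disagrees with the canonical pattern, so it is witnessed by at least one branching event $(a,b,v)$; conversely, each branching event appears in at most the two lists indexed by its two pairs. Combined with the edge-count bound on $H$, the total storage is $O(n+|\mathcal{P}|+|Branch(\mathcal{P})|)$. The main obstacle is precisely to align the canonical/non-canonical dichotomy with the branching-event definition so that the three-way charging and the list-size bound both go through for the same consistent choice of shortest paths; this alignment is exactly the content of the block decomposition of the preserver implicit in \cite{EP15}.
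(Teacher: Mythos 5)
The paper itself does not prove this statement: it imports it wholesale as a corollary of Theorem~3.2 of \cite{EP15}, so the relevant comparison is to that construction, and your plan (union of consistently chosen shortest paths, a Coppersmith--Elkin branching-event count, default continuation pointers plus per-pair override lists) is essentially a reconstruction of it. The oracle half of your argument is fine: storing a default continuation for every directed edge of $H$ costs $O(|E(H)|)$ words, and, provided the default continuation of a directed edge is defined as the continuation of one concrete designated path through that edge, each override entry of $L_a$ at $v$ is witnessed by the branching event between $P_a$ and that designated path, with multiplicity $O(1)$ per event, so $\sum_a |L_a|=O(|Branch(\mathcal{P})|)$ and the walk costs $O(1)$ per reported edge.

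The gap is in the edge-count charging. Your middle bucket --- directed edges $(u,v)$ all of whose paths continue past $v$ along the same outgoing edge, charged to the head vertex $v$ --- is not $O(n)$: a single vertex can absorb arbitrarily many such charges. Take vertices $a_1,\dots,a_m$, $b_1,\dots,b_m$, $c$ and the $m^2$ demand pairs whose (unique, consistent) shortest paths are $a_i\to b_j\to c$; every edge $(a_i,b_j)$ lands in your canonical bucket, which therefore has size $m^2=\omega(n)$ while $n=2m+1$. The theorem still holds there only because $|Branch(\mathcal{P})|$ is huge, i.e.\ your per-bucket accounting does not prove the bound it claims. The standard repair (and what the \cite{CE05,EP15} analysis actually does) is per vertex: fix one reference path through $v$, charge its at most two edges at $v$ to $v$ itself, and charge every other edge of $H$ incident to $v$ to the branching event at $v$ between its path and the reference path; each event absorbs $O(1)$ such charges, giving $|E(H)|\leq n+O(|Branch(\mathcal{P})|)$, with the $|\mathcal{P}|$ term needed only for the per-pair pointers and lists. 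Alternatively, keep your buckets but observe that any two canonical-bucket edges with the same head induce a branching event at that head, so that bucket is at most $n+|Branch(\mathcal{P})|$. Either one-line patch closes the gap; as written, the claim ``these contribute $O(n)$'' is unjustified and false.
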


\subsection{Interactive Structures} \label{sec:Definitions}


\begin{definition}[Interactive Spanner]
Given an undirected weighted graph $G=(V,E,w)$, an \textbf{interactive $\alpha$-spanner} is a pair $(S,D)$ where $S$ is an (edges set of an) $\alpha$-spanner of $G$, and $D$ is an oracle that answers every query $(u,v)\in V^2$ with a path $P_{u,v}\subseteq S$ such that
\[d_G(u,v)\leq w(P_{u,v})\leq\alpha\cdot d_G(u,v)~.\]
\end{definition}
The size of $(S,D)$ is defined as $\max\{|S|,|D|\}$, where $|D|$ is the storage size of the oracle $D$, measured in \textit{words}. The query time of $(S,D)$ is the query time of $D$. Note that in particular, $D$ is a PRDO for $G$.

\begin{definition}[Interactive Emulator] \label{def:InteractiveEmulator}
Given an undirected weighted graph $G=(V,E,w)$, an \textbf{interactive $\alpha$-emulator} is a pair $(H,D)$, where $H=(V,E',w')$ is an $\alpha$-emulator (i.e., an $(\alpha,0)$-emulator) of $G$, and $D$ is an oracle that answers every query $(u,v)\in V^2$ with a path $P_{u,v}\subseteq E'$ such that
\[d_G(u,v)\leq w'(P_{u,v})\leq\alpha\cdot d_G(u,v)~.\]
\end{definition}

\begin{definition}[Interactive Distance Preserver]
Given an undirected weighted graph $G=(V,E,w)$ and a set $\mathcal{P}$ of pairs of vertices, an \textbf{interactive $\alpha$-distance preserver} is a pair $(S,D)$, where $S$ is an $\alpha$-distance preserver of $G$ and $\mathcal{P}$, and $D$ is an oracle that answers every query $(u,v)\in\mathcal{P}$ with a path $P_{u,v}\subseteq S$ such that
\[d_G(u,v)\leq w(P_{u,v})\leq\alpha\cdot d_G(u,v)~.\]
\end{definition}
Note that $D$ is a pairwise PRDO for $G$ and $\mathcal{P}$.

\subsection{Hierarchy of Sets} \label{sec:HierarchyOfSets}

Let $G=(V,E)$ be an undirected weighted graph. In most of our constructions, we use the notion of a \textit{hierarchy of sets} $V=A_0\supseteq A_1\supseteq A_2\supseteq\cdots\supseteq A_l=\emptyset$, for an integer parameter $l$. Given an index $i\in[0,l-1]$ and a vertex $v\in V$, we define
\begin{enumerate}
    \item $p_i(v)=$ the closest vertex to $v$ from $A_i$. The vertex $p_i(v)$ is called the $i$-th \textbf{pivot} of $v$.
    \item $B_i(v)=\{u\in A_i\;|\;d(v,u)<d(v,p_{i+1}(v)\}$. The set $B_i(v)$ is called the $i$-th \textbf{bunch} of $v$.
    \item $\bar{B}_i(v)=B_i(v)\cup\{p_i(v)\}$. The set $\bar{B}_i(v)$ is called the $i$-th \textbf{extended bunch} of $v$.
    \item $B^{1/2}_i(v)=\{u\in A_i\;|\;d(v,u)<\frac{1}{2}d(v,p_{i+1}(v)\}$. The set $B^{1/2}_i(v)$ is called the $i$-th \textbf{half-bunch} of $v$.
    \item $\bar{B}^{1/2}_i(v)=B^{1/2}_i(v)\cup\{p_i(v)\}$. The set $\bar{B}^{1/2}_i(v)$ is called the $i$-th \textbf{extended half-bunch} of $v$.
\end{enumerate}

Note that at this point, the sets $B_{l-1}(v),\bar{B}_{l-1}(v),B^{1/2}_{l-1}(v),\bar{B}^{1/2}_{l-1}(v)$ are not well-defined, since $p_l(v)$ is not defined. We define them as the set $A_{l-1}$.

We also define the following sets, each of them consists of pairs of vertices from $V$.
\begin{enumerate}
    \item For every $0\leq i<l$, $H_i=\bigcup_{v\in A_i}\bigcup_{u\in B_i(v)}\{(v,u)\}$.
    \item For every $0\leq i<l$, $\bar{H}_i=\bigcup_{v\in A_i}\bigcup_{u\in\bar{B}_i(v)}\{(v,u)\}$.
    \item $H=\bigcup_{i=0}^{l-1}H_i$, and $\bar{H}=\bigcup_{i=0}^{l-1}\bar{H}_i$.
    \item For every $0\leq i<l$, $H^{1/2}_i=\bigcup_{v\in A_i}\bigcup_{u\in B^{1/2}_i(v)}\{(v,u)\}$.
    \item For every $0\leq i<l$, $\bar{H}^{1/2}_i=\bigcup_{v\in A_i}\bigcup_{u\in\bar{B}^{1/2}_i(v)}\{(v,u)\}$.
    \item $H^{1/2}=\bigcup_{i=0}^{l-1}H^{1/2}_i$, and $\bar{H}^{1/2}=\bigcup_{i=0}^{l-1}\bar{H}^{1/2}_i$.
\end{enumerate}

In \nameref{sec:AppendixA} we prove the following claim, that generalizes a similar result by Elkin and Neiman \cite{EN19} and Huang and Pettie \cite{HP17}.
\begin{claim} \label{claim:ENHopset}
For every choice of a set hierarchy, the set $\bar{H}^{1/2}$ is a $(1+\epsilon,\beta_l)$-hopset, simultaneously for all positive $\epsilon\leq O(l)$, where $\beta_l=O(\frac{l}{\epsilon})^{l-1}$.
\end{claim}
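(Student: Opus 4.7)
The plan is to follow the Elkin--Neiman / Huang--Pettie analysis \cite{EN19, HP17} for the full-bunch hopset $\bar{H}$, and to adapt it to handle the half-bunch factor $\tfrac{1}{2}$. Fix $u, v \in V$ and a shortest $u$-$v$ path $P$ in $G$. Because $\bar{H}^{1/2}$ itself does not depend on $\epsilon$, the construction is universal and the analysis below is carried out for any fixed $\epsilon \leq O(l)$. The goal is to exhibit a walk from $u$ to $v$ in $G \cup \bar{H}^{1/2}$ that alternates original-graph edges along $P$ with hopset edges of $\bar{H}^{1/2}$, of total weight at most $(1+\epsilon) d_G(u,v)$ and with at most $\beta_l = O(l/\epsilon)^{l-1}$ hops.

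The technical core is an inductive lemma, parameterized by the level $i \in [0, l-1]$, that roughly states the following: for any vertex $x$ and any target $y$ on $P$ past $x$, satisfying a suitable proximity condition relative to $r := d(x, p_{i+1}(x))$, there is a walk from $x$ to $y$ in $G \cup \bar{H}^{1/2}$ using at most $\beta_i = O(l/\epsilon)^{i}$ hops and with stretch at most $1 + O(i \epsilon / l)$. In the inductive step at level $i$, I walk along $P$ in sub-steps of length roughly $(\epsilon/l) \cdot r$. At each sub-step I search for an $A_i$-vertex $x'$ close to both $x$ and the current walking target; if $d(x, x') < \tfrac{1}{2} r$, then $x' \in B^{1/2}_i(x)$ and hence $(x, x') \in \bar{H}^{1/2}_i$, so I take that hop and recurse from $x'$ towards $y$ by induction. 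If after $O(l/\epsilon)$ sub-steps no suitable half-bunch member has been found, it must be that $p_{i+1}(x)$ is close to our current position, and I jump to it via the edge $(x, p_{i+1}(x)) \in \bar{H}^{1/2}_i$ and invoke the inductive hypothesis at level $i+1$. The base case $i = l - 1$ is handled using the convention that $\bar{B}^{1/2}_{l-1}(x) = A_{l-1}$, so the appropriate target is reachable in a single hop.

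The main obstacle is that the half-bunch admits only $A_i$-vertices $x'$ with $d(x, x') < \tfrac{1}{2} r$, rather than the weaker $d(x, x') < r$ available in the full-bunch analysis. This effectively halves the jump radius per sub-step, and therefore at most doubles the number of sub-steps needed to exhaust a level; this factor of $2$ is absorbed into the hidden constant in $O(l/\epsilon)$, so the asymptotic hopbound $\beta_l = O(l/\epsilon)^{l-1}$ survives unchanged. A subtle technical point is guaranteeing the strict inequality $d(x, x') < \tfrac{1}{2} r$ whenever we invoke a half-bunch edge; I would enforce this by choosing walking targets along $P$ with a margin of safety proportional to $(\epsilon/l) \cdot r$ and applying the triangle inequality between $x$, the walking target on $P$, and its $A_i$-representative. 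Summing the per-level stretch $O(\epsilon / l)$ over all $l$ levels and multiplying the per-level hop counts then yield the total stretch $1 + \epsilon$ and the claimed hopbound $\beta_l = O(l/\epsilon)^{l-1}$.
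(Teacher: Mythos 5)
Your high-level intuition is sound, and your main structural observation -- that replacing bunches by half-bunches only costs constant factors that get absorbed into the $O(l/\epsilon)$ base -- is exactly what happens in the paper (there it surfaces as $r_0=2$ instead of $1$ and an extra factor of $2$ in the ``Bad'' case, all swallowed by choosing $h=\Theta(l/\epsilon)$). But the inductive mechanism you describe has a genuine gap. In a weighted graph there is no reason an $A_i$-vertex should exist near your walking targets at all: the only candidates are the pivots $p_i(z)$ of path vertices $z$, and the entire difficulty is to bound $d(z,p_i(z))$. In the paper's proof this bound is the second half of a two-part invariant maintained over a hierarchical partition tree of the shortest path (a \emph{Bad} node $[x,y]$ at level $i$ satisfies $d(x,p_{i+1}(x)),d(y,p_{i+1}(y))\leq r_i\cdot d(x,y)$ with $r_i=O(1)$), i.e., it is propagated \emph{upward} from the failure of the analysis at lower levels. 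Your recursion only ever goes up (level $i$ invokes level $i+1$), so you have no source for this bound. Relatedly, the step ``if after $O(l/\epsilon)$ sub-steps no half-bunch member is found, then $p_{i+1}(x)$ is close enough to jump to'' is unjustified: $d(x,p_{i+1}(x))=r$ no matter what happened during the sub-steps; what must actually be shown is that the $O(r)$ detour through the pivot is an $O(\epsilon/l)$-fraction of the weight of the segment being processed, and in the paper this holds only because the detour is charged at the \emph{parent} level, whose subpath is $h=\Theta(l/\epsilon)$ times longer than the short Bad child whose pivot distance is $O(1)$ times its own length.

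The hop-count bookkeeping also does not cohere. If each level-$i$ sub-step were resolved by a single hopset edge and each level escalated at most once, the total would be $O(l^2/\epsilon)$ hops -- far better than $O(l/\epsilon)^{l-1}$, which is a sign the scheme as described cannot be carried out. The exponential hopbound comes from the multiplicative recurrence $\beta_i=2h\cdot\beta_{i-1}$: each of the $\Theta(l/\epsilon)$ level-$i$ segments must be traversed \emph{recursively} by the level-$(i-1)$ guarantee (level-$i$ bunch edges are generally unavailable inside them), and this downward recursion is absent from your sketch. In addition, $\beta_i=O(l/\epsilon)^i$ is inconsistent with your base case ($i=l-1$ handled by a single hop), ``recurse from $x'$ towards $y$ by induction'' has no specified induction parameter, the edge $(x,p_{i+1}(x))$ is a level-$(i+1)$ pivot edge rather than an element of $\bar{H}^{1/2}_i$, and your half-bunch hops need $x\in A_i$, which ``any vertex $x$'' does not provide. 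To repair the argument you essentially need the paper's structure: subdivide $P$ recursively into at most $2h$ pieces per level, label nodes Good/Bad, and maintain simultaneously the hop-bounded stretch bound for Good nodes and the pivot-distance bound for Bad nodes, solving the recurrences $\alpha_i=\alpha_{i-1}+4r_{i-1}/h$, $\beta_i=2h\beta_{i-1}$, $r_i=3+5r_{i-1}/h$.
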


Given a constant parameter $t>1$, we sometimes use the notation $\gamma_t(\epsilon,k)=\beta_{\lceil\log_tk\rceil+1}$, i.e., 
\begin{equation} \label{eq:GammaDef}
\gamma_t(\epsilon,k)=O\left(\frac{\log k}{\epsilon}\right)^{\lceil\log_tk\rceil}
\end{equation}

The following lemma will be useful for constructing interactive structures.
\begin{lemma} \label{lemma:PathsToPivots}
For every $i=0,1,...,l-1$, there is an interactive $1$-distance preserver for the set of pairs 
\[\mathcal{P}_i=\{(v,p_i(v))\;|\;v\in V\}~,\]
with query time $O(1)$ and size $O(n)$.
\end{lemma}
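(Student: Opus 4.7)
The plan is to build a multi-source shortest-path forest rooted at $A_i$. Run Dijkstra simultaneously from all vertices of $A_i$ on $G$, breaking ties by a fixed consistent rule (e.g., lexicographic on vertex ids), so that every vertex $v \notin A_i$ gets a unique parent $\pi_i(v)$ along a shortest path to $A_i$, and every vertex $v \in A_i$ has $\pi_i(v)$ undefined and lies in its own tree. Let $F_i$ be the forest of parent edges $\{(v,\pi_i(v)) : v \in V \setminus A_i\}$. By construction the tree of $F_i$ containing $v$ is rooted exactly at $p_i(v)$, and the unique $v$--$p_i(v)$ path in $F_i$ is a shortest path in $G$. Hence the edge-set $S_i := F_i$ is a $1$-preserver for $\mathcal{P}_i$, and $|S_i| \le n - |A_i| = O(n)$.

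For the oracle $D_i$, store for each vertex $v$ the parent pointer $\pi_i(v)$ together with the pivot $p_i(v)$ (both are already produced by the multi-source Dijkstra). This requires $O(n)$ words. Upon a query $(v,p_i(v)) \in \mathcal{P}_i$, retrieve $\pi_i(v)$, then $\pi_i(\pi_i(v))$, and so on, emitting the corresponding edges, until the current vertex equals $p_i(v)$. Termination is guaranteed because following parent pointers strictly decreases the distance $d_G(\cdot, A_i)$. The overhead before the first edge is output is $O(1)$, and the remaining work is proportional to the length of the reported path $P_{v,p_i(v)}$, so the total query time is $O(1) + O(|P_{v,p_i(v)}|)$ as required. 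The returned path lies in $S_i$ and has weight exactly $d_G(v,p_i(v))$.

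The only real point to verify carefully is the tie-breaking: without it, a vertex could have several equally good parents and the collection of chosen ``parent'' edges might fail to form a tree, or might not culminate at $p_i(v)$. The fixed consistent rule resolves this — it simultaneously selects $p_i(v)$ as the unique nearest $A_i$-vertex to $v$ and selects the unique parent of $v$ on the chosen shortest path, so the forest structure and the pivot assignment are mutually consistent. Everything else (size, stretch, query time) follows directly from standard properties of shortest-path trees.
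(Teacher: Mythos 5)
Your proposal is correct and is essentially the paper's own argument: the parent pointers $\pi_i(v)$ of your multi-source shortest-path forest are exactly the next-hop pointers $q(v)$ toward $p_i(v)$ that the paper stores, the forest edges form the $O(n)$-size preserver, and queries are answered by following pointers with $O(1)$ overhead. The only difference is presentational — you justify termination via the forest construction with consistent tie-breaking, while the paper observes that any vertex $x$ on the shortest $v$--$p_i(v)$ path satisfies $p_i(x)=p_i(v)$ — and both hinge on the same consistent choice of pivots/shortest paths.
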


\begin{proof}
Define an oracle $D$ that for every $v\in V$ stores a pointer $q(v)$ to the next vertex on the shortest path from $v$ to $p_i(v)$. Given a query $(v,p_i(v))$, return the path created by following these pointers from $v$ to $p_i(v)$.

Note that if the vertex $x$ is on the shortest path from $v$ to $p_i(v)$, then it is easy to verify that $p_i(x)=p_i(v)$. Therefore the pointer $q(x)$ keeps us on the shortest path from $v$ to $p_i(v)$. Hence, the stretch of $D$ is $1$. The query time of $D$ is linear in the size of the returned path. Also, the size of $D$ is $O(n)$.

Lastly, note that every edge on an output path of $D$ is of the form $(v,q(v))$, where by $q(v)$ we refer to the vertex that $q(v)$ points to. Thus, the number of such edges is at most $n$. That is, if we denote by $S$ the set of edges that participate in an output path of $D$, then $|S|\leq n$, and $(S,D)$ is an interactive distance preserver with stretch $1$, query time $O(1)$ and size $O(n)$.


\end{proof}

We now state a useful property of the above notions. A similar claim was proved in \cite{Pet09,EP15} for \textit{one-third}-bunches\footnote{The formal definition of a one-third-bunch is analogous to the definition of a half-bunch, but with $\frac{1}{3}$ instead of $\frac{1}{2}$: $B^{1/3}_i(v)=\{u\in A_i\;|\;d_G(v,u)<\frac{1}{3}d_G(v,p_{i+1}(v))\}$.} (instead of half-bunches), and later it was extended to half-bunches in \cite{EGN22}. 
\begin{lemma} \label{lemma:BranchingEvents}
For every $i=0,1,...,l-1$,
\[|Branch(H^{1/2}_i)|\leq4\sum_{u\in A_i}|B_i(u)|^3~.\]
\end{lemma}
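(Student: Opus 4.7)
The plan is to adapt the standard branching-events counting argument of \cite{CE05,Pet09,EP15,EGN22} to the half-bunch setting. Fix $i$ and a branching event $(a,b,x)\in Branch(H^{1/2}_i)$, write $a=(v_1,u_1)$ and $b=(v_2,u_2)$, and assume without loss of generality that $d(v_1,p_{i+1}(v_1))\le d(v_2,p_{i+1}(v_2))$, paying an overall factor of $2$ for the two orientations.

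The key step I will carry out is to show that this WLOG forces $v_1,u_1\in B_i(v_2)$. Since $x$ lies on the chosen shortest paths $P_a$ and $P_b$, we have $d(v_1,x),\,d(x,u_1)\le d(v_1,u_1)$ and $d(v_2,x)\le d(v_2,u_2)$, so the triangle inequality yields
\[ \max\{d(v_2,v_1),\,d(v_2,u_1)\}\ \le\ d(v_2,u_2)+d(v_1,u_1).\]
The half-bunch property gives $d(v_2,u_2)<\frac{1}{2}d(v_2,p_{i+1}(v_2))$ and $d(v_1,u_1)<\frac{1}{2}d(v_1,p_{i+1}(v_1))\le\frac{1}{2}d(v_2,p_{i+1}(v_2))$, so the right-hand side is strictly less than $d(v_2,p_{i+1}(v_2))$. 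Together with $v_1,u_1\in A_i$, this places both into $B_i(v_2)$. Since also $u_2\in B^{1/2}_i(v_2)\subseteq B_i(v_2)$, I obtain a map sending each such branching event to a triple $(v_1,u_1,u_2)\in B_i(v_2)^3$ indexed by $v_2\in A_i$.

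To close the count, I observe that for a fixed $v_2$ and a fixed triple $(y_1,y_2,y_3)\in B_i(v_2)^3$, the ordered pair $(a,b)=((y_1,y_2),(v_2,y_3))$ is uniquely determined. I then invoke the standard fact that under a consistent (e.g.\ lexicographically minimal) choice of shortest paths, $P_a\cap P_b$ is a single contiguous sub-path, so at most two vertices $x$ can be branching events for the same ordered pair. This gives at most $2\sum_{v_2\in A_i}|B_i(v_2)|^3$ events under the WLOG, hence a total of at most $4\sum_{u\in A_i}|B_i(u)|^3$, as claimed.

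The main subtlety, as I see it, is the factor $\frac{1}{2}$ in the definition of the half-bunch: it is exactly what lets the two half-radii $\frac{1}{2}d(v_1,p_{i+1}(v_1))$ and $\frac{1}{2}d(v_2,p_{i+1}(v_2))$ sum to strictly less than $d(v_2,p_{i+1}(v_2))$ under the WLOG, which is why the argument works for half-bunches (and more generally for $\rho$-bunches with $\rho\le\frac{1}{2}$) but would fail for full bunches. The remaining pieces are routine bookkeeping: verifying the ``at most two branching events per ordered pair'' fact from consistent shortest paths, and checking that the symmetric WLOG indeed costs only a factor of $2$.
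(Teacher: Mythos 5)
Your proof is correct and follows essentially the same argument as the paper: the triangle inequality through the branching vertex $x$, combined with the half-bunch radius bound, places the three remaining endpoints in the full bunch $B_i(\cdot)$ of one endpoint, and the count is closed by charging each event to a triple in that bunch and invoking the Coppersmith--Elkin fact that two consistently chosen shortest paths admit at most two branching events, with a factor $2$ for the symmetric case. The only cosmetic difference is that your case split compares the pivot distances $d(v_1,p_{i+1}(v_1))$ and $d(v_2,p_{i+1}(v_2))$, whereas the paper compares the pair weights $d(u,v)$ and $d(y,z)$; both yield the same bound $4\sum_{u\in A_i}|B_i(u)|^3$.
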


The proof of this lemma can be found in \nameref{sec:AppendixB}.


In all our constructions, we employ the following general process of producing the hierarchy of sets. The same process was employed also in \cite{TZ01,EP15}, and in many other works on this subject. We define $A_0$ to be $V$, and then for every $i=0,1,...,l-2$ sequentially, we construct $A_{i+1}$ by sampling each vertex from $A_i$ independently with some probability $q_i$. 

In \nameref{sec:AppendixB} we prove the following lemma.

\begin{lemma} \label{lemma:BasicSizes}
The following inequalities hold:
\begin{enumerate}
    \item For every $i<l-1$, $\mathbb{E}[|\bar{H}_i|]\leq\frac{n}{q_i}\prod_{j=0}^{i-1}q_j$.\newline
    \item $\mathbb{E}[|\bar{H}_{l-1}|]\leq2\cdot\max\{1,(n\prod_{j=0}^{l-2}q_j)^2\}$.\newline
    \item For every $i<l-1$, $\mathbb{E}[|Branch(H^{1/2}_i)|]\leq24\cdot\frac{n}{q^3_i}\prod_{j=0}^{i-1}q_j$.\newline
    \item $\mathbb{E}[|Branch(H^{1/2}_{l-1})|]\leq60\cdot\max\{1,(n\prod_{j=0}^{l-2}q_j)^4\}$.
\end{enumerate}
\end{lemma}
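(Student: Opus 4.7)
The plan is to prove all four inequalities by linearity of expectation, reducing parts 1 and 3 to the analysis of a geometric random variable and parts 2 and 4 to moment estimates for the Binomial variable $|A_{l-1}|$. The common setup is that $A_{i+1}$ is obtained from $A_i$ by independent sub-sampling with probability $q_i$, so $\mathbb{P}[v \in A_i] = \prod_{j=0}^{i-1} q_j$ and, conditional on $A_i$, the indicators $\mathbf{1}[u \in A_{i+1}]$ for $u \in A_i$ are independent Bernoulli$(q_i)$.

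For parts 1 and 3, I fix a vertex $u$ and condition on $u \in A_i$. Enumerate $A_i$ in increasing order of distance from $u$, with ties broken consistently, as $u = u_0, u_1, u_2, \ldots$. Then $|B_i(u)| = j^\ast := \min\{j \geq 0 : u_j \in A_{i+1}\}$, and by the independence above, $j^\ast$ is stochastically dominated by a geometric random variable on $\{0,1,\ldots\}$ with parameter $q_i$ (if $A_{i+1}$ happens to be empty we may formally extend the Bernoulli sequence, which only increases $j^\ast$ and keeps the upper bound valid). For part 1 a case split on whether $u \in A_{i+1}$ gives
\[
\mathbb{E}[|\bar{B}_i(u)| \mid u \in A_i] \leq q_i \cdot 1 + (1 - q_i) \cdot (1/q_i) \leq 1/q_i,
\]
and summing over $u$ yields the required bound. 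For part 3 I invoke Lemma \ref{lemma:BranchingEvents} together with the standard identity $\mathbb{E}[j^{\ast 3}] \leq 6/q_i^3$ for a geometric variable, and again sum over $u \in A_i$ to obtain the constant $24$.

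For parts 2 and 4, the convention that $B_{l-1}(v) = A_{l-1}$ gives $|\bar{H}_{l-1}| \leq |A_{l-1}|^2$ and, by Lemma \ref{lemma:BranchingEvents}, $|Branch(H^{1/2}_{l-1})| \leq 4|A_{l-1}|^4$. Since $|A_{l-1}|$ is Binomial$(n,p)$ with $p = \prod_{j=0}^{l-2} q_j$, I expand its second and fourth moments by writing $|A_{l-1}| = \sum_v X_v$ and using $\mathbb{E}[X_{i_1} \cdots X_{i_k}] = p^{|\{i_1,\ldots,i_k\}|}$ to obtain polynomials in $np$ with non-negative coefficients. I then separate the cases $np \leq 1$ and $np > 1$; in each regime the leading term dominates up to a constant, yielding $\mathbb{E}[|A_{l-1}|^2] \leq 2\max\{1,(np)^2\}$ and $\mathbb{E}[|A_{l-1}|^4] \leq 15\max\{1,(np)^4\}$, which together with the factors above produce the constants $2$ and $60$.

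The calculations are routine and there is no real conceptual obstacle; the only mild subtlety is the truncation issue in the geometric-variable argument (handled as indicated above), and keeping track of the constants in the third and fourth moments of the Geometric and Binomial distributions.
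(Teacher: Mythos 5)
Your proposal is correct and follows essentially the same route as the paper's proof: conditioning on membership in $A_i$, dominating the (extended) bunch size by a geometric variable with parameter $q_i$ and using its first and third moments ($\leq 1/q_i$ and $\leq 6/q_i^3$) together with Lemma \ref{lemma:BranchingEvents}, and bounding the second and fourth moments of the binomial $|A_{l-1}|$ by $2\max\{1,(nq)^2\}$ and $15\max\{1,(nq)^4\}$ for the last level. The only nitpick is that $|B_i(u)|$ is in general only at most (not equal to) your $j^{\ast}$, since vertices at distance exactly $d(u,p_{i+1}(u))$ may precede the first $A_{i+1}$-vertex in the tie-broken ordering without belonging to $B_i(u)$, but as you only need an upper bound this is immaterial.
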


\section{New Interactive Distance Preservers} \label{sec:NewDistancePreserver}

We start by introducing two constructions of interactive $(1+\epsilon)$-distance preservers. For most choices of the parameters (as long as $\epsilon>(\log\log n)^{-k^c}$, for a constant $c$ which is approximately $0.58$), the second construction has an asymptotically smaller size than the first one. See Remark \ref{remark:PreserversComparison} for comparison between the two constructions.

\begin{theorem} \label{thm:DistancePreserver1}
Given an undirected weighted graph $G=(V,E)$, an integer $k\geq3$, a positive parameter $\epsilon\leq O(\log k)$, and a set of pairs $\mathcal{P}\subseteq V^2$, there exists an interactive $(1+\epsilon)$-distance preserver with query time $O(1)$ and size
\[O(|\mathcal{P}|\cdot\gamma_{4/3}(\epsilon,k)+n\log k+n^{1+\frac{1}{k}})~,\]
where $\gamma_{4/3}(\epsilon,k)=O(\frac{\log k}{\epsilon})^{\lceil\log_{4/3}k\rceil}$.

In addition, there exists a universal\footnote{By \textit{universal} $(\alpha(\epsilon),\beta(\epsilon))$-\textit{hopset}, we mean that this hopset is an $(\alpha(\epsilon),\beta(\epsilon))$-hopset for every $\epsilon$ simultaneously.} $(1+\epsilon,\gamma_{4/3}(\epsilon,k))$-hopset with size $O(n^{1+\frac{1}{k}})$ and support size $O(n^{1+\frac{1}{k}}+n\log k)$.
\end{theorem}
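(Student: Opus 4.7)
The plan is to first construct a universal $(1+\epsilon,\gamma_{4/3}(\epsilon,k))$-hopset meeting the claimed size and support size bounds, and then assemble the interactive distance preserver for $\mathcal{P}$ from it.

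Fix $l = \lceil\log_{4/3}k\rceil + 1$ and build the hierarchy $V = A_0 \supseteq A_1 \supseteq \cdots \supseteq A_l = \emptyset$ by independent sampling: each $A_{i+1}$ is obtained from $A_i$ by retaining each vertex with probability $q_i = n^{-\kappa_i}$, where $\kappa_i = c\cdot(4/3)^i$ and $c \approx 1/(3k)$ is tuned carefully as described below. Take the hopset to be $\bar{H}^{1/2} = \bigcup_{i=0}^{l-1}\bar{H}^{1/2}_i$ as defined in Section \ref{sec:HierarchyOfSets}. Claim \ref{claim:ENHopset} then immediately gives the universal $(1+\epsilon,\beta_l)$-hopset property, and with our choice of $l$, $\beta_l = O(l/\epsilon)^{l-1} = \gamma_{4/3}(\epsilon,k)$.

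I next bound the hopset size and its support size. Lemma \ref{lemma:BasicSizes}(1,2) gives $\mathbb{E}[|\bar{H}^{1/2}_i|] \leq n^{1+\kappa_i-\sum_{j<i}\kappa_j}$, and under the $(4/3)$-schedule the exponent equals $1+c(3-2(4/3)^i)$; this is maximized at $i=0$ (value $1+c \leq 1+1/k$) and is geometrically smaller at higher levels, hence $\mathbb{E}[|\bar{H}^{1/2}|] = O(n^{1+1/k})$. The critical structural property of the ratio $4/3$ is that the exponent in Lemma \ref{lemma:BasicSizes}(3), namely $3\kappa_i - \sum_{j<i}\kappa_j = 3c$, is \textit{constant in }$i$, so $\mathbb{E}[|Branch(H^{1/2}_i)|] = O(n^{1+3c})$ for every level. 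I then build the supporting edge-set by invoking Theorem \ref{thm:DPPRO} on each $\bar{H}^{1/2}_i$ separately; summing the $O(n+|\bar{H}^{1/2}_i|+|Branch(H^{1/2}_i)|)$ bound across the $l=O(\log k)$ levels, the $O(n)$-per-level term yields $O(n\log k)$, the bunch term yields $O(n^{1+1/k})$, and choosing $c$ slightly below $1/(3k)$ (to absorb the $\log k$ factor coming from summing $l$ levels) keeps the branching-event term at $O(n^{1+1/k})$. Finally, Lemma \ref{lemma:PathsToPivots} adds another $O(n)$ per level to handle the pivot edges in $\bar{H}^{1/2}\setminus H^{1/2}$, yielding the claimed $O(n^{1+1/k}+n\log k)$ support size. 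Let $(S_H,D_H)$ denote the resulting interactive $1$-preserver for $\bar{H}^{1/2}$.

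Given $(S_H,D_H)$, the interactive preserver for $\mathcal{P}$ is built as follows. For every pair $(u,v) \in \mathcal{P}$ precompute and store a $\beta_l$-hop path $P_{u,v}$ in $G \cup \bar{H}^{1/2}$ with $w(P_{u,v}) \leq (1+\epsilon)d_G(u,v)$ (which exists by the hopset guarantee); this costs $O(\gamma_{4/3}(\epsilon,k)\cdot|\mathcal{P}|)$ words. On query $(u,v) \in \mathcal{P}$ traverse $P_{u,v}$: output every original edge directly, and expand every hopset edge $(x,y) \in \bar{H}^{1/2}$ in $O(1)$ amortized time per reported edge by invoking $D_H(x,y)$ to return a shortest $x$-$y$ path in $S_H \subseteq E$. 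The concatenation is a $u$-$v$ walk in $G$ of weight at most $(1+\epsilon)d_G(u,v)$, so the overall stretch is $1+\epsilon$ and the query time is $O(1)$. The main obstacle is the quantitative tuning of $c$: the natural choice $c = 1/(3k)$ gives per-level branching events $n^{1+1/k}$, and summing across $l \approx \log_{4/3}k$ levels would introduce an unwanted $\log k$ overhead in the support size, so $c$ must be shrunk just enough to absorb this overhead while preserving $l \approx \log_{4/3} k$ and the hopbound $\gamma_{4/3}(\epsilon,k)$. Beyond this delicate balance, every remaining ingredient is a direct assembly of Claim \ref{claim:ENHopset}, Lemma \ref{lemma:BasicSizes}, Lemma \ref{lemma:BranchingEvents}, Lemma \ref{lemma:PathsToPivots}, and Theorem \ref{thm:DPPRO}.
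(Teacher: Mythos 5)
Your overall architecture is the same as the paper's: build $\bar{H}^{1/2}$ on a $(4/3)$-geometric sampling schedule with $l=\lceil\log_{4/3}k\rceil+1$, invoke Claim \ref{claim:ENHopset} for the universal hopset property, cover the hopset edges by Lemma \ref{lemma:PathsToPivots} (pivot edges) and Theorem \ref{thm:DPPRO} (half-bunch edges, counted via Lemmas \ref{lemma:BranchingEvents} and \ref{lemma:BasicSizes}), and then answer a query $(u,v)\in\mathcal{P}$ by expanding the stored $\beta_l$-hop path edge by edge. The stretch, query-time and assembly arguments are fine. However, there is a genuine quantitative gap exactly at the point you flag as ``delicate tuning.'' With $q_i=n^{-c(4/3)^i}$ and no constant factor, every one of the $l-1=\Theta(\log k)$ levels contributes $\Theta(n^{1+3c})$ branching events, and your proposed remedy --- shrinking $c$ slightly below $1/(3k)$ so that $n^{3c}\cdot\log k=O(n^{1/k})$ --- is impossible in the regime where this theorem is actually used. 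One needs $n^{3c}\leq O\left(n^{1/k}/\log k\right)$, and as soon as $n^{1/k}\leq\log k$ (e.g.\ $k=\Theta(\log n)$, where $n^{1/k}=O(1)$ and $\log k=\Theta(\log\log n)$; more generally any $k\gtrsim\log n/\log^{(3)}n$) the right-hand side is below $1$, so no positive $c$ works. Moreover, decreasing $c$ while keeping $l=\lceil\log_{4/3}k\rceil+1$ enlarges the top set $A_{l-1}$, and the last-level terms $\left(n\prod_{j=0}^{l-2}q_j\right)^2$ and $\left(n\prod_{j=0}^{l-2}q_j\right)^4$ from Lemma \ref{lemma:BasicSizes}(2),(4) --- which you do not address at all --- rely on $c\cdot 3\left((4/3)^{l-1}-1\right)\geq\frac{k-1}{k}$ essentially at full strength, so there is no slack to give away there either.

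The paper resolves this without touching the exponent: it sets $q_i=\frac{1}{2}\,n^{-(4/3)^i/(3k)}$, i.e.\ it inserts a constant factor $\frac{1}{2}$ into every sampling probability. Then $\prod_{j<i}q_j$ carries a factor $2^{-i}$ while $q_i^{-3}$ only contributes $2^3$, so Lemma \ref{lemma:BasicSizes}(3) gives $\mathbb{E}\left[|Branch(H^{1/2}_i)|\right]\leq\frac{24\cdot 8}{2^{i}}\,n^{1+1/k}$ --- the exponent is exactly $1+\frac{1}{k}$ at every level (your telescoping observation), but the constant decays geometrically in $i$, so summing over all $O(\log k)$ levels costs only $O(n^{1+1/k})$, for every $k\geq 3$ and with the last level still bounded by $O(n^{4/k})=O(n^{1+1/k})$. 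If you replace your tuning of $c$ by this constant-factor device (keeping $c=1/(3k)$), the rest of your argument goes through as written and matches the paper's proof.
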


\begin{proof}

We first construct a hierarchy of sets as in Section \ref{sec:HierarchyOfSets}, with 
\[q_i=\frac{1}{2}n^{-\frac{(4/3)^i}{3k}}\text{ and }l=\lceil\log_{4/3}k\rceil+1~,\]
for $i=0,1,...,l-2$. These choices are similar to the values of sampling probabilities in \cite{Pet09,EP15,EGN22}, which produce near-additive spanners (\cite{Pet09,EGN22}) and PRDOs (\cite{EP15}).

Recall that by Claim \ref{claim:ENHopset}, $\bar{H}^{1/2}$ is a $(1+\epsilon,\beta_l)$-hopset, for all $\epsilon>0$ simultaneously, where $\beta_l=O(\frac{l}{\epsilon})^{l-1}$ (see \nameref{sec:AppendixA} for a proof). For each $i=0,1,...,l-1$, let $(S^1_i,D^1_i)$ be the interactive $1$-distance preserver from Lemma \ref{lemma:PathsToPivots} for the set $\{(v,p_i(v))\;|\;v\in V\}$, and let $(S^2_i,D^2_i)$ be the interactive $1$-distance preserver from Theorem \ref{thm:DPPRO} for the set of pairs $H^{1/2}_i$.
Let $\mathcal{P}$ be a set of pairs of vertices in $G$. We define an oracle $D$ for our new interactive distance preserver as follows.
\begin{enumerate}
    \item For every pair $(x,y)\in\mathcal{P}$, the oracle $D$ stores a path $P_{x,y}\subseteq G\cup\bar{H}^{1/2}$ between $x,y$ with length at most $\beta_l$ and weight at most $(1+\epsilon)d(x,y)$.
    \item For every $i=0,1,...,l-1$, the oracle $D$ stores the oracles $D^1_i$ and $D^2_i$.
    \item For every edge $e\in\bar{H}^{1/2}$, the oracle $D$ stores a \textit{flag} variable $f_e$. If $e\in\{(v,p_i(v))\;|\;v\in V\}$ for some $i$, then $f_e=1$. If $e\in H^{1/2}_i$ for some $i$, then $f_e=2$. In both cases, $D$ also stores the relevant index $i$.
\end{enumerate}

Next, we describe the algorithm for answering queries. Given a query $(x,y)\in\mathcal{P}$, find the stored path $P_{x,y}$. Now replace each hop-edge $e=(a,b)\in\bar{H}^{1/2}$ that appears in $P_{x,y}$; if $f_e=1$, replace $e$ with the path returned from $D^1_i(a,b)$; if $f_e=2$, replace $e$ with the path returned from $D^2_i(a,b)$. Return the resulting path as an output.

Since the query times of $D^1_i,D^2_i$ are linear in the length of the returned path, for every $i$, the total query time of $D$ is also linear in the size of the output. Thus, the query time of our PRDO is $O(1)$.

Also, note that the paths that are returned by the oracles $D^1_i,D^2_i$ have the same weight as the hop-edges that they replace. Therefore the resulting path has the same weight as $P_{x,y}$, which is at most $(1+\epsilon)d(x,y)$. Hence, $D$ has a stretch of $1+\epsilon$.

The size of $D$ is the sum of its components' sizes. For storing the paths $P_{x,y}$ (item $1$ in the description of $D$) we need $O(|\mathcal{P}|\beta_l)$ space. For storing the flags and the indices (item $3$) we need $O(|\bar{H}^{1/2}|)$ space. For each $i$, the size of $D^1_i$ is $O(n)$ and, by Theorem \ref{thm:DPPRO}, the size of $D^2_i$ is $O(n+|H^{1/2}_i|+|Branch(H^{1/2}_i)|)$. For estimating these quantities, we use Lemma \ref{lemma:BasicSizes}. By inequalities (1) and (3) in Lemma \ref{lemma:BasicSizes}, for every $i<l-1$, in expectation,
\begin{eqnarray*}
    |\bar{H}^{1/2}_i|&\leq&|\bar{H}_i|\leq\frac{n}{q_i}\prod_{j=0}^{i-1}q_j~,\\
    |Branch(H^{1/2}_i)|&\leq&24\cdot\frac{n}{q^3_i}\prod_{j=0}^{i-1}q_j~.
\end{eqnarray*}
Summing these two quantities, we get
\begin{eqnarray*}
|\bar{H}^{1/2}_i|+|Branch(H^{1/2}_i)|&\leq&\frac{n}{q_i}\prod_{j=0}^{i-1}q_j+24\cdot\frac{n}{q^3_i}\prod_{j=0}^{i-1}q_j
\leq25\cdot\frac{n}{q^3_i}\prod_{j=0}^{i-1}q_j\\
&=&25n\cdot2^3n^{\frac{(4/3)^i}{k}}\cdot\frac{1}{2^i}n^{-\frac{(4/3)^i-1}{k}}
=\frac{25}{2^{i-3}}n^{1+\frac{1}{k}}~.
\end{eqnarray*}

For $i=l-1$, first denote $q=\prod_{j=0}^{l-2}q_j$. By inequalities (2) and (4) in Lemma \ref{lemma:BasicSizes}, if $nq\geq1$, then
\begin{eqnarray*}
|\bar{H}^{1/2}_{l-1}|+|Branch(H^{1/2}_{l-1})|&\leq&2(nq)^2+60(nq)^4
\leq62(n\prod_{j=0}^{l-2}q_j)^4\\
&=&62(\frac{1}{2^{l-1}}n^{1-\frac{(4/3)^{l-1}-1}{k}})^4
\leq62(\frac{1}{2^{l-1}}n^{1-\frac{k-1}{k}})^4\\
&=&\frac{62}{2^{4l-4}}n^{\frac{4}{k}}
=O(n^{1+\frac{1}{k}})~,
\end{eqnarray*}
where the last step is true since $k\geq3$. Otherwise, if $nq<1$, then the expected value of $|\bar{H}^{1/2}_{l-1}|+|Branch(H^{1/2}_{l-1})|$ is bounded by a constant, which is also $O(n^{1+\frac{1}{k}})$.

Hence, we get that the total expected size of $D$ is
\begin{eqnarray*}
&&O(|\mathcal{P}|\beta_l+|\bar{H}^{1/2}|)+\sum_{i=0}^{l-1}O(n+|H^{1/2}_i|+|Branch(H^{1/2}_i)|)\\
&=&O(|\mathcal{P}|\beta_l)+\sum_{i=0}^{l-1}O(n+|\bar{H}^{1/2}_i|+|Branch(H^{1/2}_i)|)\\
&=&O(|\mathcal{P}|\beta_l)+O(n\cdot l)+\sum_{i=0}^{l-2}O(\frac{1}{2^{i-3}}n^{1+\frac{1}{k}})+O(n^{1+\frac{1}{k}})\\
&=&O(|\mathcal{P}|\beta_l+n\cdot l+n^{1+\frac{1}{k}})=O(|\mathcal{P}|\beta_l+n\cdot\log k+n^{1+\frac{1}{k}})~.
\end{eqnarray*}

Define
\[S=\tilde{H}\cup\bigcup_{(x,y)\in\mathcal{P}}(P_{x,y}\cap G)~,\]
where $\tilde{H}=\bigcup_{i=0}^{l-1}S^1_i\cup\bigcup_{i=0}^{l-1}S^2_i$. Then, the output paths of $D$ are always contained in $S$, and $S$ has the same size bound as $D$. Hence, $(S,D)$ is an interactive $(1+\epsilon)$-distance preserver with query time $O(1)$ and size
\[O(|\mathcal{P}|\beta_l+n\cdot\log k+n^{1+\frac{1}{k}})~.\]

In addition, note that $\tilde{H}$ is the supporting edge-set of the hopset $\bar{H}^{1/2}$. The analysis above implies that $|\tilde{H}|=O(n^{1+\frac{1}{k}}+n\cdot l)=O(n^{1+\frac{1}{k}}+n\log k)$. Since this is true for every $\epsilon>0$, we conclude that $\bar{H}^{1/2}$ is a universal $(1+\epsilon,\gamma_{4/3}(\epsilon,k))$-hopset with size $O(n^{1+\frac{1}{k}})$ and support size $O(n^{1+\frac{1}{k}}+n\log k)$.

\end{proof}

\begin{remark} \label{remark:PreserverWithWorseStretch}

In Theorem 8 in \cite{EGN22}, the authors prove that $\bar{H}$ (see Section \ref{sec:HierarchyOfSets} for definition) is also a $(3+\epsilon,2(3+\frac{12}{\epsilon})^{l-1})$-hopset, for $\epsilon\in(0,12]$. By a similar argument, we also prove that $\bar{H}^{1/2}$ is a $(3+\epsilon,(12+\frac{40}{\epsilon})^{l-1})$-hopset, for $\epsilon=O(l)$. For completeness, this proof appears in \nameref{sec:AppendixA}. Using this property of $\bar{H}^{1/2}$, we get an interactive $(3+\epsilon)$-distance preserver with a better size of
\[O(|\mathcal{P}|k^{\log_{4/3}(12+\frac{40}{\epsilon})}+n\cdot\log k+n^{1+\frac{1}{k}})~.\]
Furthermore, using a similar technique as in the proof of Theorem \ref{thm:DistancePreserver2} below, the coefficient $k^{\log_{4/3}(12+\frac{40}{\epsilon})}$ of $|\mathcal{P}|$ in the size can be improved to $k^{\log_{2}(12+\frac{40}{\epsilon})}\cdot\left(\left\lceil\frac{k\log\log n\cdot\log^{(3)}n}{\log n}\right\rceil\cdot\log\frac{1}{\epsilon}\right)^{\log_{4/3}(12+\frac{40}{\epsilon})}\leq k^{\log_{2}(12+\frac{40}{\epsilon})}\cdot\left(\log k\cdot\log\log k\cdot\log\frac{1}{\epsilon}\right)^{\log_{4/3}(12+\frac{40}{\epsilon})}$.

\end{remark}

Our second construction of an interactive $(1+\epsilon)$-distance preserver improves the factor of $\gamma_{4/3}(\epsilon,k)$ in the size of the preserver from Theorem \ref{thm:DistancePreserver1}, to almost $\gamma_2$ (see Equation (\ref{eq:GammaDef}) for the definition of $\gamma_t(\epsilon,k)$). Since large parts of the proof are very similar to the proof of Theorem \ref{thm:DistancePreserver1}, we give here only a sketch of the proof, while the full proof appears in \nameref{sec:AppendixC}.

\begin{theorem} \label{thm:DistancePreserver2}
Given an $n$-vertex undirected weighted graph $G=(V,E)$, an integer $k\in[3,\log_2n]$, a positive parameter $\epsilon\leq1$, and a set of pairs $\mathcal{P}\subseteq V^2$, $G$ has an interactive $(1+\epsilon)$-distance preserver with query time $O(1)$ and size 
\[O\left(|\mathcal{P}|\cdot\Tilde{\gamma}_2(\epsilon,k,n)+n\cdot\left(\log k+\log\left(1+\frac{\log\frac{1}{\epsilon}}{\log^{(3)}n}\right)\right)+n^{1+\frac{1}{k}}\right)~,\]
where $\Tilde{\gamma}_2(\epsilon,k,n)=O\left(\frac{\log k+\rho}{\epsilon}\right)^{\left\lceil\log_2k\right\rceil+\rho}$, and
\begin{equation*}
    \rho=\rho(\epsilon,k,n)=\log_{4/3}\left(\frac{k\log\log n\cdot\log^{(3)}n}{\log n}\right)+\log_{4/3}\left(1+\frac{\log\frac{1}{\epsilon}}{\log^{(3)}n}\right)+O(1)
\end{equation*}
(note that $\log_{4/3}\left(\frac{k\log\log n\cdot\log^{(3)}n}{\log n}\right)\leq\log_{4/3}\log k+\log_{4/3}\log\log k$, and if $\epsilon=(\log\log n)^{-k^{o(1)}}$, then \newline 
$\log_{4/3}\left(1+\frac{\log\frac{1}{\epsilon}}{\log^{(3)}n}\right)=o(\log k)$. Thus, in this case, $\rho=o(\log k)$. Also, generally, $\rho\leq\log_{4/3}\log k+\log_{4/3}\log\log k+\log_{4/3}\log\frac{1}{\epsilon}$).

In addition, there exists a $(1+\epsilon,\tilde{\gamma}_2(\epsilon,k,n))$-hopset with size $O(n^{1+\frac{1}{k}})$ and $(1+\epsilon)$-approximate support size 
\[O\left(n^{1+\frac{1}{k}}+n\left(\log k+\log\left(1+\frac{\log\frac{1}{\epsilon}}{\log^{(3)}n}\right)\right)\right)=O\left(n^{1+\frac{1}{k}}+n(\log k+\log\log\frac{1}{\epsilon})\right)~.\]
\end{theorem}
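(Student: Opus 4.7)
Following the blueprint of Theorem \ref{thm:DistancePreserver1}, I would build a hierarchy of sets $V=A_0\supseteq A_1\supseteq\cdots\supseteq A_l=\emptyset$ using the same independent sub-sampling template, with $\bar{H}^{1/2}$ serving as the underlying hopset (via Claim \ref{claim:ENHopset}). The oracle stores, for each $(x,y)\in\mathcal{P}$, a $\beta_l$-hop $(1+\epsilon)$-approximate $x$-$y$ path in $G\cup\bar{H}^{1/2}$, together with sub-oracles that expand hopset edges into actual $G$-paths; queries then proceed exactly as in Theorem \ref{thm:DistancePreserver1}. The correctness argument (stretch $1+\epsilon$, query time $O(1)$) carries over verbatim.

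The key change is to make the hierarchy significantly deeper. Instead of $l=\lceil\log_{4/3}k\rceil+1$, I would set $l=\lceil\log_{2}k\rceil+\rho$. By Claim \ref{claim:ENHopset}, this gives $\beta_l=O((l/\epsilon))^{l-1}$, which after plugging in the chosen $l$ produces exactly the stated $\tilde{\gamma}_2(\epsilon,k,n)$. I would pick the sampling probabilities $q_i$ in two regimes: an \emph{inner} regime whose $q_i$ follow the $(4/3)$-style geometric growth of Theorem \ref{thm:DistancePreserver1}, so that the $|Branch(H^{1/2}_i)|\le 4\sum_u|B_i(u)|^3$ contributions telescope to $O(n^{1+1/k})$; and an \emph{outer} regime of $\rho$ extra levels with $(2)$-style doubling growth, which is what forces $l$ up to $\log_2 k+\rho$. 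Each of the $\rho$ outer levels adds only $O(n)$ to the support size through Lemma \ref{lemma:PathsToPivots} (one path per vertex to its pivot), which together give the $n\cdot(\log k+\log(1+\log(1/\epsilon)/\log^{(3)}n))$ contribution.

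The main obstacle is the support size of $\bar{H}^{1/2}$ at the doubling levels: the cubic bound $|Branch(H^{1/2}_i)|\leq4\sum_u|B_i(u)|^3$ would blow past $n^{1+1/k}$ if we insisted on storing exact shortest paths for those hopset edges (as Theorem \ref{thm:DPPRO} does with stretch $1$). I would bypass this by allowing \emph{$(1+\epsilon)$-approximate} supporting paths at the doubling levels: invoke Theorem \ref{thm:DistancePreserver1} recursively on the demand set $H^{1/2}_i$ with effective parameters $(\epsilon',k')$ chosen so that the bootstrap's $\gamma_{4/3}(\epsilon',k')\cdot|H^{1/2}_i|$ cost is absorbed into the $\tilde{\gamma}_2\cdot|\mathcal{P}|$ and $n^{1+1/k}$ terms of the final bound. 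This is precisely why the theorem guarantees only a $(1+\epsilon)$-\emph{approximate} support size, and why $\rho$ contains a $\log_{4/3}(k\log\log n\cdot\log^{(3)}n/\log n)$ summand: this is exactly the ``buffer depth'' needed for the recursive $n\log k'$ contributions, summed geometrically across the doubling levels, to fit inside the promised $n(\log k+\log(1+\log(1/\epsilon)/\log^{(3)}n))$ budget.

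Finally, the stretch analysis needs a small tweak from Theorem \ref{thm:DistancePreserver1}: since a hop edge is now unfolded into a $(1+\epsilon')$-approximate rather than an exact $G$-path, and a query path contains at most $\beta_l$ hop edges, the final stretch becomes $(1+\epsilon)(1+\epsilon')\leq 1+O(\epsilon)$ provided $\epsilon'$ is a sufficiently small constant (the nested stretches multiply only \emph{once}, not $\beta_l$ times, because each expansion is independent). After re-scaling $\epsilon$ by a constant and verifying that $\rho$ is large enough for the recursive invocation's size to collapse into the announced $\tilde{\gamma}_2\cdot|\mathcal{P}|+n(\log k+\log(1+\log(1/\epsilon)/\log^{(3)}n))+n^{1+1/k}$ bound, we obtain the theorem. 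The $(1+\epsilon,\tilde{\gamma}_2)$-hopset statement and its approximate support-size bound are immediate corollaries, just as in Theorem \ref{thm:DistancePreserver1}.
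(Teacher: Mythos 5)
Your overall blueprint is the paper's: a two-regime hierarchy under the half-bunch hopset $\bar{H}^{1/2}$ (Claim \ref{claim:ENHopset}), exact branching-event preservers at the dense levels, $(1+\Theta(\epsilon))$-approximate supporting paths for the hopset edges of the remaining levels, a stored $\beta_l$-hop path per demand pair, and a stretch that multiplies only once. The quantitative accounting, however, is where the theorem actually lives, and there it has concrete problems. First, your regime counts are reversed. The $(4/3)$-growth regime must last only $h=\left\lceil\log_{4/3}\left(1+\frac{k\log\gamma_{4/3}}{\log n}\right)\right\rceil\leq\rho$ levels, chosen precisely so that $n^{((4/3)^h-1)/k}\geq\gamma_{4/3}$, i.e.\ so that after level $h$ the sampled sets have expected size $O(n/\gamma_{4/3})$; the doubling regime then needs about $\lceil\log_2k\rceil+1$ further levels to exhaust the hierarchy, giving $l=\lceil\log_2k\rceil+1+h$ and hence $\beta_l=\tilde{\gamma}_2$. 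With your split (bulk of the levels with $(4/3)$-growth, only ``$\rho$ extra'' doubling levels) the sampling never drives $|A_{l-1}|$ down to where Lemma \ref{lemma:BasicSizes} keeps the top-level clique and its branching events at $O(n^{1+\frac{1}{k}})$, so the construction as literally described does not realize the claimed $l$ or size. Relatedly, the first summand of $\rho$ is not a ``buffer depth for the recursive $n\log k'$ contributions''; it is exactly this threshold $h$ that shrinks $A_h$ by a $\gamma_{4/3}$ factor \emph{before} the doubling regime begins, and it is also the $(4/3)$-levels, not the doubling ones, whose pivot paths are stored via Lemma \ref{lemma:PathsToPivots} (contributing $O(nh)=O(n\rho)$, while the $n\log k$ part of the budget comes from the recursive preserver itself).

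Second, the recursion should not be invoked per level. Applying Theorem \ref{thm:DistancePreserver1} separately to each doubling-level demand set $H^{1/2}_i$ with the same parameters repeats its additive $O(n\log k+n^{1+\frac{1}{k}})$ overhead $\Theta(\log k)$ times, already exceeding the target size; re-tuning $(\epsilon',k')$ per level to shrink $n^{1+\frac{1}{k'}}$ inflates $\gamma_{4/3}(\epsilon',k')$ and hence the per-pair cost, and you never verify this trade can be won. The clean route (and the paper's) is a single invocation, with stretch $1+\frac{\epsilon}{3}$, on the union $\mathcal{Q}_3=\bigcup_{i\geq h}\bar{H}^{1/2}_i$: by the choice of $h$, $|\mathcal{Q}_3|=O\bigl((n/\gamma_{4/3})^{1+\frac{1}{k}}\bigr)$, so the per-pair cost $\gamma_{4/3}\cdot|\mathcal{Q}_3|$ collapses to $O(n^{1+\frac{1}{k}})$, the additive overhead is paid once, and the pivots at levels $\geq h$ are covered automatically (extended half-bunches), while Theorem \ref{thm:DPPRO} and Lemma \ref{lemma:PathsToPivots} are kept only for the $h$ low levels. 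Finally, take $\epsilon'=\Theta(\epsilon)$ (say $\epsilon/3$ for both the stored hop-paths and the recursive preserver), not a ``sufficiently small constant'': for sub-constant $\epsilon$ a constant $\epsilon'$ does not yield stretch $1+O(\epsilon)$, and note that the $\epsilon$-dependence of $\gamma_{4/3}(\epsilon,k)$ is what puts the $\log_{4/3}\bigl(1+\log\frac{1}{\epsilon}/\log^{(3)}n\bigr)$ summand into $\rho$.
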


\begin{proof}[Sketch of the Proof]

We use the same hierarchy of sets as in Theorem \ref{thm:DistancePreserver1}, up to the first level $h$ such that $|A_h|\leq\frac{n}{\gamma_{4/3}}$ (we denote here $\gamma_{4/3}=\gamma_{4/3}(\epsilon,k)$). For this purpose, it is enough to set $h=\rho$.
For these $h$ levels, we store the same information in our oracle as before. Specifically, for $i<h$, we store the interactive distance preservers from Theorem \ref{thm:DPPRO} for the demand sets $H^{1/2}_i$, and the ones from Lemma \ref{lemma:PathsToPivots} for the demand sets $\{(v,p_i(v))\;|\;v\in V\}$. 

For each level $i\geq h$ in the hierarchy, we change the probability $q_i$ to $\frac{1}{2}(\frac{n}{\gamma_{4/3}})^{-\frac{2^{i-h}}{k}}$. We now apply our interactive approximate distance preserver from Theorem \ref{thm:DistancePreserver1} for the demand set $\bigcup_{i=h}^{l-1}\bar{H}^{1/2}_i$. Denote the resulting preserver by $(S^3,D^3)$.

We store these interactive distance preservers in our new oracle, together with the same information as before: a path $P_{x,y}\subseteq G\cup\bar{H}^{1/2}$ with low stretch and length, for each $(x,y)\in\mathcal{P}$, and flags for the hop-edges.

Since the stored interactive distance preservers cover all of the pairs in $\bar{H}^{1/2}$, we can use the same stretch analysis as before, and now get a stretch of at most $(1+\epsilon)^2$ (the additional factor of $1+\epsilon$ is because we are now using an interactive $(1+\epsilon)$-distance preserver for $\bigcup_{i=h}^{l-1}\bar{H}^{1/2}_i$, instead of an exact one). Replacing $\epsilon$ by $\frac{\epsilon}{3}$, we still get a stretch of $1+\epsilon$.

The query time analysis is the same as before. As for the size analysis, the first difference is that now we need to include the size of $(S^3,D^3)$, the interactive distance preserver from Theorem \ref{thm:DistancePreserver1} for the demand set $\bigcup_{i=h}^{l-1}\bar{H}^{1/2}_i$. The size of this set is $O\left((\frac{n}{\gamma_{4/3}})^{1+\frac{1}{k}}\right)$, as can be proved using a similar analysis as in the proof of Theorem \ref{thm:DistancePreserver1}. Thus, the size of $(S^3,D^3)$ is
\[O\left((\frac{n}{\gamma_{4/3}})^{1+\frac{1}{k}}\cdot\gamma_{4/3}+n\cdot l+n^{1+\frac{1}{k}}\right)=O(n\cdot l+n^{1+\frac{1}{k}})~.\]
Note that due to the change in the choice of the hierarchy, the value of $l$ increases to $\lceil\log_2k\rceil+h+1$.

The second and final difference in the size analysis is that now storing the paths $P_{x,y}$ requires $|\mathcal{P}|\cdot\beta_l$ space, \textit{for the new value of $l$}. The value of $\beta_l$, that is defined as the hopbound of the hopset in Claim \ref{claim:ENHopset}, is
\[\beta_l=O\left(\frac{l}{\epsilon}\right)^{l-1}=O\left(\frac{\log k+\rho}{\epsilon}\right)^{\lceil\log_2k\rceil+\rho}=\Tilde{\gamma}_2(\epsilon,k,n)~.\]

In conclusion, while the stretch and the query time stay the same, the size decreases to
\begin{eqnarray*}
O(|\mathcal{P}|\Tilde{\gamma}_2(\epsilon,k,n)+n\cdot l+n^{1+\frac{1}{k}})
=O\left(|\mathcal{P}|\Tilde{\gamma}_2(\epsilon,k,n)+n\left(\log k+\log_{4/3}\left(1+\frac{\log\frac{1}{\epsilon}}{\log^{(3)}n}\right)\right)+n^{1+\frac{1}{k}}\right)~.    
\end{eqnarray*}

Accordingly, the support size of the hopset $\bar{H}^{1/2}$ is 
\[O\left(n\left(\log k+\log_{4/3}\left(1+\frac{\log\frac{1}{\epsilon}}{\log^{(3)}n}\right)\right)+n^{1+\frac{1}{k}}\right)~.\]

\end{proof}

\begin{remark} \label{remark:PreserversComparison}

For a constant $c<1-\log_2\frac{4}{3}$, if
\begin{equation} \label{eq:UselessCondition}
    \epsilon\geq(\log\log n)^{-k^c}~,
\end{equation}
we have $\log\frac{1}{\epsilon}\leq k^c\log^{(3)}n$, and therefore
\[\rho(\epsilon,k,n)=\log_{4/3}\left(\frac{k\log\log n\cdot\log^{(3)}n}{\log n}\right)+\log_{4/3}\left(1+\frac{\log\frac{1}{\epsilon}}{\log^{(3)}n}\right)+O(1)=O(\log\log k)+c\cdot\log_{4/3}k~.\]
Thus, under condition (\ref{eq:UselessCondition}), the second construction is better than the first one, as \[\log_2k+c\cdot\log_{4/3}k+O(\log\log k)<\log_{4/3}k~.\]

\end{remark}

\section{A New Interactive Emulator} \label{sec:NewEmulator}

Recall Definition \ref{def:InteractiveEmulator} of an interactive emulator. For every positive integer parameter $k$, the distance oracle of Thorup and Zwick in \cite{TZ01} can be viewed as an interactive $(2k-1)$-spanner, and thus, also as an $(2k-1)$-emulator, with size $O(kn^{1+\frac{1}{k}})$ and query time $O(k)$. Wulff-Nilsen \cite{WN13} later improved this query time to $O(\log k)$.

Next, we devise another interactive emulator, based on the (non-path-reporting) distance oracle by Mendel and Naor in \cite{MN06}, with smaller query time and size, albeit with a slightly larger stretch. We will use it later in Section \ref{sec:NewPRDO}, for building our new PRDO.

\begin{theorem} \label{thm:InteractiveEmulator}
Given an undirected weighted $n$-vertex graph $G=(V,E)$ and an integer $k\geq1$, there is an interactive $O(k)$-emulator with size $O(n^{1+\frac{1}{k}})$ and query time $O(1)$.
\end{theorem}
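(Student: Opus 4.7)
The plan is to extract an explicit emulator from the Mendel-Naor distance oracle \cite{MN06}. Their construction yields, for any $n$-point metric and any integer $k \geq 1$, an ultrametric on $V$ of distortion $O(k)$, represented by an HST $T$ whose leaf set is $V$ and whose total number of nodes is $O(n^{1+1/k})$. By rounding each internal label $\ell(w)$ down to the nearest power of $2$ and then contracting chains of consecutive internal nodes sharing the same rounded label, I may assume without loss of generality that $T$ is a $2$-HST, i.e.\ $\ell(u) \geq 2\ell(c)$ for every parent-child pair $(u,c)$; this transformation costs only a constant factor in distortion and does not increase the total number of nodes.

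From this $2$-HST I build the emulator $H = (V, E', w')$ as follows. For each internal node $w$ of $T$, choose a representative leaf $\mathrm{rep}(w) \in V$ in $w$'s subtree in a consistent hierarchical fashion: fix a primary child $c^\ast(w)$ of $w$ and set $\mathrm{rep}(w) = \mathrm{rep}(c^\ast(w))$, with $\mathrm{rep}(v) = v$ for every leaf $v$. For each non-root node $c$ of $T$ with parent $w$ satisfying $\mathrm{rep}(c) \neq \mathrm{rep}(w)$, insert an emulator edge $(\mathrm{rep}(c), \mathrm{rep}(w))$ of weight $2\ell(w)$. This yields at most $|T| - 1 = O(n^{1+1/k})$ edges. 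Each such edge has weight $2\ell(w) \geq \ell(w) \geq d_T(\mathrm{rep}(c), \mathrm{rep}(w)) \geq d_G(\mathrm{rep}(c), \mathrm{rep}(w))$, so the triangle inequality in $G$ yields $d_G(u,v) \leq d_H(u,v)$ for every $u,v \in V$. Conversely, given $u, v \in V$ with LCA $z$ in $T$, the path obtained by walking from $u$ up through ancestor representatives to $\mathrm{rep}(z)$ and then symmetrically down to $v$ has total weight bounded by a geometric sum in the labels along the two ancestor chains; the $2$-HST property makes this sum $O(\ell(z)) = O(d_T(u,v)) = O(k) \cdot d_G(u,v)$.

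The oracle $D$ stores $T$ together with parent pointers, the $\mathrm{rep}$ values, and a standard constant-time LCA data structure (e.g.\ Harel-Tarjan). Given a query $(u,v)$, $D$ computes $z = \mathrm{lca}(u,v)$ in $O(1)$ time and then emits the emulator path described above by walking the two ancestor chains, reporting each edge in $O(1)$ time, for a total query time of $O(1) + O(|P_{u,v}|)$. The main obstacle I anticipate is verifying that the $2$-HST compression step preserves both the distortion and the $O(n^{1+1/k})$ size bound from Mendel-Naor's construction; this is routine but requires carefully handling degree-one internal nodes produced by the contraction and confirming that the surviving representative structure is still consistent. An alternative route, should the compression prove awkward, is to keep the original HST and combine the representative construction with a caterpillar decomposition, bounding the walk weight by summing over heavy-path crossings; but the $2$-HST approach keeps the query algorithm and the stretch analysis notably more transparent.
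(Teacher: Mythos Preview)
Your proposal rests on a false premise about what Mendel--Naor \cite{MN06} delivers. You assert that their construction yields \emph{a single ultrametric on $V$ of distortion $O(k)$}, represented by one HST with leaf set $V$ and $O(n^{1+1/k})$ total nodes. No such ultrametric exists in general: for the unweighted path on $n$ vertices, any dominating ultrametric $\rho$ satisfies $\rho(1,n)\leq\max_i\rho(i,i+1)$, while $\rho(i,i+1)\leq\alpha$ forces $\rho(1,n)\leq\alpha$; yet $\rho(1,n)\geq d(1,n)=n-1$, so the distortion is $\Omega(n)$. Adding more internal HST nodes cannot help, since the HST encodes a single ultrametric on its leaves regardless of its internal structure.

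What Mendel--Naor actually provide (Lemma~\ref{lemma:MNDistanceOracle} in the paper) is a \emph{Ramsey-type} object: a nested sequence $V=X_0\supseteq X_1\supseteq\cdots\supseteq X_s=\emptyset$ together with a \emph{family} of ultrametrics $\rho_0,\ldots,\rho_{s-1}$, such that $\rho_j$ has $O(k)$ distortion only for pairs $(x,y)$ with $y\in X_j\setminus X_{j+1}$, and $\mathbb{E}[\sum_j|X_j|]\leq n^{1+1/k}$. The $O(n^{1+1/k})$ budget is the sum of the sizes of these pieces, not the node count of a single tree. The paper's proof builds one HST $T_j$ per ultrametric, applies Gupta's Steiner-point removal to each to get a tree $T_j'$ on $X_j$ itself, takes $H=\bigcup_jT_j'$ as the emulator, and answers a query $(u,v)$ by storing $j(v)=\max\{j:v\in X_j\}$ for every $v$, selecting $j=\min\{j(u),j(v)\}$, and walking in $T_j'$. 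Your representative/2-HST machinery is a fine substitute for the Gupta step \emph{within each $T_j$}, but you still need the multi-tree scaffolding and the per-vertex index $j(v)$; a single-HST argument cannot succeed.
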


\begin{proof}

The graph $G=(V,E)$ induces an $n$-point metric space, where the metric $d(x,y)$, for $x,y\in V$, is the shortest path metric in $G$. We apply the following lemma from \cite{MN06} on this metric space. For definitions of metric spaces, ultrametrics and HSTs, see Section \ref{sec:Preliminaries}.

\begin{lemma}[Mendel and Naor \cite{MN06}, Lemma 4.2] \label{lemma:MNDistanceOracle}
Given a metric space $(V,d)$ with $|V|=n$, there is a randomized algorithm that produces a hierarchy of sets $V=X_0\supseteq X_1\supseteq X_2\supseteq\cdots\supseteq X_s=\emptyset$, and a sequence of ultrametrics $\rho_0,\rho_1,...,\rho_{s-1}$ on $V$, such that the following properties hold.
\begin{enumerate}
    \item $\mathbb{E}[s]\leq kn^{\frac{1}{k}}$.
    \item $\mathbb{E}[\sum_{j=0}^{s-1}|X_j|]\leq n^{1+\frac{1}{k}}$.
    \item For every $j=0,1,...,s-1$, and for all $x,y\in V$, it holds that $d(x,y)\leq\rho_j(x,y)$. Also, for all $x\in V,y\in X_j\setminus X_{j+1}$, it holds that $\rho_j(x,y)\leq O(k)\cdot d(x,y)$.
\end{enumerate}
\end{lemma}

For the purpose of our proof, we restrict the ultrametric $\rho_j$ to the subset $X_j\subseteq V$, i.e., we view $(X_j,\rho_j)$ as an ultrametric space, for every $j$. 

It is a well-known fact (see Lemma 3.5 in \cite{BLMN03} by Bartal et al.) that every ultrametric space $(X,\rho)$ can be represented by an HST, such that its leaves form the set $X$. For every $j=0,1,...,s-1$, let $T_j=(N_j,E_j)$ be the HST for the ultrametric space $(X_j,\rho_j)$. The leaves of $T_j$ are $X_j$ and there are labels $\ell_j:N_j\rightarrow\mathbb{R}_{\geq0}$ such that for every $x,y\in X_j$, $\ell_j(lca(x,y))=\rho_j(x,y)$, where $lca(x,y)$ is the lowest common ancestor of $x,y$ in $T_j$. For any edge $(u,v)\in E_j$, such that $u$ is the parent of $v$, we assign a weight $w_j(u,v)=\frac{\ell_j(u)-\ell_j(v)}{2}$. It is easy to check that the weight of the unique path in $T_j$ between two leaves $x,y$ is exactly $\ell_j(lca(x,y))=\rho_j(x,y)$. Note that $\ell_j(x)=\ell_j(y)=\rho_j(x,x)=\rho_j(y,y)=0$.

We now use the following result by Gupta \cite{Gupta01}.
\begin{lemma}[Gupta \cite{Gupta01}] \label{lemma:Gupta}
Given a weighted tree $T$ and a subset $R$ of vertices in $T$, there exists a weighted tree $T'$ on $R$ such that for every $x,y\in R$,
\[d_T(x,y)\leq d_{T'}(x,y)\leq8d_T(x,y)~.\]
\end{lemma}

For every $j=0,1,...,s-1$, let $T'_j$ be the resulting tree from Lemma \ref{lemma:Gupta}, when applied on the tree $T_j$, where the set $R$ is the set of leaves of $T_j$, which is exactly $X_j$. Suppose that $x,y\in X_j$ such that $y\notin X_{j+1}$. Then, $x,y$ are vertices of $T'_j$, and
\begin{eqnarray*}
d_{T'_j}(x,y)\in[d_{T_j}(x,y),8d_{T_j}(x,y)]
=[\rho_j(x,y),8\rho_j(x,y)]\subseteq[d(x,y),8\cdot O(k)\cdot d(x,y)]~,
\end{eqnarray*}
i.e., $d(x,y)\leq d_{T'_j}(x,y)\leq O(k)\cdot d(x,y)$.

We define an oracle $D$ that stores the following:
\begin{enumerate}
    \item For every $v\in V$, the oracle $D$ stores $j(v)$ - the largest index $j\in[0,s-1]$ such that $v\in X_j$.
    \item For every $v\in V$, the oracle $D$ stores $parent(v),depth(v)$, which are vectors of length $j(v)+1$, such that $[parent(v)]_j$ is a pointer to the parent of $v$ in $T'_j$, and $[depth(v)]_j$ is the number of edges in $T'_j$ between $v$ and the root of $T'_j$. 
\end{enumerate}

Given a query $(u,v)\in V^2$, let $j=\min\{j(u),j(v)\}$. If, w.l.o.g., $j=j(v)$, then $u,v\in X_j$, but $v\notin X_{j+1}$, and therefore $d(u,v)\leq d_{T'_j}(u,v)\leq O(k)\cdot d(u,v)$. Using the pointers $[parent(\cdot)]_j,[depth(\cdot)]_j$, the oracle $D$ finds the unique path in $T'_j$ between $u,v$, and returns it as an output.

We can now finally define our interactive emulator $(H,D)$. The sub-graph $H$ is the union graph of the trees $T'_j$, for every $j=0,1,...,s-1$, and $D$ is the oracle described above. We already saw that for every query $(u,v)\in V^2$, the oracle $D$ returns a path $P_{u,v}\subseteq H$ between them, such that
\[d(u,v)\leq w(P_{u,v})=d_{T'_j}(u,v)\leq O(k)\cdot d(u,v)~.\]
Hence, the stretch of our emulator is $O(k)$. 

The query time is linear in the length of the returned path. Hence, the query time is $O(1)$. For the sizes of $H,D$, we have in expectation
\[|H|\leq\sum_{j=0}^{s-1}|T'_j|\leq\sum_{j=0}^{s-1}|X_j|\leq n^{1+\frac{1}{k}}~,\]
\begin{eqnarray*}
|D|\leq O(\sum_{v\in V}(1+(j(v)+1)+(j(v)+1))
=O(n+\sum_{v\in V}(j(v)+1))~.
\end{eqnarray*}
For the last sum, note that $j(v)+1$ is the number of $j$'s such that $v\in X_j$. Hence, if we define an indicator $\delta_{v,j}\in\{0,1\}$ to be $1$ iff $v\in X_j$, we get (in expectation)
\begin{eqnarray*}
\sum_{v\in V}(j(v)+1)=\sum_{v\in V}\sum_{j=0}^{s-1}\delta_{v,j}=\sum_{j=0}^{s-1}\sum_{v\in V}\delta_{v,j}
=\sum_{j=0}^{s-1}|X_j|\leq n^{1+\frac{1}{k}}~.
\end{eqnarray*}
Therefore we still have in expectation that 
\[|D|\leq O(n+n^{1+\frac{1}{k}})=O(n^{1+\frac{1}{k}})~.\]

\end{proof}

\begin{remark} \label{remark:MNCoefficient}
Following the proof of Theorem \ref{thm:InteractiveEmulator}, one can see that the stretch of the resulting interactive emulator is $8c_{MN}$, where $c_{MN}$ is the constant coefficient\footnote{Naor and Tao \cite{NT12} showed that $c_{MN}$ can be $33$, and argued that it can hardly be smaller than $16$.} of $k$ in stretch of the ultrametrics $\rho_j$ in Lemma \ref{lemma:MNDistanceOracle} (see item $3$ in the lemma). We believe that the coefficient $8c_{MN}$ can be improved. However, in this paper we made no effort to optimize it.
\end{remark}

\section{A New Path-Reporting Distance Oracle} \label{sec:NewPRDO}

Our new interactive spanner is constructed by combining three main elements: a \textit{partial TZ oracle} (that we will define below), an interactive emulator, and an interactive distance preserver. Note that we have several possible choices for the interactive emulator and for the interactive distance preserver. For example, we may use either our new interactive emulator from Section \ref{sec:NewEmulator}, or other known constructions of emulators. In addition, the choice of the interactive distance preserver may be from Theorem \ref{thm:DistancePreserver1}, from Theorem \ref{thm:DistancePreserver2} or from Remark \ref{remark:PreserverWithWorseStretch}. By applying different choices of emulator and distance preserver, we get a variety of results, that exhibit different trade-offs between stretch, query time and size.

\subsection{Main Components of the Construction}

The first ingredient of our new interactive spanner is the \textit{partial TZ oracle} - an oracle that given a query $(u,v)\in V^2$, either returns a low-stretch path between $u$ and $v$, or returns two paths, from $u,v$ to vertices $u',v'$, respectively, where $u',v'$ are not far from $u,v$, and both are in some smaller subset of $V$. The formal details are in the following lemma. This oracle is actually the same construction as the distance oracle of Thorup and Zwick in \cite{TZ01}, but only for a certain restricted number of levels. The query algorithm is based on that of Wulff-Nilsen \cite{WN13}.

\begin{lemma} \label{lemma:PartialPRDO}
Let $G=(V,E)$ be an undirected weighted graph with $n$ vertices, and let $1\leq h<k\leq\log_2n$ be two integer parameters. There is a set $S\subseteq V$ of size $O(n^{1-\frac{h}{k}})$ and an oracle $D$ with size $O(h\cdot n^{1+\frac{1}{k}})$, that acts as follows. Given a query $(u,v)\in V^2$, $D$ either returns a path between $u,v$ with weight at most $(2h+1)d_G(u,v)$, or returns two paths $P_{u,u'},P_{v,v'}$, from $u$ to some $u'\in S$ and from $v$ to some $v'\in S$, such that
\[w(P_{u,u'}),w(P_{v,v'})\leq h\cdot d_G(u,v)~.\]

The query time of the oracle $D$ is $O(\log h)$. In addition, there is a set $E'\subseteq E$ such that the output paths of $D$ are always contained in $E'$, and
\[|E'|=O(h\cdot n^{1+\frac{1}{k}})~.\]
\end{lemma}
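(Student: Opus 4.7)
The plan is to build a truncated Thorup--Zwick oracle \cite{TZ01} that stores only the first $h$ levels of the TZ hierarchy, and to extend Wulff--Nilsen's $O(\log k)$ binary-search query procedure \cite{WN13} so that on each query it either produces a full approximate $u$--$v$ path (when the truncated search succeeds) or falls back to two short paths ending at vertices of $S:=A_h$ (when every level $<h$ fails in both ``directions'').

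First I will sample $V=A_0\supseteq A_1\supseteq\cdots\supseteq A_h$ by placing each vertex of $A_i$ into $A_{i+1}$ independently with probability $n^{-1/k}$; a standard calculation gives $\mathbb{E}[|A_i|]=n^{1-i/k}$, and in particular $|S|=O(n^{1-h/k})$. For every $v\in V$ I store, for each $i\in[0,h]$, the pivot $p_i(v)$ together with next-hop pointers along a shortest $v$-to-$p_i(v)$ path (as in Lemma~\ref{lemma:PathsToPivots}); and for each $i\in[0,h-1]$ I store the bunch $B_i(v)$ together with next-hop pointers from $v$ into every $a\in B_i(v)$. By the standard TZ bound $\mathbb{E}[|B_i(v)|]=O(n^{1/k})$, so the overall expected storage (and hence the size of the edge set $E'$ consisting of all edges appearing on a stored path) is $O(h\cdot n^{1+1/k})$.

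For a query $(u,v)$ I run the TZ search in \emph{both} directions in parallel: once starting from the ordered pair $(u,v)$, once from $(v,u)$. In each direction I use Wulff--Nilsen's binary-search variant, restricted to levels $\{0,1,\ldots,h-1\}$, to find the smallest index $i$ at which the current active pivot lies in the other side's bunch; this takes $O(\log h)$ time. If such an $i$ is found in either direction, I return the standard $u$--$w$--$v$ TZ path through $w=p_i(u_i)$, reconstructed via the stored pivot and bunch pointers; the usual stretch analysis gives weight at most $(2i+1)d_G(u,v)\leq(2h-1)d_G(u,v)\leq(2h+1)d_G(u,v)$. Otherwise both directions fail at all $h$ levels, and applying the standard TZ induction (chaining the bunch-inclusion failures through the triangle inequality) to each direction separately yields $d_G(u,p_h(u))\leq h\cdot d_G(u,v)$ from one direction and $d_G(v,p_h(v))\leq h\cdot d_G(u,v)$ from the other. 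In this case I output the two stored paths $P_{u,p_h(u)}$ and $P_{v,p_h(v)}$, whose endpoints $p_h(u),p_h(v)$ lie in $S=A_h$.

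The main obstacle is to adapt Wulff--Nilsen's binary-search query procedure to the truncated $h$-level setting while keeping the query time $O(\log h)$: the monotonicity property of the TZ predicate that underpins the binary search must be re-established after truncation and combined with the ``both directions in parallel'' modification needed to guarantee pivot-distance bounds at \emph{both} endpoints in the fallback case. I defer this extension to \nameref{sec:AppendixD}; once it is in hand, the size, stretch, $|E'|$, and query-time bounds follow from the standard TZ analysis together with the two-direction inductive argument sketched above.
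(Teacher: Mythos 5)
Your overall architecture is exactly the paper's: a truncated $h$-level Thorup--Zwick hierarchy with $S=A_h$, pivot paths and bunch/cluster pointers giving size $O(h\cdot n^{1+\frac{1}{k}})$, two symmetric searches (one for $(u,v)$, one for $(v,u)$), a $(2h+1)$-stretch path when a search succeeds, and the fallback paths to $p_h(u),p_h(v)$ of weight at most $h\cdot d_G(u,v)$ when both fail. The fallback logic, the stretch accounting, and the size/$|E'|$ bounds are all fine and match the paper.

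The gap is precisely the part that makes this lemma more than a restatement of the $O(h)$-time truncated TZ oracle: the $O(\log h)$ query time. You describe the search as ``Wulff--Nilsen's binary-search variant\ldots to find the smallest index $i$ at which the current active pivot lies in the other side's bunch,'' and defer the rest. But the bunch-inclusion predicate is not monotone in $i$, so there is no ``smallest such index'' that a plain binary search can locate; and even if some index $i$ with the inclusion were found, that alone does not give the bound $d_G(u,p_i(u))\leq i\cdot d_G(u,v)$ that you need for the $(2h+1)$-stretch (nor, in the failure case, the bound $d_G(u,p_h(u))\leq h\cdot d_G(u,v)$, since a logarithmic-time search cannot afford to verify failure at all $h$ levels). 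The paper's proof resolves this with additional precomputed structure and an invariant: for each $u$ it stores a ``max-in-range'' tree over level intervals whose left child's right endpoint is the even index $j$ maximizing the gap $\Delta_u(j)=d(u,p_{j+2}(u))-d(u,p_j(u))$ in the left half; the search maintains the invariant that the predicate $d(u,p_{i_1}(u))\leq i_1\cdot d(u,v)$ holds at the left endpoint and the inclusion predicate holds at the right endpoint (or the right endpoint is $h$), using the key lemma that if the inclusion fails at the maximizing index $j$ then \emph{every} gap in that half is at most $2d_G(u,v)$, so the left-endpoint predicate can be pushed to the midpoint. Without supplying this mechanism (or an equivalent one), the claimed $O(\log h)$ bound --- the heart of the statement --- is not established; as written, your search would either need $\Theta(h)$ time to certify the failure chain, or returns a path for which the stretch bound is unproven.
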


The proof of Lemma \ref{lemma:PartialPRDO} appears in \nameref{sec:AppendixD}. We will call the oracle $D$ a \textbf{partial TZ oracle}.

The second and third ingredients of our new interactive spanner are interactive emulator and interactive distance preserver. Table \ref{table:EmulatorsAndPreservers} summarizes the properties of interactive emulators and distance preservers that will be used.

\begin{table*}[ht]
\begin{center}
\begin{tabular}{|c|c|c|c|c|}
\hline
Type & Stretch & Size & Query Time & Source \\ \hline
Emulator & $8c_{MN}\cdot k$ & $O(n^{1+\frac{1}{k}})$  & $O(1)$ & Theorem \ref{thm:InteractiveEmulator}\\ \hline
Emulator & $2k-1$  & $O(k\cdot n^{1+\frac{1}{k}})$ & $O(\log k)$ & \cite{TZ01,WN13}\\ \hline
Emulator & $(4+\epsilon)k$   & $O(k\cdot\log_{1+\epsilon}\Lambda\cdot n^{1+\frac{1}{k}})$ & $O(\log\log k)$ & Theorem \ref{thm:APEmulator}\\ \hline
Distance &&&&\\ 
Preserver & $1+\epsilon$ & $O(|\mathcal{P}|\cdot\gamma_{4/3}(\epsilon,k)+n\log k+n^{1+\frac{1}{k}})$ & $O(1)$ & Theorem \ref{thm:DistancePreserver1}\\ \hline
Distance &&&&\\ 
Preserver & $3+\epsilon$ & $O(|\mathcal{P}|\cdot k^{\log_{4/3}(12+\frac{40}{\epsilon})}+n\log k+n^{1+\frac{1}{k}})$ & $O(1)$ & Remark \ref{remark:PreserverWithWorseStretch}\\ \hline
\end{tabular}
\end{center}
\caption{A summary of results on interactive emulators and distance preservers. The first column indicates whether the construction is of an interactive emulator or interactive distance preserver. The last column specifies the source which the construction is based on. Here, $c_{MN}$ denotes the constant coefficient of $k$ in the stretch of the distance oracle of Mendel and Naor \cite{MN06}. Note that the construction of \cite{TZ01,WN13} is actually of an interactive \textit{spanner}. However, any interactive spanner is also an interactive emulator.} \label{table:EmulatorsAndPreservers}
\end{table*}

\subsection{A Variety of Interactive Spanners} \label{sec:SpannersVariety}

We now show how to combine the partial TZ oracle, an interactive emulator and an interactive distance preserver, to construct an interactive spanner. We phrase the following lemma without specifying which of the interactive emulators and distance preservers we use. This way, we can later substitute the emulators and distance preservers from Table \ref{table:EmulatorsAndPreservers}, and get a variety of results for interactive spanners.

\pagebreak

\begin{lemma} \label{lemma:InteractiveSpannerVariety}
Suppose that for every $n$-vertex graph, every integer $k\in[3,\log_2n]$, and every $\epsilon\in(0,1]$, there is an interactive emulator with stretch $\alpha_E\cdot k$, query time $q(k)$ and size $O(k^{\delta}n^{1+\frac{1}{k}})$. In addition, suppose that for every such $k,\epsilon$, for every $n$-vertex graph and every set of pairs $\mathcal{P}$, there is an interactive distance preserver with stretch $\alpha_P$, query time $O(1)$, and size $O(k^{\tau}\cdot|\mathcal{P}|+n\log k+n^{1+\frac{1}{k}})$. 

Then, for any graph $G=(V,E)$ with $n$ vertices, any integer $k\in[3,\log_2n]$, and any $0\leq\epsilon\leq\frac{1}{2}$, there is an integer $h=O\left(\left\lceil\frac{(\delta+\tau)k\cdot\log\log n}{\epsilon\log n}\right\rceil\right)$ and an interactive spanner with stretch $2\alpha_P\cdot\alpha_E((1+\epsilon)k+O(h))$, size $O(n\log k+h\cdot n^{1+\frac{1}{k}})$
and query time $q(k)+O(\log h)$.
\end{lemma}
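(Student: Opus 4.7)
The plan is to assemble the interactive spanner from three ingredients, following the outline in Section~\ref{sec:TechnicalPRDOs}: (i) the partial TZ oracle of Lemma~\ref{lemma:PartialPRDO}, (ii) the hypothesized interactive emulator, applied to a derived graph on the small top-level set $S$, and (iii) the hypothesized interactive distance preserver, used to realize virtual emulator edges as genuine $G$-paths.

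With the integer $h$ to be fixed at the end, I would first invoke Lemma~\ref{lemma:PartialPRDO} with parameters $h, k$, obtaining a set $S \subseteq V$ with $|S| = O(n^{1-h/k})$, an oracle $D_1$ of size $O(h \cdot n^{1+1/k})$ and query time $O(\log h)$, and its supporting edge set $E_1$ of the same size. Next, I would form the weighted complete graph $K = (S, \binom{S}{2})$ with $w_K(x,y) := d_G(x,y)$, and apply the given interactive emulator construction to $K$ with secondary parameter $k_1 := \lceil (1+\epsilon) k / h \rceil$, yielding $(H', D_E)$ with $|H'| = O(k_1^{\delta} \cdot |S|^{1+1/k_1})$, stretch $\alpha_E k_1$, and query time $q(k_1) \leq q(k)$. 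Finally, I would apply the given interactive distance preserver to $G$ with demand set $\mathcal{P} := H'$, producing $(S_2, D_2)$ of size $O(k^{\tau}|H'| + n\log k + n^{1+1/k})$, stretch $\alpha_P$, and $O(1)$ query time. The new oracle answers a query $(u,v)$ by first calling $D_1$; in the direct branch it returns the reported path, and in the pivot branch it obtains $P_{u,u'}, P_{v,v'}$ with $u',v' \in S$, queries $D_E$ on $(u',v')$ to obtain an $H'$-path $u' = x_0, x_1, \ldots, x_r = v'$, queries $D_2$ on every pair $(x_i, x_{i+1})$ to get a $G$-path $Q_i$, and concatenates $P_{u,u'}, Q_0, \ldots, Q_{r-1}, \overline{P_{v,v'}}$. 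The spanner edge set of the interactive spanner is $E_1 \cup S_2$.

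The stretch analysis is routine. The direct branch has stretch $2h+1$. In the pivot branch, $P_{u,u'}$ and $P_{v,v'}$ together weigh at most $2h \cdot d_G(u,v)$; by the triangle inequality $d_G(u',v') \leq (2h+1) d_G(u,v)$, so the $H'$-path has $w'$-weight at most $\alpha_E k_1 (2h+1) d_G(u,v)$; each $Q_i$ has $G$-weight at most $\alpha_P \cdot w'(x_i,x_{i+1})$ since $w'(x_i, x_{i+1}) \geq d_G(x_i, x_{i+1})$ for an emulator. Summing gives total weight at most $\bigl(2h + \alpha_P \alpha_E k_1 (2h+1)\bigr) d_G(u,v) \leq 2\alpha_P \alpha_E(h k_1 + k_1 + h) d_G(u,v)$. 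Since $h k_1 \leq (1+\epsilon) k + h$ and $k_1 = O(k/h)$, choosing $h$ to satisfy the size constraint below makes $k_1 = O(h)$, giving stretch $2\alpha_P \alpha_E\bigl((1+\epsilon)k + O(h)\bigr)$, as claimed.

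For the size budget, the pivotal estimate is
\[
\left(1 - \tfrac{h}{k}\right)\!\left(1 + \tfrac{1}{k_1}\right) \;\leq\; 1 + \tfrac{1}{k} - \Omega\!\left(\tfrac{\epsilon h}{k}\right),
\]
so $|H'| \leq O(k_1^{\delta}) \cdot n^{1+1/k} \cdot n^{-\Omega(\epsilon h/k)}$, and hence $k^\tau |H'| \leq k^{\tau+\delta} n^{1+1/k} n^{-\Omega(\epsilon h/k)}$. This falls below $O(n^{1+1/k})$ once $\Omega(\epsilon h \log n / k) \geq (\tau+\delta) \log k$, i.e.\ $h \geq \Omega\bigl((\tau+\delta) k \log k / (\epsilon \log n)\bigr)$. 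Since $k \leq \log_2 n$ implies $\log k \leq \log\log n$, the choice $h = O\bigl(\lceil (\delta+\tau) k \log\log n / (\epsilon \log n)\rceil\bigr)$ stated in the lemma suffices. The three components contribute $O(h n^{1+1/k})$, $O(n\log k + n^{1+1/k})$ and $O(h n^{1+1/k})$ respectively, totalling $O(n\log k + h \cdot n^{1+1/k})$. The query-time overhead is $q(k_1) + O(\log h) \leq q(k) + O(\log h)$, with the $D_2$ calls costing only $O(1)$ per edge and absorbed into the output-path length. The main obstacle is this joint balancing: $k_1$ must be large enough that the emulator on $K$ is sparse enough to survive the $k^\tau$ blow-up inside the preserver, yet small enough that $h k_1 \leq (1+\epsilon)k$, and $h$ must then be chosen so that the partial-TZ and emulator-plus-preserver sizes simultaneously fit $O(h n^{1+1/k})$. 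The displayed inequality is the technical step that makes all three bounds come out at once.
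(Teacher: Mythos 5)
Your construction and size analysis coincide with the paper's: the same three-oracle decomposition (partial TZ oracle, emulator on the complete graph $K$ over $S$ with secondary parameter $k_1\approx k/h$, preserver with demand set equal to the emulator's edge set), and essentially the same exponent computation showing $k^{\tau}|H'|=O(n^{1+1/k})$ once $h=\Omega\bigl((\delta+\tau)k\log k/(\epsilon\log n)\bigr)$. The gap is in the stretch analysis, specifically in the sentence ``choosing $h$ to satisfy the size constraint below makes $k_1=O(h)$.'' This is false in the regime the lemma is built for: with $k=\Theta(\log n)$ and constant $\epsilon,\delta,\tau$, the size constraint gives $h=\Theta(\log\log n)$ while $k_1=\Theta\bigl(k/h\bigr)=\Theta(\log n/\log\log n)\gg h$ (in general $k_1=O(h)$ would force $h=\Omega(\sqrt{k})$, which your $h$ does not satisfy). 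So the stray additive term $k_1$ in your bound $2\alpha_P\alpha_E(hk_1+k_1+h)$ is not controlled by $O(h)$, and your derivation of the claimed stretch does not go through as written.

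The correct way to kill that term is different: one needs $h\geq\lceil 1/\epsilon\rceil$, so that $k_1\leq(1+O(\epsilon))k/h+1\leq O(\epsilon k)+1$ and the stray term folds into the coefficient of $k$, after which a final rescaling $\epsilon\mapsto\epsilon/c$ (which you also omit, having already spent the $(1+\epsilon)$ inside $k_1$) restores the clean $(1+\epsilon)k$. The paper enforces this by choosing $h=\lceil\sigma k\log\log n/\log n\rceil\cdot\lceil 1/\epsilon\rceil$ rather than deriving $h$ from the size constraint alone. Your size-driven $h$ only guarantees $h\gtrsim 1/\epsilon$ when $k\log k=\Omega(\log n)$; for smaller $k$ (e.g.\ $k\approx\sqrt{\log n}$, where your $h$ collapses to $O(1)$) the lone $k_1\approx(1+\epsilon)k$ term doubles the leading coefficient to roughly $4\alpha_P\alpha_E k$, which is strictly worse than the claimed $2\alpha_P\alpha_E\bigl((1+\epsilon)k+O(h)\bigr)$. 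To repair the proof, enlarge $h$ by the factor $\lceil 1/\epsilon\rceil$ (this stays within the stated bound on $h$ in the relevant parameter range), bound $(2h+1)/h\leq 2+2\epsilon$ using $h\geq 1/\epsilon$, and finish with the $\epsilon$-rescaling; with those changes your argument matches the paper's.
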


\begin{proof}

Denote $\sigma=\delta+\tau$. Given the graph $G=(V,E)$, denote by $D_0$ the partial TZ oracle from Lemma \ref{lemma:PartialPRDO} with parameter
\begin{equation} \label{eq:HDef}
    h=\left\lceil\frac{\sigma\cdot k\cdot\log\log n}{\log n}\right\rceil\cdot\left\lceil\frac{1}{\epsilon}\right\rceil~.
\end{equation}
Denote by $S\subseteq V$ the corresponding set from Lemma \ref{lemma:PartialPRDO} (associated with the oracle $D_0$), and recall that the size of $S$ is $|S|=O(n^{1-\frac{h}{k}})$. Also, denote by $E_0\subseteq E$ the ``underlying" edge set of $D_0$, which is the set of size $O(h\cdot n^{1+\frac{1}{k}})$ that contains every output path of $D_0$.

Define the graph $K=(S,\binom{S}{2})$ to be the complete graph over the vertices of $S$, where the weight of an edge $(x,y)$ is defined to be $d_G(x,y)$. Let $(H_E,D_E)$ be an interactive emulator for $K$, with parameter
\[k_1=\left\lceil\frac{k(1+2\epsilon)}{h}\right\rceil~,\]
instead of $k$. That is, $(H_E,D_E)$ has stretch $\alpha_E\cdot k_1$, query time $q(k_1)$ and size $O(k_1^{\delta}|S|^{1+\frac{1}{k_1}})$. The set $H_E$ can be viewed as a set of pairs of vertices in $G$. The size of this interactive emulator is
\begin{eqnarray*}
O(k_1^{\delta}|S|^{1+\frac{1}{k_1}})&=&O(k^{\delta}(n^{1-\frac{h}{k}})^{1+\frac{1}{k_1}})
=O(k^{\delta}\cdot n^{1+\frac{1}{k_1}-\frac{h}{k}})
=O(k^{\delta}\cdot n^{1+\frac{h}{(1+2\epsilon)k}-\frac{h}{k}})\\
&=&O(k^{\delta}\cdot n^{1-\frac{2\epsilon\cdot h}{(1+2\epsilon)k}})
=O(k^{\delta}\cdot n^{1-\frac{\epsilon\cdot h}{k}})\\
&=&O(k^{\delta}\cdot n^{1-\frac{\sigma\cdot\log\log n}{\log n}})
=O\left(k^{\delta}\cdot\frac{n}{\log^{\sigma}n}\right)~,
\end{eqnarray*}
and therefore the size of $H_E$ is
\begin{equation} \label{eq:EmulatorSize}
    |H_E|=O\left(k^{\delta}\cdot\frac{n}{\log^{\sigma}n}\right)~.
\end{equation}

Now, construct an interactive distance preserver $(H_P,D_P)$ on the graph $G$ and the set of pairs $\mathcal{P}=H_E$, with stretch $\alpha_P$, query time $O(1)$ and size 
\begin{eqnarray*}
O(k^{\tau}\cdot|H_E|+n\log k+n^{1+\frac{1}{k}})
&=&O(k^{\tau}\cdot k^{\delta}\cdot\frac{n}{\log^{\sigma}n}+n\log k+n^{1+\frac{1}{k}})\\
&=&O(k^{\sigma}\cdot\frac{n}{\log^{\sigma}n}+n\log k+n^{1+\frac{1}{k}})~.
\end{eqnarray*}
Since $k\leq\log n$, the first term above is bounded by $n$, and therefore the total size is
\begin{equation} \label{eq:PreserverSize}
    O(n\log k+n^{1+\frac{1}{k}})~.
\end{equation}

Next, define a new oracle $D$ that stores the oracles $D_0,D_E$ and $D_P$. Given a query $(u,v)\in V^2$, the oracle $D$ first applies the oracle $D_0$ on this query. The outcome is one of the following. In the first case, $D_0$ outputs a $u-v$ path in $E_0$ with stretch at most $2h+1$. In this case, the oracle $D$ simply returns the same path as an output. In the second case, $D_0$ returns two paths, $P_{u,u'}\subseteq E_0$ from $u$ to $u'\in S$ and $P_{v,v'}\subseteq E_0$ from $v$ to $v'\in S$, both with weight at most $h\cdot d_G(u,v)$. In this case, $D$ queries the oracle $D_E$ on $(u',v')$ to find a path $P_{u',v'}\subseteq H_E$ between $u',v'$. Then it replaces each edge $(x,y)\in P_{u',v'}$ by the path $D_P(x,y)\subseteq H_P$ returned by the oracle $D_P$. The concatenation of $P_{u,u'}$, these paths, and $P_{v',v}$, is a path in $E_0\cup H_P\subseteq E$ between $u,v$, and we return it as an output. Here, $P_{v',v}$ is the same path as $P_{v,v'}$, but with the order of the edges reversed. See Figure \ref{fig:OraclesComposing} for an illustration.

\begin{figure}[!ht]
\centerline{\includegraphics[width=8.7cm, height=6cm]{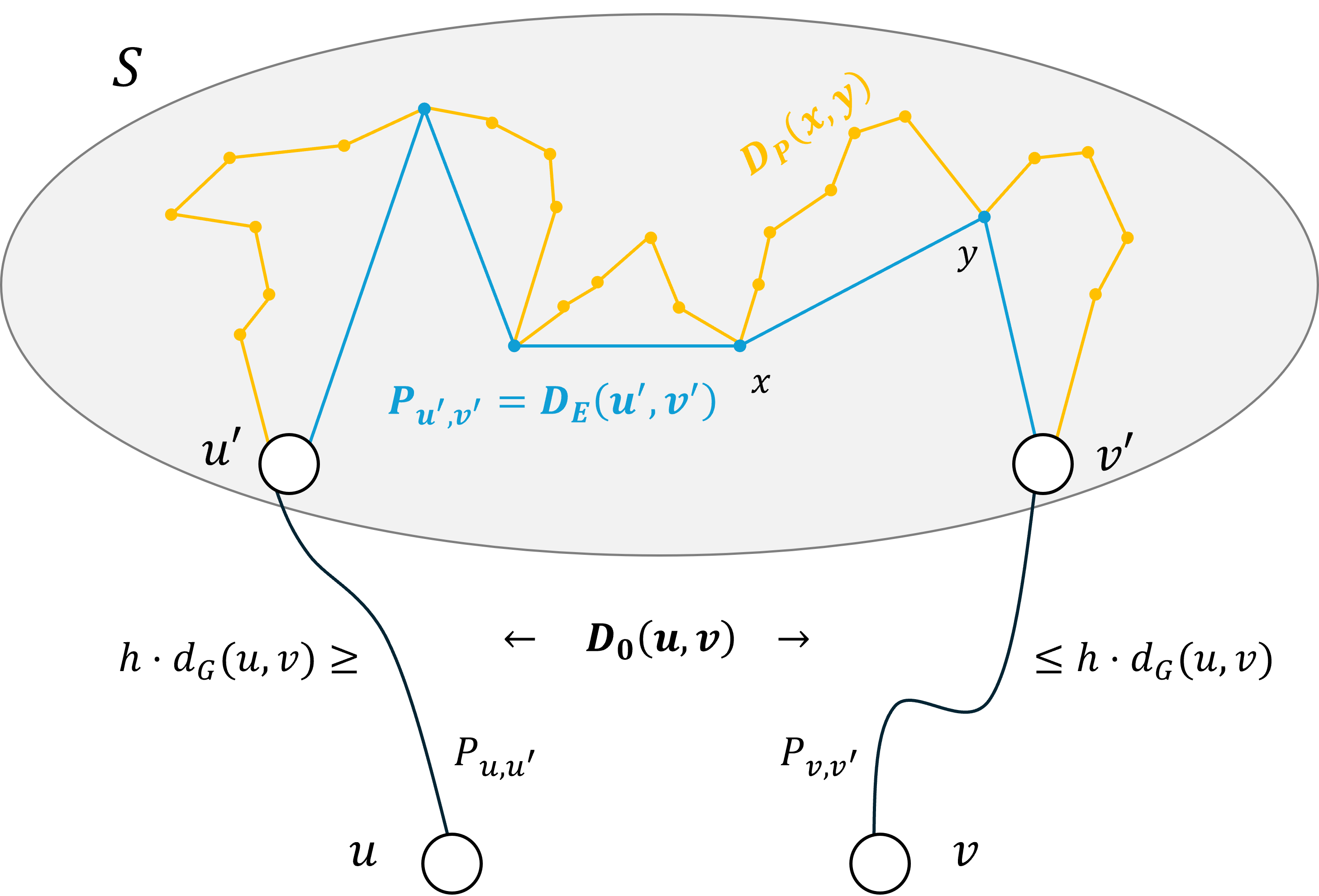}}
\caption{An illustration for the way that the oracle $D$ acts on a query $(u,v)$. First, it invokes the partial TZ oracle $D_0$ on the query $(u,v)$. In case that $D_0$ did not output a $u-v$ path, it outputs two vertices $u',v'\in S$ and two paths - $P_{u,u'}$ from $u$ to $u'$, and $P_{v,v'}$ from $v$ to $v'$. Each of these paths (depicted by black lines) has weight at most $h\cdot d_G(u,v)$, and thus, by the triangle inequality, $d_G(u',v')\leq(2h+1)d_G(u,v)$. Then, the oracle $D$ invokes the interactive emulator $D_E$ on the query $(u',v')$ to produce a $u'-v'$ path $P_{u',v'}$ (depicted by blue lines), that is not necessarily contained in the graph $G$, but consists only of vertices of $S$, and has weight at most $\alpha_Ek_1\cdot(2h+1)d_G(u,v)$. Finally, on each edge $(x,y)$ of $P_{u',v'}$, the oracle $D$ invokes the interactive distance preserver $D_P$, to obtain an $x-y$ path in $G$ (depicted by orange lines), with weight at most $\alpha_Pd_G(x,y)$. Concatenating all these paths, for every edge $(x,y)$ in $P_{u',v'}$, results in a path in $G$, with weight at most $\alpha_P\cdot\alpha_Ek_1\cdot(2h+1)d_G(u,v)$. Together with $P_{u,u'}$ and $P_{v,v'}$, we obtain a $u-v$ path in $G$, with weight at most $\left(2h+\alpha_P\cdot\alpha_Ek_1\cdot(2h+1)\right)d_G(u,v)$.}
\label{fig:OraclesComposing}
\end{figure}

\pagebreak

Our interactive spanner is now defined as $(E_0\cup H_P,D)$. The resulting path described above is of weight at most 
\begin{eqnarray*}
&&h\cdot d_G(u,v)+\sum_{e\in P_{u',v'}}\alpha_P\cdot w(e)+h\cdot d_G(u,v)\\
&=&2h\cdot d_G(u,v)+\alpha_P\cdot w(P_{u',v'})\\
&\leq&2h\cdot d_G(u,v)+\alpha_P\cdot\alpha_E\cdot k_1\cdot d_G(u',v')\\
&\leq&2h\cdot d_G(u,v)+\alpha_P\cdot\alpha_E\cdot k_1\cdot(d_G(u',u)+d_G(u,v)+d_G(v,v'))\\
&\stackrel{\text{Lemma \ref{lemma:PartialPRDO}}}{\leq}&2h\cdot d_G(u,v)+\alpha_P\cdot\alpha_E\cdot k_1\cdot(2h+1)d_G(u,v)\\
&=&(2h+\alpha_P\cdot\alpha_E\cdot k_1\cdot(2h+1))d_G(u,v)\\
&=&\left(2h+\alpha_P\cdot\alpha_E\cdot\left\lceil\frac{k(1+2\epsilon)}{h}\right\rceil\cdot(2h+1)\right)d_G(u,v)\\
&\leq&\alpha_P\cdot\alpha_E\cdot\left(2h+\left\lceil\frac{k(1+2\epsilon)}{h}\right\rceil\cdot(2h+1)\right)d_G(u,v)\\
&\leq&\alpha_P\cdot\alpha_E\cdot\left(4h+1+\frac{k(1+2\epsilon)(2h+1)}{h}\right)d_G(u,v)\\
&\leq&\alpha_P\cdot\alpha_E\cdot\left(5h+\frac{k(1+2\epsilon)(2h+1)}{h}\right)d_G(u,v)~.
\end{eqnarray*}

Note that since $h\geq\frac{1}{\epsilon}$ (see (\ref{eq:HDef})), and since $1+2\epsilon\leq2$ (because $\epsilon\leq\frac{1}{2}$), we have
\begin{eqnarray*}
\frac{(1+2\epsilon)(2h+1)}{h}=2(1+2\epsilon)+\frac{1+2\epsilon}{h}
\leq2(1+2\epsilon)+\frac{2}{1/\epsilon}=2(1+3\epsilon)~.    
\end{eqnarray*}


Thus, the weight of the resulting path is at most
\begin{eqnarray*}
\alpha_P\cdot\alpha_E\cdot\left(5h+\frac{k(1+2\epsilon)(2h+1)}{h}\right)d_G(u,v)
&\leq&\alpha_P\cdot\alpha_E\cdot\left(5h+2(1+3\epsilon)k\right)d_G(u,v)\\
&=&2\alpha_P\cdot\alpha_E\cdot\left(\frac{5}{2}h+(1+3\epsilon)k\right)d_G(u,v)~.
\end{eqnarray*}

Using the same proof, but with $\frac{\epsilon}{3}$ instead of $\epsilon$, we get that our interactive spanner has stretch
\begin{eqnarray*}
\max\{2h+1,2\alpha_P\cdot\alpha_E\cdot(O(h)+(1+\epsilon)k)\}
=2\alpha_P\cdot\alpha_E\cdot(O(h)+(1+\epsilon)k)~,
\end{eqnarray*}
as desired.


The size of $D$ is the sum of the sizes of $D_0,D_E$ and $D_P$. The size of $D_0$, by Lemma \ref{lemma:PartialPRDO}, is $O(h\cdot n^{1+\frac{1}{k}})$. The size of $D_E$ has the same bound as of the size of $H_E$, which is $O\left(k^{\delta}\cdot\frac{n}{\log^{\sigma}n}\right)=O(n)$ (by Equation (\ref{eq:EmulatorSize})). Lastly, the size of $D_P$ is, by (\ref{eq:PreserverSize}), $O(n\log k+n^{1+\frac{1}{k}})$. We get that the total size of the oracle $D$ is
\[O(n\log k+h\cdot n^{1+\frac{1}{k}})~.\]
In addition, note that the output paths of $D$ are always contained in the set $E_0\cup H_P$, which has the same size bound as for the oracle $D$. We conclude that our interactive spanner $(E_0\cup H_P,D)$ is of size $O(n\log k+h\cdot n^{1+\frac{1}{k}})$.

Regarding the query time, notice that finding $P_{u,u'}$ and $P_{v,v'}$ requires $O(\log h)$ time (by Lemma \ref{lemma:PartialPRDO}).
Finding the path $P_{u',v'}$ requires $q(k_1)$ time, and applying $D_P$ on every edge requires linear time in the size of the returned path. Hence, the query time of the resulting oracle is 
\[q(k_1)+O(\log h)\leq q(k)+O(\log h)~.\]

\end{proof}

The following theorems follow from Lemma \ref{lemma:InteractiveSpannerVariety}, when using the emulators and distance preservers from Table \ref{table:EmulatorsAndPreservers}.

\begin{theorem} \label{thm:InteractiveSpanner1}
For every $n$-vertex graph and parameters $k\in[3,\log_2n]$ and $\epsilon$ such that $\log^{-O(1)}\log n\leq\epsilon\leq\frac{1}{2}$, there is an interactive spanner with stretch $(4+\epsilon)k+O(\frac{1}{\epsilon}\cdot\log k\cdot\log\log k)$, query time $O(\log k+\log\frac{1}{\epsilon})$, and size
\[O\left(\left\lceil\frac{k\cdot\log\log n\cdot\log^{(3)}n}{\epsilon\log n}\right\rceil\cdot n^{1+\frac{1}{k}}\right)~.\]

If also $\epsilon\geq\sqrt{\frac{\log k\cdot\log\log k}{k}}$, then we also have an interactive spanner with stretch $(4+\epsilon)k$, query time $O(\log k)$, and the same size as above.
\end{theorem}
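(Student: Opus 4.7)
The plan is to derive both statements as immediate instantiations of Lemma \ref{lemma:InteractiveSpannerVariety}, plugging in two components from Table \ref{table:EmulatorsAndPreservers}. For the interactive emulator I would use the Thorup--Zwick/Wulff-Nilsen construction \cite{TZ01,WN13}, which provides stretch $2k-1$, size $O(k\cdot n^{1+1/k})$ and query time $O(\log k)$; in the lemma's notation this is $\alpha_E=2$, $\delta=1$, $q(k)=O(\log k)$. For the interactive distance preserver I would invoke Theorem \ref{thm:DistancePreserver1} applied with parameter $c\epsilon$ for a small constant $c$, obtaining $\alpha_P=1+c\epsilon$, query time $O(1)$, and a coefficient on $|\mathcal{P}|$ of $\gamma_{4/3}(c\epsilon,k)=O(\log k/\epsilon)^{\lceil\log_{4/3}k\rceil}$. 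Writing this coefficient as $k^{\tau}$ yields $\tau=O(\log\log k+\log(1/\epsilon))$.

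The next step is to bound $\sigma:=\delta+\tau$. The hypothesis $\epsilon\geq\log^{-O(1)}\log n$ gives $\log(1/\epsilon)=O(\log^{(3)}n)$, and $k\leq\log n$ forces $\log\log k=O(\log^{(3)}n)$, so $\sigma=O(\log^{(3)}n)$. Feeding this into Lemma \ref{lemma:InteractiveSpannerVariety} yields
\[
h \;=\; O\!\left(\left\lceil\frac{\sigma\cdot k\cdot\log\log n}{\epsilon\log n}\right\rceil\right) \;=\; O\!\left(\left\lceil\frac{k\log\log n\cdot\log^{(3)}n}{\epsilon\log n}\right\rceil\right),
\]
which matches the $h$ appearing in the target size. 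The lemma then returns an interactive spanner with size $O(n\log k+h\cdot n^{1+1/k})$, query time $q(k)+O(\log h)$, and stretch $2\alpha_P\alpha_E\bigl((1+\epsilon)k+O(h)\bigr)$.

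To reshape these outputs into the advertised bounds I would make three simplifications. For the size, I would verify that $n\log k$ is absorbed into $h\cdot n^{1+1/k}$ throughout $k\in[3,\log_2 n]$, using $n^{1/k}\geq 2$ together with the trivial $h\geq 1$ in the small-$k$ regime and the explicit lower bound on $h$ in the large-$k$ regime. For the stretch, $2\alpha_P\alpha_E=4+O(\epsilon)$, so a single rescaling of $\epsilon$ by a constant produces a leading term of $(4+\epsilon)k$, and it remains to show the additive contribution $O(h)$ is $O\bigl(\log k\cdot\log\log k/\epsilon\bigr)$. The cleanest route I see is to observe that the ratio $\tfrac{k\log\log n\cdot\log^{(3)}n}{\log n\cdot\log k\cdot\log\log k}$ is monotone non-decreasing in $k$ on $[3,\log n]$ (via an elementary log-derivative check) and equals $1$ at $k=\log n$, so this inequality holds uniformly. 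Finally, since $h\leq O(\log k\cdot\log\log k/\epsilon)+1$, one has $\log h=O(\log(\log k/\epsilon))$, giving total query time $O(\log k+\log(1/\epsilon))$.

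For the second statement, I would exploit the additional hypothesis $\epsilon\geq\sqrt{\log k\cdot\log\log k/k}$. Squaring and rearranging gives $\log k\cdot\log\log k/\epsilon\leq\epsilon k$, so the additive term $O(\log k\log\log k/\epsilon)$ is swallowed by $O(\epsilon k)$, and one more rescaling of $\epsilon$ collapses the stretch to $(4+\epsilon)k$. The same hypothesis yields $1/\epsilon\leq\sqrt{k}$, hence $\log(1/\epsilon)\leq(\log k)/2$, reducing the query time to $O(\log k)$. I do not expect any isolated step to be genuinely hard here; the main obstacle is bookkeeping --- carefully tracking the two rescalings of $\epsilon$ so that constants do not blow up, and uniformly verifying the size and additive-stretch simplifications over the full range of $k$ under the mild lower bound on $\epsilon$.
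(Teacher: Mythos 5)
Your proposal is correct and follows essentially the same route as the paper's proof: instantiate Lemma \ref{lemma:InteractiveSpannerVariety} with the \cite{TZ01,WN13} emulator ($\alpha_E=2$, $\delta=1$, $q(k)=O(\log k)$) and the preserver of Theorem \ref{thm:DistancePreserver1} with a rescaled $\epsilon$, bound $\sigma=O(\log^{(3)}n)$, use monotonicity of $\frac{\log x\cdot\log\log x}{x}$ to get $h=O\left(\frac{1}{\epsilon}\log k\log\log k\right)$, and absorb the additive term via $\epsilon\geq\sqrt{\frac{\log k\cdot\log\log k}{k}}$ for the second statement. One small bookkeeping note: in the small-$k$ regime the bound $n^{1/k}\geq 2$ alone does not absorb $n\log k$; you need $n^{1/k}\geq 2^{\log^{(3)}n}=\log\log n\geq\log k$ (valid for $k\leq\frac{\log n}{\log^{(3)}n}$), exactly as in the paper's argument.
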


\begin{proof}

We use the interactive emulator that is based on \cite{TZ01,WN13} (see Table \ref{table:EmulatorsAndPreservers}), the interactive distance preserver from Theorem \ref{thm:DistancePreserver1}, and the notations from Lemma \ref{lemma:InteractiveSpannerVariety}, all with $\frac{\epsilon}{12}$ instead of $\epsilon$. Then, we have $\alpha_E=2$, $\delta=1$, and $q(k)=O(\log k)$, $\alpha_P=1+\frac{\epsilon}{12}$, $\tau=\log_k(\gamma_{4/3}(\frac{\epsilon}{12},k))=O(\log\log k+\log\frac{1}{\epsilon})$.

Thus, $\sigma=\delta+\tau=O(\log\log k+\log\frac{1}{\epsilon})=O(\log^{(3)}n)$ (since $k\leq\log n$ and $\epsilon\geq\log^{-O(1)}\log n$). Therefore, using the notation $h$ from Lemma \ref{lemma:InteractiveSpannerVariety} (see Equation (\ref{eq:HDef})),
\begin{eqnarray*}
h=O\left(\left\lceil\frac{k\cdot\log\log n\cdot\log^{(3)}n}{\epsilon\log n}\right\rceil\right)
=O\left(\frac{1}{\epsilon}\cdot\log k\cdot\log\log k\right)~,
\end{eqnarray*}
where the last step is true since $\frac{\log x\cdot\log\log x}{x}$ is a decreasing function for all $x\geq8$.

By Lemma \ref{lemma:InteractiveSpannerVariety}, there is an interactive spanner with stretch 
\begin{eqnarray*}
2\left(1+\frac{\epsilon}{12}\right)\cdot2\left(\left(1+\frac{\epsilon}{12}\right)k+O(h)\right)
=4\left(1+\frac{\epsilon}{12}\right)^2k+O(h)\leq(4+\epsilon)k+O(h)~,
\end{eqnarray*}
query time $q(k)+O(\log h)=O(\log k+\log\frac{1}{\epsilon})$, and size
\begin{eqnarray*}
O(n\log k+h\cdot n^{1+\frac{1}{k}})
=O\left(\left\lceil\frac{k\cdot\log\log n\cdot\log^{(3)}n}{\epsilon\log n}\right\rceil\cdot n^{1+\frac{1}{k}}\right)~,
\end{eqnarray*}
as desired (for the last equation, note that for $k\leq\frac{\log n}{\log^{(3)}n}$ we have $n\log k\leq n^{1+\frac{1}{k}}$, and for $\frac{\log n}{\log^{(3)}n}<k\leq\log n$ we have $\log k\leq\left\lceil\frac{k\cdot\log\log n\cdot\log^{(3)}n}{\epsilon\log n}\right\rceil$).

If $\epsilon\geq\sqrt{\frac{\log k\cdot\log\log k}{k}}$, then we also have $\log k\cdot\log\log k\leq\epsilon^2k$, thus
$h=O\left(\frac{\log k\cdot\log\log k}{\epsilon}\right)=O(\epsilon k)~.$
Hence, the stretch is $(4+\epsilon)k+O(h)=(4+\epsilon+O(\epsilon))k=(4+O(\epsilon))k$. By replacing $\epsilon$ with $\frac{\epsilon}{C}$, for a sufficiently large constant $C$, we get a stretch of $(4+\epsilon)k$. Since in this case we also have $\frac{1}{\epsilon}\leq k$, the query time is $O(\log k)$.

\end{proof}

In the following theorem, the precise constant coefficient of the stretch is not specified. Instead it appears in Table \ref{table:InteractiveSpannersVariety}.

\begin{theorem} \label{thm:InteractiveSpanner2}
For every $n$-vertex graph and an integer parameter $k\in[3,\log_2n]$, there is an interactive spanner with stretch 
$O(k)$, query time $O\left(\log\left\lceil\frac{k\cdot\log\log n}{\log n}\right\rceil\right)=O(\log\log k)$, and size
\[O\left(\left\lceil\frac{k\cdot\log\log n}{\log n}\right\rceil\cdot n^{1+\frac{1}{k}}\right)~.\]
\end{theorem}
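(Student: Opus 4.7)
The plan is to invoke Lemma \ref{lemma:InteractiveSpannerVariety} with a different pair of interactive emulator and preserver than in Theorem \ref{thm:InteractiveSpanner1}. Specifically, I would plug in the Mendel-Naor-based interactive emulator of Theorem \ref{thm:InteractiveEmulator} (yielding $\alpha_E = 8c_{MN}$, $\delta = 0$, and $q(k) = O(1)$) together with the interactive $(3+\epsilon_0)$-distance preserver from Remark \ref{remark:PreserverWithWorseStretch} for a sufficiently small constant $\epsilon_0 > 0$ (yielding $\alpha_P = 3+\epsilon_0 = O(1)$ and $\tau = \log_{4/3}(12 + 40/\epsilon_0) = O(1)$). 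The crucial feature of this combination is that $\sigma = \delta + \tau$ is now a universal constant, in contrast to Theorem \ref{thm:InteractiveSpanner1} where $\tau$ carried a $\log\log k$ factor coming from $\gamma_{4/3}$.

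With these choices and $\epsilon$ set to a fixed constant in Lemma \ref{lemma:InteractiveSpannerVariety}, the parameter $h$ from that lemma becomes
\[
h \;=\; O\!\left(\left\lceil\frac{(\delta+\tau)\,k\,\log\log n}{\epsilon\log n}\right\rceil\right) \;=\; O\!\left(\left\lceil\frac{k\log\log n}{\log n}\right\rceil\right).
\]
Note that $h \leq k$, since $\log\log n \leq \log n$. The lemma then directly produces an interactive spanner with stretch
\[
2\alpha_P\cdot\alpha_E\bigl((1+\epsilon)k + O(h)\bigr) \;=\; O(k + h) \;=\; O(k),
\]
query time $q(k) + O(\log h) = O(1) + O\!\left(\log\lceil k\log\log n/\log n\rceil\right) = O(\log\log k)$, and size $O(n\log k + h\cdot n^{1+\frac{1}{k}})$, which is the bound claimed.

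The only step that needs a small verification is that the additive $n\log k$ term is absorbed into $h\cdot n^{1+\frac{1}{k}}$. Since $k \leq \log_2 n$ we have $n^{1/k} \geq 2$, so $h\cdot n^{1+\frac{1}{k}} \geq 2\,h\,n$. For $k \leq \log n/\log\log n$ we have $h = 1$ but $n^{1/k} \geq \log n \geq \log k$, so $n\log k \leq n^{1+\frac{1}{k}}$. For $k > \log n/\log\log n$ one has $h = \Theta(k\log\log n/\log n)$, which is at least a constant times $\log k$ (using $\log k \leq \log\log n$), so again $n\log k = O(h\cdot n^{1+\frac{1}{k}})$. Together these give the advertised size $O(\lceil k\log\log n/\log n\rceil \cdot n^{1+\frac{1}{k}})$.

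There is no serious obstacle; the main point is recognizing that using a constant-stretch preserver with polynomial-in-$k$ coefficient (rather than the $(1+\epsilon)$-preserver whose coefficient $\gamma_{4/3}$ contributes an unavoidable $\log\log k$ to $\tau$) is exactly what allows $\sigma$ to collapse to $O(1)$ and thus removes the $\log^{(3)}n$ factor present in Theorem \ref{thm:InteractiveSpanner1}. The price paid is an unspecified (larger) constant in front of $k$ in the stretch, which is exactly why the theorem is stated with an $O(k)$ rather than $(4+\epsilon)k$ bound.
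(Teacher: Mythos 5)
Your construction is exactly the paper's: plug the Mendel--Naor-based interactive emulator of Theorem \ref{thm:InteractiveEmulator} and the $(3+\epsilon)$-distance preserver of Remark \ref{remark:PreserverWithWorseStretch} (both with a fixed constant $\epsilon$) into Lemma \ref{lemma:InteractiveSpannerVariety}, so that $\sigma=\delta+\tau=O(1)$, $h=O\left(\left\lceil\frac{k\log\log n}{\log n}\right\rceil\right)$, and the stretch $O(k)$, query time $O(\log h)=O(\log\log k)$ and size $O(n\log k+h\cdot n^{1+\frac{1}{k}})$ follow just as you state.

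The one place where your write-up has a flaw is the second case of the absorption of the $n\log k$ term. For $k>\log n/\log\log n$ you claim that $h=\Theta(k\log\log n/\log n)$ is at least a constant times $\log k$; this is false when $k$ is only slightly above $\log n/\log\log n$ (for instance $k=\log n/\sqrt{\log\log n}$ gives $h=\Theta(\sqrt{\log\log n})$ while $\log k=\Theta(\log\log n)$). The conclusion $n\log k=O(h\cdot n^{1+\frac{1}{k}})$ is nevertheless true, but one must keep the $n^{1/k}$ factor, as you did in your first case and as the paper does: $\frac{k\log\log n}{\log n}\cdot n^{\frac{1}{k}}=\frac{\log\log n}{(\log n)/k}\cdot 2^{(\log n)/k}\geq\log\log n\geq\log k$, using $2^x>x$ for all $x$. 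With that one-line correction your argument is complete and coincides with the paper's proof.
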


\begin{proof}

We use the interactive emulator from Theorem \ref{thm:InteractiveEmulator}, the interactive distance preserver from Remark \ref{remark:PreserverWithWorseStretch}, and the notations from Lemma \ref{lemma:InteractiveSpannerVariety}, all with some constant $\epsilon$. Then, we have $\alpha_E=O(1)$, $\delta=0$, $q(k)=O(1)$, $\alpha_P=3+\epsilon=O(1)$, and $\tau=\log_{4/3}(12+\frac{40}{\epsilon})=O(1)$.

Thus, $\sigma=O(1)$ and
\[h=O\left(\left\lceil\frac{k\cdot\log\log n}{\log n}\right\rceil\right)=O\left(\log k\right)~,\]
where the last step is true since $\frac{\log x}{x}$ is a decreasing function for all $x\geq3$.

By Lemma \ref{lemma:InteractiveSpannerVariety}, there is an interactive spanner with stretch 
\[2\cdot O(1)\cdot((1+\epsilon)k+O(h))=O(k)~,\]
query time 
\[q(k)+O(\log h)=O\left(\log\left\lceil\frac{k\cdot\log\log n}{\log n}\right\rceil\right)=O(\log\log k)~,\] 
and size
\begin{eqnarray*}
O(n\log k+h\cdot n^{1+\frac{1}{k}})
=O\left(n\log k+\left\lceil\frac{k\cdot\log\log n}{\log n}\right\rceil\cdot n^{1+\frac{1}{k}}\right)~.
\end{eqnarray*}

Lastly, note that $\frac{k\cdot\log\log n}{\log n}\cdot n^{1+\frac{1}{k}}=n\cdot\frac{\log\log n}{(\log n) / k}\cdot2^{\frac{\log n}{k}}\geq n\log\log n\geq n\log k$, since $2^x>x$ for every $x$ (in particular for $x=(\log n) / k$). Therefore the term $n\log k$ is negligible, and we conclude that the size of our interactive spanner is
\begin{eqnarray*}
O\left(n\log k+\left\lceil\frac{k\cdot\log\log n}{\log n}\right\rceil\cdot n^{1+\frac{1}{k}}\right)
=O\left(\left\lceil\frac{k\cdot\log\log n}{\log n}\right\rceil\cdot n^{1+\frac{1}{k}}\right)~.
\end{eqnarray*}

Note that as $k\leq\log n$, we have $\log k\geq\frac{k\log\log n}{\log n}$, and thus the size is $O(\log k\cdot n^{1+\frac{1}{k}})$.

\end{proof}

Similarly to Theorems \ref{thm:InteractiveSpanner1} and \ref{thm:InteractiveSpanner2}, more results can be derived by Lemma \ref{lemma:InteractiveSpannerVariety}, when using different combinations from Table \ref{table:EmulatorsAndPreservers}. We specify these results in Table \ref{table:InteractiveSpannersVariety} without proof (the proofs are relatively simple and similar to the proofs of Theorems \ref{thm:InteractiveSpanner1} and \ref{thm:InteractiveSpanner2}). Note that in this table, we assume that the parameter $\epsilon$ is constant. For a precise specification of the dependencies on $\epsilon$, see Table \ref{table:InteractiveSpannersVarietyDetails} in \nameref{sec:AppendixG}.

\def\arraystretch{1.5}

\begin{table*}[ht]
\begin{center}
\begin{tabular}{|c|c|c|c|c|}
\hline
Stretch  & Size  & Query  & Emulator & Distance \\ 
         &       & Time   &          & Preserver \\ \hline
$(4+\epsilon)k$  & $\left\lceil\frac{k\cdot\log\log n\cdot\log^{(3)}n}{\log n}\right\rceil\cdot n^{1+\frac{1}{k}}$  & $\log k$ & \cite{TZ01,WN13} & Theorem \ref{thm:DistancePreserver1}\\ \hline
$(8+\epsilon)k$  & $\left\lceil\frac{k\cdot\log\log n\cdot\log^{(3)}n}{\log n}\right\rceil\cdot n^{1+\frac{1}{k}}$ & $\log\log k$ & Theorem \ref{thm:APEmulator}, based on \cite{AP90a} & Theorem \ref{thm:DistancePreserver1}\\
\hline
$(12+\epsilon)k$  & $\left\lceil\frac{k\cdot\log\log n}{\log n}\right\rceil\cdot n^{1+\frac{1}{k}}$  & $\log k$ & \cite{TZ01,WN13} & Remark \ref{remark:PreserverWithWorseStretch}\\ \hline
$(24+\epsilon)k$  & $\left\lceil\frac{k\cdot\log\log n}{\log n}\right\rceil\cdot n^{1+\frac{1}{k}}$  & $\log\log k$ & Theorem \ref{thm:APEmulator} & Remark \ref{remark:PreserverWithWorseStretch}\\ \hline
$(16c_{MN}+\epsilon)k$  & $\left\lceil\frac{k\cdot\log\log k\cdot\log\log n}{\log n}\right\rceil\cdot n^{1+\frac{1}{k}}$  & $\log\log k$ & Theorem \ref{thm:InteractiveEmulator} & Theorem \ref{thm:DistancePreserver1}\\ \hline
$(48c_{MN}+\epsilon)k$  & $\left\lceil\frac{k\cdot\log\log n}{\log n}\right\rceil\cdot n^{1+\frac{1}{k}}$  & $\log\log k$ & Theorem \ref{thm:InteractiveEmulator} & Remark \ref{remark:PreserverWithWorseStretch}\\ \hline
\end{tabular}
\end{center}
\caption{A variety of results for interactive spanners by Lemma \ref{lemma:InteractiveSpannerVariety}, when using the interactive emulators and distance preservers from Table \ref{table:EmulatorsAndPreservers}. The $O$-notations are omitted from the size and query time columns. For simplicity, in all of the above, the parameter $\epsilon>0$ is considered to be constant. In the second and the fourth rows we assume $\Lambda=n^{O(1)}$. In rows $5$ and $6$ the query times are in fact at most $O\left(\log\left\lceil\frac{k\log\log k\cdot\log\log n}{\log n}\right\rceil\right)$ and $O\left(\log\left\lceil\frac{k\cdot\log\log n}{\log n}\right\rceil\right)$, respectively.} \label{table:InteractiveSpannersVariety}
\end{table*}

For a general aspect ratio $\Lambda$, the size bounds in the second and the fourth rows are \newline 
$O\left(\left\lceil\frac{k\cdot\log\log n}{\log n}\left(\log\log k+\frac{\log\log\Lambda}{\log k}\right)\right\rceil\cdot n^{1+\frac{1}{k}}\right)$ and $O\left(\left\lceil\frac{k\log\log\Lambda\cdot\log\log n}{\log n\cdot\log k}\right\rceil\cdot n^{1+\frac{1}{k}}\right)$, respectively. The query times are both at most $O(\log\log k+\log^{(3)}\Lambda)$.
These more general bounds imply the bounds that appear in the table (for the case where $\Lambda=n^{O(1)}$), since if $k\leq\sqrt{\log n}$, then for the second row we have
\begin{eqnarray*}
h&=&O\left(\left\lceil\frac{(\log\log k+\log_k\log\Lambda)\cdot k\cdot\log\log n}{\log n}\right\rceil\right)\\
&=&O\left(\left\lceil\frac{\log\log n\cdot k\cdot\log\log n}{\log n}\right\rceil\right)=O(1)~.
\end{eqnarray*}
Otherwise, if $k>\sqrt{\log n}$, then $\log_k\log\Lambda=O\left(\frac{\log\log n}{\log\log n}\right)=O(1)$, and therefore 
\begin{eqnarray*}
h&=&O\left(\left\lceil\frac{k\cdot\log\log k\cdot\log\log n}{\log n}\right\rceil\right)
=O\left(\left\lceil\frac{k\cdot\log\log n\cdot\log^{(3)}n}{\log n}\right\rceil\right)~.
\end{eqnarray*}
Similarly, for the fourth row, we have either $h=O(1)$ (for small $k$) or $h=O\left(\left\lceil\frac{k\cdot\log\log n}{\log n}\right\rceil\right)$ (for large $k)$.

Note that for constant $\epsilon$, all our interactive spanners have size of $\Omega\left(\left\lceil\frac{k\cdot\log\log n}{\log n}\right\rceil\cdot n^{1+\frac{1}{k}}\right)$. As we saw in the proof of Theorem \ref{thm:InteractiveSpanner2}, we have $n\log k\leq\left\lceil\frac{k\cdot\log\log n}{\log n}\right\rceil\cdot n^{1+\frac{1}{k}}$, therefore the term $n\log k$ that appears in Lemma \ref{lemma:InteractiveSpannerVariety}, in the size of the interactive spanners, is negligible. 
See also \nameref{sec:AppendixG} for explicit dependencies of these bounds on $\epsilon$.

\subsubsection{Linear-size Interactive Spanner} \label{sec:LinearInteractiveSpanner}

In order to construct a \textit{linear-size} oracle (as opposed to size $O(n\log\log n)$ that can be obtained from Theorem \ref{thm:InteractiveSpanner2}), we use a technique that is based on the results of Bezdrighin et al. \cite{BEGGHIV22}. In this paper, the authors presented the notion of \textit{stretch-friendly partitions}, defined as follows. In the following, given a graph $G$ and a subset $U$ of its vertices, $G[U]$ denotes the sub-graph of $G$ induced by the vertices of $U$.

\begin{definition} \label{def:StretchFriendly}
Let $G=(V,E)$ be an undirected weighted graph, and fix some $t>0$. A \textbf{stretch-friendly $t$-partition} of $G$ is a partition $V=\bigcup_{i=1}^qV_i$ ($\{V_i\}_{i=1}^q$ are called \textbf{clusters}), such that for every $i=1,...,q$,
\begin{enumerate}
    \item There is a spanning tree $T_i$ of $G[V_i]$, rooted at some $r_i\in V_i$, such that for every $v\in V_i$, the unique path in $T_i$ between $v$ and $r_i$ has at most $t$ edges.
    \item If $(x,y)\in E$ is such that $x\in V_i$ and $y\notin V_i$, then the weight of every edge on the unique path in $T_i$ between $x$ and $r_i$ is at most $w(x,y)$.
    \item If $(x,y)\in E$ is such that $x,y\in V_i$, then the weight of every edge on the unique path in $T_i$ between $x$ and $y$ is at most $w(x,y)$.
\end{enumerate}
\end{definition}

In \cite{BEGGHIV22}, it was proved that for every $t\geq1$, there is a polynomial-time algorithm that computes a stretch-friendly $O(t)$-partition with at most $\frac{n}{t}$ clusters. We use this construction to show that one can reduce the size of an interactive spanner, at the cost of increasing its stretch. A similar proof is presented in \cite{BEGGHIV22}, where the authors prove that one can reduce the size of a spanner by increasing its stretch. However, in the case of \textit{interactive} spanners, this reduction (that relies on stretch-friendly partitions) requires more care, since we also have to prove that the approximate shortest paths in the resulting spanner can be reported efficiently.

The proof of the following theorem is deferred to \nameref{sec:AppendixE}.

\begin{theorem} \label{thm:UseStretchFriendly}
Suppose that any $n$-vertex graph admits an interactive $\alpha$-spanner with query time $q$ and size $h(n)$. Then, given a positive number $t\geq1$, every $n$-vertex graph $G$ also admits an interactive $O(\alpha\cdot t)$-spanner with query time $O(q)$ and size $O(h(\frac{n}{t})+n)$.
\end{theorem}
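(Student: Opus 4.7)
The plan is to construct the interactive $O(\alpha t)$-spanner by invoking the stretch-friendly $O(t)$-partition of \cite{BEGGHIV22}, building a small cluster graph on the cluster roots, and bootstrapping the given black-box interactive spanner on this smaller graph before lifting the result back to $G$. Concretely, I would first apply \cite{BEGGHIV22} to obtain a partition $V = V_1 \cup \cdots \cup V_q$ with $q \leq n/t$, spanning trees $T_i$ of $G[V_i]$ rooted at $r_i$ of depth $O(t)$, and satisfying properties 1--3 of Definition \ref{def:StretchFriendly}. Then I would define a weighted graph $G' = (\{r_1,\ldots,r_q\}, E')$ in which, for each pair of clusters connected by some inter-cluster edge of $G$, there is an edge $(r_i, r_j) \in E'$ whose weight equals $\min\{w(x,y) : (x,y) \in E,\ x \in V_i,\ y \in V_j\}$. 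Applying the hypothesized interactive $\alpha$-spanner construction to $G'$ yields $(S', D')$ of size $h(n/t)$ and query time $q$.

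The final interactive spanner is $(S, D)$, where $S$ consists of all tree edges of the $T_i$ (at most $n-1$ in total) together with, for each super-edge $(r_i, r_j) \in S'$, one witness original edge $(x,y) \in E$ realizing its weight. The oracle $D$ stores the cluster membership and tree parent pointer of each vertex (using $O(n)$ words), the oracle $D'$, and the witness mapping. On a query $(u,v)$ with $u \in V_i$ and $v \in V_j$: if $i = j$, $D$ returns the tree path from $u$ to $v$ in $T_i$; otherwise, $D$ returns the concatenation of the tree path $u \to r_i$ in $T_i$, the unfolding of the path produced by $D'(r_i, r_j)$ (each super-edge $(r_{i'}, r_{j'})$ is replaced by the tree path $r_{i'} \to x$ in $T_{i'}$, the witness edge $(x,y)$, and the tree path $y \to r_{j'}$ in $T_{j'}$), and the tree path $r_j \to v$ in $T_j$.

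For the stretch analysis, I would focus on the case $i \neq j$ and fix a shortest $u$-$v$ path $P$ in $G$. Its sequence of inter-cluster edges naturally induces a walk from $r_i$ to $r_j$ in $G'$ of total weight at most $w(P) = d_G(u,v)$, hence $d_{G'}(r_i, r_j) \leq d_G(u,v)$. Then $D'$ produces an $r_i$-to-$r_j$ path in $S'$ of weight at most $\alpha \cdot d_G(u,v)$, and each super-edge $(r_{i'}, r_{j'})$ unfolds, by property 2 of Definition \ref{def:StretchFriendly}, to a path of weight at most $(2\cdot O(t) + 1)\cdot w_{G'}(r_{i'}, r_{j'}) = O(t)\cdot w_{G'}(r_{i'}, r_{j'})$, so the middle portion has total weight $O(t\alpha) \cdot d_G(u,v)$. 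For the endpoint portion $u \to r_i$, letting $(x,y)$ be the first inter-cluster edge of $P$ with $x \in V_i$: property 2 gives $d_{T_i}(x, r_i) \leq O(t)\cdot w(x,y) \leq O(t)\cdot d_G(u,v)$, and property 3 applied edge-by-edge to the $u$-to-$x$ prefix of $P$ (which lies entirely in $V_i$) gives $d_{T_i}(u, x) \leq O(t)\cdot d_G(u,v)$. Combining, $d_{T_i}(u, r_i) = O(t)\cdot d_G(u,v)$, and symmetrically for $r_j \to v$. The case $i = j$ is handled by the same tools: if $P$ stays in $V_i$, property 3 alone gives $d_{T_i}(u,v) \leq O(t)\cdot d_G(u,v)$; otherwise, one routes through $r_i$ and bounds each leg via the first exit and last re-entry edges of $P$ exactly as above.

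The size bound $|S| = O(n + h(n/t))$ is immediate from the construction, and the storage for $D$ matches. The query-time overhead is $O(q)$ for the call to $D'$, with cluster lookups being $O(1)$ per vertex and tree-path traversals via parent pointers contributing time only linear in the output path length (absorbed in the standard PRDO accounting). I expect the main obstacle to be the careful bookkeeping in the stretch analysis, particularly the $i = j$ case when $P$ leaves and re-enters $V_i$, and verifying that every sub-path invoked during the unfolding is indeed present in $S$ and has its weight bounded by the correct application of property 2 or property 3.
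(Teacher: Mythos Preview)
Your proposal is correct and follows essentially the same approach as the paper: build the stretch-friendly $O(t)$-partition of \cite{BEGGHIV22}, form the cluster graph on the $\leq n/t$ clusters with minimum-weight inter-cluster edges, apply the black-box interactive $\alpha$-spanner there, and lift answers back via the cluster trees and witness edges; the stretch, size, and query-time analyses you outline match the paper's.

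The one noteworthy difference is that you always route through the cluster \emph{roots} (so each super-edge $(r_{i'},r_{j'})$ unfolds as $r_{i'}\to x$, $(x,y)$, $y\to r_{j'}$), whereas the paper instead connects the consecutive witness endpoints $y_{i-1}$ and $x_i$ inside each cluster by the tree path $T_{C_i}[y_{i-1},x_i]$ via their lowest common ancestor, and therefore also stores the depth of each vertex to compute that LCA path in time proportional to its length. Your variant is a legitimate simplification: it avoids the LCA machinery and the depth storage, at the cost of possibly longer output paths (up to $O(t)$ extra edges per cluster), but this does not affect any of the asymptotic guarantees claimed in the theorem. The endpoint analysis you give (bounding $d_{T_i}(u,r_i)$ by combining property~3 on the intra-cluster prefix of $P$ with property~2 on the first exit edge) is exactly how the paper handles the corresponding terms $w(P_1^1)$ and $w(P_l^2)$.
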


Given Theorem \ref{thm:UseStretchFriendly}, we apply it to our interactive spanner from Theorem \ref{thm:InteractiveSpanner2}, that has stretch $O(k)$, query time $O\left(\log\left\lceil\frac{k\cdot\log\log n}{\log n}\right\rceil\right)$ and size $O\left(\left\lceil\frac{k\cdot\log\log n}{\log n}\right\rceil\cdot n^{1+\frac{1}{k}}\right)$. We use $t=\left\lceil\frac{k\cdot\log\log n}{\log n}\right\rceil$, and get the following result.

\begin{theorem} \label{thm:InteractiveSpanner3}
For every $n$-vertex graph and an integer parameter $k\in[3,\log_2n]$, there is an interactive spanner with stretch $O\left(k\cdot\left\lceil\frac{k\cdot\log\log n}{\log n}\right\rceil\right)=O(k\log k)$, query time $O\left(\log\left\lceil\frac{k\cdot\log\log n}{\log n}\right\rceil\right)=O(\log\log k)$ and size $O(n^{1+\frac{1}{k}})$.
\end{theorem}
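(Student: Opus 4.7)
The plan is to apply Theorem \ref{thm:UseStretchFriendly} as a black box on top of the interactive spanner of Theorem \ref{thm:InteractiveSpanner2}, with the trade-off parameter set to $t=\left\lceil\frac{k\cdot\log\log n}{\log n}\right\rceil$. By Theorem \ref{thm:InteractiveSpanner2}, any $N$-vertex graph admits an interactive $\alpha$-spanner with $\alpha=O(k)$, query time $q=O\left(\log\left\lceil\frac{k\cdot\log\log N}{\log N}\right\rceil\right)$ and size $h(N)=O\left(\left\lceil\frac{k\cdot\log\log N}{\log N}\right\rceil\cdot N^{1+\frac{1}{k}}\right)$. Plugging this into Theorem \ref{thm:UseStretchFriendly} yields an interactive $O(\alpha\cdot t)$-spanner of query time $O(q)$ and size $O\!\left(h\!\left(\tfrac{n}{t}\right)+n\right)$.

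First I would read off the stretch and query time. The stretch becomes $O(\alpha\cdot t)=O\!\left(k\cdot\left\lceil\tfrac{k\log\log n}{\log n}\right\rceil\right)$, which is $O(k\log k)$ since $\left\lceil\tfrac{k\log\log n}{\log n}\right\rceil\le\log k$ for all $k\le\log n$ (this is the same elementary monotonicity observation used in the proof of Theorem \ref{thm:InteractiveSpanner2}). The query time stays $O\!\left(\log\left\lceil\tfrac{k\log\log n}{\log n}\right\rceil\right)=O(\log\log k)$, because the call to the base spanner is invoked on an $N=n/t$-vertex graph, and for $t\ge 1$ we have $\log\log(n/t)\le\log\log n$ and $\log(n/t)\le\log n$, so the internal ceiling expression for the smaller instance is bounded by the one for $n$.

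The main step, and the one requiring a short calculation, is bounding the size. With $N=n/t$, the base spanner has size
\[
h(n/t)=O\!\left(\left\lceil\tfrac{k\log\log(n/t)}{\log(n/t)}\right\rceil\cdot(n/t)^{1+\frac{1}{k}}\right)=O\!\left(t\cdot(n/t)^{1+\frac{1}{k}}\right)=O\!\left(\tfrac{n^{1+\frac{1}{k}}}{t^{1/k}}\right)=O\!\left(n^{1+\frac{1}{k}}\right),
\]
where the first equality uses that the ceiling evaluated at $n/t$ is at most a constant multiple of the one evaluated at $n$ (since the numerator can only decrease and the denominator shrinks by at most a factor of $2$, assuming $t\le\sqrt{n}$, which is automatic because $t\le\log n$). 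One minor issue to handle cleanly is the regime $t=O(1)$, where the bound is trivially $O(n^{1+\frac{1}{k}})$. Combining with the additive $+n$ term from Theorem \ref{thm:UseStretchFriendly} gives total size $O(n^{1+\frac{1}{k}})$, completing the proof.

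The hard part is essentially bookkeeping rather than conceptual: one must verify that the $t$-fold shrinking in the vertex count exactly absorbs the leading $t$-factor in the size of the base construction, which is the key reason Theorem \ref{thm:InteractiveSpanner2} is the right object to feed into the stretch-friendly partition reduction. Since both Theorem \ref{thm:InteractiveSpanner2} and Theorem \ref{thm:UseStretchFriendly} are available as quoted results, no further work is needed.
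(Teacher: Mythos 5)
Your proposal is correct and is essentially identical to the paper's own (very short) proof: apply Theorem \ref{thm:UseStretchFriendly} to the interactive spanner of Theorem \ref{thm:InteractiveSpanner2} with $t=\left\lceil\frac{k\cdot\log\log n}{\log n}\right\rceil$, and observe that the factor-$t$ reduction in vertex count absorbs the leading factor of the base spanner's size. The only nitpick is in your query-time remark: since $\log(n/t)\le\log n$ makes the \emph{denominator} smaller, the ceiling at $n/t$ is not bounded by the one at $n$ for that reason, but (as you correctly note in the size calculation) it exceeds it by at most a constant factor because $t\le\log n\le\sqrt{n}$, so the stated $O(\log\log k)$ query time still follows.
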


For $k=O\left(\frac{\log n}{\log\log n}\right)$, the stretch of this interactive spanner is $O(k)$, and its query time is $O(1)$.

\subsubsection{Ultra-Sparse Interactive Spanner}

To produce ultra-sparse interactive spanners, we refine Theorem \ref{thm:UseStretchFriendly} in the following way (the proof appears in \nameref{sec:AppendixE}).

\begin{theorem} \label{thm:UltraSparse}
Suppose that any $n$-vertex graph admits an interactive $\alpha$-spanner with query time $q$ and size $h(n)$. Then, given a positive number $t\geq1$, every $n$-vertex graph $G$ also admits an interactive $O(\alpha\cdot t)$-spanner with query time $O(q+t)$ and size $n+O(h(\frac{n}{t}))$.
\end{theorem}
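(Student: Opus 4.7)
The strategy is to execute the same reduction as in Theorem~\ref{thm:UseStretchFriendly}, but to store nothing beyond the union of the cluster spanning trees and the edges witnessing the cluster-level spanner. In return, every query must physically walk the tree paths, which accounts for the extra $O(t)$ in the query overhead.

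Concretely, I would first invoke the algorithm of~\cite{BEGGHIV22} to obtain a stretch-friendly $O(t)$-partition of $G$ with $q \leq n/t$ clusters $V_1,\ldots,V_q$ and rooted spanning trees $T_1,\ldots,T_q$ of depth $O(t)$. Let $S_T = \bigcup_i E(T_i)$; then $|S_T| = n-q \leq n$. Next, build the cluster graph $G'=(V',E')$ on $V'=\{r_1,\ldots,r_q\}$ by adding, for every cross-cluster edge $(u,v)\in E$ with $u\in V_i$, $v\in V_j$, $i\neq j$, an edge $(r_i,r_j)$ of weight $w(u,v)$ (keeping only the lightest such edge per pair, together with a pointer to its $G$-witness). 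Apply the hypothesized interactive $\alpha$-spanner to $G'$ to obtain $(H',D')$ with $|H'|\leq h(n/t)$ and query time $q$, and let $S_{H'}\subseteq E$ denote the set of witness edges. The final structure is $(S,D)$ with $S := S_T \cup S_{H'}$, of total size at most $n + h(n/t)$.

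The oracle $D$ answers a query $(u,v)$ as follows. If $i(u)=i(v)$, it returns the unique $u$--$v$ path inside $T_{i(u)}$ (at most $2t$ edges). Otherwise, it walks from $u$ up to $r_{i(u)}$ in $T_{i(u)}$ and from $v$ up to $r_{i(v)}$ in $T_{i(v)}$ ($O(t)$ time each), queries $D'$ on $(r_{i(u)}, r_{i(v)})$ to obtain a path $P'\subseteq H'$ (time $q$), and expands each $G'$-edge $(r_i,r_j)\in P'$ via a down-walk in $T_i$ to the witness-edge endpoint, the witness edge itself, and an up-walk in $T_j$. The overhead comes out to $O(q+t)$.

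The main work is the stretch analysis. The key lemma is that for any $u\in V_i$, if the $G$-shortest $u$-to-$v$ path leaves $V_i$ at some edge, then $d_{T_i}(u,r_i) \leq 3t\cdot d_G(u,v)$: letting $(x,y)$ be the first cross-cluster edge on that path with $x\in V_i$, iterating property~3 of Definition~\ref{def:StretchFriendly} along the within-cluster sub-path $u\leadsto x$ yields $d_{T_i}(u,x) \leq 2t\cdot d_G(u,x) \leq 2t\cdot d_G(u,v)$, and property~2 applied to $(x,y)$ yields $d_{T_i}(x,r_i) \leq t\cdot w(x,y) \leq t\cdot d_G(u,v)$. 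This bound controls both the initial/final tree walks in the cross-cluster case and the within-cluster queries whose $G$-shortest path exits and re-enters $V_i$ (when the shortest path stays in the cluster, iterating property~3 directly gives $d_{T_i}(u,v) \leq 2t\cdot d_G(u,v)$). For the middle portion, projecting the $G$-shortest path to its cross-cluster edges exhibits a walk in $G'$ from $r_{i(u)}$ to $r_{i(v)}$ of weight at most $d_G(u,v)$, so $D'$ returns a $P'$ with $w(P') \leq \alpha\cdot d_G(u,v)$; each $G'$-edge of weight $w$ in $P'$ then expands into a $G$-walk of weight at most $(2t+1)w$ by property~2. Summing all contributions yields overall weight $O(\alpha t)\cdot d_G(u,v)$. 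The main obstacle is precisely the within-cluster case in which the true shortest path exits $V_i$, which is exactly what property~2 (rather than property~3 alone) was designed for.
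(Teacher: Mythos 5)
Your proposal is correct and follows essentially the same reduction as the paper's proof: a stretch-friendly $O(t)$-partition, the cluster graph with lightest cross-edge witnesses, the black-box interactive spanner on it, and ultra-sparsity obtained by storing only parent pointers and recovering cluster identities/depths by an $O(t)$ walk to the roots at query time. The one deviation is that inside intermediate clusters you route through the root (entry vertex up to $r_j$, then down to the exit witness endpoint) rather than along the tree path between the entry and exit vertices as in Theorem~\ref{thm:UseStretchFriendly}; this gives the same $O(\alpha t)$ stretch and query bounds, but the reported object is then a walk that may traverse the tree edges above the entry/exit vertices' common ancestor twice, so strictly one should note that pruning it to a simple path only decreases the weight.
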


Along with Theorem \ref{thm:InteractiveSpanner3} (with $k=\Theta(\log n)$), this theorem implies the following corollary.

\begin{corollary} \label{cor:UltraSparsePRDO1}
For any parameter $t\geq1$, there exists an interactive $O(\log n\cdot\log\log n\cdot t)$-spanner with query time $O(\log\log n+t)$ and size $n+O(\frac{n}{t})$.
\end{corollary}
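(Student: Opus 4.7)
The plan is to obtain Corollary \ref{cor:UltraSparsePRDO1} as a direct combination of Theorem \ref{thm:InteractiveSpanner3} (used as a black-box base spanner) and Theorem \ref{thm:UltraSparse} (used as the stretch-for-size reduction). Concretely, I would instantiate Theorem \ref{thm:InteractiveSpanner3} with $k = \lfloor\log_2 N\rfloor$ on any $N$-vertex graph, which produces an interactive spanner with stretch $\alpha(N) = O(\log N \cdot \log\log N)$, query time $q(N) = O(\log\log N)$, and size $h(N) = O(N^{1+1/\log_2 N}) = O(N)$. Thus I have a family of interactive $\alpha$-spanners of linear size and sub-logarithmic query time that can serve as the input hypothesis in Theorem \ref{thm:UltraSparse}.

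The main step is then to invoke Theorem \ref{thm:UltraSparse} with the parameter $t$ given in the corollary, and with $h(\cdot)$ and $q(\cdot)$ as above. Since the theorem applies the hypothesized construction to a graph on $n/t$ vertices, the resulting interactive spanner has size $n + O(h(n/t)) = n + O(n/t)$, as required. For the stretch, I would observe that $\alpha(n/t) \le \alpha(n) = O(\log n \cdot \log\log n)$ (monotonicity of $\log N \cdot \log\log N$), so Theorem \ref{thm:UltraSparse} yields stretch $O(\alpha(n/t) \cdot t) = O(\log n \cdot \log\log n \cdot t)$. For the query time, I would similarly note that $q(n/t) \le q(n) = O(\log\log n)$, so the final query time is $O(q(n/t) + t) = O(\log\log n + t)$.

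I do not anticipate a real obstacle here, since the hard work is already done inside Theorems \ref{thm:InteractiveSpanner3} and \ref{thm:UltraSparse}. The only minor point worth flagging is the edge case $n/t < 8$ (or $\log_2(n/t) < 3$), where Theorem \ref{thm:InteractiveSpanner3} does not directly apply; this is handled trivially by returning any spanning tree of $G$ (of size $n-1$), whose stretch on any $O(n)$-vertex graph is $O(n) = O(t)$ in this regime and whose query time is $O(t)$ by traversal, so the claimed bounds still hold. With this caveat noted, the corollary follows from the two theorems by straightforward substitution of $k = \lfloor \log_2(n/t)\rfloor$ (or $\lfloor \log_2 n\rfloor$) and the monotonicity estimates above.
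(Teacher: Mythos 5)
Your proposal is correct and follows exactly the paper's route: the paper likewise obtains the corollary by instantiating Theorem \ref{thm:InteractiveSpanner3} with $k=\Theta(\log n)$ (giving a linear-size interactive spanner with stretch $O(\log n\cdot\log\log n)$ and query time $O(\log\log n)$) and feeding it into the reduction of Theorem \ref{thm:UltraSparse}. Your monotonicity remarks and the small-$n/t$ edge case are harmless additions; nothing is missing.
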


By setting $t=\omega(1)$, we obtain an ultra-sparse PRDO with stretch $\tilde{O}(\log n)$, query time $O(\log\log n)$ (as long as $t=O(\log\log n)$), and size $n+o(n)$.

\subsection*{Acknowledgements}
We wish to thank Arnold Filtser for reminding us that spanning metric Ramsey trees \cite{ACEFN20} can be viewed as a PRDO, and for helpful discussions about spanning clan embeddings.

\bibliography{hopset}

\appendix

\section{Appendix A} \label{sec:AppendixA}

In two separate papers, by Elkin and Neiman and by Huang and Pettie \cite{EN19,HP17}, the authors proved that the set that is denoted in Section \ref{sec:HierarchyOfSets} by $\bar{H}$ is a $(1+\epsilon,\beta_l)$-hopset where $\beta_l=O(\frac{l}{\epsilon})^{l-1}$.

In this section, we prove that $\bar{H}^{1/2}$ is also a $(1+\epsilon,\beta_l)$-hopset, albeit the constant hidden in the $O$-notation in $\beta_l=O(\frac{l}{\epsilon})^{l-1}$ becomes larger when we use half-bunches instead of bunches. We use essentially the same proof from \cite{EN19,HP17}, while rephrasing some of its concepts. Since $\bar{H}^{1/2}\subseteq\bar{H}$, this also proves (up to the constant mentioned above) the original result of \cite{EN19,HP17} about $\bar{H}$.


We also show that $\bar{H}^{1/2}$ is a $(3+\epsilon,k^{O(\log\frac{1}{\epsilon})})$-hopset. This proof generalizes a result from \cite{EGN22}, where it was shown that this is the case for $\bar{H}^1$, and that in unweighted graphs, $\bar{H}^{1/2}$ is a $(3+\epsilon,k^{O(\log\frac{1}{\epsilon})})$-spanner (more precisely, a version of $\bar{H}^{1/2}$, where shortest paths are added instead of virtual edges).

First, we restate and prove Claim \ref{claim:ENHopset}.

\begin{claim*}[Claim \ref{claim:ENHopset}]
Given an undirected weighted graph $G=(V,E)$ and a positive parameter $\epsilon\leq O(l)$, the set $\bar{H}^{1/2}$ is a $(1+\epsilon,O(\frac{l}{\epsilon})^{l-1})$-hopset.
\end{claim*}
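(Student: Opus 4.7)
The plan is to prove this by induction on the hierarchy level $i$, adapting the argument of Elkin--Neiman \cite{EN19} and Huang--Pettie \cite{HP17} (which was phrased for the full bunches in $\bar H$) to the half-bunch setting. The inductive statement I would maintain is that for every $u, v \in V$ and every $i \in \{0, 1, \ldots, l-1\}$, there exists a $u$-$v$ path $Q_i$ in $G \cup \bar H^{1/2}$ using at most $\beta_i^\star = O(l/\epsilon)^i$ hops, whose weight is at most $(1 + i \cdot \epsilon / (l-1)) \cdot d_G(u,v)$. The base case $i = 0$ is immediate since the original shortest $u$-$v$ path $P$ in $G$ lies inside $G \cup \bar H^{1/2}$ with stretch $1$; the point of the induction is to drive the hop count down while accumulating only $O(\epsilon/l)$ extra multiplicative stretch per level.

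For the inductive step, I would walk along the shortest $u$-$v$ path in $G$ and partition it into maximal ``scale-$i$'' segments, where each segment endpoint $x$ is chosen so that its pivot distance $d_G(x, p_{i+1}(x))$ falls in a roughly geometric progression controlled by $\epsilon/l$. At each endpoint I would attempt either (a) a direct hop to $p_{i+1}(x)$ using the pivot edge in $\bar H^{1/2}$, or (b) if that is too expensive, a hop along a half-bunch edge $(x, u') \in H_i^{1/2}$ with $u' \in A_i$ lying between $x$ and $p_{i+1}(x)$ along $P$. The defining half-bunch inequality $d(x, u') < \tfrac{1}{2} d(x, p_{i+1}(x))$ guarantees a constant multiplicative gap between consecutive bunch members and the next pivot, which forces the recursion $\beta_{i+1}^\star = O(l/\epsilon) \cdot \beta_i^\star$ and ensures that after $O(l/\epsilon)$ such local moves we either reach $v$ or enter $A_{i+1}$, at which point the induction hypothesis at level $i+1$ can be invoked on the remaining sub-path.

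The main technical obstacle will be two-fold. First, one must control the accumulated multiplicative error across the $l$ levels: choosing the per-level slack as $\Theta(\epsilon / l)$ makes the telescoping sum bounded by $\epsilon$, but this has to be reconciled with the factor $\tfrac{1}{2}$ from the half-bunch definition, which inflates some of the constants in the per-level recursion (though only by constant factors that are absorbed into $O(\cdot)$ in $\beta_l$). Second, the top level $i = l-1$ requires separate treatment because $p_l(v)$ is undefined and $\bar B_{l-1}^{1/2}(v)$ is redefined to be all of $A_{l-1}$; here I would observe that once the walk reaches an $A_{l-1}$-vertex, the remaining prefix is handled by a single hopset edge, thereby closing the induction at the top. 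Finally, since all stretch arguments only use the weight guarantee of $\bar H^{1/2}$-edges (namely $w(x,y) = d_G(x,y)$) and the scale-separation from half-bunches, the same construction is simultaneously a $(1+\epsilon, \beta_l)$-hopset for every $\epsilon$, yielding the universality asserted in the statement.
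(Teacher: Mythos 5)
There is a genuine gap: your inductive invariant is stated unconditionally for \emph{every} pair $u,v$ at \emph{every} level $i$, and in that form it is both unprovable and already false at the base case. At $i=0$ your claimed hop bound is $\beta_0^\star=O(l/\epsilon)^0=O(1)$, yet your justification is that the original shortest $u$-$v$ path lies in $G\cup\bar H^{1/2}$ with stretch $1$ --- that path can have $\Omega(n)$ hops, so it does not witness the invariant (and indeed for an arbitrary pair no $O(1)$-hop stretch-$1$ path in $G\cup\bar H^{1/2}$ need exist). The same problem persists at intermediate levels: you cannot guarantee a low-hop, low-stretch path between arbitrary vertices using only levels $\le i$ of the hierarchy. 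What makes this style of argument work (in \cite{EN19,HP17} and in the paper's proof) is a \emph{dichotomy}, not an unconditional statement: for a sub-path $[x,y]$ handled at level $i$, either there is a path with at most $\beta_i$ hops and weight $\le\alpha_i\, d(x,y)$ (the ``Good'' case), or \emph{both} endpoints have their level-$(i+1)$ pivots within $r_i\cdot d(x,y)$ (the ``Bad'' case, where the half-bunch definition is used in the contrapositive: if two $A_i$-vertices are not joined by a hopset edge, then each is within twice their mutual distance of its level-$(i+1)$ pivot). The Bad alternative is exactly what lets the next level repair the failure, and the argument closes at the root because $A_l=\emptyset$ makes the Bad alternative impossible there. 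Your proposal mentions the half-bunch inequality and the top-level closing, but without the either/or structure in the inductive hypothesis the induction cannot be carried out.

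A secondary but telling symptom is the direction confusion: you say the induction ``drives the hop count down,'' yet your recursion $\beta_{i+1}^\star=O(l/\epsilon)\cdot\beta_i^\star$ has the hop bound growing with $i$. In the paper's proof the level-$i$ objects are not arbitrary pairs but nodes of a \emph{partition tree} of the fixed shortest $u$-$v$ path: each node is split into at most $2h$ children (each a single edge or of weight at most a $1/h$ fraction of the parent), with $h=\Theta(l/\epsilon)$, so the hop bound satisfies $\beta_i\approx(2h)\beta_{i-1}$ going \emph{up} the tree, the stretch accrues $O(r_{i-1}/h)=O(\epsilon/l)$ additively per level, and the leaves (single edges or hopset edges) supply the true base case. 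Your ``scale-$i$ segments chosen by pivot distances in geometric progression'' does not provide the weight-$w(Q)/h$ subdivision that drives the per-level stretch charge of $O(1/h)$, nor an accounting for the detour cost of hopping to an off-path pivot $p_{i+1}(x)$; that cost is precisely what the Bad-case parameter $r_i$ controls. To repair the proof you would need to (i) replace the unconditional invariant by the Good/Bad dichotomy with explicit parameters $(\alpha_i,\beta_i,r_i)$, (ii) set up the recursive subdivision of the path into $O(h)$ pieces per level so the recurrences $\alpha_i=\alpha_{i-1}+O(r_{i-1}/h)$, $\beta_i=O(h)\cdot\beta_{i-1}$, $r_i=O(1)+O(r_{i-1}/h)$ can be verified, and (iii) use $A_l=\emptyset$ only at the root to rule out the Bad case --- which is essentially the paper's argument.
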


\begin{remark}
Note that the definition of $\bar{H}^{1/2}$ is independent of $\epsilon$, i.e., $\bar{H}^{1/2}$ is a $(1+\epsilon,O(\frac{l}{\epsilon})^{l-1})$-hopset for all $\epsilon=O(l)$ \textbf{simultaneously}.
\end{remark}

\begin{proof}

Let $h>1$ be some integer. Fix $u,v\in V$ and let $P$ be the shortest path between $u,v$. In the following, given two vertices $x,y\in P$ such that $x$ is closer to $u$ than $y$ is, we denote by $[x,y]$ the sub-path of $P$ between $x$ and $y$.

We define a \textit{partition tree} for $P$ as follows. The partition tree has $l$ levels, and each node in the tree is a sub-path of $P$. The root of the tree ($(l-1)$-th level) is the path $P$ itself. The children of each node $Q$ in level $i>0$ are defined by partitioning $Q$ into at most $2h$ disjoint sub-paths, each of them is either a single edge, or has weight at most $\frac{w(Q)}{h}$. This can be done by setting $u_0$ to be the first vertex in $Q$, and then, for every $j\geq0$, setting $u_{j+1}$ to be the first vertex after $u_j$ such that $w([u_j,u_{j+1}])>\frac{w(Q)}{h}$. The sub-path $[u_j,u_{j+1}]$, without its last edge, has weight at most $\frac{w(Q)}{h}$, and there are at most $h$ such sub-paths. We add a child to $Q$ for the last edge of $[u_j,u_{j+1}]$, and another child for the rest of this sub-path (if it is not empty).

For every node in the partition tree, we will assign a label \textit{Good} or \textit{Bad}. For each level $i=0,1,...,l-1$ we will later define parameters $\alpha_i,\beta_i,r_i$, such that the following invariant is kept for all nodes in the partition tree (here $d(x,y)$ is a shortened notation for $d_G(x,y)$, for every $x,y\in V$).

\begin{invariant} \label{inv:Invariant1}
For every node $[x,y]$ in the $i$-th level of the partition tree,
\begin{enumerate} \label{enum:Invariant}
    \item If $[x,y]$ is \textit{Good}, then
    \[d_{G\cup\bar{H}^{1/2}}^{(\beta_i)}(x,y)\leq\alpha_i\cdot d(x,y)~.\]
    \item If $[x,y]$ is \textit{Bad}, then
    \[d(x,p_{i+1}(x)),d(y,p_{i+1}(y))\leq r_i\cdot d(x,y)~.\]
\end{enumerate}
\end{invariant}

We now define the labels \textit{Good} and \textit{Bad}, of which we will assign each node in the partition tree. Accordingly, we will later determine the values of the parameters $\alpha_i,\beta_i$ and $r_i$, so that all the nodes satisfy Invariant \ref{inv:Invariant1}.

\begin{definition}[\textit{Good} and \textit{Bad} leaves] \label{def:GoodBadLeaves}
Let $[x,y]$ be a leaf in the partition tree, i.e., a vertex in level $0$. If $x,y$ are connected with an edge or a hopset edge, we assign the label \textit{Good} to $[x,y]$, and otherwise \textit{Bad}.    
\end{definition}
Note that if the leaf $[x,y]$ is \textit{Good}, it means that
\[d_{G\cup\bar{H}^{1/2}}^{(1)}(x,y)=d(x,y)~.\]
Thus we can choose any $\alpha_0,\beta_0\geq1$ and item $1$ of Invariant \ref{inv:Invariant1} is satisfied.

If $[x,y]$ is \textit{Bad}, that means that $y\notin B^{1/2}_0(x)$ and $x\notin B^{1/2}_0(y)$, which implies $d(x,p_1(x)),d(y,p_1(y))\leq2d(x,y)$. Therefore we can choose any $r_0\geq2$ and item $2$ of Invariant \ref{inv:Invariant1} is satisfied.

\begin{definition}[\textit{Good} and \textit{Bad} internal nodes] \label{def:GoodBadInternal}
Let $[x,y]$ be a sub-path which is a node in the $i$-th level of the partition tree, for $i>0$. 
\begin{enumerate}
    \item If all the children of $[x,y]$ are labeled as \textit{Good}, we label $[x,y]$ as \textit{Good}.
    \item If there are children of $[x,y]$ that are labeled as \textit{Bad}, let $[a,b]$ and $[a',b']$ be the first and the last children of $[x,y]$ which are labeled as \textit{Bad}. If $p_i(a)$ and $p_i(b')$ are connected by a hop-edge, we label $[x,y]$ as \textit{Good}.
    \item If $p_i(a),p_i(b')$ are not connected by a hop-edge, we label $[x,y]$ as \textit{Bad}.
\end{enumerate}
\end{definition}

\begin{figure}[!ht]
\centerline{\includegraphics[width=13cm, height=2.6cm]{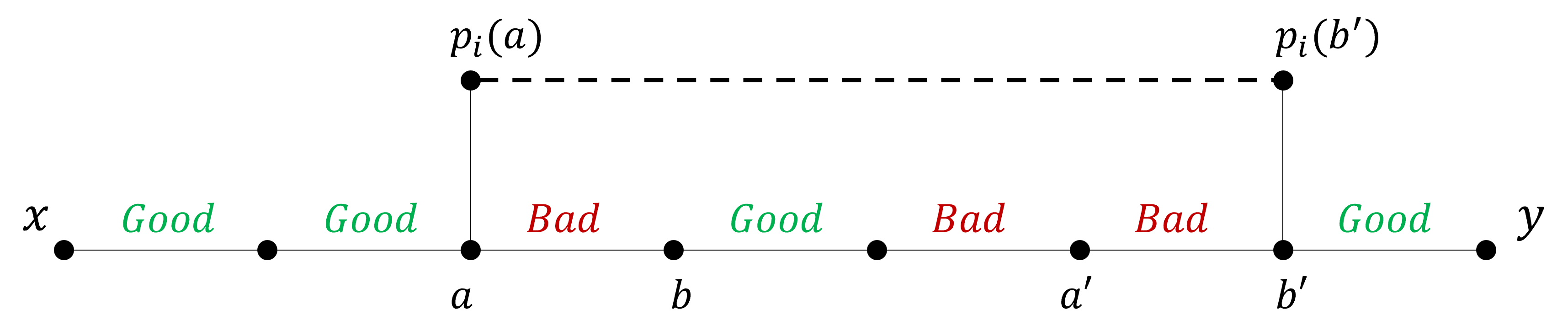}}
\caption{If the dashed line between $p_i(a),p_i(b')$ represents a hop-edge in $H$, then $[x,y]$ is labeled as \textit{Good}. Otherwise, it is labeled as \textit{Bad}.}
\label{fig:HuangPettieElkinNeiman}
\end{figure}

Before we move on to determine the values of $\alpha_i,\beta_i$ and $r_i$ such that Invariant \ref{inv:Invariant1} is satisfied, we prove the following simple fact.

\begin{fact} \label{fact:SingleEdge}
Let $[x,y]$ be a node in level $i>0$ of the partition tree, and let $[a,b]$ be a child of $[x,y]$ such that $d(a,b)>\frac{d(x,y)}{h}$. Then, $[a,b]$ is a single edge in $G$, and it is labeled as \textit{Good}.
\end{fact}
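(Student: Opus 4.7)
My approach is to extract both claims directly from the construction of the partition tree. The partitioning rule guarantees that every child of a node $Q$ is either a single edge of $G$ or has weight at most $w(Q)/h$. Since $[a,b]$ is a sub-path of the shortest path $P$, we have $w([a,b])=d(a,b)$ and $w([x,y])=d(x,y)$, so I would use the hypothesis $d(a,b)>d(x,y)/h=w([x,y])/h$ to rule out the second alternative, concluding that $[a,b]$ must consist of a single edge of $G$.

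For the \textit{Good} label, the plan is a short induction on levels. Since $(a,b)\in E(G)\subseteq E(G)\cup\bar{H}^{1/2}$, whenever $[a,b]$ appears as a leaf at level $0$ it is labeled \textit{Good} by Definition \ref{def:GoodBadLeaves}. If $[a,b]$ sits at a positive level $i-1$, I would observe that running the partitioning procedure on the single-edge path $[a,b]$ can only produce $[a,b]$ itself as its sole child (the ``prefix of weight at most $w([a,b])/h$'' is empty, and the ``last edge'' is $(a,b)$ itself). Then clause (1) of Definition \ref{def:GoodBadInternal} (all children \textit{Good} implies \textit{Good}) propagates the \textit{Good} label up level by level to the occurrence of $[a,b]$ at level $i-1$. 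I do not foresee a real obstacle here; the only subtlety is noticing the self-referential child structure of a single-edge node, which makes the induction immediate.
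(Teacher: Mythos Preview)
Your proposal is correct and follows essentially the same approach as the paper: both deduce the single-edge property directly from the partitioning rule, and both prove the \textit{Good} label by induction on the level, using the observation that a single-edge node has only itself as its unique child (which requires $h>1$) so that clause~(1) of Definition~\ref{def:GoodBadInternal} propagates the label upward.
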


\begin{proof}

First, note that if $d(a,b)>\frac{d(x,y)}{h}$, then by the definition of the partition tree, $[a,b]$ must consist of a single edge in $G$ (recall that the children of any node $Q$ are either sub-paths of weight at most $\frac{w(Q)}{h}$, or single edges).

We prove that $Q'=[a,b]$ is \textit{Good} by induction over $i$. For $i=1$, the node $Q'=[a,b]$ is a leaf, and by Definition \ref{def:GoodBadLeaves}, it is labeled as \textit{Good} (since $a,b$ are connected by an edge of $G$). For $i>1$, note that when constructing the partition tree, $Q'=[a,b]$ is partitioned into sub-paths of weight at most $\frac{w(a,b)}{h}$ and single edges. In fact, since $[a,b]$ is a single edge, it must have only one child $Q''=[a,b]$. Since $h>1$, the weight of $Q''$ is $w(a,b)>\frac{w(a,b)}{h}$, and by the induction hypothesis we conclude that this child is labeled as \textit{Good}. By Definition \ref{def:GoodBadInternal}, the node $Q'$ is also labeled as \textit{Good}.
    
\end{proof}

For a node $[x,y]$ in the $i$-th level of the partition tree, consider the three cases in Definition \ref{def:GoodBadInternal}. In the first case, we know that every child $[a,b]$ of $[x,y]$ is \textit{Good}, and thus by Invariant \ref{inv:Invariant1}, $d_{G\cup\bar{H}^{1/2}}^{(\beta_{i-1})}(a,b)\leq\alpha_{i-1}\cdot d(a,b)$. Therefore ,
\[d_{G\cup\bar{H}^{1/2}}^{(2h\cdot\beta_{i-1})}(x,y)\leq \alpha_{i-1}\cdot d(x,y)~.\]

In the second case, we know that for every child $[s,t]$ before $[a,b]$ and after $[a',b']$ there is a path $P_{s,t}\subseteq G\cup\bar{H}^{1/2}$, between $s,t$, with length at most $\beta_{i-1}$ and weight at most $\alpha_{i-1}\cdot d(s,t)$. Also, the following is a path of $3$ edges from $a$ to $b'$ in $G\cup\bar{H}^{1/2}$: 
\[Q=(a,p_i(a))\circ(p_i(a),p_i(b'))\circ(p_i(b'),b')~.\]

Note that in both cases $2$ and $3$ in Definition \ref{def:GoodBadInternal}, since $[a,b],[a',b']$ are \textit{Bad}, we know that $d(a,b),d(a',b')\leq\frac{d(x,y)}{h}$, by Fact \ref{fact:SingleEdge}. Therefore, by Invariant \ref{inv:Invariant1},
\begin{equation}\label{eq:RDistance1}
d(a,p_i(a))\leq r_{i-1}\cdot d(a,b)\leq\frac{r_{i-1}}{h}d(x,y)\text{, and}
\end{equation}
\begin{equation}\label{eq:RDistance2}
d(b',p_i(b'))\leq r_{i-1}\cdot d(a',b')\leq\frac{r_{i-1}}{h}d(x,y)~.    
\end{equation}

Thus, in case $2$, the path $Q$ has weight
\begin{eqnarray*}
&&d(a,p_i(a))+d(p_i(a),p_i(b'))+d(b',p_i(b'))\\
&\leq&d(a,p_i(a))+d(p_i(a),a)+d(a,b')+d(b',p_i(b'))+d(b',p_i(b'))\\
&\leq&d(a,b')+4\cdot\frac{r_{i-1}}{h}d(x,y)~.
\end{eqnarray*}
We concatenate this path with the paths $P_{s,t}$, for children $[s,t]$ that appear either before $a$ or after $b'$. We obtain a path from $x$ to $y$ in $G\cup\bar{H}^{1/2}$, with length at most $(2h-1)\cdot\beta_{i-1}+3$ and weight at most
\begin{eqnarray*}
\sum_{[s,t]}\alpha_{i-1}\cdot d(s,t)+d(a,b')+4\cdot\frac{r_{i-1}}{h}d(x,y)
&\leq&\alpha_{i-1}\cdot d(x,y)+4\cdot\frac{r_{i-1}}{h}d(x,y)\\
&=&(\alpha_{i-1}+4\cdot\frac{r_{i-1}}{h})d(x,y)~.
\end{eqnarray*}

Summarizing cases $1$ and $2$, if $[x,y]$ was labeled as \textit{Good}, we have $d_{G\cup\bar{H}^{1/2}}^{(\beta_i)}(x,y)\leq\alpha_id(x,y)$, where
\begin{equation} \label{eq:AlphaConstraint}
\alpha_i=\max\{\alpha_{i-1},\alpha_{i-1}+4\cdot\frac{r_{i-1}}{h}\}=\alpha_{i-1}+4\cdot\frac{r_{i-1}}{h}\text{, and}
\end{equation}
\begin{equation}\label{eq:BetaConstraint}
\beta_i=\max\{2h\cdot\beta_{i-1},(2h-1)\cdot\beta_{i-1}+3\}=2h\cdot\beta_{i-1}~.
\end{equation}
The last equality is true when $\beta_{i-1}\geq3$. This does not happen when $i=1$; then $\beta_0=1$ and $\beta_1=\max\{2h\cdot1,(2h-1)\cdot1+3\}=2h+2$. But then, since $h>1$, for $i>1$ we do have $\beta_{i-1}\geq3$ (note that $\beta_i$ is monotonically increasing). Hence, indeed $\beta_i=2h\cdot\beta_{i-1}$ for $i>1$.

Finally, in the third case, $p_i(a),p_i(b')$ are not connected by an edge of $\bar{H}^{1/2}$, and then by inequalities (\ref{eq:RDistance1}), (\ref{eq:RDistance2}) we conclude that
\begin{eqnarray*}
d(p_i(a),p_{i+1}(p_i(a)))
&\leq&2d(p_i(a),p_i(b'))
\leq2(d(p_i(a),a)+d(a,b')+d(b',p_i(b')))\\
&\leq&2(d(a,b')+2\cdot\frac{r_{i-1}}{h}d(x,y))
\leq(2+4\cdot\frac{r_{i-1}}{h})d(x,y)~.
\end{eqnarray*}

The vertex $p_{i+1}(p_i(a))$ is in $A_{i+1}$. Hence, by the definition of $p_{i+1}(x)$ as the closest $A_{i+1}$-vertex to $x$, we have
\begin{eqnarray*}
d(x,p_{i+1}(x))
&\leq&d(x,p_{i+1}(p_i(a)))\\
&\leq&d(x,a)+d(a,p_i(a))+d(p_i(a),p_{i+1}(p_i(a)))\\
&\leq&d(x,y)+\frac{r_{i-1}}{h}d(x,y)+(2+4\cdot\frac{r_{i-1}}{h})d(x,y)\\
&=&(3+5\cdot\frac{r_{i-1}}{h})d(x,y)~.
\end{eqnarray*}
The proof of $d(y,p_{i+1}(y))\leq(3+5\cdot\frac{r_{i-1}}{h})d(x,y)$ is symmetric. Therefore, we can choose any sequence $\{r_i\}$ such that
\begin{equation} \label{eq:RConstraint}
 r_i\geq3+5\cdot\frac{r_{i-1}}{h}   
\end{equation}
holds, and item $2$ of Invariant \ref{inv:Invariant1} will be satisfied.

Assume that $h>5$ (this will be ensured in the sequel). We prove by induction that for the sequences choices $r_i=\frac{3h}{h-5}$, $\alpha_i=1+\frac{72i}{h}$ and $\beta_i=2(2h)^i$, inequalities (\ref{eq:AlphaConstraint}), (\ref{eq:BetaConstraint}), (\ref{eq:RConstraint}) hold (note that the sequence $\{r_i\}$ is independent of $i$). For $i=0$, we have $r_0=\frac{3h}{h-5}=3+\frac{15}{h-5}>2$, $\alpha_0=1$ and $\beta_0=2>1$. Thus, as we already saw, Invariant \ref{inv:Invariant1} holds for level $0$.

For $i>0$, 
\[3+5\cdot\frac{r_{i-1}}{h}=3+\frac{5}{h}\cdot\frac{3h}{h-5}=3+\frac{15}{h-5}=r_i~.\]
Hence, inequality (\ref{eq:RConstraint}) is satisfied. Also,
\begin{eqnarray*}
\alpha_{i-1}+4\cdot\frac{r_{i-1}}{h}&=&1+\frac{72(i-1)}{h}+\frac{12}{h-5}\\
&=&1+\frac{72}{h}(i-1+\frac{1}{6}\cdot\frac{h}{h-5})\\
&=&1+\frac{72}{h}(i-1+\frac{1}{6}(1+\frac{5}{h-5}))\\
&\leq&1+\frac{72}{h}(i-1+\frac{1}{6}(1+5))\\
&=&1+\frac{72i}{h}=\alpha_i~.
\end{eqnarray*}
This proves inequality (\ref{eq:AlphaConstraint}). Finally, note that $\beta_1=2h+2\leq2\cdot(2h)^1$, and for $i\geq2$, we have
\[2h\cdot\beta_{i-1}=2h\cdot2(2h)^{i-1}=2(2h)^i=\beta_i~.\]
Therefore, inequality (\ref{eq:BetaConstraint}) is satisfied. We conclude that Invariant \ref{inv:Invariant1} holds for every level $i$.

In particular, the invariant holds for the root $[u,v]$. Since $A_l=\emptyset$ and $p_l(u)$ is not defined, item $2$ from the invariant cannot be true for this node. Therefore, item $1$ holds, i.e.,
\[d_{G\cup\bar{H}^{1/2}}^{(\beta_{l-1})}(u,v)\leq\alpha_{l-1}\cdot d(u,v)~.\]

Choose $h=\lceil\frac{C\cdot l}{\epsilon}\rceil$, where $C$ is a constant such that $C\geq72$ and $\epsilon<\frac{C\cdot l}{5}$. There is such constant $C$ since we assumed that $\epsilon=O(l)$. Then, $h\geq\frac{C\cdot l}{\epsilon}>5$ and we get that our hopset has stretch 
\[\alpha_{l-1}=1+\frac{72(l-1)}{h}\leq1+\frac{72(l-1)\epsilon}{C\cdot l}\leq1+\frac{(l-1)\epsilon}{l}<1+\epsilon~,\]
and hopbound
\[\beta_{l-1}=2(2h)^{l-1}=O\left(\frac{l}{\epsilon}\right)^{l-1}~.\]

\end{proof}

The proof of the following claim is based on ideas from \cite{EGN22}. Using this claim, and a similar proof to the proof of Theorem \ref{thm:DistancePreserver1}, we show the existence of interactive $(3+\epsilon)$-distance preservers with query time $O(1)$ and size $O(|\mathcal{P}|k^{\log_{4/3}(12+\frac{40}{\epsilon})}+n\log k+n^{1+\frac{1}{k}})$ (see Remark \ref{remark:PreserverWithWorseStretch}).

\begin{claim}
Given an undirected weighted graph $G=(V,E)$ and a positive parameter $\epsilon\leq O(l)$, the set $\bar{H}^{1/2}$ is a $(3+\epsilon,(12+\frac{40}{\epsilon})^{l-1})$-hopset, for all $\epsilon=O(l)$ simultaneously.
\end{claim}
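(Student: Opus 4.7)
The plan is to recycle the entire proof of Claim \ref{claim:ENHopset} essentially verbatim, only reselecting the branching parameter $h$ of the partition tree. For fixed $u,v\in V$, I form the same partition tree on the shortest $u$-$v$ path $P$, use the same Good/Bad labeling from Definitions \ref{def:GoodBadLeaves} and \ref{def:GoodBadInternal}, maintain Invariant \ref{inv:Invariant1}, and rely on the same recursions
\begin{equation*}
\alpha_i \;=\; \alpha_{i-1}+\frac{4r_{i-1}}{h},\qquad r_i \;\geq\; 3+\frac{5r_{i-1}}{h},\qquad \beta_i \;=\; 2h\cdot\beta_{i-1},
\end{equation*}
with base values $\alpha_0=1$, $r_0=2$. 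Exactly as before, the sequence $\{r_i\}$ is bounded by the fixed point $r^{\ast}=3h/(h-5)$ (assuming $h>5$), so unrolling the recursions gives $\alpha_{l-1}\leq 1+12(l-1)/(h-5)$ and $\beta_{l-1}\leq 2(2h)^{l-1}$.

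The entire novelty is that we now only need $\alpha_{l-1}\leq 3+\epsilon$ rather than $\alpha_{l-1}\leq 1+\epsilon$. This allows a considerably smaller choice of $h$: taking $h=6+20/\epsilon$ (so $h>5$ for every $\epsilon>0$) immediately yields $2h=12+40/\epsilon$, hence
\begin{equation*}
\beta_{l-1} \;\leq\; 2\left(12+\tfrac{40}{\epsilon}\right)^{l-1},
\end{equation*}
which is the advertised hopbound (the leading $2$ being absorbed by the $O$-notation in the claim).

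The only remaining step is to verify that with $h=6+20/\epsilon$ the accumulated stretch does satisfy $1+12(l-1)/(h-5)\leq 3+\epsilon$. Since $h-5=(20+\epsilon)/\epsilon$, this reduces to the quadratic inequality
\begin{equation*}
\epsilon^{2} + \bigl(22-12(l-1)\bigr)\epsilon + 40 \;\geq\; 0\qquad\text{for all }\epsilon=O(l).
\end{equation*}
This is a routine inequality in $\epsilon$ (parameterized by $l$), and is precisely where the hypothesis $\epsilon=O(l)$ is used: the quadratic is automatically non-negative for $\epsilon$ either small or of order $l$, which covers the stated range after choosing the implicit constant in $\epsilon=O(l)$ appropriately (and, if needed, mildly inflating the constants $6$ and $20$ in the formula for $h$ to close any gap in an intermediate regime). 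I expect this bookkeeping to be the main technical obstacle; the combinatorial content of the argument is identical to Claim \ref{claim:ENHopset}, so no new branching-event or pivot analysis is required.
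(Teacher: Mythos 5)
There is a genuine gap, and it is structural rather than a matter of bookkeeping. If you keep the labeling of Definition \ref{def:GoodBadInternal} and hence the additive recursion $\alpha_i=\alpha_{i-1}+4r_{i-1}/h$, then $\alpha_{l-1}=1+12(l-1)/(h-5)$ grows linearly in $l$ for any $h$ that is constant in $l$. With your choice $h=6+20/\epsilon$ the requirement $\alpha_{l-1}\leq 3+\epsilon$ is exactly your quadratic $\epsilon^{2}+(22-12(l-1))\epsilon+40\geq 0$, and this inequality is \emph{false} on essentially the whole range of interest: its roots are roughly $\tfrac{10}{3(l-1)}$ and $12(l-1)-22$, so it fails for every $\epsilon$ strictly between them. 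For instance $\epsilon=1$ and $l\geq 7$ already violates it, and in the paper's application $l=\lceil\log_{4/3}k\rceil+1$ is large while $\epsilon$ is a constant (this is precisely the regime used in Remark \ref{remark:PreserverWithWorseStretch}). The hypothesis $\epsilon=O(l)$ does not help, since the failure region is $\epsilon$ between $\Theta(1/l)$ and $\Theta(l)$; nor does inflating the constants in $h$: to force $1+12(l-1)/(h-5)\leq 3+\epsilon$ you need $h\geq 5+\tfrac{12(l-1)}{2+\epsilon}=\Omega(l)$, and then $\beta_{l-1}=2(2h)^{l-1}=(\Omega(l))^{l-1}$, whose base depends on $l$ — for constant $\epsilon$ this is $(\log k)^{\Theta(\log k)}$ rather than the claimed $(12+40/\epsilon)^{l-1}=k^{O(\log\frac{1}{\epsilon})}$. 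So recycling the $(1+\epsilon)$ proof verbatim cannot give the stated hopbound.

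The paper's proof changes the combinatorial rule, not just the parameter $h$. For an internal node $[x,y]$ with some Bad children, it declares $[x,y]$ Good when $p_i(x)$ and $p_i(y)$ — the pivots of the \emph{endpoints of the segment}, not of the first and last Bad children — are joined by a hop-edge, and it then replaces the entire segment by the three-edge path $(x,p_i(x))\circ(p_i(x),p_i(y))\circ(p_i(y),y)$, of weight at most $(3+\tfrac{4r_{i-1}}{h})d(x,y)$. The crucial consequence is that the stretch recursion becomes $\alpha_i\geq\max\{\alpha_{i-1},\,3+\tfrac{4r_{i-1}}{h}\}$ — a maximum instead of a sum — so the factor-$3$ loss is paid once, not once per level, and $\alpha_i$ stays at $3+\tfrac{20}{h-5}$ for all $i$. (The Bad condition changes accordingly to ``$p_i(x),p_i(y)$ not hop-connected,'' which yields $r_i\geq 5+\tfrac{5r_{i-1}}{h}$ with fixed point $\tfrac{5h}{h-5}$, and $\beta_i=(2h)^i$.) Only because the stretch no longer accumulates with $l$ can one take $h=5+\lceil 20/\epsilon\rceil$, a constant depending solely on $\epsilon$, which gives stretch $3+\epsilon$ and hopbound $(2h)^{l-1}\leq(12+\tfrac{40}{\epsilon})^{l-1}$. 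Your proposal is missing exactly this new labeling and the max-type recursion it produces.
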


\begin{proof}

We use a similar proof technique as the one of Claim \ref{claim:ENHopset}. We fix an integer $h>5$. Consider two vertices $u,v$ in $G$. We again consider the \textit{partition tree} of the $u-v$ shortest path. We will assign a label \textit{Good} or \textit{Bad} for each node in the tree, and we will define the parameters $\{\alpha_i,\beta_i,r_i\}$ such that the same invariant as before is satisfied: if $[x,y]$ is a \textit{Good} node in the $i$-th level, then
\[d^{(\beta_i)}_{G\cup\bar{H}^{1/2}}(x,y)\leq\alpha_i\cdot d(x,y)~,\]
and if $[x,y]$ is a \textit{Bad} node in the $i$-th level, then
\[d(x,p_{i+1}(x)),d(y,p_{i+1}(y))\leq r_i\cdot d(x,y)~.\]
Recall that $d(\cdot,\cdot)$ denotes the distance in $G$.

The definition of \textit{Good} and \textit{Bad} nodes in the $0$-th level is the same as before: the leaf $[x,y]$ is \textit{Good} if and only if $(x,y)$ is either an original edge of $G$ or an edge of $\bar{H}^{1/2}$. Then we know that if $\alpha_0,\beta_0\geq1,r_0\geq2$, the invariant is satisfied for $i=0$.

For $[x,y]$ in the $i$-th level, we define the labels \textit{Good} and \textit{Bad} in a slightly different way. We again consider three possible cases:

\begin{enumerate}
    \item If all of the children of $[x,y]$ are labeled as \textit{Good}, $[x,y]$ is also labeled as \textit{Good}.
    \item If there are children of $[x,y]$ that are labeled as \textit{Bad}, and $p_i(x),p_i(y)$ are connected by a hop-edge, we still say that $[x,y]$ is \textit{Good}.
    \item If $p_i(x),p_i(y)$ are not connected by a hop-edge (and $[x,y]$ has \textit{Bad} children), we label $[x,y]$ as \textit{Bad}.
\end{enumerate}

Note that in the first case, the rule for labeling a node as \textit{Good} is the same as in the proof of Claim \ref{claim:ENHopset}. Hence, Fact \ref{fact:SingleEdge} still holds.

In the first case, similarly to the proof of Claim \ref{claim:ENHopset}, we have $d^{(2h\cdot\beta_{i-1})}_{G\cup\bar{H}^{1/2}}(x,y)\leq\alpha_{i-1}\cdot d(x,y)$. In the second case, note that the path
\[Q=(x,p_i(x))\circ(p_i(x),p_i(y))\circ(p_i(y),y)\]
is a three-edges path in $G\cup\bar{H}^{1/2}$ between $x,y$. For bounding its weight, consider some \textit{Bad} child $[a,b]$ of $[x,y]$. By the invariant, we know that $d(a,p_i(a))\leq r_{i-1}\cdot d(a,b)\leq\frac{r_{i-1}}{h}d(x,y)$. The last inequality is true since $[a,b]$ must be a sub-path of weight at most $\frac{d(x,y)}{h}$ (by Fact \ref{fact:SingleEdge}). As a result,
\begin{equation} \label{eq:RDistance3}
    d(x,p_i(x))\leq d(x,p_i(a))\leq d(x,a)+d(a,p_i(a))
    \leq d(x,a)+\frac{r_{i-1}}{h}d(x,y)~.
\end{equation}
Similarly, $d(y,p_i(y))\leq d(y,a)+\frac{r_{i-1}}{h}d(x,y)$.

Therefore, we have
\begin{equation} \label{eq:PivotsDist}
    \begin{split}
        d(p_i(x),p_i(y))&\leq d(p_i(x),x)+d(x,y)+d(y,p_i(y))\\
        &\stackrel{(\ref{eq:RDistance3})}{\leq}d(x,a)+\frac{r_{i-1}}{h}\cdot d(x,y)+d(x,y)+d(y,a)+\frac{r_{i-1}}{h}\cdot d(x,y)\\
        &=2d(x,y)+\frac{2r_{i-1}}{h}d(x,y)=(2+\frac{2r_{i-1}}{h})d(x,y)~.
    \end{split}
\end{equation}

We conclude that in the second case we have
\begin{eqnarray*}
d^{(3)}_{G\cup\bar{H}^{1/2}}(x,y)&\leq&w(Q)
=d(x,p_i(x))+d(p_i(x),p_i(y))+d(p_i(y),y)\\
&\leq&d(x,p_i(x))+(2+\frac{2r_{i-1}}{h})d(x,y)+d(p_i(y),y)\\
&\stackrel{(\ref{eq:RDistance3})}{\leq}&d(x,a)+d(a,y)+\frac{2r_{i-1}}{h}\cdot d(x,y)+(2+\frac{2r_{i-1}}{h})d(x,y)\\
&=&(3+\frac{4r_{i-1}}{h})d(x,y)~.
\end{eqnarray*}

Hence, by the analysis of the first two cases, we conclude that we can choose 
\[\alpha_i\geq\max\{\alpha_{i-1},3+\frac{4r_{i-1}}{h}\},\;\;\;\;\;\beta_i\geq\max\{2h\cdot\beta_{i-1},3\}~,\]
and the invariant will hold.

In the third case, $p_i(x),p_i(y)$ are not connected by an edge of $\bar{H}^{1/2}$. By the definition of half-bunches, it means that 
\begin{equation} \label{eq:RDistance4}
    d(p_i(x),p_{i+1}(p_i(x)))\leq2d(p_i(x),p_i(y))
    \stackrel{(\ref{eq:PivotsDist})}{\leq}(4+\frac{4r_{i-1}}{h})d(x,y)~,
\end{equation}
and similarly, $d(p_i(y),p_{i+1}(p_i(y)))\leq(4+\frac{4r_{i-1}}{h})d(x,y)$.

We conclude that in the third case,
\begin{eqnarray*}
d(x,p_{i+1}(x))
&\leq&d(x,p_{i+1}(p_i(x)))\\
&\leq&d(x,p_i(x))+d(p_i(x),p_{i+1}(p_i(x)))\\
&\stackrel{(\ref{eq:RDistance3}),(\ref{eq:RDistance4})}{\leq}&d(x,y)+\frac{r_{i-1}}{h}d(x,y)+(4+\frac{4r_{i-1}}{h})d(x,y)\\
&\leq&(5+\frac{5r_{i-1}}{h})d(x,y)~.
\end{eqnarray*}
Note that in the third case, there exists a child $[a,b]$ of $[x,y]$ labeled as \textit{Bad}, and thus, inequality (\ref{eq:RDistance3}) is applicable.

Hence, the invariant is satisfied if we choose $\{\alpha_i,\beta_i,r_i\}$ such that
\begin{equation} \label{eq:InvariantCondition}
    \alpha_i\geq\max\{\alpha_{i-1},3+\frac{4r_{i-1}}{h}\},\;\;\;\beta_i\geq\max\{2h\cdot\beta_{i-1},3\},\;\;\;
    r_i\geq5+\frac{5r_{i-1}}{h}~,
\end{equation}
and also $\alpha_0,\beta_0\geq1,r_0\geq2$.

Consider the choices $\alpha_i=3+\frac{20}{h-5},r_i=\frac{5h}{h-5}=5+\frac{25}{h-5}$ (for every $i$) and $\beta_i=(2h)^i$. Note that we have $\alpha_0>3>1,\beta_0=1$ and $r_0>5>2$ (recall that $h>5$). In addition, for all $i>0$,
\[5+\frac{5r_{i-1}}{h}=5+\frac{5\cdot\frac{5h}{h-5}}{h}=5+\frac{25}{h-5}=r_i~,\]
\begin{eqnarray*}
\max\{\alpha_{i-1},3+\frac{4r_{i-1}}{h}\}
=\max\{3+\frac{20}{h-5},3+\frac{20}{h-5}\}
=3+\frac{20}{h-5}=\alpha_i~,
\end{eqnarray*}
\[\max\{2h\cdot\beta_{i-1},3\}=\max\{(2h)^i,3\}=(2h)^i=\beta_i~.\]

This proves that Condition (\ref{eq:InvariantCondition}) is satisfied, and therefore the invariant is satisfied. Note that the root $[u,v]$ of the partition tree is in level $l-1$. As $p_l(u),p_l(v)$ are undefined, the condition for $[u,v]$ being \textit{Bad} never holds. Thus, $[u,v]$ is labeled \textit{Good}. Hence, we have
\[d^{(\beta_{l-1})}_{G\cup\bar{H}^{1/2}}(u,v)\leq\alpha_{l-1}\cdot d(u,v)~.\]

Choosing $h=5+\left\lceil\frac{20}{\epsilon}\right\rceil$, we get a stretch of 
\[\alpha_{l-1}=3+\frac{20}{h-5}\leq3+\epsilon\] 
and hopbound of $\beta_{l-1}=(2h)^{l-1}\leq(12+\frac{40}{\epsilon})^{l-1}$.

\end{proof}

\section{Appendix B} \label{sec:AppendixB}

In this appendix we prove Lemmas \ref{lemma:BranchingEvents} and \ref{lemma:BasicSizes} from Section \ref{sec:HierarchyOfSets}. These lemmas are then used to analyse the size and the support size of the hopset $\bar{H}^{1/2}$. Closely related lemmas and arguments can be found in \cite{CE05,Pet09,EP15,EN16b}.

\begin{lemma*}[Lemma \ref{lemma:BranchingEvents}]
For every $i=0,1,...,l-1$,
\[|Branch(H^{1/2}_i)|\leq4\sum_{u\in A_i}|B_i(u)|^3~.\]
\end{lemma*}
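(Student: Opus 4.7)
My plan is to charge each branching event to a triple of vertices in $B_i(u)$ for some $u\in A_i$, paying a constant overhead for orientations. Consider an arbitrary branching event $(a,b,x)$ with $a=(v_1,u_1),\,b=(v_2,u_2)\in H^{1/2}_i$ (so $v_1,v_2\in A_i$, $u_1\in B^{1/2}_i(v_1)$, $u_2\in B^{1/2}_i(v_2)$). By swapping $a$ and $b$ if necessary (absorbed as a factor of $2$ in the final bound), I assume $d(v_1,x)\leq d(v_2,x)$.

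The crucial use of the half-bunch property is the following. Since $x$ lies on the shortest path $P_b$ from $v_2$ to $u_2$ and $u_2\in B^{1/2}_i(v_2)$,
\[d(v_2,x)\leq d(v_2,u_2)<\tfrac12 d(v_2,p_{i+1}(v_2))~.\]
Combined with the WLOG ordering via the triangle inequality,
\[d(v_1,v_2)\leq d(v_1,x)+d(v_2,x)\leq 2d(v_2,x)<d(v_2,p_{i+1}(v_2))~,\]
so $v_1\in B_i(v_2)$. Also $u_2\in B^{1/2}_i(v_2)\subseteq B_i(v_2)$ trivially. Thus every branching event already produces two elements of $B_i(v_2)$, namely $v_1$ and $u_2$.

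To obtain the cubic bound, I still need a third element $z\in B_i(v_2)$, and I must show that each triple $(v_2;v_1,u_2,z)$ is charged by only $O(1)$ branching events. My plan for the third element is to invoke the second half-bunch inequality $d(v_1,u_1)<\tfrac12 d(v_1,p_{i+1}(v_1))$, combined with the consistency of shortest paths (which, under a tree-based rule, forces all paths from a fixed source to form a tree and hence to branch from one another at exactly one vertex). The branching condition at $x$ forces the outgoing edge of $P_a$ on the $u_1$-side to differ from the edges of $P_b$ at $x$; this, together with the prefix structure $P_{v_1,x}$ determined by the tree $T_{v_1}$, should restrict the possible branching vertex $x$ and target $u_1$ to a bounded number of configurations per triple, and should yield $u_1\in B_i(v_2)$ after a finer case split on which side of $x$ the differing edges lie. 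The factor $4$ then arises as the product of the $2$ from the WLOG swap and the bounded branching multiplicity per triple.

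The main obstacle is precisely this last step: carrying out the case split on the edge configuration at $x$ to identify the third bunch element and to verify the $O(1)$-per-triple bound. It is here that the use of \emph{half}-bunches (as opposed to full bunches $B_i$) is essential: with full bunches the triangle inequality above would only yield $d(v_1,v_2)<2d(v_2,p_{i+1}(v_2))$, which does not place $v_1$ in $B_i(v_2)$ and destroys the charging scheme. The factor $\tfrac12$ is precisely the slack needed to keep the relevant endpoints inside the bunches of $A_i$-vertices, explaining why the lemma is stated for $H^{1/2}_i$ rather than $H_i$.
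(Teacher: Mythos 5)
There is a genuine gap, and it traces back to your choice of normalization. The paper's proof takes the WLOG over the \emph{weights of the two demand pairs}, i.e.\ it assumes $d(v_1,u_1)\leq d(v_2,u_2)$, not $d(v_1,x)\leq d(v_2,x)$. With that choice everything you are missing falls out immediately: since $x$ lies on both shortest paths, $d(v_2,x)\leq d(v_2,u_2)$ and $d(x,v_1),d(x,u_1)\leq d(v_1,u_1)\leq d(v_2,u_2)$, so both $d(v_2,v_1)$ and $d(v_2,u_1)$ are at most $2d(v_2,u_2)<d(v_2,p_{i+1}(v_2))$ by the half-bunch property, and of course $u_2\in B^{1/2}_i(v_2)\subseteq B_i(v_2)$. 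Thus \emph{all three} other endpoints $v_1,u_1,u_2$ lie in $B_i(v_2)$, and the branching event is charged to the quadruple $(v_2;v_1,u_1,u_2)$. The remaining multiplicity bound is not a delicate case analysis on edge configurations at $x$: it is the standard fact (Lemma 7.5 of Coppersmith--Elkin, cited in the paper) that two shortest paths chosen in a consistent manner share at most two branching events, so each quadruple is charged at most twice; together with the factor $2$ from the WLOG swap this gives the constant $4$.

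Your normalization $d(v_1,x)\leq d(v_2,x)$ cannot be completed along the lines you sketch, because the third membership $u_1\in B_i(v_2)$ can simply fail under it. Take $d(v_1,x)=d(v_2,x)=1$ (or $d(v_1,x)$ slightly smaller), $d(x,u_2)=1$, $d(x,u_1)=100$, with $d(v_2,p_{i+1}(v_2))$ just above $4$ and $d(v_1,p_{i+1}(v_1))$ huge; the half-bunch conditions for both pairs hold, your WLOG directs the charge to $v_2$, yet $d(v_2,u_1)\approx 101\gg d(v_2,p_{i+1}(v_2))$, so there is no third element of $B_i(v_2)$ to charge to — the charge must go to $v_1$ instead, and the distance-to-$x$ comparison cannot detect this. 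So the missing step is not a finer case split at $x$ but the right comparison (pair lengths), which makes the whole argument three short triangle inequalities plus the citation of the two-branching-events lemma. Your closing remark about why half-bunches (rather than full bunches) are needed is correct and matches the paper's reasoning.
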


\begin{proof}
Let $(a,b,x)\in Branch(H^{1/2}_i)$ be a branching event, and let $a=(u,v),b=(y,z)$ such that $v\in B^{1/2}_i(u)$ and $z\in B^{1/2}_i(y)$. 

If $d(u,v)\leq d(y,z)$, we get
\begin{eqnarray*}
&&d(y,u)\leq d(y,x)+d(x,u)\leq d(y,z)+d(u,v)\\
&\leq&2d(y,z)<2\cdot\frac{1}{2}d(y,p_{i+1}(y))=d(y,p_{i+1}(y))~.
\end{eqnarray*}

See Figure \ref{fig:FewBranchingEvents} for illustration. Therefore $u\in B_i(y)$. Similarly, $v\in B_i(y)$ and also, of course, $z\in B_i^{1/2}(y)\subseteq B_i(y)$. If we instead have $d(y,z)\leq d(u,v)$, then symmetrically we get $y,z,v\in B_i(u)$.

We proved that $Branch(H^{1/2}_i)$ is a subset of the union
\begin{eqnarray*}
&&\bigg\{\big((u,v),(y,z),x\big)\in H^{1/2}_i\times H^{1/2}_i\times V\;\big|\;
v,y,z\in B_i(u)\text{ and }x\in P_{u,v}\cap P_{y,z}\bigg\}\\
&\bigcup&\bigg\{\big((u,v),(y,z),x\big)\in H^{1/2}_i\times H^{1/2}_i\times V\;\big|\;
u,v,z\in B_i(y)\text{ and }x\in P_{u,v}\cap P_{y,z}\bigg\}~.
\end{eqnarray*}

\begin{figure}[!ht]
\centerline{\includegraphics[width=6cm, height=4cm]{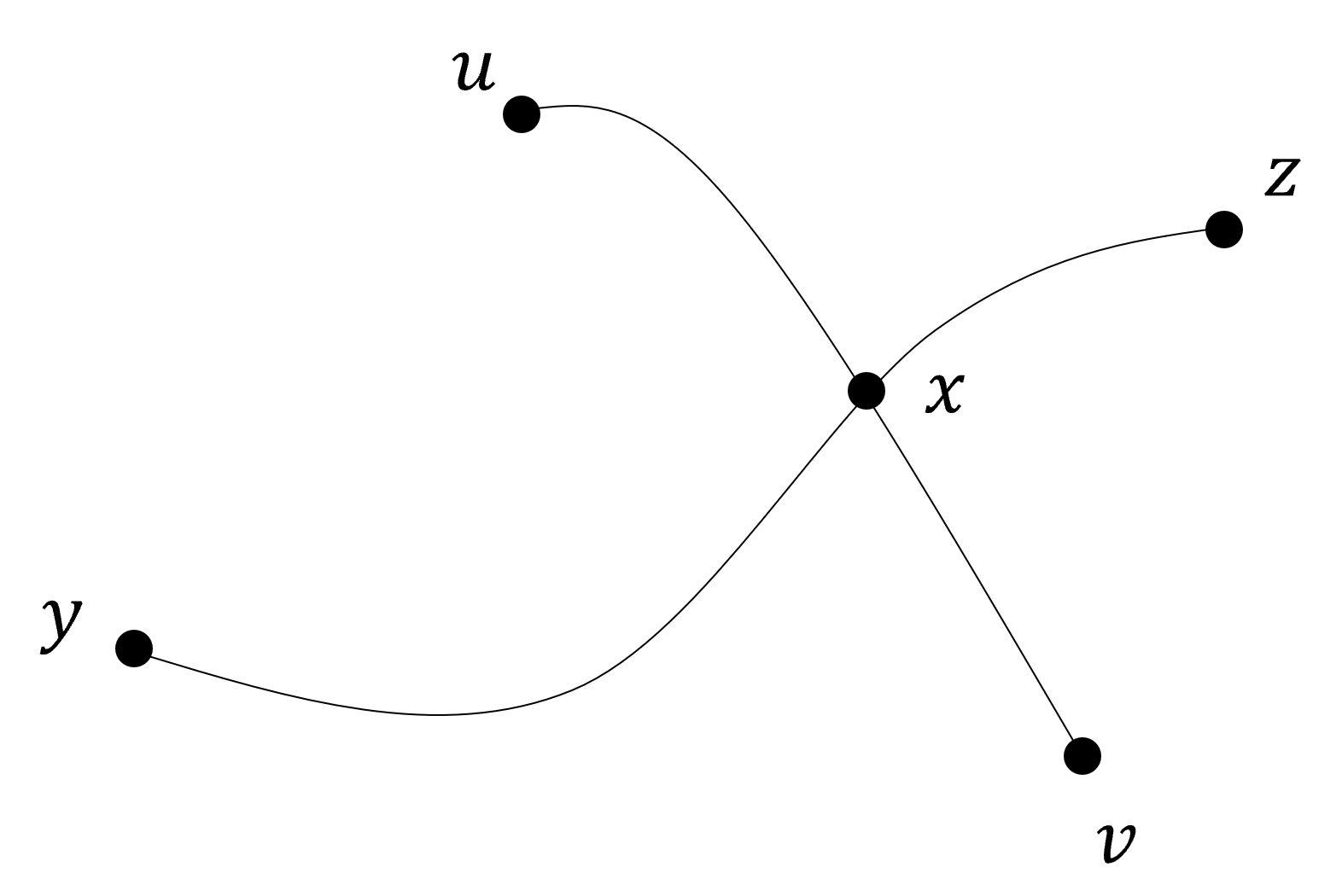}}
\caption{A branching event. When we assume that $d(u,v)\leq d(y,z)$, it follows that the weight of the sub-paths from each of $y,z,u,v$ to $x$ is at most $d(y,z)\leq\frac{1}{2}d(y,p_{i+1}(y))$.}
\label{fig:FewBranchingEvents}
\end{figure}

The size of the first set in the union above can be bounded by
\begin{eqnarray*}
&&\sum_{u\in A_i}\sum_{v,y,z\in B_i(u)}|\{x\in V\;|\;((u,v),(y,z),x)\in Branch(H^{1/2}_i))\}|\\
&\leq&\sum_{u\in A_i}\sum_{v,y,z\in B_i(u)}2=2\sum_{u\in A_i}|B_i(u)|^3~.
\end{eqnarray*}

Here the first inequality follows from the fact that the two shortest paths $P_{u,v},P_{y,z}$, that were chosen in a consistent manner, can have at most two branching events between them (see Lemma 7.5 in \cite{CE05}). We also used the fact that since $(u,v),(y,z)\in H^{1/2}_i$, then $u$ (as well as $v,y,z$) must be in $A_i$. 

Clearly, we can get the same bound for the size of the second set. Hence we conclude that
\[|Branch(H^{1/2}_i)|\leq4\sum_{u\in A_i}|B_i(u)|^3~.\]

\end{proof}

Next, we prove Lemma \ref{lemma:BasicSizes}.

\begin{lemma*}[Lemma \ref{lemma:BasicSizes}]
The following inequalities hold:
\begin{enumerate}
    \item For every $i<l-1$, $\mathbb{E}[|\bar{H}_i|]\leq\frac{n}{q_i}\prod_{j=0}^{i-1}q_j$.\newline
    \item $\mathbb{E}[|\bar{H}_{l-1}|]\leq2\cdot\max\{1,(n\prod_{j=0}^{l-2}q_j)^2\}$.\newline
    \item For every $i<l-1$, $\mathbb{E}[|Branch(H^{1/2}_i)|]\leq24\cdot\frac{n}{q^3_i}\prod_{j=0}^{i-1}q_j$.\newline
    \item $\mathbb{E}[|Branch(H^{1/2}_{l-1})|]\leq60\cdot\max\{1,(n\prod_{j=0}^{l-2}q_j)^4\}$.
\end{enumerate}
\end{lemma*}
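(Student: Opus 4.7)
The plan is to handle all four parts uniformly via the standard Thorup--Zwick sampling analysis. The key stochastic fact I will exploit is that in the two-step construction $V = A_0 \supseteq A_1 \supseteq \cdots \supseteq A_l = \emptyset$, each vertex lies in $A_i$ independently with probability $p_i := \prod_{j=0}^{i-1} q_j$, and, conditioning on $v \in A_i$, every other vertex $u \in A_i$ still lies in $A_{i+1}$ independently with probability $q_i$. Listing the remaining $A_i$-vertices $Y_1, Y_2, \ldots$ in non-decreasing distance from $v$, the size $|B_i(v) \setminus \{v\}|$ equals the position of the first $Y_\ell$ landing in $A_{i+1}$, minus one. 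Hence $|B_i(v)|$ is stochastically dominated by a geometric random variable $T$ with success probability $q_i$, which gives the master inequality $\mathbb{E}[|B_i(v)|^k \mid v \in A_i] \leq \mathbb{E}[T^k] \leq k!/q_i^k$, and this inequality will drive everything that follows.

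For parts $1$ and $3$ (the cases $i < l-1$), I would rewrite $|\bar{H}_i| = \sum_{v \in A_i} |\bar{B}_i(v)|$ and apply linearity of expectation, reducing the problem to bounding $\sum_{v \in V} \Pr[v \in A_i] \cdot \mathbb{E}[|\bar{B}_i(v)| \mid v \in A_i]$. A brief case split on whether $v \in A_{i+1}$ (which determines whether the pivot $p_i(v) = v$ already lies inside $B_i(v)$ or must be added separately), combined with the $k = 1$ master inequality, yields $\mathbb{E}[|\bar{B}_i(v)| \mid v \in A_i] = q_i + (1-q_i)/q_i = q_i + 1/q_i - 1 \leq 1/q_i$, which gives part $1$ with the stated constant $1$. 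For part $3$ I would combine Lemma \ref{lemma:BranchingEvents} with the $k = 3$ master inequality $\mathbb{E}[|B_i(v)|^3 \mid v \in A_i] \leq 6/q_i^3$, obtaining $\mathbb{E}[|Branch(H_i^{1/2})|] \leq 4 \cdot n p_i \cdot 6/q_i^3 = 24 n \, q_i^{-3} \prod_{j=0}^{i-1} q_j$.

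Parts $2$ and $4$ handle the top of the hierarchy, where the definition in Section \ref{sec:HierarchyOfSets} sets $B_{l-1}(v) = A_{l-1}$ for every $v \in A_{l-1}$. Thus $|\bar{H}_{l-1}| \leq |A_{l-1}|^2$ and $\sum_{u \in A_{l-1}} |B_{l-1}(u)|^3 = |A_{l-1}|^4$, reducing the problem to estimating moments of the binomial variable $X = |A_{l-1}|$ with parameters $n$ and $p = \prod_{j=0}^{l-2} q_j$. Expanding $X^k$ in falling factorials gives $\mathbb{E}[X^2] = n(n-1)p^2 + np \leq (np)^2 + np$ and $\mathbb{E}[X^4] \leq (np)^4 + 6(np)^3 + 7(np)^2 + np$. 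Splitting on whether $np \geq 1$ or $np < 1$ absorbs all lower powers, yielding $\mathbb{E}[X^2] \leq 2 \max\{1,(np)^2\}$ (which is exactly part $2$) and $\mathbb{E}[X^4] \leq 15 \max\{1,(np)^4\}$, which multiplied by the coefficient $4$ from Lemma \ref{lemma:BranchingEvents} gives the constant $60$ in part $4$.

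The main obstacle I anticipate is the conditioning step underlying the first paragraph: one must verify explicitly that conditioning on the event $v \in A_i$ preserves the i.i.d.\ Bernoulli$(q_i)$ structure of the events $\{u \in A_{i+1}\}$ for the remaining $u \in A_i$. This is immediate from the nested two-step sampling definition, but it does need to be spelled out in order to cleanly justify invoking the geometric-moment bound; once this is done, everything else is a routine application of linearity of expectation, the factorial-moment identity, and Lemma \ref{lemma:BranchingEvents}.
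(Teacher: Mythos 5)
Your proposal is correct and follows essentially the same route as the paper: geometric-moment bounds for (half-)bunch sizes after conditioning on $A_i$ (the paper makes this rigorous by conditioning on $A_i=S$ and applying the law of total expectation), plus binomial second and fourth moments for $|A_{l-1}|$ at the top level, combined with Lemma \ref{lemma:BranchingEvents}. The only nitpick is that $|B_i(v)\setminus\{v\}|$ is at most (not exactly equal to) the index of the first $A_{i+1}$-vertex minus one, because of ties and the strict inequality in the bunch definition, but since you only invoke stochastic domination this does not affect the argument.
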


\begin{proof}

For proving inequalities (1) and (3), fix $i<l-1$. We define a random variable $X(v)$ for every $v\in V$, that equals $|\bar{B}_i(v)|$ if $v\in A_i$, and otherwise equals $0$. By the definition of $\bar{H}_i$, we can see that 
\[|\bar{H}_i|\leq\sum_{v\in A_i}|\bar{B}_i(v)|=\sum_{v\in V}X(v)~.\]

Now fix some $v\in V$ and a subset $S\subseteq V$ such that $v\in S$, and suppose that $A_i=S$. We order the vertices of $A_i$ by an increasing order of their distances from $v$. Then $B_i(v)$ is a subset of the vertices that come before the first $A_{i+1}$-vertex in that list. Every vertex in this list is contained in $A_{i+1}$ with probability $q_i$. Therefore $|\bar{B}_i(v)|\leq|B_i(v)|+1$ is bounded by a geometric random variable with probability $q_i$, when it is conditioned on the event that $A_i=S$. By the law of total expectation,
\begin{eqnarray*}
\mathbb{E}[X(v)\;|\;v\in A_i]
&=&\sum_{\stackrel{S\subseteq V}{v\in S}}\mathbb{E}[X(v)\;|\;v\in A_i\text{ and }A_i=S]\Pr[A_i=S\;|\;v\in A_i]\\
&=&\sum_{\stackrel{S\subseteq V}{v\in S}}\mathbb{E}[|\bar{B}_i(v)|\;|\;A_i=S]\cdot\Pr[A_i=S\;|\;v\in A_i]\\
&\leq&\sum_{\stackrel{S\subseteq V}{v\in S}}\frac{1}{q_i}\cdot\Pr[A_i=S\;|\;v\in A_i]=\frac{1}{q_i}~.
\end{eqnarray*}

Using the law of total expectation again, we get
\begin{eqnarray*}
\mathbb{E}[X(v)]=\mathbb{E}[X(v)\;|\;v\in A_i]\cdot\Pr[v\in A_i]+0\cdot\Pr[v\notin A_i]
\leq\frac{1}{q_i}\cdot\Pr[v\in A_i]~.
\end{eqnarray*}
By the definition of $A_i$, the probability of $v$ to be in $A_i$ is the product $\prod_{j=0}^{i-1}q_j$. We finally get that
\[\mathbb{E}[|\bar{H}_i|]\leq\sum_{v\in V}\mathbb{E}[X(v)]\leq\sum_{v\in V}\frac{1}{q_i}\cdot\prod_{j=0}^{i-1}q_j=\frac{n}{q_i}\cdot\prod_{j=0}^{i-1}q_j~.\]
This proves inequality (1). For inequality (3), we use the fact that for a geometric random variable $X$ with probability $p$,
\[\mathbb{E}[X^3]=\frac{p^2-6p+6}{p^3}\leq\frac{6}{p^3}~.\]
Similarly to the computation above, we get
\begin{eqnarray*}
\mathbb{E}[X(v)^3\;|\;v\in A_i]
&=&\sum_{\stackrel{S\subseteq V}{v\in S}}\mathbb{E}[X(v)^3\;|\;v\in A_i\text{ and }A_i=S]\Pr[A_i=S\;|\;v\in A_i]\\
&=&\sum_{\stackrel{S\subseteq V}{v\in S}}\mathbb{E}[|\bar{B}_i(v)|^3\;|\;A_i=S]\cdot\Pr[A_i=S\;|\;v\in A_i]\\
&\leq&\sum_{\stackrel{S\subseteq V}{v\in S}}\frac{6}{q_i^3}\cdot\Pr[A_i=S\;|\;v\in A_i]=\frac{6}{q_i^3}~.
\end{eqnarray*}
Now, by Lemma \ref{lemma:BranchingEvents},
\begin{eqnarray*}
\mathbb{E}[|Branch(H^{1/2}_i)|]&\leq&4\mathbb{E}[\sum_{v\in A_i}|B_i(v)|^3]
=4\mathbb{E}[\sum_{v\in V}X(v)^3]\\
&=&4\sum_{v\in V}\mathbb{E}[X(v)^3]
=4\sum_{v\in V}\mathbb{E}[X(v)^3\;|\;v\in A_i]\Pr[v\in A_i]\\
&\leq&4\sum_{v\in V}\frac{6}{q_i^3}\cdot\Pr[v\in A_i]
=24\cdot\frac{n}{q_i^3}\cdot\prod_{j=0}^{i-1}q_j~,
\end{eqnarray*}
which proves inequality (3).

For proving inequalities (2) and (4), note that since $A_l=\emptyset$, then $\bar{B}_{l-1}(v)=A_{l-1}$, for any $v\in A_{l-1}$. Also, notice that $|A_{l-1}|$ is a binomial random variable with parameter $n$ and probability $q=\prod_{j=0}^{l-2}q_j$. For a binomial random variable $X\sim Bin(n,q)$, 
\[\mathbb{E}[X^2]=nq(1-q)+(nq)^2\leq nq+(nq)^2\leq2\max\{1,(nq)^2\}~,\]
since either $nq\geq1$, and then $nq+(nq)^2\leq(nq)^2+(nq)^2=2(nq)^2$, or $nq<1$, and then $nq+(nq)^2<1+1=2$.

By definition, $|\bar{H}_{l-1}|\leq\sum_{v\in A_{l-1}}|A_{l-1}|=|A_{l-1}|^2$, and therefore,
\[\mathbb{E}[|\bar{H}_{l-1}|]\leq\mathbb{E}[|A_{l-1}|^2]\leq2\max\{1,(n\prod_{j=0}^{l-2}q_j)^2\}~.\]
This proves inequality (2).

Another property of a variable $X\sim Bin(n,q)$ is the following.
\begin{eqnarray*}
\mathbb{E}[X^4]&=&nq+14\binom{n}{2}q^2+36\binom{n}{3}q^3+24\binom{n}{4}q^4\\
&\leq&nq+7(nq)^2+6(nq)^3+(nq)^4
\leq15\max\{1,(nq)^4\}~,
\end{eqnarray*}
where the reasoning for the last step is similar to the above. Then, by Lemma \ref{lemma:BranchingEvents},
\begin{eqnarray*}
|Branch(H_{l-1})|\leq4\sum_{v\in A_{l-1}}|B_{l-1}(v)|^3
=4\sum_{v\in A_{l-1}}|A_{l-1}|^3=4|A_{l-1}|^4~.
\end{eqnarray*}
Finally,
\begin{eqnarray*}
\mathbb{E}[|Branch(H_{l-1})|]\leq4\mathbb{E}[|A_{l-1}|^4]
\leq4\cdot15\max\{1,(nq)^4\}
=60\max\{1,(n\prod_{j=0}^{l-2}q_j)^4\}~.
\end{eqnarray*}

This proves inequality (4).

\end{proof}

\section{Appendix C} \label{sec:AppendixC}

Here we provide the proof of Theorem \ref{thm:DistancePreserver2}.

\begin{theorem*}[Theorem \ref{thm:DistancePreserver2}]
Given an $n$-vertex undirected weighted graph $G=(V,E)$, an integer $k\in[3,\log_2n]$, a positive parameter $\epsilon\leq1$, and a set of pairs $\mathcal{P}\subseteq V^2$, $G$ has an interactive $(1+\epsilon)$-distance preserver with query time $O(1)$ and size 
\[O\left(|\mathcal{P}|\cdot\Tilde{\gamma}_2(\epsilon,k,n)+n\cdot\left(\log k+\log\left(1+\frac{\log\frac{1}{\epsilon}}{\log^{(3)}n}\right)\right)+n^{1+\frac{1}{k}}\right)~,\]
where $\Tilde{\gamma}_2(\epsilon,k,n)=O\left(\frac{\log k+\rho}{\epsilon}\right)^{\left\lceil\log_2k\right\rceil+\rho}$, and
\begin{equation} \label{eq:RhoDef}
    \rho=\rho(\epsilon,k,n)=\log_{4/3}\left(\frac{k\log\log n\cdot\log^{(3)}n}{\log n}\right)+\log_{4/3}\left(1+\frac{\log\frac{1}{\epsilon}}{\log^{(3)}n}\right)+O(1)=O(\log\log k)+\log_{4/3}\log\frac{1}{\epsilon}~.
\end{equation}

In addition, there exists a $(1+\epsilon,\tilde{\gamma}_2(\epsilon,k,n))$-hopset with size $O(n^{1+\frac{1}{k}})$ and $(1+\epsilon)$-approximate support size 
\[O\left(n^{1+\frac{1}{k}}+n\left(\log k+\log\left(1+\frac{\log\frac{1}{\epsilon}}{\log^{(3)}n}\right)\right)\right)=O(n^{1+\frac{1}{k}}+n(\log k+\log\log\frac{1}{\epsilon}))~.\]
\end{theorem*}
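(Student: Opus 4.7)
The plan is to build a hybrid set hierarchy and then recursively invoke Theorem \ref{thm:DistancePreserver1} on its upper portion. Concretely, I keep the slow sampling schedule $q_i=\tfrac{1}{2}n^{-(4/3)^i/(3k)}$ for $i<h$, and switch to the fast doubling schedule $q_i=\tfrac{1}{2}(n/\gamma_{4/3})^{-2^{i-h}/k}$ for $i\geq h$, where $h$ is the least index with $|A_h|\leq n/\gamma_{4/3}$. A routine computation with $\prod_{j<h} q_j$ shows that taking $h=\Theta(\rho)$ suffices; this is exactly where the first $\log_{4/3}$ in (\ref{eq:RhoDef}) appears. The hierarchy has $l=\lceil\log_2 k\rceil+h+1=O(\log k+\rho)$ levels, so Claim \ref{claim:ENHopset} yields that $\bar H^{1/2}$ is a universal $(1+\epsilon,\beta_l)$-hopset with $\beta_l=O(l/\epsilon)^{l-1}=\tilde\gamma_2(\epsilon,k,n)$.

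The oracle $D$ stores three kinds of data. First, for each $(x,y)\in\mathcal{P}$, a low-hop path $P_{x,y}\subseteq G\cup\bar H^{1/2}$ of at most $\beta_l$ edges with weight at most $(1+\epsilon)d_G(x,y)$, guaranteed by the hopset property. Second, for each slow level $i<h$, the interactive exact preserver from Theorem \ref{thm:DPPRO} on the demand set $H^{1/2}_i$, together with the pivot preserver from Lemma \ref{lemma:PathsToPivots}. Third, and this is the key new ingredient, a single interactive $(1+\epsilon)$-preserver $(S^3,D^3)$ obtained from Theorem \ref{thm:DistancePreserver1} with parameter $k$, applied to the graph $G$ and demand set $\bigcup_{i\geq h}\bar H^{1/2}_i$. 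Each hop edge in a stored $P_{x,y}$ carries a flag pointing to the appropriate sub-oracle. On a query $(x,y)\in\mathcal{P}$, I fetch $P_{x,y}$ and replace each hop edge by the path returned by its designated sub-oracle, exactly as in the proof of Theorem \ref{thm:DistancePreserver1}.

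Stretch and query-time arguments are then almost mechanical. Slow-phase hop edges are replaced by exact shortest paths, while fast-phase hop edges are replaced by $(1+\epsilon)$-approximate paths, yielding stretch at most $(1+\epsilon)^2\leq 1+3\epsilon$; rescaling $\epsilon\to\epsilon/3$ restores the claimed $1+\epsilon$ stretch. Since every sub-oracle answers in time linear in the returned path, the total query time is $O(1)$.

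The main obstacle is bounding the size contribution of the recursive preserver $(S^3,D^3)$. For this I would redo the counting of Lemma \ref{lemma:BasicSizes} on the tail hierarchy alone, with $|A_h|\leq n/\gamma_{4/3}$ in place of $n$ and doubling exponents in place of $(4/3)^i$. A telescoping argument, level by level, should yield
\[
\mathbb{E}\Bigl[\,\bigl|\bigcup_{i\geq h}\bar H^{1/2}_i\bigr|\,\Bigr]=O\bigl((n/\gamma_{4/3})^{1+1/k}\bigr).
\]
Plugging this into Theorem \ref{thm:DistancePreserver1} contributes $O\bigl((n/\gamma_{4/3})^{1+1/k}\cdot\gamma_{4/3}+n\log k+n^{1+1/k}\bigr)=O(n\log k+n^{1+1/k})$, which is dominated by the slow-phase storage $O(n\cdot l)=O(n(\log k+\rho))$ and the $|\mathcal{P}|\cdot\beta_l$ hopset-path storage. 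The hopset and support-size claims then follow by discarding the demand-path storage: the union of the slow-phase exact preservers and the edge set of $(S^3,D^3)$ is a $(1+\epsilon)$-approximate supporting edge set for $\bar H^{1/2}$ with the claimed size.
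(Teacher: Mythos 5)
Your proposal is correct and follows essentially the same route as the paper's own proof: the same hybrid sampling schedule switching at level $h\approx\rho$ (chosen so that $n^{((4/3)^h-1)/k}\geq\gamma_{4/3}$), exact preservers (Theorem \ref{thm:DPPRO} and Lemma \ref{lemma:PathsToPivots}) for the slow levels, a recursive invocation of Theorem \ref{thm:DistancePreserver1} on $\bigcup_{i\geq h}\bar{H}^{1/2}_i$ whose expected size telescopes to $O((n/\gamma_{4/3})^{1+1/k})$, the $(1+\epsilon)^2$ stretch bound with rescaling, and the same extraction of the hopset and approximate support-size claims. The only differences are cosmetic (defining $h$ via $|A_h|\leq n/\gamma_{4/3}$ rather than by the explicit closed-form threshold, and applying $\epsilon$-rescaling at the end rather than working with $\epsilon/3$ throughout), so no further changes are needed.
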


\begin{proof}

For the sake of brevity, we denote $\gamma_{4/3}=\gamma_{4/3}(\epsilon,k)$. This time, we choose a different hierarchy of sets than in the proof of Theorem \ref{thm:DistancePreserver1}. Specifically, let
\begin{equation} \label{eq:AlternateProb}
q_i=\begin{cases}
\frac{1}{2}n^{-\frac{(4/3)^i}{3k}} & i<h\\
\frac{1}{2}(\frac{n}{\gamma_{4/3}})^{-\frac{2^{i-h}}{k}} & i\geq h\\
\end{cases}~,
\end{equation}
where $l,h$ will be determined later.

We use Claim \ref{claim:ENHopset}, but with an approximation parameter $\epsilon'=\frac{\epsilon}{3}$ instead of $\epsilon$, to conclude that $\bar{H}^{1/2}$ is a $(1+\frac{\epsilon}{3},\beta_l)$-hopset ($\beta_l$ is still $O(\frac{3l}{\epsilon})^{l-1}=O(\frac{l}{\epsilon})^{l-1}$). We divide this hopset into three disjoint sets.
\begin{equation} \label{eq:HopsetPartition}
\mathcal{Q}_1=\bigcup_{i=0}^{h-1}\{(v,p_i(v))\;|\;v\in V\}~,\;\;\;
\mathcal{Q}_2=\bigcup_{i=0}^{h-1}H^{1/2}_i~,\;\;\;
\mathcal{Q}_3=\bigcup_{i=h}^{l-1}\bar{H}^{1/2}_i~.
\end{equation}

By the size analysis below, we note that using full bunches instead of half-bunches in the definition of $\mathcal{Q}_3$ also achieves the same results, even with slightly better constant factors hidden by the $O$-notations. However, for convenience, and since asymptotically the results are similar, we stick with half-bunches.

For every $i=0,1,...,h-1$, let $(S^1_i,D^1_i)$ be the interactive distance preserver from Lemma \ref{lemma:PathsToPivots} on the set $\{(v,p_i(v))\;|\;v\in V\}$. Let $(S^2_i,D^2_i)$ be the interactive distance preserver from Theorem \ref{thm:DPPRO} on the set $H^{1/2}_i$. Lastly, let $(S^3,D^3)$ be the interactive distance preserver from Theorem \ref{thm:DistancePreserver1} on the set $\mathcal{Q}_3$, but with an approximation parameter of $\epsilon'=\frac{\epsilon}{3}$ instead of $\epsilon$.

We now construct an interactive distance preserver $(S,D)$, where the oracle $D$ stores the following information.
\begin{enumerate}
    \item For every $(x,y)\in\mathcal{P}$, the oracle $D$ stores a path $P_{x,y}\subseteq G\cup\bar{H}^{1/2}$ between $x,y$ with length at most $\beta_l$ and weight at most $(1+\frac{\epsilon}{3})d(x,y)$.
    \item For every $i=0,1,...,h-1$, the oracle $D$ stores the oracles $D^1_i$.
    \item For every $i=0,1,...,h-1$, the oracle $D$ stores the oracles $D^2_i$.
    \item The oracle $D$ stores the oracle $D^3$.
    \item For every $e\in\bar{H}^{1/2}$, the oracle $D$ stores a \textit{flag} variable $f_e$. If $e\in\{(v,p_i(v))\;|\;v\in V\}$ for some $i<h$, then $f_e=1$. If $e\in H^{1/2}_i$ for $i<h$, then $f_e=2$. If $e\in\bar{H}^{1/2}_i$ for $i\geq h$, $f_e=3$. In all cases, $D$ also stores the relevant index $i$.
\end{enumerate}

Next, we describe the algorithm for answering queries.

Given a query $(x,y)\in\mathcal{P}$, find the stored path $P_{x,y}$. Now replace each hop-edge $e=(a,b)\in\bar{H}^{1/2}$ that appears in $P_{x,y}$; if $f_e=1$, replace $e$ with the path returned from $D^1_i(a,b)$; if $f_e=2$, replace $e$ with the path returned from $D^2_i(a,b)$; if $f_e=3$, use $D^3(a,b)$. Return the resulting path as an output.

Since the query time of $D^1_i,D^2_i,D^3$ is linear in the number of edges of the returned path, for every $i$, we conclude that the total query time of $D$ is also linear in the size of the output. Thus, the query time of this oracle is $O(1)$.

Also, note that the paths that are returned from the oracles $D^1_i,D^2_i,D^3$ have a weight of at most $(1+\frac{\epsilon}{3})$ times the hop-edges they replace (By Lemma \ref{lemma:PathsToPivots} and Theorems \ref{thm:DPPRO} and \ref{thm:DistancePreserver1}). Therefore the resulting path has a weight of at most
\[(1+\frac{\epsilon}{3})^2d(x,y)\leq(1+\epsilon)d(x,y)~,\]
(since $\epsilon\leq1$). Hence, the stretch is indeed $1+\epsilon$.

We now analyse the size of $D$. The paths $P_{x,y}$ from item $1$ in the description of $D$ have a total size of $O(|\mathcal{P}|\beta_l)$. For item $2$, the size of the oracle $D^1_i$, for every $i<h$, is $O(n)$ (by Lemma \ref{lemma:PathsToPivots}), and therefore the total size of these oracles is $O(n\cdot h)$. The total size of the oracles $D^2_i$ from item $3$ can be bounded by $O(n\cdot h+n^{1+\frac{1}{k}})$, as in the proof of Theorem \ref{thm:DistancePreserver1}. The flags and the indices from item $5$ require $O(|\bar{H}^{1/2}|)$ space. To bound the size of the hopset $\bar{H}^{1/2}=\bigcup_{i=0}^{l-1}\bar{H}_i^{1/2}$, we bound $|\bar{H}_i|$ for every $i\in[0,l-1]$ (recall that $\bar{H}_i^{1/2}\subseteq\bar{H}_i$, and therefore the sum of these bounds is an upper bound for $|\bar{H}^{1/2}|$). By inequality (1) in Lemma \ref{lemma:BasicSizes}, for $i\in[0,h-1]$, in expectation,
\begin{equation} \label{eq:LowLevelSize}
|\bar{H}_i|\leq\frac{n}{q_i}\prod_{j=0}^{i-1}q_j
=n\cdot2n^{\frac{(4/3)^i}{3k}}\cdot\frac{1}{2^i}n^{-\frac{(4/3)^i-1}{k}}=\frac{1}{2^{i-1}}n^{1-\frac{2(4/3)^i}{3k}+\frac{1}{k}}\leq\frac{1}{2^{i-1}}n^{1+\frac{1}{3k}}~.
\end{equation}

By the same inequality, for $i\in[h,l-1)$, in expectation,
\begin{eqnarray*}
|\bar{H}_i|&\leq&\frac{n}{q_i}\prod_{j=0}^{i-1}q_j
=n\cdot2(\frac{n}{\gamma_{4/3}})^{\frac{2^{i-h}}{k}}\cdot\frac{1}{2^h}n^{-\frac{(4/3)^h-1}{k}}\cdot\frac{1}{2^{i-h}}(\frac{n}{\gamma_{4/3}})^{-\frac{2^{i-h}-1}{k}}\\
&=&\frac{1}{2^{i-1}}n\cdot n^{-\frac{(4/3)^h-1}{k}}\cdot(\frac{n}{\gamma_{4/3}})^{\frac{1}{k}}
\stackrel{*}{\leq}\frac{1}{2^{i-1}}n\cdot\frac{1}{\gamma_{4/3}}\cdot(\frac{n}{\gamma_{4/3}})^{\frac{1}{k}}
=\frac{1}{2^{i-1}}(\frac{n}{\gamma_{4/3}})^{1+\frac{1}{k}}~.
\end{eqnarray*}
For inequality (*) to hold, we need to choose $h$ so that 
\begin{equation} \label{eq:HReq}
    n^{\frac{(4/3)^h-1}{k}}\geq\gamma_{4/3}~,
\end{equation}
i.e., $h\geq\log_{4/3}(\frac{k\log\gamma_{4/3}}{\log n}+1)$. Indeed, we fix the value of $h$ to be $h=\left\lceil\log_{4/3}(\frac{k\log\gamma_{4/3}}{\log n}+1)\right\rceil$. Note that $h$ can be bounded as follows. 
\begin{eqnarray*}
    h&=&\left\lceil\log_{4/3}\left(\frac{k\log\gamma_{4/3}}{\log n}+1\right)\right\rceil\\
    &\stackrel{(\ref{eq:GammaDef})}{=}&\left\lceil\log_{4/3}\left(\frac{k\lceil\log_{4/3}k\rceil(\log\log k+\log\frac{1}{\epsilon})+O(1)}{\log n}\right)\right\rceil\\
    &=&\log_{4/3}\left(\frac{k\log k(\log\log k+\log\frac{1}{\epsilon})}{\log n}\right)+O(1)\\
    &\leq&\log_{4/3}\left(\frac{k\log\log n(\log^{(3)}n+\log\frac{1}{\epsilon})}{\log n}\right)+O(1)\\
    &=&\log_{4/3}\left(\frac{k\log\log n\cdot\log^{(3)}n}{\log n}\right)+\log_{4/3}\left(1+\frac{\log\frac{1}{\epsilon}}{\log^{(3)}n}\right)+O(1)\\
    &\stackrel{(\ref{eq:RhoDef})}{=}&\rho(\epsilon,k,n)~.
\end{eqnarray*}
In addition, since $\frac{x}{\log x\cdot\log\log x}$ is a non-decreasing function for every $x\geq8$, we have 
\[\log_{4/3}\left(\frac{k\log\log n\cdot\log^{(3)}n}{\log n}\right)=O(\log_{4/3}(\log k\cdot\log\log k))=O(\log\log k)~.\]
Hence, 
\begin{equation} \label{eq:RhoRoughBound}
    h\leq\rho(\epsilon,k,n)=\log_{4/3}\left(1+\frac{\log\frac{1}{\epsilon}}{\log^{(3)}n}\right)+O(\log\log k)
\end{equation}
We conclude that setting $h$ to this value yields that for every $i\in[h,l-1)$,
\begin{equation} \label{eq:HighLevelSize}
    |\bar{H}_i|\leq\frac{1}{2^{i-1}}(\frac{n}{\gamma_{4/3}})^{1+\frac{1}{k}}~.
\end{equation}

For $i=l-1$, denote $q=\prod_{j=0}^{l-2}q_j$. By inequality (2) in Lemma \ref{lemma:BasicSizes}, if $nq<1$, then the expected value of $|\bar{H}_{l-1}|$ is bounded by a constant. Otherwise,
\begin{eqnarray*}
|\bar{H}_{l-1}|&\leq&2(n\prod_{j=0}^{l-2}q_j)^2
\stackrel{(\ref{eq:AlternateProb})}{=}2n^2\cdot\left(\frac{1}{2^h}n^{-\frac{(4/3)^h-1}{k}}\cdot\frac{1}{2^{l-1-h}}(\frac{n}{\gamma_{4/3}})^{-\frac{2^{l-1-h}-1}{k}}\right)^2\\
&\stackrel{(\ref{eq:HReq})}{\leq}&n^2\cdot\frac{1}{2^{2l-3}}\left(\frac{1}{\gamma_{4/3}}\cdot(\frac{n}{\gamma_{4/3}})^{-\frac{2^{l-1-h}-1}{k}}\right)^2\\
&\stackrel{**}{\leq}&n^2\cdot\frac{1}{2^{2l-3}}\left(\frac{1}{\gamma_{4/3}}\cdot(\frac{n}{\gamma_{4/3}})^{-\frac{k-1}{k}}\right)^2\\
&=&\frac{1}{2^{2l-3}}(\frac{n}{\gamma_{4/3}})^{2-2\cdot\frac{k-1}{k}}\\
&=&\frac{1}{2^{2l-3}}(\frac{n}{\gamma_{4/3}})^{\frac{2}{k}}\leq\frac{1}{2^{2l-3}}(\frac{n}{\gamma_{4/3}})^{1+\frac{1}{k}}~.
\end{eqnarray*}
For inequality (**) to hold, we need to choose $l$ so that $2^{l-1-h}\geq k$. We define $l$ as the smallest value that satisfies this constraint, i.e., 
\begin{equation} \label{eq:LDef}
    l=\lceil\log_2k\rceil+1+h=\lceil\log_2k\rceil+1+\rho=\lceil\log_2k\rceil+\log_{4/3}\left(1+\frac{\log\frac{1}{\epsilon}}{\log^{(3)}n}\right)+O(\log\log k)
\end{equation}
(for the last bound, see (\ref{eq:RhoRoughBound})). Using the notation of $\Tilde{\gamma}_2(\epsilon,k,n)$ (see Theorem \ref{thm:DistancePreserver2} for its definition), we get
\[\beta_l=O\left(\frac{l}{\epsilon}\right)^{l-1}=O\left(\frac{\log k+\rho}{\epsilon}\right)^{\lceil\log_2k\rceil+\rho}=\Tilde{\gamma}_2(\epsilon,k,n)~.\]
We conclude that for the chosen value of $l$, we have
\begin{equation} \label{eq:LastLevelSize}
    |\bar{H}_{l-1}|\leq\frac{1}{2^{2l-3}}(\frac{n}{\gamma_{4/3}})^{1+\frac{1}{k}}~.
\end{equation}

Summing up, by inequalities (\ref{eq:LowLevelSize}), (\ref{eq:HighLevelSize}), and (\ref{eq:LastLevelSize}), the total size of the hopset $\bar{H}^{1/2}$ is bounded by
\begin{equation} \label{eq:HopsetSize}
    \sum_{i=0}^{l-1}|\bar{H}_i|\leq\sum_{i=0}^{h-1}\frac{1}{2^{i-1}}n^{1+\frac{1}{3k}}+\sum_{i=h}^{l-2}\frac{1}{2^{i-1}}(\frac{n}{\gamma_{4/3}})^{1+\frac{1}{k}}+\frac{1}{2^{2l-3}}(\frac{n}{\gamma_{4/3}})^{1+\frac{1}{k}}=O(n^{1+\frac{1}{k}})~.
\end{equation}
This also bounds the total size of the flags and the indices from item $5$ in the description of the oracle $D$.

Next, we estimate the expected size of the oracle $D^3$ (item $4$ in description of the oracle $D$). Note that the set of pairs $\mathcal{Q}_3$ (defined by (\ref{eq:HopsetPartition})) is partial to the union $\bigcup_{i=h}^{l-1}\bar{H}_i^{1/2}$. By inequalities (\ref{eq:HighLevelSize}) and (\ref{eq:LastLevelSize}), we have
\[|\mathcal{Q}_3|\leq\sum_{i=h}^{l-1}|H^{1/2}_i|\leq\sum_{i=h}^{l-1}\frac{1}{2^{i-1}}(\frac{n}{\gamma_{4/3}})^{1+\frac{1}{k}}=O((\frac{n}{\gamma_{4/3}})^{1+\frac{1}{k}})~.\]
Thus, the total size of $D^3$, which is the interactive distance preserver from Theorem \ref{thm:DistancePreserver1} for the set $\mathcal{Q}_3$, is
\[O(|\mathcal{Q}_3|\cdot\gamma_{4/3}+n\cdot\log k+n^{1+\frac{1}{k}})=O((\frac{n}{\gamma_{4/3}})^{1+\frac{1}{k}}\cdot\gamma_{4/3}+n\cdot\log k+n^{1+\frac{1}{k}})=O(n\cdot\log k+n^{1+\frac{1}{k}})~.\]

We finally conclude that the total size of the oracle $D$ is
\begin{eqnarray*}
O(|\mathcal{P}|\beta_l+n\cdot h+n\cdot\log k+n^{1+\frac{1}{k}})\stackrel{(\ref{eq:LDef})}{=}O\left(|\mathcal{P}|\cdot\Tilde{\gamma}_2(\epsilon,k,n)+n\cdot\left(\log k+\log\left(1+\frac{\log\frac{1}{\epsilon}}{\log^{(3)}n}\right)\right)+n^{1+\frac{1}{k}}\right)~.
\end{eqnarray*}

We also define the distance preserver $S$ to be the union of all the edges $P_{x,y}\cap G$, where $(x,y)\in\mathcal{P}$, together with $\bigcup_{i=0}^{l-1}S^1_i\cup\bigcup_{i=0}^{h-1}S^2_i\cup S^3$. Then, the output paths of $D$ are always contained in $S$, and $|S|=O(|D|)$. Hence, $(S,D)$ is an interactive $(1+\epsilon)$-distance preserver with query time $O(1)$ and size
\[O\left(|\mathcal{P}|\cdot\Tilde{\gamma}_2(\epsilon,k,n)+n\cdot\left(\log k+\log\left(1+\frac{\log\frac{1}{\epsilon}}{\log^{(3)}n}\right)\right)+n^{1+\frac{1}{k}}\right)~.\]

In addition, note that the set $S'=\bigcup_{i=0}^{l-1}S^1_i\cup\bigcup_{i=0}^{h-1}S^2_i\cup S^3$ is a $(1+\frac{\epsilon}{3})$-approximate supporting edge-set of the hopset $\bar{H}^{1/2}$. Thus, the set $\bar{H}^{1/2}$ is a $(1+\epsilon,\beta_l=\Tilde{\gamma}_2(\epsilon,k,n))$-hopset with size $O(n^{1+\frac{1}{k}})$ (see (\ref{eq:HopsetSize})) and $(1+\frac{\epsilon}{3})$-approximate support size \[O\left(n\cdot\left(\log k+\log_{4/3}\left(1+\frac{\log\frac{1}{\epsilon}}{\log^{(3)}n}\right)\right)+n^{1+\frac{1}{k}}\right)~.\]

\end{proof}

\section{Appendix D} \label{sec:AppendixD}

Lemma \ref{lemma:PartialPRDO} can be derived by a similar analysis as \cite{TZ01,WN13}. However, since we apply their construction only for $h<k$ of the levels, we bring here the full proof, for completeness.

\begin{lemma*}[Lemma \ref{lemma:PartialPRDO}]
Let $G=(V,E)$ be an undirected weighted graph with $n$ vertices, and let $1\leq h<k\leq\log_2n$ be two integer parameters. There is a set $S\subseteq V$ of size $O(n^{1-\frac{h}{k}})$ and an oracle $D$ with size $O(h\cdot n^{1+\frac{1}{k}})$, that acts as follows. Given a query $(u,v)\in V^2$, the oracle $D$ either returns a path between $u,v$ with weight at most $(2h+1)d_G(u,v)$, or returns two paths $P_{u,u'},P_{v,v'}$, from $u$ to some $u'\in S$ and from $v$ to some $v'\in S$, such that
\[w(P_{u,u'}),w(P_{v,v'})\leq h\cdot d(u,v)~.\]

The query time of the oracle $D$ is $O(\log h)$. In addition, there is a set $E'\subseteq E$ such that the output paths that $D$ returns are always contained in $E'$, and
\[|E'|=O(h\cdot n^{1+\frac{1}{k}})~.\]
\end{lemma*}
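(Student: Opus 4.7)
The plan is to instantiate the first $h+1$ levels of the Thorup-Zwick oracle and replace its linear-time query with Wulff-Nilsen's binary-search-based variant, truncated at level $h$.

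\textbf{Construction.} I would sample a hierarchy $V = A_0 \supseteq A_1 \supseteq \cdots \supseteq A_h$ by including each vertex of $A_i$ in $A_{i+1}$ independently with probability $n^{-1/k}$, and set $S := A_h$, so that $\mathbb{E}[|S|] = O(n^{1-h/k})$. For every $v \in V$ and $0 \leq i \leq h$, I would store $p_i(v)$, $d_G(v, p_i(v))$, and a ``next-hop'' pointer along a consistently chosen shortest $v$-to-$p_i(v)$ path; for every $0 \leq i \leq h-1$, I would also store the bunch $B_i(v) = \{u \in A_i \setminus A_{i+1} : d_G(v,u) < d_G(v, p_{i+1}(v))\}$ as a hash table, together with $d_G(v,u)$ and a next-hop pointer on a shortest $v$-to-$u$ path, for every $u \in B_i(v)$. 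The standard geometric-tail argument gives $\mathbb{E}|B_i(v)| = O(n^{1/k})$, so the total space is $O(h n^{1+1/k})$; letting $E'$ be the union of all edges appearing in the stored next-pointer chains, we have $|E'| = O(h n^{1+1/k})$, and every path the oracle can output is contained in $E'$.

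\textbf{Query and stretch.} I would run the usual TZ scan: maintain an ``active'' ordered pair $(a,b) \in \{(u,v),(v,u)\}$ and a candidate $w$, initialised to $(a,b) = (u,v)$, $w = u$, $i = 0$. At step $i$, test whether $w \in B_i(b)$; if so, reconstruct and return the concatenation of the $a$-to-$w$ path and the $w$-to-$b$ path via the stored next-pointers. Otherwise, increment $i$, swap $(a,b)$, and set $w := p_i(a)$. The classical TZ invariant $d_G(u,w) \leq i \cdot d_G(u,v)$ at step $i$ is proved by induction: since $p_{i-1}(a) \notin B_{i-1}(b)$ we have $d_G(b, p_i(b)) \leq d_G(b, p_{i-1}(a)) \leq d_G(b,a) + d_G(a, p_{i-1}(a))$, closing the induction. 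Cap the scan at $i = h$: if a match occurs at some $i \leq h$, the output path has weight at most $d_G(a,w) + d_G(w,b) \leq i\,d_G(u,v) + (i+1)\,d_G(u,v) \leq (2h+1)\,d_G(u,v)$; if the scan reaches $i = h$ without a match at any level $< h$, the same invariant at $i = h$ gives $d_G(u, p_h(u)),\, d_G(v, p_h(v)) \leq h \cdot d_G(u,v)$, and the oracle returns the two stored paths $P_{u, p_h(u)}$ and $P_{v, p_h(v)}$, whose endpoints lie in $S = A_h$.

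\textbf{Logarithmic query time and main obstacle.} To bring the query from $O(h)$ down to $O(\log h)$, I would port Wulff-Nilsen's augmentation: with each bunch we store an auxiliary sorted structure (for instance a sorted array indexed by the pivot sequence together with predecessor data) allowing the feasibility predicate ``does a match occur at some level $\leq i$'' to be evaluated in $O(1)$, so that the smallest matching level $i^\ast$ is located by binary search on $\{0,1,\ldots,h\}$ in $O(\log h)$ probes. WN's augmentation costs only a constant factor per bunch entry, preserving the $O(h n^{1+1/k})$ size bound. The main obstacle is handling the boundary case where no match exists in $[0, h-1]$ cleanly: since the WN feasibility predicate is monotone in $i$, I would extend the search range by a sentinel at $i = h+1$ declared ``feasible,'' so that the binary search converges either to some $i^\ast \leq h$ (triggering the direct-path output) or to the sentinel (triggering the two-pivot-paths output), giving $O(\log h)$ query time in both cases. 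Reconstruction of each returned path takes time linear in its length by chasing the stored next-pointers, and every emitted edge lies in $E'$, completing the proof.
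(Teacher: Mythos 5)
Your construction, truncation at level $h$, size accounting, and the stretch analysis of the sequential scan (including the two-pivot fallback with $d(u,p_h(u)),d(v,p_h(v))\leq h\cdot d_G(u,v)$) match the paper's Appendix D and are fine, modulo one small omission: to reconstruct a $v$--$u$ path by chasing next-hop pointers you need that every intermediate vertex $x$ on the shortest $v$--$u$ path also stores a pointer toward $u$, i.e.\ that $u$ lies in some bunch of $x$ whenever $u$ lies in some bunch of $v$ with $x$ on the path. The paper proves this closure property explicitly (it must be re-verified for the truncated hierarchy $i\leq h$); you should state and prove it rather than assume it.

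The genuine gap is in the $O(\log h)$ query time, which you delegate to an unspecified ``auxiliary sorted structure'' making the predicate ``a match occurs at some level $\leq i$'' testable in $O(1)$. This is exactly where the difficulty lies, and no such structure is known: the per-level predicate (the paper's $Q(u,v,i)$, i.e.\ $p_i(u)\in B_i(v)$ or $p_{i+1}(v)\in B_{i+1}(u)$) \emph{is} $O(1)$-testable via hashed bunches, but it is not monotone in $i$; the prefix-existential predicate you want to binary-search over \emph{is} monotone, but testing it seems to require scanning all levels up to $i$, since the bunches $B_i(v)$ and pivots $p_i(u)$ vary arbitrarily with $i$ and involve both endpoints. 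Wulff-Nilsen's technique, and the paper's adaptation of it, does not binary-search a monotone predicate at all. Instead, for each vertex $u$ one precomputes a ``max-in-range tree'' over the gaps $\Delta_u(i)=d(u,p_{i+2}(u))-d(u,p_i(u))$, and the search at a node with range $[i_1,i_2]$ probes $Q$ only at the index $j$ maximizing $\Delta_u(j)$ in the left half: if $Q(u,v,j)$ holds one recurses left, and if it fails then (by the key lemma that $Q(u,v,j)=\mathrm{FALSE}$ forces $\Delta_u(j)\leq 2d_G(u,v)$) \emph{all} gaps in the left half are small, which propagates the invariant $P(u,v,\cdot)$ (that $d(u,p_i(u))\leq i\cdot d_G(u,v)$) to the left endpoint of the right half. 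This invariant is also what certifies the $(2h+1)$ stretch at the index the search lands on; in your scheme, even if an index with a match were located by some oracle, you would still need to argue $P$ holds there, which your write-up inherits from the sequential scan but never re-establishes for a jumped-to index. As written, the $O(\log h)$ claim---the only part of the lemma beyond standard Thorup--Zwick---is not proved.
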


Before presenting the proof, we introduce some notions that will be useful in the sequel. For brevity, we will use the notation $d(x,y)$ instead of $d_G(x,y)$ for the distance between a pair of vertices $x,y$ in $G$. Given two vertices $u,v\in V$ and an index $i=0,1,...,h$, we define:
\begin{enumerate}
    \item $P(u,v,i)=$ a predicate that indicates whether $d(u,p_i(u))\leq i\cdot d(u,v)$. When considering the sequence $\sigma_u$ of distances $d(u,p_0(u)),d(u,p_1(u)),...,d(u,p_h(u))$, this predicate indicates which elements are considered \textit{small}.
    \item $Q(u,v,i)=$ a predicate that indicates whether 
    \[(p_i(u)\in B_i(v))\;\vee\;(p_{i+1}(v)\in B_{i+1}(u))\] 
    (this will be used only for $i<h$). Since our oracle will store the shortest paths between vertices and the members of their bunches, and the shortest paths between vertices and their pivots, $Q$ actually indicates whether we have enough information to output a path between $u,v$ that passes through either $p_i(u)$ or $p_{i+1}(v)$.
    \item $\Delta_u(i)=d(u,p_{i+2}(u))-d(u,p_i(u))$. This represents the difference between two adjacent elements of even indices in the distance sequence $\sigma_u$. This notation will be used only when $i+2\leq h$.
\end{enumerate}

The key observation regarding these notions is provided in the following lemma.
\begin{lemma} \label{lemma:FailedStep}
For $u,v\in V$ and $i\in[0,h-2]$, if $Q(u,v,i)=FALSE$, then $\Delta_u(i)\leq2d(u,v)$.

Therefore, if $P(u,v,i)=TRUE$ and also $Q(u,v,i)=FALSE$, then $P(u,v,i+2)=TRUE$.
\end{lemma}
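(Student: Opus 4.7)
The proof will be a direct chase through triangle inequalities after unpacking what $Q(u,v,i)=FALSE$ buys us. The plan is to assume the hypothesis and derive two separate inequalities from the two conjuncts defining $Q$, then stitch them together to bound $d(u,p_{i+2}(u))$ in terms of $d(u,p_i(u))$ and $d(u,v)$.

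First, I would unpack $p_{i+1}(v)\notin B_{i+1}(u)$. By the definition of the bunch in Section \ref{sec:HierarchyOfSets}, this means $d(u,p_{i+1}(v))\geq d(u,p_{i+2}(u))$, i.e.\
\[
d(u,p_{i+2}(u))\;\leq\;d(u,p_{i+1}(v))\,.
\]
Second, I would unpack $p_i(u)\notin B_i(v)$, which (noting $p_i(u)\in A_i$) similarly gives $d(v,p_i(u))\geq d(v,p_{i+1}(v))$, i.e.\
\[
d(v,p_{i+1}(v))\;\leq\;d(v,p_i(u))\,.
\]
The remaining work is two applications of the triangle inequality: bounding $d(u,p_{i+1}(v))\leq d(u,v)+d(v,p_{i+1}(v))$, and bounding $d(v,p_i(u))\leq d(v,u)+d(u,p_i(u))$. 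Chaining everything yields
\[
d(u,p_{i+2}(u))\;\leq\;d(u,v)+d(v,p_{i+1}(v))\;\leq\;2\,d(u,v)+d(u,p_i(u))\,,
\]
which rearranges exactly to $\Delta_u(i)\leq 2\,d(u,v)$.

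For the second assertion, I would simply combine this bound with $P(u,v,i)=TRUE$, i.e.\ $d(u,p_i(u))\leq i\cdot d(u,v)$, to obtain
\[
d(u,p_{i+2}(u))\;\leq\;d(u,p_i(u))+2\,d(u,v)\;\leq\;(i+2)\,d(u,v)\,,
\]
which is exactly $P(u,v,i+2)=TRUE$.

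There is no real obstacle here: the lemma is a clean definitional argument, and the only point requiring a bit of care is correctly extracting the ``$\geq$'' inequalities from the negations of the bunch-membership predicates (making sure we are using the full bunch $B_i$, not the half-bunch $B_i^{1/2}$, in line with the definition of $Q$). Once those two inequalities are in hand, the conclusion is just triangle inequality and arithmetic.
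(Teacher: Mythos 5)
Your proof is correct and is essentially the same argument as the paper's: negate both disjuncts of $Q(u,v,i)$ to get $d(u,p_{i+2}(u))\leq d(u,p_{i+1}(v))$ and $d(v,p_{i+1}(v))\leq d(v,p_i(u))$, chain them with two triangle inequalities to bound $\Delta_u(i)\leq 2d(u,v)$, and then add $P(u,v,i)$ to conclude $P(u,v,i+2)$. No gaps.
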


\begin{proof}
By the definitions of bunches, assuming $Q(u,v,i)=FALSE$, we know that
\[d(v,p_{i+1}(v))\leq d(v,p_i(u))\leq d(v,u)+d(u,p_i(u))~,\]
and
\begin{eqnarray*}
d(u,p_{i+2}(u))\leq d(u,p_{i+1}(v))
\leq d(u,v)+d(v,p_{i+1}(v))~.
\end{eqnarray*}
Thus,
\begin{eqnarray*}
d(u,p_{i+2}(u))\leq d(u,v)+d(v,p_{i+1}(v))
\leq d(u,v)+d(v,u)+d(u,p_i(u))~,
\end{eqnarray*}
that is, $\Delta_u(i)=d(u,p_{i+2}(u))-d(u,p_i(u))\leq2d(u,v)$.

If we also have $P(u,v,i)=TRUE$, then we know that $d(u,p_i(u))\leq i\cdot d(u,v)$, and so we have
\begin{eqnarray*}
d(u,p_{i+2}(u))&\leq&2d(u,v)+d(u,p_i(u))
\leq2d(u,v)+i\cdot d(u,v)\\
&=&(i+2)\cdot d(u,v)~,
\end{eqnarray*}
i.e., $P(u,v,i+2)=TRUE$.
\end{proof}

In our oracle, we store the shortest paths from every $x\in V$ to every $y\in B_i(x)$, and from every $x\in V$ to each of its pivots $p_i(x)$, for $i\in[0,h]$. In the query algorithm, given a query $(u,v)\in V^2$, we would like to find an index $i\in[0,h)$ such that $Q(u,v,i)=P(u,v,i)=TRUE$. For such index $i$, we either return the shortest path from $u$ to $p_i(u)$, concatenated with the shortest path from $p_i(u)$ to $v$ (this is in the case where $p_i(u)\in B_i(v)$), or return the shortest path from $v$ to $p_{i+1}(v)$, concatenated with the shortest path from $p_{i+1}(v)$ to $u$. In the first case, the weight of the returned path is
\begin{equation} \label{eq:ConcatBound1}
\begin{split}
d(u,p_i(u))+d(p_i(u),v)&\leq2d(u,p_i(u))+d(u,v)
\leq(2i+1)d(u,v)\\
&\leq(2h+1)d(u,v)~.
\end{split}
\end{equation}
In the second case, since $p_i(u)\notin B_i(v)$, we have $d(v,p_{i+1}(v))\leq d(v,p_i(u))$, and thus, 
\begin{eqnarray*}
d(v,p_{i+1}(v))+d(p_{i+1}(v),u)
&\leq&2d(v,p_{i+1}(v))+d(u,v)\\
&\leq&2d(v,p_i(u))+d(u,v)\\
&\leq&2(d(v,u)+d(u,p_i(u)))+d(u,v)\\
&\leq&2d(u,p_i(u))+3d(u,v)\\
&\leq&(2i+3)d(u,v)\\
&\leq&(2(h-1)+3)d(u,v)=(2h+1)d(u,v)~.
\end{eqnarray*}
Hence, the weight of the returned path is
\begin{equation} \label{eq:ConcatBound2}
    d(v,p_{i+1}(v))+d(p_{i+1}(v),u)\leq(2h+1)d(u,v)~.
\end{equation}
In both cases we used the fact that $P(u,v,i)=TRUE$, and therefore $d(u,p_i(u))\leq i\cdot d(u,v)$.

For handling the case that there is no index $i$ such that $Q(u,v,i)=P(u,v,i)=TRUE$, we have the following lemma. From now on, we assume for simplicity that $h$ is even. The analysis of the case that $h$ is odd is similar.
\begin{lemma} \label{lemma:DesiredIndex}
Let $(u,v)\in V^2$, and let $[i_1,i_2]\subseteq[0,h-2]$ such that $i_1,i_2$ are even, and $P(u,v,i_1)=TRUE$. Then, either there is an even index $i\in[i_1,i_2]$ that satisfies $Q(u,v,i)=P(u,v,i)=TRUE$, or $P(u,v,i_2+2)=TRUE$.
\end{lemma}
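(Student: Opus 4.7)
The plan is to prove this by a simple inductive walk up the even indices in $[i_1, i_2]$, using Lemma \ref{lemma:FailedStep} as the single engine that powers the induction step. Specifically, I would assume the contrapositive: no even $i \in [i_1, i_2]$ satisfies $Q(u,v,i) = P(u,v,i) = TRUE$, and from this derive that $P(u,v, i_2+2) = TRUE$.

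The induction would proceed on the even indices $i_1, i_1+2, i_1+4, \ldots, i_2$. The base case is given: $P(u,v,i_1) = TRUE$ by hypothesis. For the inductive step, suppose $P(u,v,i) = TRUE$ for some even $i \in [i_1, i_2]$. By the assumption, the conjunction $Q(u,v,i) \wedge P(u,v,i)$ fails, and since $P(u,v,i)$ holds, this forces $Q(u,v,i) = FALSE$. Now the second half of Lemma \ref{lemma:FailedStep} applies (noting $i \leq i_2 \leq h-2$, so $i+2 \leq h$ is in range and $\Delta_u(i)$ is defined): from $P(u,v,i) = TRUE$ and $Q(u,v,i) = FALSE$ we conclude $P(u,v,i+2) = TRUE$. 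Continuing the induction up through $i = i_2$ yields $P(u,v,i_2+2) = TRUE$, as required.

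There is essentially no obstacle here; the only thing to check carefully is that all the index bookkeeping is consistent, i.e., that $i+2 \leq h$ whenever we invoke Lemma \ref{lemma:FailedStep} (guaranteed by $i_2 \leq h-2$), and that the inductive jumps of size $2$ land us exactly on $i_2+2$ after finitely many steps (guaranteed because $i_1, i_2$ are both even and $i_1 \leq i_2$). The lemma is really just packaging the observation that the ``small pivot distance'' property $P$ propagates forward by $2$ at every even index where we fail to close out the query, which is exactly the content of Lemma \ref{lemma:FailedStep}.
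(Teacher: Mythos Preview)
Your proof is correct and takes essentially the same approach as the paper: both use Lemma~\ref{lemma:FailedStep} to propagate $P(u,v,\cdot)$ forward along the even indices whenever $Q(u,v,\cdot)$ fails. The paper packages this as a two-case argument (no even $i$ has $Q$ true, versus the smallest such $i$) with a telescoping sum of the $\Delta_u(i)$, whereas your direct induction via the second part of Lemma~\ref{lemma:FailedStep} is a slightly cleaner packaging of the same idea.
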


\begin{proof}

First, if none of the even indices $i\in[i_1,i_2]$ satisfy $Q(u,v,i)=TRUE$, then by Lemma \ref{lemma:FailedStep} we know that $\Delta_u(i)\leq2d(u,v)$ for every even $i\in[i_1,i_2]$. Therefore, when summing $\Delta_u(i)$, for all even indices $i\in[i_1,i_2]$, we get
\begin{eqnarray*}
d(u,p_{i_2+2}(u))-d(u,p_{i_1}(u))
&=&\sum_{i}\Delta_u(i)
\leq\frac{i_2-i_1+2}{2}\cdot2d(u,v)\\
&=&(i_2-i_1+2)d(u,v)~.
\end{eqnarray*}
Since we know that $P(u,v,i_1)=TRUE$, we conclude that
\begin{eqnarray*}
d(u,p_{i_2+2}(u))&\leq&d(u,p_{i_1}(u))+(i_2-i_1+2)d(u,v)\\
&\leq&i_1\cdot d(u,v)+(i_2-i_1+2)d(u,v)\\
&=&(i_2+2)d(u,v)~.
\end{eqnarray*}
That is, $P(u,v,i_2+2)=TRUE$.

Suppose now that there is an even index $i\in[i_1,i_2]$ such that $Q(u,v,i)=TRUE$, and let $i$ be the smallest such index. We now know that in the interval $[i_1,i-2]$, we have $P(u,v,i_1)=TRUE$, and there are no indices $j\in[i_1,i-2]$ with $Q(u,v,j)=TRUE$. By the first case that we proved above, it follows that $P(u,v,i)=TRUE$. Recall that by the choice of $i$, we also know that $Q(u,v,i)=TRUE$. Therefore $i$ is the desired index.

\end{proof}

Suppose that there is no even index $i\in[0,h-2]$ such that $Q(u,v,i)=P(u,v,i)=TRUE$, and there is no even index $i\in[0,h-2]$ such that $Q(v,u,i)=P(v,u,i)=TRUE$ (with the order of $u,v$ switched). In that case, we can simply return the two shortest paths from $u$ to $p_h(u)$ and from $v$ to $p_h(v)$, because they both have a weight of at most $h\cdot d(u,v)$, as desired. This is true by Lemma \ref{lemma:DesiredIndex}, using the the fact that for the index $0$, $P(u,v,0)=P(v,u,0)=TRUE$ always holds.

To find an even index $i\in[0,h-2]$ such that $Q(u,v,i)=P(u,v,i)=TRUE$, or determine that there is no such index, we perform a sort of \textit{binary search}. The details of this binary search are described in the proof of the main lemma below. For the sake of this search, we use a designated data structure for every $u\in V$. We now describe this data structure.

\begin{definition} \label{def:MaxInRangeTree}
Given a vertex $u\in V$, we define the \textbf{max-in-range tree} of $u$ to be the following binary tree. Each node in the binary tree represents a sub-interval $[i_1,i_2]\subseteq[0,h]$, such that $i_1,i_2$ are even. The root represents the full interval $[0,h]$ and the leaves are intervals of the form $[i,i]$, for some $i\in[0,h]$. The left and right children of a node $[i_1,i_2]$ (where $i_1<i_2$) are $[i_1,j]$ and $[mid,i_2]$ respectively, where $mid$ is the even index among $\frac{i_1+i_2}{2}-1,\frac{i_1+i_2}{2}$, and $j$ is the even index in $[i_1,mid-2]$ that maximizes $\Delta_u(j)$.
\end{definition}

Note that $mid$ splits the interval $[i_1,i_2]$ into two almost equal parts. Thus the length of an interval in some node is at most half of the interval in the parent of this node. Hence, the depth of the max-in-range tree of every $u\in V$ is at most $\log h$, and it contains $O(h)$ nodes. Note also that this tree can be pre-computed, when constructing the oracle. The total size of these trees is $O(n\cdot h)$, and this term is dominated by the oracle's size $O(h\cdot n^{1+\frac{1}{k}})$.

\begin{proof}[Proof of Lemma \ref{lemma:PartialPRDO}]

Given $k$ and the $n$-vertex graph $G=(V,E)$, we start by creating a hierarchy of sets with $q_i=n^{-\frac{1}{k}}$ (see Section \ref{sec:HierarchyOfSets}). By the definition of the sets $A_i$, in expectation,
\[|A_h|=n\prod_{j=0}^{h-1}q_j=n\cdot n^{-\frac{h}{k}}=n^{1-\frac{h}{k}}~.\]
We choose $A_h$ to be the desired set $S\subseteq V$.

For every vertex $u\in V$ we define the \textit{cluster} of $u$ as
\begin{equation} \label{eq:ClusterDef}
    C(u)=\{v\in V\;|\;u\in\bigcup_{i=0}^{h}B_i(v)\}~.
\end{equation}
Given $v\in C(u)$, denote by $P_{v,u}$ the shortest path from $v$ to $u$ in $G$. Let $q_u(v)$ be a pointer from $v$ to the next vertex after $v$ in $P_{v,u}$.

For every $i=0,1,...,h$, let $(S^1_i,D^1_i)$ be the interactive $1$-distance preserver from Lemma \ref{lemma:PathsToPivots}, for the set $\{(v,p_i(v))\;|\;v\in V\}$. 

Our oracle $D$ stores the following information.
\begin{enumerate}
    \item For every $i=0,1,...,h$, the oracle $D$ stores the oracle $D^1_i$.
    \item For every $u\in V$ and $v\in C(u)$, the oracle $D$ stores the pointer $q_u(v)$.
    \item For every $i=0,1,...,h$ and $v\in V$, the oracle $D$ stores the pivot $p_i(v)$ and the bunch $B_i(v)$.
    \item For every $u\in V$, the oracle $D$ stores the max-in-range tree $T(u)$ of $u$ (see Definition \ref{def:MaxInRangeTree}).
\end{enumerate}

Given a query $(u,v)\in V^2$, we perform a binary search to find an index $i$ such that $P(u,v,i)=TRUE$ and either $Q(u,v,i)=TRUE$ or $i=h$ (by Lemma \ref{lemma:DesiredIndex}, applied to $i_1=0,i_2=h-2$, such an index exists). This binary search is performed as follows. Suppose that the current search range is $[i_1,i_2]$, and that it appears in the max-in-range tree $T(u)$ (at the beginning, $[i_1,i_2]=[0,h]$). Let $[i_1,j]$ and $[mid,i_2]$ be the left and right children of $[i_1,i_2]$, respectively (see Definition \ref{def:MaxInRangeTree}). If $Q(u,v,j)=TRUE$, continue to search in the range $[i_1,j]$. Otherwise, if $Q(u,v,j)=FALSE$, continue to $[mid,i_2]$. 


Finally, when $i_1=i_2=i$, we check whether $P(u,v,i)=Q(u,v,i)=TRUE$. If this condition does not hold, we say that this binary search failed, and we perform a symmetric binary search for the pair $(v,u)$ - i.e., where the roles of $u$ and $v$ are switched. If this binary search also fails, we return the two shortest paths from $u$ to $p_h(u)$ and from $v$ to $p_h(v)$, using the oracle $D^1_h$. Note that indeed $p_h(u),p_h(v)\in A_h=S$. 

Suppose, however, that one of the two binary searches succeeded. Without loss of generality, assume that $P(u,v,i)=Q(u,v,i)=TRUE$ for the last index $i$ that was considered in the binary search of $(u,v)$. Then in particular 
\[p_i(u)\in B_i(v)\;\vee\;p_{i+1}(v)\in B_{i+1}(u)~.\] 
We consider two cases.
\begin{itemize}
    \item If $p_i(u)\in B_i(v)$, we use $D^1_h$ to find the shortest path from $u$ to $p_i(u)$, and we use the pointers $q_{p_i(u)}(\cdot)$, starting at $v$, to find the shortest path $P_{p_i(u),v}$. The output path is the concatenation of these two paths.
    \item Otherwise, if $p_i(u)\notin B_i(v)$ and $p_{i+1}(v)\in B_{i+1}(u)$, we symmetrically output a path that consists of the concatenation between the shortest paths from $v$ to $p_{i+1}(v)$ and from $p_{i+1}(v)$ to $u$. The latter is constructed via the pointers $q_{p_{i+1}(v)}(\cdot)$, starting at $u$.
\end{itemize}
If, instead, the binary search of $(v,u)$ is the one that succeeds, we symmetrically return a path between $v$ and $u$, that passes through either $p_i(v)$ or $p_{i+1}(u)$.

For the query algorithm to be well-defined, first notice that if $p_i(u)\in B_i(v)$, then by definition $v\in C(p_i(u))$. Similarly, if $p_{i+1}(v)\in B_{i+1}(u)$, we have $u\in C(p_{i+1}(v))$. We claim that for every $a,b\in V$ such that $b\in C(a)$, we can indeed use only the pointers $q_a(\cdot)$, starting at $b$, to construct the shortest path from $a$ to $b$. For this purpose, it suffices to prove that $x\in C(a)$ for every $x$ on this shortest path, and thus $q_a(x)$ is the next vertex after $x$ on this shortest path from $b$ to $a$. We remark that this property is well-known for the standard definition of clusters (see \cite{TZ01}), i.e., for the case that $h=k-1$. We show that it also holds for clusters defined by (\ref{eq:ClusterDef}), for any $0\leq h\leq k-1$.

\begin{claim}
    Let $a,b\in V$ such that $b\in C(a)$. Then, for every vertex $x$ on the $a-b$ shortest path, we have $x\in C(a)$.
\end{claim}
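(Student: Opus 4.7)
The plan is to show that the very same index $i$ that witnesses $b\in C(a)$ also witnesses $x\in C(a)$ for every vertex $x$ on the chosen shortest $a$--$b$ path. Since $b\in C(a)$, by definition of $C(a)$ (see Equation (\ref{eq:ClusterDef})) there exists an index $i\in[0,h]$ such that $a\in B_i(b)$. In particular, $a\in A_i$ and
\[d(b,a)<d(b,p_{i+1}(b))~.\]
I will fix this $i$ for the rest of the argument; the goal is to prove that $a\in B_i(x)$.

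Since $a\in A_i$ already holds, the only thing to verify is the strict bunch-membership inequality
\[d(x,a)<d(x,p_{i+1}(x))~.\]
I will prove this by contradiction. Suppose $d(x,p_{i+1}(x))\leq d(x,a)$. Since $x$ lies on a shortest $a$--$b$ path, the standard equality $d(x,a)+d(x,b)=d(a,b)$ holds. Combining it with the triangle inequality, applied to the candidate vertex $p_{i+1}(x)\in A_{i+1}$, yields
\[d(b,p_{i+1}(x))\leq d(b,x)+d(x,p_{i+1}(x))\leq d(b,x)+d(x,a)=d(a,b)<d(b,p_{i+1}(b))~.\]
This contradicts the minimality of $p_{i+1}(b)$ as the closest $A_{i+1}$-vertex to $b$ (assuming the consistent tie-breaking rule fixed in Section \ref{sec:GeneralPreliminaries}). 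Hence $d(x,a)<d(x,p_{i+1}(x))$, so $a\in B_i(x)$, and therefore $x\in C(a)$.

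Nothing here is really delicate: the only ingredients are the additive property $d(a,x)+d(x,b)=d(a,b)$ on a shortest path, the triangle inequality, and the definition of $p_{i+1}(\cdot)$. The one point that requires a little care is that the bunch membership is defined by a strict inequality, which is why I reasoned by contradiction and obtained a strict inequality from the strict inequality $d(b,a)<d(b,p_{i+1}(b))$. The main (and very mild) obstacle is ensuring the tie-breaking rule is consistent across all vertices so that the shortest $a$--$b$ path remains well-defined and the pivot selection is unambiguous; this has already been set up in Section \ref{sec:GeneralPreliminaries}, so the argument goes through without further complications.
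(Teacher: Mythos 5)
Your proof is correct and uses essentially the same argument as the paper: both rest on the shortest-path identity $d(x,a)=d(a,b)-d(b,x)$, the triangle inequality via the candidate $p_{i+1}(x)\in A_{i+1}$, and the minimality of $p_{i+1}(b)$, with your contradiction merely being the contrapositive rearrangement of the paper's direct chain of inequalities. The remark about tie-breaking is unnecessary, since your derived inequality $d(b,p_{i+1}(x))<d(b,p_{i+1}(b))$ is strict and contradicts minimality outright.
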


\begin{proof}

Since $b\in C(a)$, we know that there is some $i\in[0,h]$ such that $a\in B_i(b)$, i.e., $a\in A_i$ and $d(b,a)<d(b,p_{i+1}(b))$. By the fact that $x$ is on the shortest path between $a,b$, we can write
\begin{eqnarray*}
d(x,a)&=&d(b,a)-d(b,x)\\
&<&d(b,p_{i+1}(b))-d(b,x)\\
&\leq&d(b,p_{i+1}(x))-d(b,x)\\
&\leq&d(b,x)+d(x,p_{i+1}(x))-d(b,x)\\
&=&d(x,p_{i+1}(x))~.
\end{eqnarray*}
Hence, by definition, $a\in B_i(x)$, for the same index $i\in[0,h]$. Therefore $x\in C(a)$, as desired.
    
\end{proof}

For the correctness of the binary search of $(u,v)$, note the following invariant: 
\begin{claim}
    If the current search range is $[i_1,i_2]$, then $P(u,v,i_1)=TRUE$, and either $Q(u,v,i_2)=TRUE$ or $i_2=h$.
\end{claim}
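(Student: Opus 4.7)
The plan is to prove the invariant by induction on the depth of the current node $[i_1,i_2]$ in the max-in-range tree $T(u)$. For the base case, the search starts at the root $[0,h]$: here $P(u,v,0)$ holds trivially because $d(u,p_0(u))=d(u,u)=0\leq0\cdot d(u,v)$, and the second clause holds because $i_2=h$. So the invariant is satisfied at the root.

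For the inductive step, assume the invariant at some non-leaf node $[i_1,i_2]$, with children $[i_1,j]$ and $[mid,i_2]$ as in Definition \ref{def:MaxInRangeTree}. The algorithm inspects $Q(u,v,j)$ and recurses into one of the two children. If $Q(u,v,j)=TRUE$, we descend into $[i_1,j]$: the left endpoint is unchanged, so $P(u,v,i_1)=TRUE$ still holds by hypothesis, and the new right endpoint $j$ satisfies $Q(u,v,j)=TRUE$. So the invariant is preserved in this case, essentially by bookkeeping.

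The interesting case is $Q(u,v,j)=FALSE$, where we descend into $[mid,i_2]$. The right endpoint is unchanged, so the second clause is inherited from the hypothesis. What needs proof is that $P(u,v,mid)=TRUE$, i.e., that we can propagate the $P$ predicate from $i_1$ to $mid$. Here is where the max-in-range structure of $T(u)$ is crucial: by construction, $j\in[i_1,mid-2]$ is the even index maximizing $\Delta_u(\cdot)$ over that range. Since $Q(u,v,j)=FALSE$, Lemma \ref{lemma:FailedStep} gives $\Delta_u(j)\leq2d(u,v)$, and by the argmax property, $\Delta_u(i)\leq\Delta_u(j)\leq2d(u,v)$ for every even $i\in[i_1,mid-2]$ (even though we have \emph{not} verified $Q(u,v,i)=FALSE$ for these $i$'s individually). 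Telescoping,
\[d(u,p_{mid}(u))-d(u,p_{i_1}(u))=\sum_{\substack{i\text{ even}\\ i_1\leq i\leq mid-2}}\Delta_u(i)\leq\frac{mid-i_1}{2}\cdot2d(u,v)=(mid-i_1)d(u,v)~.\]
Combining with the inductive hypothesis $d(u,p_{i_1}(u))\leq i_1\cdot d(u,v)$ yields $d(u,p_{mid}(u))\leq mid\cdot d(u,v)$, i.e., $P(u,v,mid)=TRUE$, completing the inductive step.

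The main conceptual obstacle is the second sub-case above: the naive way to propagate $P$ would require knowing that $Q(u,v,i)=FALSE$ for all even $i$ in the left half, which is exactly what the binary search is \emph{not} checking. The max-in-range tree is carefully designed so that checking a single well-chosen index $j$ (the argmax of $\Delta_u$ in the left half) is enough to conclude, via the argmax property together with Lemma \ref{lemma:FailedStep}, that \emph{every} $\Delta_u(i)$ in that half is small, which is what lets the telescoping argument go through. Once this invariant is established, its application at a leaf $i_1=i_2=i$ immediately gives either $P(u,v,i)\wedge Q(u,v,i)$ (a usable index) or $P(u,v,h)$ (the fallback), matching the two branches of the query algorithm.
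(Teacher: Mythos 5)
Your proof is correct and follows essentially the same route as the paper: induction down the search path, with the root case being trivial, the left-child case being bookkeeping, and the right-child case using the argmax property of $j$ in $T(u)$ together with Lemma \ref{lemma:FailedStep} to bound every $\Delta_u(i)$ in the left half by $2d(u,v)$ and telescope to obtain $P(u,v,mid)$. No gaps; the argument matches the paper's proof of this claim.
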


\begin{proof}

This invariant trivially holds for the first search range $[0,h]$, since
\[d(u,p_0(u))=d(u,u)=0\leq0\cdot d(u,v)~.\]
For smaller search ranges, consider two cases. If the current search range is $[i_1,j]$, which is the left child of $[i_1,i_2]$ (that satisfies the invariant), then we already have $P(u,v,i_1)=TRUE$. Recall also that the reason of searching in the left child $[i_1,j]$ is that $Q(u,v,j)=TRUE$. Hence, in particular, the invariant indeed holds.

If the current search range is $[mid,i_2]$, which is the right child of $[i_1,i_2]$ (that satisfies the invariant), then it means that for the even index $j\in[i_1,mid-2]$ that maximizes $\Delta_u(j)$, we have $Q(u,v,j)=FALSE$. But then, by Lemma \ref{lemma:FailedStep}, we know that $\Delta_u(j)\leq2d(u,v)$, and therefore $\Delta_u(i)\leq2d(u,v)$ for \textbf{all} even $i\in[i_1,mid-2]$. Hence, we get
\begin{eqnarray*}
d(u,p_{mid}(u))&=&d(u,p_{i_1}(u))+\sum_{i}\Delta_u(i)\\
&\leq&d(u,p_{i_1}(u))+\sum_{i}2d(u,v)\\
&\leq&d(u,p_{i_1}(u))+\frac{mid-i_1}{2}\cdot2d(u,v)\\
&\leq&d(u,p_{i_1}(u))+(mid-i_1)d(u,v)\\
&\leq&i_1\cdot d(u,v)+(mid-i_1)d(u,v)\\
&=&mid\cdot d(u,v)~.
\end{eqnarray*}
The sum goes over only even indices in $[i_1,mid-2]$, and we used the fact that $P(u,v,i_1)=TRUE$. We conclude that here also $P(u,v,mid)=TRUE$, and also $Q(u,v,i_2)=TRUE$ or $i_2=h$, since this is a part of the assumption on $[i_1,i_2]$.
    
\end{proof}

As a result, at the end of the algorithm, when we get to a search range of the form $[i,i]$, we have $P(u,v,i)=TRUE$, and either $Q(u,v,i)=TRUE$ or $i=h$. If $Q(u,v,i)=TRUE$, we return a $(2h+1)$-stretch path between $u,v$ (See Inequalities (\ref{eq:ConcatBound1}) and (\ref{eq:ConcatBound2})). Otherwise, we conclude that $P(u,v,h)=TRUE$, and thus the shortest path from $u$ to $p_h(u)$ is of weight at most $h\cdot d(u,v)$. Symmetrically, if the second binary search succeeds, then we return a $(2h+1)$-stretch path between $u,v$, and otherwise we have $d(v,p_h(v))\leq h\cdot d(u,v)$. If both binary searches fail, it implies that the two paths that we return are of weight at most $h\cdot d(u,v)$. In conclusion, our oracle either returns a $(2h+1)$-stretch path between $u,v$, or two paths of weight at most $h\cdot d(u,v)$: from $u$ to $u'=p_h(u)\in A_h=S$, and from $v$ to $v'=p_h(v)\in S$.

Next, we analyse the query time and the size of our oracle $D$. Its query time consists of performing the two binary searches, then applying $D^1_i$ once or twice for some $i\in[0,h]$, and also using the pointers $q_u(\cdot)$. The binary searches require $O(\log h)$ time (this is the depth of the max-in-range trees $T(u),T(v)$), while querying $D^1_i$ and using the pointers $q_u(\cdot)$ is done within a linear time in the length of the resulting paths. Hence, we get a total query time of $O(\log h)$.

Regarding the size of $D$, notice that each oracle $D^1_i$, for every $i=0,1,...,h$, is of size $O(n)$. Therefore, storing them requires $O(h\cdot n)$ space. The collection of pointers $q_u(v)$, for every $u,v\in V$ such that $v\in C(u)$, is of expected size
\begin{eqnarray*}
\sum_{u\in V}\sum_{v\in C(u)}1&=&\sum_{v\in V}\sum_{u|v\in C(u)}1
=\sum_{v\in V}\sum_{u\in\bigcup B_i(v)}1\\
&\leq&\sum_{v\in V}\sum_{i=0}^h|B_i(v)|
=\sum_{i=0}^h\sum_{v\in V}|B_i(v)|\\
&\leq&\sum_{i=0}^h\frac{n}{q_i}
=\sum_{i=0}^hn^{1+\frac{1}{k}}
=(h+1)n^{1+\frac{1}{k}}~.
\end{eqnarray*}
Here we used the definition of $C(u)$ and a bound on $\sum_{v\in V}|B_i(v)|$ which is achieved similarly to the proof of Lemma \ref{lemma:BasicSizes} in \nameref{sec:AppendixB}. Thus, the required storage for the pointers $q_u(v)$, for every $u,v\in V$ such that $v\in C(u)$, is
\begin{equation} \label{eq:PointersSize}
    O(hn^{1+\frac{1}{k}})~.
\end{equation}

Note that storing $p_i(v),B_i(v)$ for every $v\in V$ and $i=0,1,...,h$ is exactly like storing $\bar{H}_i$ for every $i=0,1,...,h$. By Lemma \ref{lemma:BasicSizes}, the expected total sizes of these sets is 
\[\sum_{i=0}^h\frac{n}{q_i}\prod_{j=0}^{i-1}q_j=\sum_{i=0}^hn^{1-\frac{i-1}{k}}\leq(h+1)n^{1+\frac{1}{k}}~.\]

Lastly, we note that for every $u\in V$, the max-in-range tree $T(u)$ has $O(h)$ nodes, each of which stores two indices. Hence, the space needed to store $T(u)$ for every $u\in V$ is $O(h\cdot n)$. We therefore conclude that the total size of $D$ is $O(h\cdot n^{1+\frac{1}{k}})$.


Finally, notice that the output paths by the oracle $D$ are contained in the union $Z_1\cup Z_2$, where
\[Z_1=\bigcup_{i=0}^hS^1_i\text{ and }Z_2=\bigcup_{u\in V}\bigcup_{v\in C(u)}P_{u,v}~,\]
and $P_{u,v}$ is the shortest path between $u$ and $v$. By Lemma \ref{lemma:PathsToPivots}, we know that each $S^1_i$ is of size $O(n)$. Thus, $|Z_1|=O(h\cdot n)$. Regarding $Z_2$, notice that each edge of this set is of the from $(b,q_a(b))$ for some $a,b\in V$ such that $b\in C(a)$. Thus, the total size of this set is bounded by the number of the pointers $q_u(\cdot)$. By (\ref{eq:PointersSize}), this size is $O(h\cdot n^{1+\frac{1}{k}})$. We conclude that there is a set $E'\subseteq E$ such that all of the returned paths by the oracle $D$ are in $E'$, and
\[|E'|=O(h\cdot n^{1+\frac{1}{k}})~.\]

\end{proof}

\section{Appendix E} \label{sec:AppendixE}

In this appendix we prove Theorem \ref{thm:UseStretchFriendly}.

\begin{theorem*}[Theorem \ref{thm:UseStretchFriendly}]
Suppose that any $n$-vertex graph admits an interactive $\alpha$-spanner with query time $q$ and size $h(n)$. Then, given a positive number $t\geq1$, every $n$-vertex graph $G$ also admits an interactive $O(\alpha\cdot t)$-spanner with query time $O(q)$ and size $O(h(\frac{n}{t})+n)$.
\end{theorem*}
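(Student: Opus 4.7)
The plan is to compress $G$ into a smaller ``cluster graph'' via the stretch-friendly partition machinery of Bezdrighin et al.~\cite{BEGGHIV22}, apply the hypothesized interactive $\alpha$-spanner construction on this cluster graph, and then lift the returned paths back to $G$ using the cluster spanning trees. First, I would invoke \cite{BEGGHIV22} to compute a stretch-friendly $O(t)$-partition $V=\bigcup_{i=1}^{q} V_i$ with $q\leq n/t$ clusters, each equipped with a rooted spanning tree $T_i$ of $G[V_i]$ of depth $O(t)$ rooted at some $r_i\in V_i$. Then I would define a cluster graph $G^*$ on the roots $R=\{r_1,\ldots,r_q\}$: whenever some edge $(x,y)\in E$ has $x\in V_a$, $y\in V_b$ with $a\neq b$, I add an edge $(r_a,r_b)$ to $G^*$ of weight
\[
w^*(r_a,r_b)=\min_{\substack{(x,y)\in E\\ x\in V_a,\; y\in V_b}}\bigl(d_{T_a}(r_a,x)+w(x,y)+d_{T_b}(y,r_b)\bigr),
\]
storing a minimizing $(x,y)$ for later lifting.

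The interactive $\alpha$-spanner hypothesis applied to $G^*$ (which has at most $n/t$ vertices) yields a pair $(S^*,D^*)$ of size $h(n/t)$ and query time $q$. I would then set $S=\bigcup_i E(T_i)\cup S^{**}$, where $S^{**}\subseteq E$ collects the stored minimizing edge for every $(r_a,r_b)\in S^*$. The oracle $D$ stores cluster membership, parent pointers with $O(1)$-time LCA data in each $T_i$, the oracle $D^*$, and the $S^*\!\mapsto\!S^{**}$ correspondence. On a query $(u,v)$ with $u\in V_i$, $v\in V_j$, $D$ returns the direct $T_i$-path between $u$ and $v$ if $i=j$; otherwise it concatenates the $T_i$-path from $u$ to $r_i$, the lifting of $D^*(r_i,r_j)$ (each edge $(r_a,r_b)$ on it is expanded into $r_a\leadsto x\to y\leadsto r_b$ using $T_a$, $T_b$ and the associated original edge $(x,y)$), and the $T_j$-path from $r_j$ to $v$.

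For the stretch analysis, let $P$ be a shortest $u$-$v$ path in $G$. Property~3 of stretch-friendly partitions gives $d_{T_c}(x,y)\leq O(t)\cdot w(x,y)$ for every $(x,y)\in E$ lying inside a single cluster $V_c$, and property~2 gives $d_{T_c}(x,r_c)\leq O(t)\cdot w(x,y)$ for every boundary edge $(x,y)$ with $x\in V_c$, $y\notin V_c$. In the case $i\neq j$, let $V_{c_0}=V_i, V_{c_1},\ldots,V_{c_m}=V_j$ be the clusters visited by $P$ in order and let $e_s$ be the inter-cluster edge of $P$ joining $V_{c_{s-1}}$ to $V_{c_s}$; the definition of $w^*$ combined with these properties gives $w^*(r_{c_{s-1}},r_{c_s})\leq O(t)\cdot w(e_s)$. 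Summing, $d_{G^*}(r_i,r_j)\leq O(t)\cdot w(P)$, so the lifted $D^*$-walk from $r_i$ to $r_j$ has weight at most $\alpha\cdot O(t)\cdot w(P)$. Applying property~2 at the first edge of $P$ exiting $V_i$ and property~3 to the in-cluster sub-path of $P$ preceding it, one gets $d_{T_i}(u,r_i)\leq O(t)\cdot w(P)$, and symmetrically $d_{T_j}(r_j,v)\leq O(t)\cdot w(P)$. Adding these three contributions yields total stretch $O(\alpha\cdot t)$.

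The main technical obstacle is the same-cluster case $i=j$, where $P$ may leave and re-enter $V_i$ many times, so property~3 cannot be applied to $P$ directly. I would handle it by letting $(x,y)$ be the first edge along $P$ leaving $V_i$ and $(x',y')$ the last edge along $P$ re-entering $V_i$; property~3 applied to the in-cluster sub-paths $P[u,x]$ and $P[y',v]$, together with property~2 at the boundary edges $(x,y)$ and $(x',y')$, yields $d_{T_i}(u,r_i),\,d_{T_i}(r_i,v)\leq O(t)\cdot w(P)$, so the tree triangle inequality gives $d_{T_i}(u,v)\leq O(t)\cdot d_G(u,v)$ (and if $P$ stays inside $V_i$, summing property~3 over the edges of $P$ already suffices). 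Finally, $|S|\leq\sum_i|E(T_i)|+|S^{**}|\leq n+h(n/t)$; the oracle $D$ uses $O(n+h(n/t))$ space (LCA data in $O(n)$, $D^*$ in $h(n/t)$, the correspondence in $O(h(n/t))$); and the query-time overhead is $O(q)$, since the cluster and LCA lookups are $O(1)$, the call to $D^*$ costs $q$, and all subsequent tree traversals are linear in the output path length.
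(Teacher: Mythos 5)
Your proposal is correct and follows the same route as the paper: build a stretch-friendly $O(t)$-partition with at most $n/t$ clusters, invoke the hypothesized interactive $\alpha$-spanner as a black box on a cluster graph, and lift its answers back to $G$ through the cluster spanning trees, with the endpoint clusters handled by the same ``prefix/suffix of the shortest path plus property 2 at the first exit / last re-entry edge'' argument the paper uses. The one genuine difference is the cluster graph itself: you weight an edge $(r_a,r_b)$ by $\min\bigl(d_{T_a}(r_a,x)+w(x,y)+d_{T_b}(y,r_b)\bigr)$ and route every inter-cluster query through the roots, so a lifted walk has weight exactly equal to the cluster-graph path weight and no separate charging of intra-cluster tree edges is needed; the paper instead weights $(C,C')$ by the lightest crossing edge alone, routes through each intermediate cluster directly between its entry and exit vertices, and recovers the $O(t)$ factor by charging the $O(t)$ tree edges in each intermediate cluster to the adjacent crossing edges via the stretch-friendly properties. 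Both variants give the same $O(\alpha t)$ stretch, $O(q)$ query overhead and $O(h(n/t)+n)$ size; your version slightly simplifies the stretch bookkeeping at the cost of longer (root-detouring) reported paths and of an auxiliary $O(1)$-time LCA structure, where the paper just stores depths and walks up from the deeper endpoint.
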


\begin{proof}

Given $G=(V,E)$ and $t$, let $\mathcal{C}$ be a stretch-friendly $O(t)$-partition of $G$ with $|\mathcal{C}|\leq\frac{n}{t}$ (See Definition \ref{def:StretchFriendly}). Denote by $\mathcal{C}(v)$ the unique cluster in $\mathcal{C}$ that contains the vertex $v$. Recall that every $C\in\mathcal{C}$ has a spanning tree $T_C$, rooted at some vertex $r_C$, such that for every $v\in C$, the unique path between $v$ and $r_C$ in $T_C$ has $O(t)$ edges. For $x,y\in C$, we denote by $T_C[x,y]$ the unique path in $T_C$ between the vertices $x,y$. Generally, $T_C[x,y]$ is the concatenation of $T_C[x,lca(x,y)]$ and $T_C[lca(x,y),y]$, where $lca(x,y)$ is the lowest common ancestor of $x,y$ in $T_C$.

Now, the cluster graph $\mathcal{H}=(V_{\mathcal{H}},E_{\mathcal{H}})$ is defined as follows. The vertex set $V_{\mathcal{H}}$ is the set of clusters $\mathcal{C}$. For every $C,C'\in\mathcal{C}$, denote by $e(C,C')$ the edge in $E$ with the \textit{minimal} weight among the edges between $C$ and $C'$. If there is no such edge, denote $e(C,C')=\bot$. Then,
\[E_{\mathcal{H}}=\{(C,C')\;|\;C,C'\in V_{\mathcal{H}}\text{ and }e(C,C')\neq\bot\}~.\]
The weight of an edge $(C,C')\in E_{\mathcal{H}}$ is defined to be $w(e(C,C'))$.

By our assumption, the graph $\mathcal{H}$ admits an interactive $\alpha$-spanner $(S,D)$ with query time $q$ and size $h(\frac{n}{t})$. We define our oracle $D'$ for the new interactive spanner to contain the following.
\begin{enumerate}
    \item The oracle $D$.
    \item For every $v\in V$, its cluster $C=\mathcal{C}(v)$, a pointer $p(v)$ to its parent in the tree $T_C$ (for $r_C$, we set $p(r_C)=r_C$) and the number of edges in the path $T_C[v,r_C]$.
    \item $e(C,C')$, for every edge $(C,C')\in S$.
\end{enumerate}

The size of this oracle is then $h(\frac{n}{t})+|V|\cdot O(1)+|S|\cdot O(1)=O(h(\frac{n}{t})+n)$.

Given a query $(u,v)\in V^2$, our oracle uses $D$ to find a path $Q\subseteq S$ between $\mathcal{C}(u)$ and $\mathcal{C}(v)$. Let $Q=(\mathcal{C}(u)=C_1,C_2,...,C_l=\mathcal{C}(v))$, and for every $i\in[0,l-1]$, denote $(x_i,y_i)=e(C_i,C_{i+1})$. Also, let $y_0=u$ and $x_l=v$. Note that for every $i=1,...,l$, the vertices $x_i,y_{i-1}$ are in the same cluster $C_i$, and denote $P_i=T_{C_i}[y_{i-1},x_i]$. We later show how $P_i$ can be efficiently found, using the stored data in our oracle. The returned path for the query $u,v$ is then
\[P_1\circ(x_1,y_1)\circ P_2\circ(x_2,y_2)\circ\cdots\circ(x_{l-1},y_{l-1})\circ P_l~.\]
See Figure \ref{fig:StretchFriendly2} for an illustration.

\begin{figure}[!ht]
\centerline{\includegraphics[width=10.8cm, height=4.5cm]{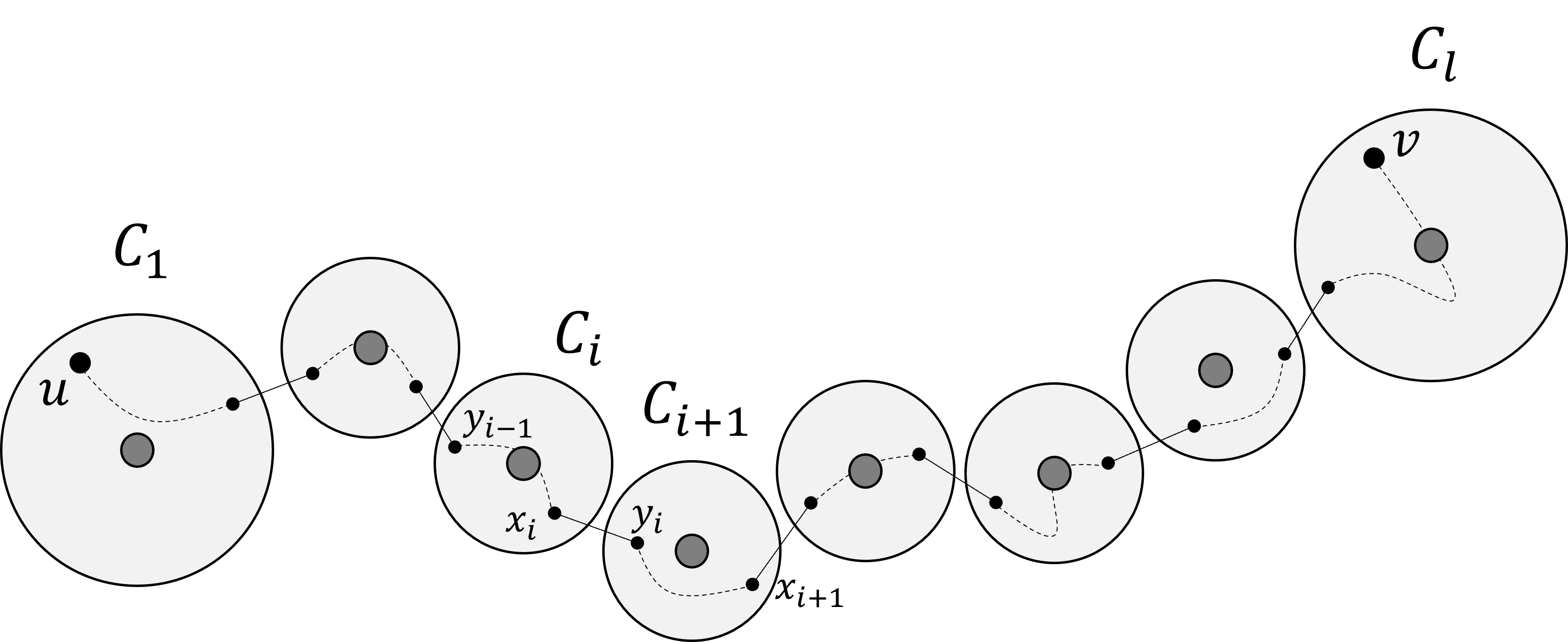}}
\caption{The returned path. The dashed lines are the paths $P_i$. Notice that they do not always pass through the roots of their respective clusters $C_i$.}
\label{fig:StretchFriendly2}
\end{figure}

Next, we describe in more detail how to find the paths $\{P_i\}$. Let $x,y$ be vertices in some cluster $C$. We want to find the unique path between $x,y$ in $T_C$, i.e., $T_C[x,y]$, in time that is proportional to length of this path. If $x=y$, then the desired path is empty. Otherwise, if $|T_C[x,r_C]|>|T_C[y,r_C]|$ (the number of edges in $T_C[x,r_C]$ is larger than the number of edges in $T_C[y,r_C]$), find $T_C[p(x),y]$ recursively and return $(x,p(x))\circ T_C[p(x),y]$. Symmetrically, if $|T_C[y,r_C]|\geq|T_C[x,r_C]|$, return $(y,p(y))\circ T_C[p(y),x]$. The correctness of this algorithm follows from the fact that if, for example, $|T_C[y,r_C]|\geq|T_C[x,r_C]|$, and $x\neq y$, then $x$ cannot be in the sub-tree of $y$ in $T_C$, hence the path $T_C[x,y]$ must pass through $p(y)$. In each recursive call we reduce the sum $|T_C[x,r_C]|+|T_C[y,r_C]|$ by $1$, and therefore the algorithm ends either when $x=y=lca(x,y)$, or when $|T_C[x,r_C]|=|T_C[y,r_C]|=0$ - which also means $x=y=r_C=lca(x,y)$. Therefore, the time needed to find $T_C[x,y]$ is $O(1)$ times the total length of $T_C[x,lca(x,y)]$ and $T_C[y,lca(x,y)]$, which is $O(|T_C[x,lca(x,y)]\circ T_C[lca(x,y),y]|)=O(|T_C[x,y]|)$.

The running time of the whole query algorithm of $D'$ is proportional to the number of edges in the returned path, plus the query time of $D$, which is $O(q)$.

We now analyse the stretch. For every $i$, the path $P_i$ consists of two paths, 
\[P^1_i=T_{C_i}[y_{i-1},lca(x_i,y_{i-1})]\text{ and }\]
\[P^2_i=T_{C_i}[lca(x_i,y_{i-1}),x_i]~.\]
Since these are sub-paths of the paths from $y_{i-1},x_i$ to the root, both paths have at most $O(t)$ edges. By the stretch-friendly property of $\mathcal{C}$, we know that the weight of each edge in $P^1_i$ is at most $w(x_{i-1},y_{i-1})$ (for $i>1$), and that the weight of each edge in $P^2_i$ is at most $w(x_i,y_i)$ (for $i<l$).

Therefore, the weight of the returned path is at most
\begin{eqnarray*}
\sum_{i=1}^lw(P_i)+\sum_{i=1}^{l-1}w(x_i,y_i)
&=&w(P^1_1)+\sum_{i=1}^{l-1}\big(w(P^2_i)+w(x_i,y_i)+w(P^1_{i+1})\big)+w(P^2_l)\\
&\leq&w(P^1_1)+\sum_{i=1}^{l-1}O(t)w(x_i,y_i)+w(P^2_l)\\
&=&w(P^1_1)+O(t)\cdot w(Q)+w(P^2_l)~,
\end{eqnarray*}
where $Q$ is the returned path from the oracle $D$ in the graph $\mathcal{H}$.

Now let $P_{u,v}$ be the shortest path in $G$ between $u,v$. Let $C'_1,C'_2,...,C'_f$ be the sequence of the clusters of $\mathcal{C}$ that $P_{u,v}$ passes through, in the same order that $P_{u,v}$ passes through them. In case it passes through some $C'_j$ more than once, $C'_j$ will appear multiple times. Then, $\hat{P}=(C'_1,...,C'_f)$ is a path in $\mathcal{H}$ between $\mathcal{C}(u)$ and $\mathcal{C}(v)$, and since the stretch of $D$ is $\alpha$, we get that 
\[w(Q)\leq\alpha\cdot d_{\mathcal{H}}(\mathcal{C}(u),\mathcal{C}(v))\leq\alpha\cdot w(\hat{P})~.\] 
But notice that for each $j=1,...,f-1$, $P_{u,v}$ contains an edge $e_j$ between the clusters $C'_j$ and $C'_{j+1}$, and since in $\mathcal{H}$ we defined $w(C'_j,C'_{j+1})$ to be the weight of the minimal edge between these clusters, we have
\begin{eqnarray*}
w(Q)&\leq&\alpha\cdot w(\hat{P})=\alpha\sum_{j=1}^{f-1}w(C'_j,C'_{j+1})\leq\alpha\sum_{j=1}^{f-1}w(e_j)
\leq\alpha\cdot w(P_{u,v})~.
\end{eqnarray*}

It is left to bound the terms $w(P^1_1),w(P^2_l)$. We show that $w(P^1_1)\leq O(t)\cdot w(P_{u,v})$, and the proof that $w(P^2_l)\leq O(t)\cdot w(P_{u,v})$ is symmetric. 

Recall that $Q=(C_1,...,C_l)$ is the path returned by $D$ for the query $(\mathcal{C}(u),\mathcal{C}(v))$. If the entire shortest path $P_{u,v}$ between $u,v$ in $G$ is contained in the cluster $\mathcal{C}(u)$, then in particular $\mathcal{C}(u)=\mathcal{C}(v)=C_1$. Thus, $Q$ is of length $0$, and the returned path from our oracle is just $T_{C_1}[u,v]$, which is the concatenation of two paths with $O(t)$ edges each. By the stretch-friendly property, for every edge $(x,y)\in P_{u,v}$, the edges of $T_{C_1}[x,y]$ have weight at most $w(x,y)\leq w(P_{u,v})$. Notice that the union $\bigcup_{(x,y)\in P_{u,v}}T_{C_1}[x,y]$ is a (not necessarily simple) path in $T_{C_1}$ between $u$ and $v$, and so it contains $T_{C_1}[u,v]$. Therefore each one of the $O(t)$ edges in $T_{C_1}[u,v]$ has weight of at most $w(P_{u,v})$. Thus, the returned path has weight of at most $O(t)\cdot w(P_{u,v})$.

Otherwise, let $v'$ be the first vertex in $P_{u,v}$ which is not in $C_1$, and let $u'$ be the vertex that precedes $v'$ in $P_{u,v}$. Similarly to the first case, we can prove that $w(T_{C_1}[u,u'])\leq O(t)\cdot w(P_{u,v})$. In addition, again by the stretch-friendly property, each of the $O(t)$ edges in $T_{C_1}[r_{C_1},u']$ has weight of at most $w(u',v')\leq w(P_{u,v})$. Since $P^1_1\subseteq T_{C_1}[u,r_{C_1}]\subseteq T_{C_1}[u,u']\cup T_{C_1}[r_{C_1},u']$, we get
\begin{eqnarray*}
w(P^1_1)&\leq&w(T_{C_1}[u,u'])+w(T_{C_1}[r_{C_1},u'])
\leq O(t)\cdot w(P_{u,v})+O(t)\cdot w(P_{u,v})
= O(t)\cdot w(P_{u,v})~.
\end{eqnarray*}

We proved that the weight of the returned path is at most
\begin{eqnarray*}
w(P^1_1)+w(P^2_l)+O(t)\cdot w(Q)
&\leq&O(t)\cdot w(P_{u,v})+O(t)\cdot w(P_{u,v})+O(t)\alpha\cdot w(P_{u,v})\\
&=&O(\alpha\cdot t)\cdot w(P_{u,v})~.
\end{eqnarray*}

Thus the stretch of our new interactive spanner is $O(\alpha\cdot t)$.

For completing the proof, notice that the output paths from our oracle $D'$ are always a concatenation of paths $T_C[x,y]$ for some $C\in\mathcal{C}$ and $x,y\in C$, and edges $e(C,C')$, where $(C,C')\in S$. These paths are always contained in the set $\bigcup_{C\in\mathcal{C}}T_C\cup\{e(C,C')\;|\;(C,C')\in S\}$. The set $\bigcup_{C\in\mathcal{C}}T_C$ is a forest, and thus has size at most $ n-1$. The set $\{e(C,C')\;|\;(C,C')\in S\}$ is of size $|S|\leq h(\frac{n}{t})$. Therefore, the size of this entire set is $h(\frac{n}{t})+n$.

\end{proof}

To prove Theorem \ref{thm:UltraSparse} (i.e., to build ultra-sparse interactive spanners), we slightly change the construction from the proof of Theorem \ref{thm:UseStretchFriendly}.

\begin{proof}[Proof of Theorem \ref{thm:UltraSparse}]

As in the proof of Theorem \ref{thm:UseStretchFriendly}, we construct a stretch-friendly $O(t)$-partition $\mathcal{C}$ of $G$ with $|\mathcal{C}|\leq\frac{n}{t}$. For every $C\in\mathcal{C}$, let $T_C$ and $r_C$ denote the corresponding spanning tree and root, respectively, and let $T_C[x,y]$ denote the unique path in $T_C$ between the vertices $x,y\in C$. The cluster graph $\mathcal{H}=(V_{\mathcal{H}},E_{\mathcal{H}})$ and the edges $e(C,C')$ are defined in the same way as in the proof of Theorem \ref{thm:UseStretchFriendly}.

By our assumption, the graph $\mathcal{H}$ admits an interactive $\alpha$-spanner $(S,D)$ with query time $q$ and size $h(\frac{n}{t})$. Our oracle for the new interactive spanner will still contain the oracle $D$ and the edges $\{e(C,C')\;|\;(C,C')\in S\}$. Also, for every $v\in V$, we will store the pointer $p(v)$ to its parent in the tree $T_C$, of the cluster $C$ such that $v\in C$. We will \textit{no longer} store $\mathcal{C}(v)$ (see definition in the proof of Theorem \ref{thm:UseStretchFriendly}) and the number of edges in the path $T_C[v,r_C]$.

Given a query $(u,v)\in V^2$, we will follow the pointers from $u$ and from $v$ until we get to the roots $r_u,r_v$ of the respective clusters that contains $u$ and $v$. As a result, in time $O(t)$ we discover the identities of the two respective clusters (i.e., we find $\mathcal{C}(u),\mathcal{C}(v)$), and the number of edges in the paths from $u,v$ to $r_u,r_v$, respectively, in the spanning trees of their respective clusters. From this point on, the query algorithm proceeds as described in the proof of Theorem \ref{thm:UseStretchFriendly}.

The stretch of the resulting oracle remains $O(\alpha\cdot t)$. The query time grows to $O(q+t)$. The size of the underlying spanner also stays the same as in Theorem \ref{thm:UseStretchFriendly}, because the same output paths are returned. Lastly, the storage required for the oracle $D$ and the edges $\{e(C,C')\;|\;(C,C')\in S\}$ is of size $O(h(\frac{n}{t}))$. Storing the pointers $p(\cdot)$ requires at most $n$ words. We conclude that the size of our new oracle is $n+O(h(\frac{n}{t}))$.

\end{proof}

\section{Appendix F} \label{sec:AppendixF}

In this appendix, we present an interactive emulator, based on the \textit{neighborhood covers} of Awerbuch and Peleg \cite{AP90a} and on the non-path-reporting distance oracle due to Mendel and Naor \cite{MN06}. We use the following notations regarding graph covers. Given an undirected weighted graph $G=(V,E)$, a \textbf{cover} of $G$ is a collection $\mathcal{S}=\{S_1,S_2,...,S_t\}$ of subsets of $V$ (called \textbf{clusters}), such that $V=\bigcup_{i=1}^tS_i$. Note that the clusters are not necessarily disjoint. Given a cluster $S\in\mathcal{S}$, let $G[S]$ be the induced sub-graph of $G$, on the vertices of $S$. We define the \textbf{radius} of $\mathcal{S}$, denoted by $Rad(\mathcal{S})$, to be the minimal number $r\geq0$ such that in every $S\in\mathcal{S}$, there is a vertex $v_S\in S$ that satisfies $d_{G[S]}(v_S,u)\leq r$, for all $u\in S$. If such $r$ does not exist, we denote $Rad(\mathcal{S})=\infty$. Lastly, denote by $|\mathcal{S}|$ the number of clusters in $\mathcal{S}$.

The following theorem is from \cite{AP90a}.

\begin{theorem}[Theorem 3.1 in \cite{AP90a}] \label{thm:CoarseningCover}
Let $k\geq1$ be an integer, let $G=(V,E)$ be an undirected weighted $n$-vertex graph, and let $\mathcal{S}$ be a cover of $G$. There exists another cover $\mathcal{T}$ of $G$ that satisfies the following properties.
\begin{enumerate}
    \item For every $S\in\mathcal{S}$ there is a cluster $T\in\mathcal{T}$ such that $S\subseteq T$.
    \item $Rad(\mathcal{T})\leq(2k-1)Rad(\mathcal{S})$.
    \item For every $v\in V$, the number of clusters $T\in\mathcal{T}$ such that $v\in T$, is at most $2k|\mathcal{S}|^{\frac{1}{k}}$.
\end{enumerate}
\end{theorem}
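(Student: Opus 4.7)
The plan is to adapt the classical ball-growing argument of Awerbuch and Peleg, carried out on the \emph{cluster-intersection (hyper)graph} $\Gamma$ whose vertices are the clusters of $\mathcal{S}$ and whose edges connect two clusters if they share a graph vertex. Set $r = Rad(\mathcal{S})$, and for every $S\in\mathcal{S}$ fix a center $v_S\in S$ with $d_{G[S]}(v_S,u)\leq r$ for all $u\in S$. The construction builds $\mathcal{T}$ incrementally: maintain a set $\mathcal{U}\subseteq\mathcal{S}$ of ``unprocessed'' clusters, initially equal to $\mathcal{S}$. While $\mathcal{U}\neq\emptyset$, pick any $S_0\in\mathcal{U}$ and inductively define $Z_0=\{S_0\}$ and $Z_i = Z_{i-1}\cup\{S'\in\mathcal{S}\,:\,S'\cap\bigcup_{S''\in Z_{i-1}}S''\neq\emptyset\}$. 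Let $i^*$ be the smallest index with $|Z_{i^*}|\leq |\mathcal{S}|^{1/k}\cdot|Z_{i^*-1}|$, and add $T=\bigcup_{S''\in Z_{i^*}}S''$ to $\mathcal{T}$; then remove $Z_{i^*-1}$ from $\mathcal{U}$ (but leave the outer ring $Z_{i^*}\setminus Z_{i^*-1}$ in $\mathcal{U}$, so it may still be absorbed later).

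Property 1 is immediate: the seed $S_0$ lies in $Z_{i^*}$, so $S_0\subseteq T$; since every $S\in\mathcal{S}$ is eventually chosen as a seed (the process terminates because $|\mathcal{U}|$ strictly decreases), each $S$ is contained in some $T\in\mathcal{T}$. For property 2, the stopping condition guarantees $|Z_i|>|\mathcal{S}|^{1/k}|Z_{i-1}|$ for $i<i^*$, so $|Z_{i^*-1}|\geq |\mathcal{S}|^{(i^*-1)/k}$; since $|Z_{i^*-1}|\leq|\mathcal{S}|$, this forces $i^*\leq k$. Picking the center $v_T=v_{S_0}$ and any $u\in T$, $u$ lies in some $S^{(i^*)}\in Z_{i^*}$ reachable from $S_0$ by a chain $S^{(0)},S^{(1)},\ldots,S^{(i^*)}$ of consecutively intersecting clusters; concatenating paths within each $G[S^{(j)}]\subseteq G[T]$ (each of radius at most $r$) through a vertex of each intersection yields $d_{G[T]}(v_T,u)\leq (2i^*-1)r \leq (2k-1)r$, proving $Rad(\mathcal{T})\leq(2k-1)r$.

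For property 3, fix a vertex $v$ and let $D(v)=\{S\in\mathcal{S}:v\in S\}$. A produced cluster $T$ contains $v$ iff $D(v)\cap Z_{i^*}^{(T)}\neq\emptyset$. I plan a charging argument that splits iterations containing $v$ into two types: those in which some element of $D(v)$ lies in the inner ball $Z_{i^*-1}^{(T)}$, and those in which $D(v)$ meets only the outer ring $Z_{i^*}^{(T)}\setminus Z_{i^*-1}^{(T)}$. Iterations of the first type remove at least one element of $D(v)$ from $\mathcal{U}$ permanently, so their number is at most $|D(v)|\leq |\mathcal{S}|$, but more tightly at most $k|\mathcal{S}|^{1/k}$ after a careful accounting: in each such iteration, $|D(v)\cap Z_{i^*}^{(T)}|\leq |\mathcal{S}|^{1/k}\cdot|D(v)\cap Z_{i^*-1}^{(T)}|$ by the stopping condition, so summing over iterations where $D(v)$-elements get absorbed and telescoping on the surviving count yields at most $k|\mathcal{S}|^{1/k}$ such iterations. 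Iterations of the second type are similarly bounded by $k|\mathcal{S}|^{1/k}$, because each relies on a cluster in $D(v)$ sitting in an outer ring without being absorbed, and any single cluster can serve this role in at most $O(k)$ iterations before it is itself absorbed. Combining the two bounds gives $|\{T\in\mathcal{T}:v\in T\}|\leq 2k|\mathcal{S}|^{1/k}$.

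The main obstacle is the charging step for property 3: one must carefully track how clusters of $D(v)$ migrate between inner and outer balls across iterations, and show that the multiplicative stopping condition $|Z_{i^*}|\leq |\mathcal{S}|^{1/k}|Z_{i^*-1}|$ — applied at most $k$ times globally in the amortized sense — controls both how often $v$ is newly covered and how many ``leftover'' outer-ring appearances can still contribute. Properties 1 and 2 are essentially syntactic consequences of the ball-growing process, whereas property 3 is where the exponent $1/k$ in the overlap bound must be squeezed out of the geometric stopping rule by a double-counting/telescoping argument on $|D(v)\cap Z_i|$.
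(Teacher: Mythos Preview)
The paper does not provide its own proof of this statement; it is quoted as Theorem~3.1 of Awerbuch and Peleg \cite{AP90a} and used as a black box in the construction of Theorem~\ref{thm:APEmulator}. So there is nothing in the paper to compare your argument against.

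That said, your sketch follows the classical ball-growing template but has a genuine gap in property~3. The crucial step asserts
\[
|D(v)\cap Z_{i^*}^{(T)}|\ \leq\ |\mathcal{S}|^{1/k}\cdot|D(v)\cap Z_{i^*-1}^{(T)}|
\]
``by the stopping condition'', but the stopping condition $|Z_{i^*}|\leq|\mathcal{S}|^{1/k}|Z_{i^*-1}|$ is a statement about the full layers, not about their intersections with an arbitrary sub-collection $D(v)$; there is no reason the ratio should survive restriction. Without this, your telescoping/charging argument for both ``inner'' and ``outer-ring'' iterations collapses. The original Awerbuch--Peleg proof obtains the overlap bound by a different mechanism (an iterated layering in which inner cores within one pass are cluster-disjoint, so each $S\in D(v)$ is consumed at most once per layer, and the $|\mathcal{S}|^{1/k}$ factor enters through the number of layers and the outer-ring expansion), not by localizing the stopping inequality to $D(v)$.

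Two smaller issues: for property~1, not every $S$ becomes a seed; the correct reason is that every $S$ is eventually removed from $\mathcal{U}$, at which point it lies in some $Z_{i^*-1}\subseteq Z_{i^*}$ and hence in the corresponding $T$. For property~2, traversing a chain $S^{(0)},\dots,S^{(j)}$ with $u\in S^{(j)}$ costs $r$ in $S^{(0)}$ and $2r$ in each subsequent cluster, giving $(2j+1)r\leq(2k+1)r$, not $(2k-1)r$; to hit the stated bound you need either $T=\bigcup Z_{i^*-1}$ or a sharper $i^*\leq k-1$.
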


Using this theorem, we now construct the desired interactive emulator.

\begin{theorem} \label{thm:APEmulator}
Given an undirected weighted $n$-vertex graph $G=(V,E)$, an integer parameter $k\geq1$, and a parameter $\epsilon>0$, there is an interactive $(4+\epsilon)k$-emulator with size $O(k\cdot\frac{\log\Lambda}{\epsilon}\cdot n^{1+\frac{1}{k}})$ and query time $O(\log\log k+\log\frac{1}{\epsilon})$, where $\Lambda=\frac{\max_{u,v}d_G(u,v)}{\min_{u\neq v}d_G(u,v)}$ is the aspect ratio of $G$.
\end{theorem}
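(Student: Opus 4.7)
The plan is to layer Awerbuch--Peleg coarse neighborhood covers (Theorem~\ref{thm:CoarseningCover}) at a geometric sequence of distance scales, yielding a two-hop emulator, and to use the Mendel--Naor distance oracle of Lemma~\ref{lemma:MNDistanceOracle} as a fast coarse estimator that localizes the correct scale to a window of size $O(\log k/\epsilon)$, small enough to binary-search within the advertised query time.

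Pick $\epsilon'=\Theta(\epsilon)$ small enough that $(4k-2)(1+\epsilon')\leq(4+\epsilon)k$. For each integer $i\in[0,\lceil\log_{1+\epsilon'}\Lambda\rceil]$, start from the trivial cover $\mathcal{S}_i=\{B_G(v,(1+\epsilon')^i):v\in V\}$ and apply Theorem~\ref{thm:CoarseningCover} with parameter $k$ to produce a cover $\mathcal{T}_i$ of radius at most $(2k-1)(1+\epsilon')^i$ in which every vertex appears in at most $2kn^{1/k}$ clusters. For each $T\in\mathcal{T}_i$ fix a center $v_T$ realizing the radius bound, and insert into the emulator $H$ the virtual edges $\{(v_T,u):u\in T\}$, each of weight $d_{G[T]}(v_T,u)$. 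Since emulator edge weights are at least the corresponding $G$-distances, $d_H\geq d_G$ globally, which takes care of the lower bound in Definition~\ref{def:InteractiveEmulator}. Summing cluster sizes over all scales,
\[
|H|\;\leq\;\sum_i\sum_{v\in V}\#\{T\in\mathcal{T}_i:v\in T\}\;\leq\;\lceil\log_{1+\epsilon'}\Lambda\rceil\cdot 2kn^{1+1/k}\;=\;O\!\bigl(k\cdot\tfrac{\log\Lambda}{\epsilon}\cdot n^{1+1/k}\bigr),
\]
matching the claimed size bound.

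The oracle $D$ stores (i) the Mendel--Naor distance oracle on $G$, which returns in $O(1)$ time an estimate $\hat{d}$ with $d_G(u,v)\leq\hat{d}\leq c_{MN}k\cdot d_G(u,v)$; and (ii) for every vertex $v$ and every scale $i$, a pointer to a canonical cluster $T_v^{(i)}\in\mathcal{T}_i$ chosen so that $B(v,(1+\epsilon')^i)\subseteq T_v^{(i)}$ (which exists by property~1 of Theorem~\ref{thm:CoarseningCover}), its center, and a hash set supporting $O(1)$ membership tests. On a query $(u,v)$, the algorithm computes $\hat{d}$, sets the search window
\[
L\;=\;\lceil\log_{1+\epsilon'}\hat{d}\rceil-\lceil\log_{1+\epsilon'}(4c_{MN}k^2)\rceil,\quad U\;=\;\lceil\log_{1+\epsilon'}\hat{d}\rceil,
\]
of width $O(\log k/\epsilon)$, and binary-searches for the smallest $i\in[L,U]$ for which the predicate ``$v\in T_u^{(i)}$'' is satisfied; it then outputs the two-hop path $u\to v_{T_u^{(i)}}\to v$ in $H$. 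The binary search performs $O(\log(\log k/\epsilon))=O(\log\log k+\log(1/\epsilon))$ constant-cost membership checks.

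The only delicate point is justifying the binary search, since the predicate ``$v\in T_u^{(i)}$'' is only provably monotone above the threshold $i^{*}:=\lceil\log_{1+\epsilon'}d_G(u,v)\rceil$. For every $i\geq i^{*}$ one has $v\in B(u,(1+\epsilon')^i)\subseteq T_u^{(i)}$, while the radius bound of $\mathcal{T}_i$ forces the predicate to fail whenever $(1+\epsilon')^i<d_G(u,v)/(4k-2)$, and the relation $d_G\geq\hat{d}/(c_{MN}k)$ is exactly what ensures that the window endpoint $L$ above lies safely in the ``false'' region (while $U\geq i^{*}$ clearly lies in the ``true'' region). The key observation is that the binary-search loop invariant ``predicate false at $L$, true at $U$'' forces $L<i^{*}$ throughout, because $L\geq i^{*}$ would by monotonicity above $i^{*}$ make the predicate true at $L$; hence the returned scale $i=L+1\leq i^{*}$, so the output path has weight at most $2(2k-1)(1+\epsilon')^{i}\leq 2(2k-1)(1+\epsilon')^{i^{*}}\leq(4k-2)(1+\epsilon')\cdot d_G(u,v)\leq(4+\epsilon)k\cdot d_G(u,v)$. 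The hard part of the write-up will be pinning down the lower boundary $L$ precisely enough that the predicate is provably false there across all inputs, which is what legitimizes binary search despite the predicate not being monotone on the full window.
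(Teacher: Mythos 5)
Your construction is the same as the paper's (its Appendix F): per-scale Awerbuch--Peleg covers obtained from Theorem~\ref{thm:CoarseningCover} at the scales $(1+\epsilon')^i$, the Mendel--Naor oracle to localize the scale to a window of $O(\log_{1+\epsilon'}k)$ indices, and a binary search on home-cluster membership; the only structural difference is that you report a two-hop virtual path through the cluster center, while the paper reports the path in the cluster's spanning tree (both are fine for an emulator, and the size and query-time accounting match). The problem is precisely in the step you yourself flag as the crux: the claim that the predicate ``$v\in T_u^{(L)}$'' is \emph{provably false} at your lower endpoint $L=\lceil\log_{1+\epsilon'}\hat d\rceil-\lceil\log_{1+\epsilon'}(4c_{MN}k^2)\rceil$ does not follow from your argument. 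Membership at scale $L$ only gives $d_G(u,v)\leq(4k-2)(1+\epsilon')^L\leq(4k-2)(1+\epsilon')\hat d/(4c_{MN}k^2)=(1+\epsilon')\bigl(1-\tfrac{1}{2k}\bigr)\tfrac{\hat d}{c_{MN}k}$, and this contradicts $d_G(u,v)\geq\hat d/(c_{MN}k)$ only when $(1+\epsilon')(1-\tfrac{1}{2k})<1$, i.e.\ when $\epsilon'<\tfrac{1}{2k-1}$. Since you take $\epsilon'=\Theta(\epsilon)$ with $\epsilon$ possibly a constant, the predicate can genuinely be true at $L$, so the loop invariant ``false at the left end'' is not established at initialization, and your derivation of ``returned scale $\leq i^*$'' (hence the stretch bound) is unsupported as written.

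The gap is local and repairable without changing the data structure, and the repair is exactly the paper's two-case analysis. You do not need falseness at $L$ at all: whenever the search \emph{checks} a scale $j$ and finds $v\notin T_u^{(j)}$, ball containment ($B(u,(1+\epsilon')^j)\subseteq T_u^{(j)}$) already certifies $d_G(u,v)>(1+\epsilon')^j$ -- no radius argument is needed for the ``false'' direction. So at termination, with returned scale $r$ (membership true at $r$), either $r-1$ was checked false, giving weight at most $(4k-2)(1+\epsilon')^r<(4k-2)(1+\epsilon')d_G(u,v)$, or the search never falsified anything and $r=L+1$, in which case $(1+\epsilon')^{r}\leq(1+\epsilon')^2\hat d/(4c_{MN}k^2)\leq(1+\epsilon')^2 d_G(u,v)/(4k)$ by the upper bound $\hat d\leq c_{MN}k\,d_G(u,v)$, and the two-hop path is even shorter than required. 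The paper runs this same case split with the narrower window anchored at $\lfloor\log_{1+\epsilon}(\hat d/(c_{MN}k))\rfloor$ (also note it caps $\hat d$ at $\Lambda$ so the window stays inside $[0,\lceil\log_{1+\epsilon}\Lambda\rceil]$, a detail you should add). With that fix your proposal goes through and is essentially the paper's proof.
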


\begin{proof}

Throughout the proof we assume that the minimal weight of an edge in $G$ is $1$, as otherwise the weights can be re-scaled by dividing each weight by the minimal weight of an edge. This way, the aspect ratio is $\Lambda=\max_{u,v}d_G(u,v)$, i.e., the \textit{diameter} of $G$. Note that for every two distinct vertices $u,v\in V$, the distance $d_G(u,v)$ is between $1$ and $\Lambda$.

Fix some positive number $W>0$. Given the graph $G$, we define a cover $\mathcal{S}_W$ by taking the $W$-\textit{neighborhood} of each vertex in $G$: for every $v\in V$, denote
\[S(v,W)=\{u\in V\;|\;d_G(v,u)\leq W\}~.\]
The collection $\mathcal{S}_W$ is then defined as $\mathcal{S}_W=\{S(v,W)\;|\;v\in V\}$. This is a cover of $G$ with radius $Rad(\mathcal{S}_W)\leq W$. By Theorem \ref{thm:CoarseningCover}, there is a cover $\mathcal{T}_W$ such that
\begin{enumerate}
    \item For every $v\in V$, there is a cluster $T(v,W)\in\mathcal{T}_W$ such that $S(v,W)\subseteq T(v,W)$. We call $T(v,W)$ the \textbf{home-cluster} of $v$.
    \item $Rad(\mathcal{T}_W)\leq(2k-1)Rad(\mathcal{S}_W)\leq(2k-1)W$. That is, for every $T\in\mathcal{T}_W$ there is a vertex $v_T\in T$ such that every vertex in the induced graph $G[T]$ has distance at most $(2k-1)W$ from $v_T$. Hence, there is a tree on the vertices of $T$, rooted at $v_T$, such that every vertex of this tree has distance at most $(2k-1)W$ from the root $v_T$. Let $q_T(u)$, for every $u\in T$, be a pointer to the next vertex on the unique path in this tree from $u$ to $v_T$. Similarly, $h_T(u)$ denotes the number of edges in this unique path, i.e., the depth of $u$ in this tree.
    \item For every $u\in V$, the number of clusters $T\in\mathcal{T}_W$ such that $u\in T$ is at most
    \begin{equation} \label{eq:NumOfClusters}
        2k|\mathcal{S}_W|^{\frac{1}{k}}=2k\cdot n^{\frac{1}{k}}
    \end{equation}
\end{enumerate}

We call the cover $\mathcal{T}_W$ a \textbf{neighborhood cover} for the parameter $W$. For a given $\epsilon>0$, we define a \textit{hierarchy} of neighborhood covers: we construct the neighborhood cover $\mathcal{T}_W$, for every $W\in\{(1+\epsilon)^i\}_{i=0}^{\lambda}$, where $\lambda=\lceil\log_{1+\epsilon}\Lambda\rceil$. For every $v\in V$ and $i\in[0,\lambda]$, we define a pointer $HC^i(v)$ to the home-cluster of $v$ in the neighborhood cover $\mathcal{T}_{(1+\epsilon)^i}$. Recall also that every cluster $T\in\mathcal{T}_{(1+\epsilon)^i}$ has a spanning tree rooted at a vertex $v_T$. For every vertex $u\in T$, the pointer $q_T(u)$ points at the next vertex on the unique path from $u$ to the vertex $v_T$, and the variable $h_T(u)$ denotes the number of edges in this path.

We now describe the construction of our interactive emulator. We define an oracle $D$ that stores the following information.
\begin{enumerate}
    \item For every $v\in V$ and $i\in[0,\lambda]$, the oracle $D$ stores $HC^i(v)$, the home-cluster of $v$ in $\mathcal{T}_{(1+\epsilon)^i}$.
    \item For every $i\in[0,\lambda]$, every cluster $T\in\mathcal{T}_{(1+\epsilon)^i}$, and every vertex $u\in T$, the oracle $D$ stores the pointer $q_T(u)$ and the variable $h_T(u)$.
    \item The oracle $D$ stores the distance oracle of Mendel and Naor \cite{MN06} for the graph $G$.
\end{enumerate}

Next, we describe the algorithm for answering queries. Given a query $(u,v)\in V^2$, our oracle $D$ first queries the Mendel-Naor distance oracle on $(u,v)$. Recall that this distance oracle is not path-reporting. Thus, the result is only an estimate $\hat{d}=\hat{d}(u,v)$ such that $d_G(u,v)\leq\hat{d}\leq c_{MN}\cdot k\cdot d_G(u,v)$, or equivalently
\begin{equation} \label{eq:MNEstimate}
    \frac{\hat{d}}{c_{MN}\cdot k}\leq d_G(u,v)\leq\hat{d}~.
\end{equation}
Here $c_{MN}$ is the constant coefficient of the stretch in the Mendel-Naor distance oracle. If $\hat{d}>\Lambda$, we can safely replace $\hat{d}$ by $\Lambda$, and inequality (\ref{eq:MNEstimate}) will still be satisfied.

Then, our oracle $D$ finds the smallest $i$ such that $(1+\epsilon)^i\in[\frac{\hat{d}}{c_{MN}\cdot k},\hat{d}]$, and also $u$ is in the home-cluster of $v$, in the cover $\mathcal{T}_{(1+\epsilon)^i}$. It finds this $i$ by performing a binary search in the range $\mathcal{I}=\left[\left\lfloor\log_{1+\epsilon}\left(\frac{\hat{d}}{c_{MN}\cdot k}\right)\right\rfloor,\lceil\log_{1+\epsilon}\hat{d}\rceil\right]$. During this binary search, if the current search range is $[i_1,i_2]$, and the index $i$ is in the middle of this range, then the oracle $D$ refers to the home-cluster $T=HC^i(v)$, and checks whether $q_T(u)$ is defined (that is, if $u\in T$). If it is defined, then the binary search continues in the range $[i_1,i]$, and otherwise in the range $[i+1,i_2]$.

Lastly, the oracle $D$ uses the pointers $q_T(\cdot)$ and the variables $h_T(\cdot)$, to find a path in the spanning tree of $T$ between $u$ and $v$. This is done in the same way as in the proof of Theorem \ref{thm:UseStretchFriendly} in \nameref{sec:AppendixE}. The resulting path is the output path of $D$ for the query $(u,v)$.

For the query algorithm to be well defined, note that $\mathcal{I}\subseteq[0,\lambda]$, and thus we did construct the neighborhood cover $\mathcal{T}_{(1+\epsilon)^i}$ for every $i\in\mathcal{I}$. This is true since $\lceil\log_{1+\epsilon}\hat{d}\rceil\leq\lceil\log_{1+\epsilon}\Lambda\rceil=\lambda$. We also show that there must be an index $i$ in the range $\mathcal{I}$ such that $u$ is in the home-cluster of $v$ in $\mathcal{T}_{(1+\epsilon)^i}$. Indeed, since we know that $d_G(u,v)\leq\hat{d}$ (by inequality (\ref{eq:MNEstimate})), then for $i'=\lceil\log_{1+\epsilon}\hat{d}\rceil\in\mathcal{I}$, we have
\[(1+\epsilon)^{i'}\geq\hat{d}\geq d_G(u,v)~.\]
Since the home-cluster of $v$ in the cover $\mathcal{T}_{(1+\epsilon)^{i'}}$ contains all of the vertices within distance $(1+\epsilon)^{i'}$ from $v$, then in particular, it must contain $u$.

To analyse the stretch of the oracle $D$, we consider two cases regarding the index $i$, which is the smallest in $\mathcal{I}$ such that $u$ is in the home-cluster of $v$ in $\mathcal{T}_{(1+\epsilon)^i}$. In the first case, if $i-1$ is also in $\mathcal{I}$, then we know that $u$ is not in the home-cluster of $v$ in $\mathcal{T}_{(1+\epsilon)^{i-1}}$. Since this home-cluster contains all of the vertices within distance $(1+\epsilon)^{i-1}$ from $v$, it must be that 
\[(1+\epsilon)^{i-1}<d_G(u,v)~.\]
In the second case, where $i=\lfloor\log_{1+\epsilon}\left(\frac{\hat{d}}{c_{MN}\cdot k}\right)\rfloor$, then we also know by inequality (\ref{eq:MNEstimate}) that
\[(1+\epsilon)^{i-1}<(1+\epsilon)^{\log_{1+\epsilon}\left(\frac{\hat{d}}{c_{MN}\cdot k}\right)}=\frac{\hat{d}}{c_{MN}\cdot k}\leq d_G(u,v)~.\]
In both cases we saw that $(1+\epsilon)^{i-1}<d_G(u,v)$.

Now, note that the output path is the concatenation of two paths, from $u$ to $x$ and from $x$ to $v$, where $x$ is the lowest common ancestor of $u,v$ in the spanning tree of the home-cluster of $v$ in $\mathcal{T}_{(1+\epsilon)^i}$. Therefore, the weight of this path is at most twice the radius of this home-cluster, which by Theorem \ref{thm:CoarseningCover} is
\[Rad(\mathcal{T}_{(1+\epsilon)^i})\leq(2k-1)(1+\epsilon)^i~.\]
We conclude that the weight of the resulting output path is at most
\[2(2k-1)(1+\epsilon)^i<4k\cdot(1+\epsilon)^i<4(1+\epsilon)k\cdot d_G(u,v)~.\]
That is, our oracle $D$ has stretch $4(1+\epsilon)k$.

The query time of the oracle $D$ consists of applying the Mendel-Naor distance oracle, which requires $O(1)$ time, then performing a binary search in the range $\mathcal{I}=[\lfloor\log_{1+\epsilon}\left(\frac{\hat{d}}{c_{MN}\cdot k}\right)\rfloor,\lceil\log_{1+\epsilon}\hat{d}\rceil]$. Each step in this binary search is done in a constant time (checking whether $u$ is in the home-cluster of $v$). Thus, the running time of this binary search is
\begin{eqnarray*}
O(\log|\mathcal{I}|)&=&O\left(\log\left(\log_{1+\epsilon}\hat{d}-\log_{1+\epsilon}\left(\frac{\hat{d}}{c_{MN}\cdot k}\right)\right)\right)\\
&=&O(\log\log_{1+\epsilon}(c_{MN}\cdot k))\\
&=&O(\log\log k+\log\frac{1}{\epsilon})~.
\end{eqnarray*}
Lastly, finding the unique path between $u,v$ in the spanning tree of the home-cluster of $v$ takes linear time in the length of the output path. We conclude that the query time of $D$ is $O(\log\log k+\log\frac{1}{\epsilon})$.

Regarding the size of the oracle $D$, note that storing the Mendel-Naor distance oracle (item $3$ in the description of $D$) requires $O(n^{1+\frac{1}{k}})$ space. The variables that store the home-cluster for every vertex $v$ and cover $\mathcal{T}_{(1+\epsilon)^i}$ (item $1$) requires $O(n\cdot\lambda)=O(n\log_{1+\epsilon}\Lambda)$ space. Lastly, to store the pointers $q_T(\cdot)$ and the variables $h_T(\cdot)$ (item $2$), we need
\begin{eqnarray*}
O\left(\sum_{i=0}^{\lambda}\sum_{T\in\mathcal{T}_{(1+\epsilon)^i}}|T|\right)
&=&O\left(\sum_{i=0}^{\lambda}\sum_{v\in V}|\{T\in\mathcal{T}_{(1+\epsilon)^i}\;|\;v\in T\}|\right)\\
&\stackrel{(\ref{eq:NumOfClusters})}{=}&O\left(\sum_{i=0}^{\lambda}\sum_{v\in V}2k\cdot n^{\frac{1}{k}}\right)=O(k\cdot\log_{1+\epsilon}\Lambda\cdot n^{1+\frac{1}{k}})
\end{eqnarray*}
space. We get a total size of 
\[O(k\cdot\log_{1+\epsilon}\Lambda\cdot n^{1+\frac{1}{k}})=O(k\cdot\frac{\log\Lambda}{\epsilon}\cdot n^{1+\frac{1}{k}})\] 
for the oracle $D$.

Define $H$ as the sub-graph of $G$, which is the union of all the output paths from $D$. Note that $H$ is contained in the union of the spanning trees of every cluster of every neighborhood cover $\mathcal{T}_{(1+\epsilon)^i}$. The total size of $H$ is therefore
\[|H|\leq\sum_{i=0}^{\lambda}\sum_{T\in\mathcal{T}_{(1+\epsilon)^i}}|T|=O\left(k\cdot\frac{\log\Lambda}{\epsilon}\cdot n^{1+\frac{1}{k}}\right)~,\]
where the last step is justified by the same argument as in the size bound above.

We finally proved that $(H,D)$ is an interactive emulator with stretch $4(1+\epsilon)k$, size $O(k\cdot\frac{\log\Lambda}{\epsilon}\cdot n^{1+\frac{1}{k}})$, and query time $O(\log\log k+\log\frac{1}{\epsilon})$. By replacing $\epsilon$ by $\frac{\epsilon}{4}$, we get the assertion of the theorem.

\end{proof}

\pagebreak

\section{Appendix G} \label{sec:AppendixG}

In this appendix we specify the dependencies on the parameter $\epsilon$, that were omitted from Table \ref{table:InteractiveSpannersVariety} (where we assumed that $\epsilon$ was \textit{constant}).

\begin{table}[h]
\begin{center}
\begin{tabular}{|c|c|c|c|c|}
\hline
Stretch  & Size Coefficient  & Query Time  & Emulator & Distance \\ 
&&&& Preserver \\ \hline
$(4+\epsilon)k$  & $\frac{(\log\log k+\log\frac{1}{\epsilon})\cdot k\cdot\log\log n}{\epsilon\log n}$  & $\log k+\log\frac{1}{\epsilon}$ & \cite{TZ01,WN13} & Theorem \ref{thm:DistancePreserver1}\\ \hline
$(8+\epsilon)k$  & $\frac{k\cdot\log\log n}{\epsilon\log n}(\log\log k+\log\frac{1}{\epsilon}$ & $\log\log k+\log\frac{1}{\epsilon}$ & Theorem \ref{thm:APEmulator}, & Theorem \ref{thm:DistancePreserver1}\\
& $+\frac{\log\log\Lambda}{\log k})$ & $+\log^{(3)}\Lambda$ & based on \cite{AP90a} & \\ \hline
$(12+\epsilon)k$  & $\frac{\log\frac{1}{\epsilon}\cdot k\cdot\log\log n}{\epsilon\log n}$  & $\log k+\log\frac{1}{\epsilon}$ & \cite{TZ01,WN13} & Remark \ref{remark:PreserverWithWorseStretch}\\ \hline
$(24+\epsilon)k$  & $\frac{k\cdot\log\log n}{\epsilon\log n}(\log\frac{1}{\epsilon}+\frac{\log\log\Lambda}{\log k})$ & $\log\log k+\log\frac{1}{\epsilon}$ & Theorem \ref{thm:APEmulator} & Remark \ref{remark:PreserverWithWorseStretch}\\ 
& & $+\log^{(3)}\Lambda$ & & \\ \hline
$(16c_{MN}+\epsilon)k$  & $\frac{(\log\log k+\log\frac{1}{\epsilon})\cdot k\cdot\log\log n}{\epsilon\log n}$  & $\log\log k+\log\frac{1}{\epsilon}$ & Theorem \ref{thm:InteractiveEmulator} & Theorem \ref{thm:DistancePreserver1}\\ \hline
$(48c_{MN}+\epsilon)k$  & $\frac{(\log\frac{1}{\epsilon})\cdot k\cdot\log\log n}{\epsilon\log n}$  & $\log\log k+\log\frac{1}{\epsilon}$ & Theorem \ref{thm:InteractiveEmulator} & Remark \ref{remark:PreserverWithWorseStretch}\\ \hline
\end{tabular}
\end{center}
\begin{center}
\caption{A variety of results for interactive spanners by Lemma \ref{lemma:InteractiveSpannerVariety}, when using the interactive emulators and distance preservers from Table \ref{table:EmulatorsAndPreservers}. The $O$-notations are omitted. The column \textit{Size Coefficient} specifies the factor that multiplies $n^{1+\frac{1}{k}}$ in the size of each interactive spanner. More accurately, if the value of this column is $X$, then the size of the respective interactive spanner is $O(\lceil X\rceil\cdot n^{1+\frac{1}{k}}$).} \label{table:InteractiveSpannersVarietyDetails}
\end{center}
\end{table}

\def\arraystretch{1.5}

\end{document}